\newcommand{\sign}{\mbox{sign}\hskip 0.5truemm}
\numberwithin{equation}{section}
\newtheorem{theorem}{Theorem}[section]
\newtheorem{prop}[theorem]{Proposition}
\newtheorem{lem}[theorem]{Lemma}
\newtheorem{conj}[theorem]{Conjecture}
\newtheorem{rem}[theorem]{Remark}
\newenvironment{proof}{\noindent {\it Proof }}{\hfill $\square$}
\newcommand{\eqa}{\begin{eqnarray}}
\newcommand{\eeqa}{\end{eqnarray}}
\newcommand{\beq}{\begin{equation}}
\newcommand{\eeq}{\end{equation}}
\newcommand{\nn}{\nonumber}
\newcommand{\pal}{\partial}
\newcommand{\pf}{\noindent{\it Proof \ }}
\newcommand{\epf}{$\quad$\hfill
\raisebox{0.11truecm}{\fbox{}}\par\vskip0.4truecm}
\begin{document}

\title{On critical behaviour in systems of Hamiltonian partial 
differential equations}

\author{B.~Dubrovin, T. Grava, C. Klein, A. Moro}


\date{}
\maketitle

\begin{abstract} 
We study the critical behaviour of solutions to weakly dispersive Hamiltonian systems considered as perturbations  of elliptic and
hyperbolic systems of hydrodynamic type with two components. We argue 
that near the critical point of gradient catastrophe of the 
dispersionless system, the solutions to a suitable initial
value problem for the perturbed equations are  approximately described by 
particular solutions to the Painlev\'e-I (P$_I$) equation or its fourth order analogue P$_I^2$. As concrete 
examples we discuss nonlinear Schr\"odinger equations in the 
semiclassical limit. A numerical 
study of these cases provides strong evidence in support of the 
conjecture. 
\end{abstract}


\section{Introduction}
Critical phenomena in the solutions of partial differential equations (PDEs) are important from various theoretical and applied points of view since such phenomena  generally indicate 
 the appearence of new  behaviours  as the onset of rapid oscillations, the appearence of multiple scales, or a loss of regularity in the solutions.
Some of the most powerful techniques in the asymptotic description of such phenomena are due to the theory of completely integrable systems which were so far restricted to integrable PDEs.
 In \cite{du2} this restriction was overcome by introducing the  concept of integrability up to a finite order of some small parameter $\epsilon$. This has allowed to apply techniques from the theory of integrable systems to a large class of non-integrable equations and to obtain asymptotic descriptions of solutions to such equations in the vicinity of critical points of these PDEs. The scalar case was studied along these lines in \cite{du2}. Basically it was shown that solutions to  dispersive regularizations of a nonlinear transport equation near a point of gradient catastrophe (for the transport equation itself) behave like solutions of the celebrated Korteweg--de Vries equations, which at such point can be asymptotically expressed in terms of a particular solution to a fourth order ordinary differential equation from the Painlev\'e-I family. In \cite{DGK} this concept was generalized to the  study of  the semiclassical limit of the  integrable focusing cubic nonlinear Schr\"odinger equation (NLS) which can be seen as a perturbation of   $2\times 2$  elliptic system and in \cite{dub3} to  a certain class of  integrable Hamiltonian perturbation of $2\times 2$ elliptic and hyperbolic systems.
 The idea that integrable behaviour persists in certain nonintegrable cases has been already developed in the study of long time behaviour of solutions to several nonintegrable equations, like the pertuberd NLS equation \cite{DZ}, see also \cite{tao} for a general overview about the soliton resolution conjecture.
 
  In this paper we consider general  two-component Hamiltonian systems  which  contain a small dispersion parameter $\epsilon$.  When  $\epsilon=0$  the Hamiltonian system reduces to a $2\times2$ quasilinear system of elliptic or hyperbolic type, so that the Hamiltonian system can be considered as a perturbation of the elliptic or hyperbolic systems. We study the behaviour of solutions  to  such Hamiltonian systems when the parameter $\epsilon $ tends to zero. The fundamental question we address is how does a solution to  Hamiltonian equations behave near the point where  the solution of the unperturbed elliptic or hyperbolic system breaks up.

We consider  Hamiltonian PDEs obtained as perturbations of systems of hydrodynamic type of the form
\eqa\label{cano00}
&&
u_t=\pal_x\frac{\delta H_0}{\delta v(x)}\equiv \pal_x \frac{\pal h}{\pal v}
\nn\\
&&
\\
&&
v_t=\pal_x\frac{\delta H_0}{\delta u(x)}\equiv \pal_x \frac{\pal h}{\pal u},
\nn
\eeqa
with  
 $u=u(x,t)$, $v=v(x,t)$, scalar functions,  $x\in\mathbb{R}$ and
$$
H_0=\int h(u,v)\, dx,
$$
where $ h=h(u,v)$ is a smooth function of $u$ and $v$. Such perturbations can be written in the form
\eqa\label{cano10}
&&
u_t=\pal_x\frac{\delta H}{\delta v(x)}
\nn\\
&&
\\
&&
v_t=\pal_x\frac{\delta H}{\delta u(x)}.
\nn
\eeqa
where $H$ is the perturbed Hamiltonian, $H=H_0+\epsilon \, H_1 +\epsilon^2 H_2 +\dots$. By definition the $k$-th order term of the perturbative expansion must have the form
$$
H_k=\int h_k\left(u, v, u_x, v_x,u_{xx}, v_{xx}, \dots, u^{(k)}, v^{(k)}\right)\, dx
$$
where $h_k\left(u, v, u_x, v_x,u_{xx}, v_{xx}, \dots, u^{(k)}, v^{(k)}\right)$ is a graded homogeneous polynomial of degree $k$ in the variables $u_x$, $v_x$, \dots, $u^{(k)}$, $v^{(k)}$, i.e., it satisfies the identity
$$
h_k\left(u, v, \lambda\, u_x, \lambda\, v_x,\lambda^2 u_{xx}, \lambda^2 v_{xx}, \dots, \lambda^k u^{(k)}, \lambda^k v^{(k)}\right)=\lambda^k h_k\left(u, v, u_x, v_x,u_{xx}, v_{xx}, \dots, u^{(k)}, v^{(k)}\right)
$$
for an arbitrary $\lambda$. The Hamiltonian system \eqref{cano10} can be considered as a weakly dispersive perturbation of the 1st order quasilinear system \eqref{cano00}. 

After certain simplification of the system \eqref{cano10} by a suitable class of $\epsilon$-dependent canonical transformations 
\eqa
&&
u(x)\mapsto \tilde u(x)=u(x)+\epsilon\,\{ u(x), F\} +{\mathcal O}(\epsilon^2)
\nn\\
&&
v(x)\mapsto \tilde v(x)=v(x)+\epsilon\,\{ v(x), F\} +{\mathcal O}(\epsilon^2)
\nn
\eeqa
generated by a Hamiltonian $F$ (see Section~\ref{section1})
the system can be spelled out as follows
\eqa\label{cano100}
&&
u_t =\pal_x \frac{\delta H}{\delta v(x)} = h_{uv}u_x +h_{vv} v_x 
\nn\\
&&
+\epsilon^2 \left[b\, u_{xxx} + c\, v_{xxx} +( -a_v + 3 b_u) u_{xx} u_x +(b_v + c_u) u_{xx} v_x + 2 c_u v_{xx} u_x + 2 c_v v_{xx} v_x
\right.
\nn\\
&&\left.
+\left( -\frac12 a_{uv} +b_{uu}\right) u_x^3 +\left( -\frac12 a_{vv} + b_{uv} +c_{uu}\right) u_x^2 v_x +\frac32 c_{uv} u_x v_x^2 +\frac12 c_{vv} v_x^3\right]
\nn\\
&&
\\
&&
v_t=\pal_x \frac{\delta H}{\delta u(x)} = h_{uu} u_x + h_{uv} v_x
\nn\\
&&
+\epsilon^2 \left[ a\, u_{xxx} +b\, v_{xxx}+2 a_u u_{xx} u_x + 2 a_v u_{xx} v_x +(a_v + b_u) v_{xx} u_x +(3 b_v -c_u) v_{xx} v_x
\right.
\nn\\
&&\left.
+\frac12 a_{uu} u_x^3 +\frac32 a_{uv} u_x^2 v_x +\left( a_{vv} +b_{uv} -\frac12 c_{uu}\right) u_x v_x^2 +\left( b_{vv}-\frac12 c_{uv}\right) v_x^3\right]
\nn
\eeqa
up to terms of order $\epsilon^3$.
Here $a=a(u,v)$, $b=b(u,v)$ and $c=c(u,v)$ are arbitrary smooth functions of $u$ and $v$ at least in the domain where the solution of the unperturbed equation (\ref{cano00}) takes values. The corresponding perturbed Hamiltonian reads
\beq\label{pert200}
H=H_0 +\epsilon^2 H_2=\int \left[ h -\frac{\epsilon^2}2 \left( a\, u_x^2 + 2 b\, u_x v_x + c\, v_x^2\right)\right]\, dx
\eeq

The  family of equations of the form \eqref{cano100} contains important examples such as the 
generalised nonlinear Schr\"odinger (NLS) equations (also in a 
nonlocal version),
the long wave limit of lattice equations like the Fermi--Pasta--Ulam or Toda lattice equation, Boussinesq equation, two-component Camassa--Holm equation \cite{falqui}  and many others.
For certain choices of the functions $h(u,v)$, $a(u,v)$, 
$b(u,v)$ and $c(u,v)$, the system of equations (\ref{cano100}) is integrable up to the order $\epsilon^3$ \cite{dub3}.  However the complete  classification 
of integrable cases in the class of equations of the form 
(\ref{cano100}) remains open, see \cite{degasperis}, \cite{du1}, 
\cite{km} for the current state of the art in this context.

The study of scalar weakly dispersive equations
\eqa
&&
u_t=\pal_x \frac{\delta H}{\delta u(x)}=\pal_x\left( h'(u)+\epsilon^2 \left[ a(u) u_{xx} +\frac12 a'(u) u_x^2\right]+{\mathcal O}\left(\epsilon^3\right)\right)
\nn\\
&&
\nn\\
&&
 H=\int \left[ h(u) -\frac{\epsilon^2}2 a(u) u_x^2 +\dots\right]\, dx 
\nn
\eeqa
of the form similar to (\ref{cano10}), (\ref{pert200}) in the limit $\epsilon\rightarrow 0$  in the strongly nonlinear regime 
was initiated by the seminal paper  by 
Gurevich and Pitaevsky \cite{GP} about ``collisionless shock waves" described by KdV equation (see also the book \cite{novik} and references therein). Rigorous mathematical results in this direction were obtained by Lax, Levermore and Venakides \cite{ll}, \cite{llv}, \cite{venakides}  and Deift, 
Venakides and Zhou 
\cite{dvz}  (see also \cite{gk1} and \cite{physicad} 
for numerical comparison). For two-component systems \eqref{cano100} an analogous line of research was 
started with the works \cite{ca}, \cite{ge}, \cite{gr}  on the 
semiclassical limit of  generalized  defocusing nonlinear 
Schr\"odinger equation in  several  space dimensions for times less 
than the critical time  $t_0$ of the cusp catastrophe. It was 
studied in more details for arbitrary times  for  the integrable case \cite{za},  
namely for the spatially one-dimensional cubic defocusing NLS  in   
\cite{jin}, \cite{ji}, \cite{difranco}.  Another system that is 
included in the class (\ref{cano10}) is the long wave limit of the 
Toda lattice equation that has been studied in detail  for arbitrary 
times in \cite{deiftmc} and, in the context of  Hermitian random matrix 
models with exponential weights  by many authors, see the book \cite{Deift} and references therein. Interesting results, in the spirit of the original Gurevich and Pitaevsky setting,  have been obtained for certain nonintegrable cases in \cite{el}, \cite{hoefer}. Possible relations between integrable and non-integrable behaviour have been  also analyzed in the framework of the long wave limit of the Fermi--Pasta--Ulam system by Zabusky and Kruskal \cite{za}   and, more recently, in \cite{bambusi}, \cite{paleari}, \cite{benettin}.

The study of  solutions to Hamiltonian systems of the form (\ref{cano100}) in the limit $\epsilon\rightarrow 0 $ whith the leading term (\ref{cano00}) of elliptic type was initiated  by the analysis of the  semiclassical limit of the focusing  cubic nonlinear Schr\"odinger equation \cite{kam2}, see also  \cite{bronski}, \cite{ce} \cite{miller}, \cite{to1}, \cite{to2}.
Other interesting  Hamiltonian systems not included in the class 
(\ref{cano100})  have been considered in the limit $\epsilon\rightarrow 0$ in \cite{miller0}, \cite{bu}. 

Our study can be considered as a continuation of the programme 
initiated in \cite{du2}, \cite{dub3} and \cite{DGK} aimed at studying critical behaviour of 
Hamiltonian perturbations of quasilinear hyperbolic and elliptic  PDEs. 
The most important of the concepts developped in these papers  is the idea of {\it universality} of the 
critical behaviour. We borrow this notion from the theory of random matrices where various universality types of critical   behaviour appear 
in  the study of phase transitions in random matrix ensembles,  see for example  
\cite{bleher}, \cite{bertola}, \cite{dkmvz1},  \cite{dkmvz2}, 
\cite{duits}, \cite{cv1} for mathematically oriented references. 
The description of the  critical behaviour for generalized Burgers equation with small viscosity was found by Il'in \cite{ilin}; for more general weakly dissipative equations see \cite{dubel}, \cite{arise}.

\medskip

In the present paper solutions $u(x,t; \epsilon)$, $v(x,t; \epsilon)$ to the Cauchy problem
\beq\label{kosh00}
u(x,0; \epsilon)=u_0(x), \quad v(x,0;\epsilon)=v_0(x)
\eeq
for the system \eqref{cano100} with $\epsilon$-independent smooth 
initial data in a suitable functional class will be under 
consideration. The contribution of higher order terms is believed to 
be negligible as long  as the solution $\left(u(x,t; \epsilon), 
v(x,t; \epsilon)\right)$ remains a slowly varying function of $x$ and 
$t$, that is, it  changes by ${\cal O}(1)$ on the space- and 
time-scale of order ${\cal O}\left( \epsilon^{-1}\right)$. A rigorous 
proof of such a statement would justify existence, for sufficiently 
small values of the parameter $\epsilon$, of the solution to the 
Cauchy problem \eqref{cano10}, \eqref{kosh00} on a finite time 
interval $0\leq t \leq t_0$  depending on the initial condition but not on $\epsilon$. This was proven by Lax, Levermore and Venakides \cite{ll}, \cite{llv} for the particular case of the Korteweg - de Vries (KdV) equation with rapidly decreasing initial data. In a more general setting of a certain class of generalized KdV equations with no integrability assumption, the statement was proven more recently in \cite{maoero}.

Actually we expect validity of a more bold statement that, in particular, gives an efficient upper bound for the life span of a solution to \eqref{cano10} with given initial data \eqref{kosh00}. Namely, we start with considering the solution $\left( u(x,t), v(x,t)\right)$ to the Cauchy problem for the unperturbed system \eqref{cano00} with the same initial data\footnote{Analyticity of the initial data will be assumed in case the quasilinear system \eqref{cano00} is of elliptic type. The precise formulation of our Main Conjecture has to be refined in the non analytic case}
$$
u(x,0)=u_0(x), \quad v(x,0)=v_0(x).
$$
Such a solution exists for times below the time $t_0$ of \emph{gradient catastrophe}. We expect that the life span of the perturbed solution $\left(u(x,t; \epsilon), v(x,t; \epsilon)\right)$ for sufficiently small $\epsilon$ is at least the interval $[0,t_0]$. More precisely, we have the following

\noindent
{\bf Main Conjecture}. \\
\\
Part 1.  {\it There exists a positive constant $\Delta t(\epsilon)>0$ depending on the initial condition \eqref{kosh00} such that the solution to the Cauchy problem \eqref{cano10}, \eqref{kosh00} exists for $0\leq t < t_0+\Delta t(\epsilon)$ for sufficiently small $\epsilon$.
}

\noindent
Part 2.  {\it When $\epsilon\to 0$ the perturbed solution $\left(u(x,t; \epsilon), v(x,t; \epsilon)\right)$ converges to the unperturbed one $\left(u(x,t), v(x,t)\right)$ uniformly on compacts $x_1\leq x\leq x_2$, $0\leq t \leq t_1$ for any $t_1< t_0$ and arbitrary $x_1$ and $x_2$.
}

In the Main Conjecture we do not specify the class of boundary 
conditions for the smooth (or even analytic, in the elliptic case) 
initial data $u_0(x)$, $v_0(x)$. We believe that the statement is applicable to a wide class of boundary conditions like rapidly decreasing, step-like, periodic etc. Moreover, the shape of the universal critical behavior at the point of catastrophe (see below) should be  independent of the choice of boundary conditions.

The last statement of the Main Conjecture refers to the behavior of a generic perturbed solution near the point of gradient catastrophe of the unperturbed one. Our main goal is to find an asymptotic description for the dispersive 
regularisation of the elliptic umbilic
 singularity or the cusp catastrophe  when the dispersive terms are added, i.e., 
we want to describe the leading term of the asymptotic behaviour for $\epsilon\to 0$ 
of the solution to \eqref{cano100} near the critical point, say $(x_0, t_0)$, of a generic solution to 
\eqref{cano00}.

At the point of catastrophe the solutions $u(x,t)$, $v(x,t)$ to the 
Cauchy problem \eqref{cano00}, \eqref{kosh00} remain continuous, but their derivatives blow up. The  generic singularities are classified as follows \cite{du2}, \cite{dub3}.
\begin{itemize}
\item If the system (\ref{cano00}) is elliptic, $h_{uu}h_{vv}<0$, 
then the generic singularity is   a point of elliptic umbilic catastrophe.
This codimension 2 singularity is one of the real forms labeled by the root system of the $D_4$ type in the terminology of Arnold {\sl et al.} \cite{ar}.
\item  If the system (\ref{cano00}) is hyperbolic, $h_{uu}h_{vv}>0$, 
then the generic singularity is   a point of cusp  catastrophe or more precisely the Whitney W3 [4] singularity.
\end{itemize}

Elliptic umbilic singularities appear  in experimental and 
theoretical studies of diffraction in more than one spatial dimension 
\cite{berry}, in plasma physics \cite{sl}, \cite{si}, in the Hele--Shaw problem \cite{mm1},  and also in random matrices, \cite{fokas}, \cite{bertola}.
Formation of singularities for general quasi-linear hyperbolic systems  in many spatial dimensions has been considered in  \cite{al} (see \cite{manakov} for  an  explicit example).
For the particular case of $2\times 2$ systems we are mainly dealing 
with, the derivation
of the cusp catastrophe was obtained for $\mathcal{C}^4$ initial data  in \cite{kong}, see also \cite{dub3} for an alternative derivation.

Let us return to the Cauchy problem for the perturbed system \eqref{cano100} with the same initial data \eqref{kosh00}. 
The fundamental idea of \emph{universality} first formulated in \cite{du2} for scalar Hamiltonian PDEs suggests that, at the leading order of asymptotic approximation, such behavior does depend neither on the choice of generic initial data nor on the choice of generic Hamiltonian perturbation. One of the goals of the present paper is to give a precise formulation of the universality conjecture for a quite general class of systems of Hamiltonian PDEs of order two (for certain particular subclasses of such PDEs the universality conjecture has already been formulated in \cite{dub3}).


 The general formulation of  universality 
introduced in \cite{du2} for the case of Hamiltonian perturbations of 
the scalar nonlinear transport equation and in \cite{dub3} for Hamiltonian perturbation of the nonlinear wave equation  says that the leading term of 
the multiscale asymptotics of the generic solution near the critical point does not 
depend on the choice of the solution, modulo Galilean transformations and rescalings. 
This leading term was identified  via  a particular solution to the fourth order 
analogue of the Painlev\'e-I (P$_{I}$) equation (the so-called P$_{I}^2$ 
equation). Earlier the particular solution to the P$_{I}^2$ equation proved to be important in the theory of random matrices \cite{moore}, \cite{bmp}; in the context of the so-called Gurevich--Pitaevsky solution to the KdV equation it was derived in \cite{ks}.
The existence of the needed smooth solution to P$_{I}^2$ has been rigorously established in \cite{cv2}. 
Moreover, it was argued in \cite{du2} and \cite{dub3} that the shape of the leading term describing the critical  behaviour is 
essentially independent on the particular form of the Hamiltonian perturbation. 
Some of these universality conjectures have been supported 
by numerical experiments carried out in \cite{gr}, \cite{DGK11}.  The 
rigourous analytical proof of this conjecture has been obtained for 
the KdV equation in \cite{cg1}.

In \cite{DGK} the  universality conjecture for the critical behaviour of solutions 
to the focusing cubic NLS has been formulated, and in \cite{dub3} the universality conjecture has been extended to other integrable Hamiltonian perturbations of elliptic systems.
The universality conjecture in this case  suggests that the description of the leading term 
in the asymptotic expansion of the solution to the focusing NLS equation in the semiclassical limit,
 near the point of elliptic umbilic catastrophe, is given  via a particular solution to the classical 
Painlev\'e-I equation (P$_I$), namely the tritronqu\'ee solution 
first introduced by Boutroux \cite{bo} one hundred years ago; see 
\cite{jk}, \cite{ka} regarding some important properties of the tritronqu\'ee solution and its characterization in the framework of the theory of isomonodromy deformations.  The  
smoothness of the tritronqu\'ee solution  in a sector of 
the complex $z$-plane   of angle $|\arg z|<4\pi/5$ conjectured in \cite{DGK} has only 
recently been  proved in \cite{co1}. Other arguments supporting  the universality conjecture for the focusing case were found in \cite{bt}.

In this paper we extend these ideas to the  more general class of 
systems  of the form (\ref{cano100}).
More precisely, our main goal is a precise formulation of the following conjectural statement.

{\bf Main Conjecture}, Part 3.
{\it
\begin{itemize}
\item The solution of the generic system (\ref{cano100}) with generic $\epsilon$-independent analytic initial data near a point of elliptic umbilic catastrophe 
of the unperturbed elliptic  system (\ref{cano00}) in the limit 
$\epsilon\rightarrow 0$ is described by the tritronqu\'ee solution to the P$_I$ equation; 
\item  the solution of the generic system (\ref{cano100}) with generic $\epsilon$-independent smooth initial data near a point of cusp  catastrophe 
of the unperturbed hyperbolic system (\ref{cano00}) is described in the limit 
$\epsilon\rightarrow 0$ by a particular solution to the P$_{I}^2$ equation.
\end{itemize}
}
An important aspect of the above conjectures is the existence of the solution of the perturbed
 Hamiltonian systems (\ref{cano100}) for times $t$ up to and  slightly beyond the critical time $t_0$ for
the solution of the unperturbed system (\ref{cano00}). The  study of the local or global  well-posedness of the Cauchy problem for the full class of  equations (\ref{cano100}) remains open even though a 
large class of equations has been studied, see for example \cite{gwp1} or \cite{Tao}, \cite{ponce}, \cite{bour} for a survey of the state-of-the-art.
For finite $\epsilon$ it is known that the solution of the Cauchy 
problem of certain classes of equations of the form (\ref{cano100}) develops blow-up in finite time, see for example 
\cite{sulem}, \cite{merle}.
For the class of equations  of the form (\ref{cano10}) and initial data such that the solution develops 
a blow up  in finite time $t_B$, we consequentially  conjecture that, 
for sufficiently small $\epsilon$, the blow-up time $t_B$ is always 
larger than the critical time $t_0$ of the dispersionless system. The 
blow-up behaviour of solutions to certain class of equations, like 
the focusing NLS equation has been studied in detail in
\cite{MM}, however the issue of the determination of the blow-up time remains open. 
For the particular case of the quintic focusing NLS equation, we claim  that the blow-up time $t_B$ which depends on $\epsilon$ is close in the limit $\epsilon \rightarrow 0$ to the time of elliptic umbilic catastrophe, more precisely   the ratio $(t_B-t_0)/\epsilon^{\frac{4}{5}}$ is asymptotically equal to a constant that depends 
on the location of the  first pole   of the Painlev\'e-I tritronqu\'ee solution on the negative real axis.

This paper is organized as follows. In section~\ref{section1} we single out the class of Hamiltonian
systems (\ref{cano10}) and we recall the procedure of obtaining solutions of the system (\ref{cano00})
 by a suitable form of the method of characteristics. In section~\ref{section2} we study the generic singularity of the  solutions to (\ref{cano00}) and describe the conjectural behavior for the generic solution of a Hamiltonian perturbation \eqref{cano100} of the hyperbolic system (\ref{cano00}) in the neighbourhhod of such singularity. The same program is realized in 
section~\ref{section3}  for Hamiltonian perturbations of an elliptic system of the form (\ref{cano00}).
In section~\ref{section4} we consider in more details the above results 
for the generalized nonlinear Schr\"odinger equation (NLS) and the 
nonlocal NLS equation, and in section~\ref{section5} we study 
analytically some particular solutions of the system (\ref{cano00}) 
up to the critical time $t_0$  for the generalized 
NLS equation.
In sections~\ref{section6} to \ref{section8} we present 
numerical evidences supporting the validity of the above conjectures. 

\section{Hamiltonian systems}\label{section1}
In this section we identify the class of Hamiltonian equations we are interested in.
Let us consider the class of systems of Hamiltonian PDEs of the form
\eqa\label{sys1}
&&
u^i_t = A^i_j(u) u^j_x +\epsilon\left[ B^i_j(u) u^j_{xx} +\frac12 L^i_{jk}(u) u^j_x u^k_x\right]
\\
&&
\quad\quad +\epsilon^2 \left[ C^i_j(u) u^j_{xxx} +M^i_{jk}(u) u^j_{xx} u^k_x + \frac16 N^i_{jkm}(u) u^j_x u^k_x u^m_x\right] +O\left( \epsilon^3\right),
\nn\\
&&
\nn\\
&&
\quad\quad\quad i=1, \dots, n,
\nn
\eeqa
where we are taking the sum over repeated indices.
The system of coordinates on the space of dependent variables can be chosen in such a way that the Poisson bracket takes the standard form,  \cite{DN}
\beq\label{sys2}
\{ u^i(x), u^j(y)\} =\eta^{ij} \delta'(x-y), \quad i, j=1, \dots, n
\eeq
where $\left(\eta^{ij}\right)$ is a constant symmetric nondegenerate matrix. Choosing a Hamiltonian in the form
\eqa\label{ham}
&&
H=\int h(u; u_x, \dots; \epsilon)\, dx
\\
&&
 h(u; u_x, \dots; \epsilon)= h^{[0]}(u) +\epsilon\, p_i(u) u^i_x +\frac12\epsilon^2 q_{ij}(u) u^i_x u^j_x +O\left( \epsilon^3\right)
 \nn
 \eeqa
 one obtains the following representation of the system \eqref{sys1} 
 \beq\label{sys3}
 u^i_t =\pal_x \eta^{ij}\frac{\delta H}{\delta u^j(x)}, \quad i=1, \dots, n.
 \eeq
 This yields, in particular, that
 \eqa\label{sys4}
 &&
 A^i_j(u) =\eta^{ik}\frac{\pal^2 h^{[0]}(u)}{\pal u^k\pal u^j}
 \nn\\
 &&
 B^i_j(u) =-\eta^{ik} \left[ p_{k,j}(u) - p_{j,k}(u)\right], \quad \mbox{where} \quad p_{i,j}(u) := \frac{\pal p_i(u)}{\pal u^j}
\\
&&
C^i_j(u) =- \eta^{ik} q_{kj}(u).
\nn
\eeqa
Let us observe that a nonlinear change of dependent variables
\beq\label{cha1}
\tilde u^i = \tilde u^i (u), \quad i=1, \dots, n
\eeq
brings the Poisson bracket \eqref{sys2} to the form \cite{DN}
\beq\label{sys5}
\{ \tilde u^i(x), \tilde u^j(y)\} =\tilde g^{ij}(\tilde u(x))\delta'(x-y) +\tilde\Gamma^{ij}_k(\tilde u) \tilde u^k_x \delta(x-y)
\eeq 
where the symmetric tensor
$$
\tilde g^{ij}(\tilde u) = \frac{\pal \tilde u^i}{\pal u^k}  \frac{\pal \tilde u^j}{\pal u^l}\eta^{kl}
$$
is a (contravariant) metric of zero curvature (not necessarily positive definite) and
$$
\tilde\Gamma^{ij}_k(\tilde u)=  \frac{\pal\tilde u^i}{\pal u^l}\eta^{lm}\frac{\pal^2 \tilde u^j}{\pal u^m \pal u^k}
$$
is expressed via the Christoffel coefficients of the Levi-Civita connection for the metric
$$
\tilde\Gamma^{ij}_k(\tilde u)=-\tilde g^{is}(\tilde u)\tilde\Gamma^{j}_{sk}(\tilde u).
$$
Any Hamiltonian system with $n$  dependent variables can be 
locally reduced to the standard form \eqref{sys3}, \eqref{ham} by the action of the group of generalized Miura transformations  \cite{dsm}, \cite{getzler}  changing the dependent variables as follows
\eqa\label{miura}
&&
u^i\mapsto \tilde u^i=F^i(u) +\epsilon\, P^i_j(u)u^j_x +\epsilon^2\left[ Q^i_j(u) u^j_{xx} +\frac12 R^i_{jk}(u) u^j_x u^k_x\right] +O\left( \epsilon^3\right)
\nn\\
&&
\\
&&
\det\left( \frac{\pal F^i(u)}{\pal u^j}\right) \neq 0.
\nn
\eeqa

We will now concentrate on the case of a second order Hamiltonian system, $n=2$. It will be assumed that the metric $\left(\eta^{ij}\right)$ in the coordinates $(u,v)$ has the canonical antidiagonal form
\beq\label{anti}
\left( \eta^{ij}\right) = \left(\begin{array}{cc} 0 & 1 \\ 1 & 0\end{array}\right).
\eeq
Thus the Hamiltonian system with a Hamiltonian $H=H[u,v]$ reads
\eqa\label{cano}
&&
u_t=\pal_x\frac{\delta H}{\delta v(x)}
\nn\\
&&
\\
&&
v_t=\pal_x\frac{\delta H}{\delta u(x)}.
\nn
\eeqa
A general perturbation of degree 2 of the Hamiltonian $H_{0}$
takes the form
\beq\label{pert2}
H=H_0 +\epsilon \, H_1 +\epsilon^2 H_2=\int \left[ h+\epsilon \left( p\, u_x + q\, v_x\right) -\frac{\epsilon^2}2 \left( a\, u_x^2 + 2 b\, u_x v_x + c\, v_x^2\right)\right]\, dx
\eeq
where $p=p(u,v)$, $q=q(u,v)$, $a=a(u,v)$, $b=b(u,v)$, $c=c(u,v)$
are some smooth functions. A simple calculation yields the following explicit form of the Hamiltonian flow
\eqa\label{pert3}
&&
u_t =\pal_x \frac{\delta H}{\delta v(x)} = h_{uv}u_x +h_{vv} v_x +\epsilon\left[ \omega\, u_{xx} +\omega_u u_x^2 +\omega_v u_x v_x\right]
\nn\\
&&
+\epsilon^2 \left[b\, u_{xxx} + c\, v_{xxx} +( -a_v + 3 b_u) u_{xx} u_x +(b_v + c_u) u_{xx} v_x + 2 c_u v_{xx} u_x + 2 c_v v_{xx} v_x
\right.
\nn\\
&&\left.
+\left( -\frac12 a_{uv} +b_{uu}\right) u_x^3 +\left( -\frac12 a_{vv} + b_{uv} +c_{uu}\right) u_x^2 v_x +\frac32 c_{uv} u_x v_x^2 +\frac12 c_{vv} v_x^3\right]
\nn\\
&&
\\
&&
v_t=\pal_x \frac{\delta H}{\delta u(x)} = h_{uu} u_x + h_{uv} v_x -\epsilon\left[ \omega\, v_{xx} +\omega_u u_x v_x +\omega_v v_x^2\right]
\nn\\
&&
+\epsilon^2 \left[ a\, u_{xxx} +b\, v_{xxx}+2 a_u u_{xx} u_x + 2 a_v u_{xx} v_x +(a_v + b_u) v_{xx} u_x +(3 b_v -c_u) v_{xx} v_x
\right.
\nn\\
&&\left.
+\frac12 a_{uu} u_x^3 +\frac32 a_{uv} u_x^2 v_x +\left( a_{vv} +b_{uv} -\frac12 c_{uu}\right) u_x v_x^2 +\left( b_{vv}-\frac12 c_{uv}\right) v_x^3\right]
\nn
\eeqa
where
\beq\label{pert4}
\omega=p_v -q_u.
\eeq
The linear terms in $\epsilon$ can be eliminated from equations \eqref{pert3} by a canonical transformation, as it follows from 

\begin{lem} The canonical transformation
\eqa
&&
u(x)\mapsto \tilde u(x)=u(x)+\epsilon\,\{ u(x), F\} +{\mathcal O}(\epsilon^2)
\nn\\
&&
v(x)\mapsto \tilde v(x)=v(x)+\epsilon\,\{ v(x), F\} +{\mathcal O}(\epsilon^2)
\nn
\eeqa
generated by the Hamiltonian
\beq\label{cano1}
F=\int f(u,v)\, dx
\eeq
transforms the Hamiltonian $H$  in (\ref{pert2}) as follows
\eqa
&&
H\mapsto \tilde H =H -\epsilon\, \{ H, F\} +{\mathcal O}(\epsilon^2)
\nn\\
&&
\tilde H =\int \left[ h(\tilde u , \tilde v) +\epsilon (\tilde p\, \tilde u_x +\tilde p\, \tilde v_x)+\dots \right]\, dx
\nn
\eeqa
with
$$
\tilde \omega = \tilde p_{\tilde v} -\tilde q_{\tilde u} = \omega +(h_{uu} f_{vv} -h_{vv} f_{uu}).
$$
\end{lem}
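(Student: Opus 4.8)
The plan is to establish all three assertions of the lemma --- the transformation law $\tilde H=H-\epsilon\{H,F\}+\mathcal O(\epsilon^2)$, the persistence of the structural form of $\tilde H$, and the formula for $\tilde\omega$ --- by a single first-order computation with the Poisson bracket \eqref{sys2}. The transformation law is the standard first-order rule for how a near-identity canonical transformation acts on the Hamiltonian: since $F=\int f(u,v)\,dx$ is $\epsilon$- and $t$-independent, differentiating $\tilde u=u+\epsilon\{u,F\}+\mathcal O(\epsilon^2)$ (and likewise $\tilde v$) along the flow and demanding that $(\tilde u,\tilde v)$ again evolve Hamiltonianly, $\partial_t\tilde u=\{\tilde u,\tilde H\}$, determines $\tilde H$ modulo $\mathcal O(\epsilon^2)$; the vanishing of the $\mathcal O(\epsilon)$ obstruction is precisely the Jacobi identity for \eqref{sys2}. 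I take this, with the sign convention as in the statement, as the starting point.

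Next I would isolate the order-$\epsilon$ density of $\tilde H$. Writing $H=H_0+\epsilon H_1+\epsilon^2H_2$ with $H_1=\int(p\,u_x+q\,v_x)\,dx$ as in \eqref{pert2}, the only contribution to the $\mathcal O(\epsilon)$ part of $\tilde H=H-\epsilon\{H,F\}$ beyond $H_1$ is $-\{H_0,F\}$, because $\{H_1,F\}$ and $\{H_2,F\}$ are $\mathcal O(\epsilon^2)$ and higher. Moreover $\tilde u-u$ and $\tilde v-v$ are $\mathcal O(\epsilon)$, so replacing $(u,v)$ by $(\tilde u,\tilde v)$ inside $H_0$ and inside $H_1-\{H_0,F\}$ changes only $\mathcal O(\epsilon^2)$ terms. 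Hence $\tilde H=\int[\,h(\tilde u,\tilde v)+\epsilon(\tilde p\,\tilde u_x+\tilde q\,\tilde v_x)+\mathcal O(\epsilon^2)\,]\,dx$, with $\tilde p,\tilde q$ read off from $H_1-\{H_0,F\}$; this already gives the structural claim.

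The core of the argument is the evaluation of $\{H_0,F\}$. For functionals $A=\int a\,dx$, $B=\int b\,dx$, the bracket \eqref{sys2} with $\eta$ in the antidiagonal form \eqref{anti} gives $\{A,B\}=\int\bigl(\tfrac{\delta A}{\delta u}\partial_x\tfrac{\delta B}{\delta v}+\tfrac{\delta A}{\delta v}\partial_x\tfrac{\delta B}{\delta u}\bigr)\,dx$; since $h$ and $f$ contain no $x$-derivatives their variational derivatives are ordinary partials, and using $\partial_x f_u=f_{uu}u_x+f_{uv}v_x$, $\partial_x f_v=f_{uv}u_x+f_{vv}v_x$ one finds
\[
\{H_0,F\}=\int\Bigl[(h_u f_{uv}+h_v f_{uu})\,u_x+(h_u f_{vv}+h_v f_{uv})\,v_x\Bigr]\,dx .
\]
Thus $\tilde p=p-(h_u f_{uv}+h_v f_{uu})$ and $\tilde q=q-(h_u f_{vv}+h_v f_{uv})$ up to a total $x$-derivative in the density (which shifts $(\tilde p,\tilde q)$ by a gradient and does not affect $\omega$). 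Computing $\tilde\omega=\tilde p_v-\tilde q_u$ from $\partial_v(h_u f_{uv}+h_v f_{uu})=h_{uv}f_{uv}+h_u f_{uvv}+h_{vv}f_{uu}+h_v f_{uuv}$ and $\partial_u(h_u f_{vv}+h_v f_{uv})=h_{uu}f_{vv}+h_u f_{uvv}+h_{uv}f_{uv}+h_v f_{uuv}$, the three terms $h_{uv}f_{uv}$, $h_u f_{uvv}$, $h_v f_{uuv}$ cancel in the difference, leaving $h_{vv}f_{uu}-h_{uu}f_{vv}$. Since $\omega=p_v-q_u$ by \eqref{pert4}, this is $\tilde\omega=\omega+(h_{uu}f_{vv}-h_{vv}f_{uu})$, as claimed.

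I do not expect a serious obstacle: the computation is short and mechanical. The two points that need a little care are the justification and sign of the transformation law $\tilde H=H-\epsilon\{H,F\}$, dispatched by the Jacobi identity as above, and the remark that the genuine invariant is $\omega$ rather than the individual functions $p,q$ --- the density $p\,u_x+q\,v_x$ being defined only modulo a total $x$-derivative --- so that the ambiguity in $\tilde p,\tilde q$ is harmless and the cancellations leading to the final formula are automatic.
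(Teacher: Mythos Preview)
Your argument is correct. The paper states this lemma without proof, so you have supplied the computation the authors left to the reader; your evaluation of $\{H_0,F\}$ and the cancellation yielding $\tilde\omega=\omega+(h_{uu}f_{vv}-h_{vv}f_{uu})$ are both right, and your remark that only the class of $(\tilde p,\tilde q)$ modulo gradients matters (so that $\tilde\omega$ is the genuine invariant) is the appropriate observation.
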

Therefore any  Hamiltonian of the form (\ref{pert2}) can be reduced to the form
\begin{equation}
\label{pert20}
H=H_0 +\epsilon^2 H_2=\int \left[ h -\frac{\epsilon^2}2 \left( a\, u_x^2 + 2 b\, u_x v_x + c\, v_x^2\right)\right]\, dx
\eeq
where the terms of order $\epsilon$ have been eliminated by a canonical transformation.
The system of the form (\ref{pert3}) can then be reduced to the form \eqref{cano100} (see above).

Let us compute the general solution to the leading term of \eqref{cano100} obtained by setting $\epsilon =0$, i.e.
\beq\label{pert0}
\left(\begin{array}{c} u_t \\ v_t\end{array}\right) =\left(\begin{array}{cc}h_{uv} & h_{vv}\\ h_{uu} & h_{uv}\end{array}\right) \left(\begin{array}{c} u_x \\ v_x\end{array}\right).
\eeq
We will consider systems for which the eigenvalues $h_{uv}\pm\sqrt{h_{uu}h_{vv}}$ of the above matrix  are distinct, namely 
\[
h_{uu}h_{vv}\neq 0.
\]
We will deal with smooth initial data only.
A solution $u=u(x,t), \, v=v(x,t)$ is called {\it nondegenerate} on a domain $D\subset \mathbb R^2$ of the $(x,t)$-plane if the Jacobian
\beq\label{jac1}
\det\left( \begin{array}{cc} u_x & u_t \\ v_x & v_t\end{array}\right)=h_{uu} u_x^2 -h_{vv}v_x^2\neq 0
\eeq
does not vanish $\forall \, (x,t)\in D$.
The following version of the classical \emph{hodograph transform} 
will be used for the local description of nondegenerate solutions.

\begin{lem} Let $u=u(x,t), \, v=v(x,t)$ be a solution to \eqref{pert0} nondegenerate on a neighborhood of a point $(x_0, t_0)$. Denote $u_0=u(x_0, t_0)$, $v_0=v(x_0, t_0)$. Then there exists a function $f=f(u,v)$ defined on a neighborhood of the point $(u_0, v_0)$ and satisfying the linear PDE
\beq\label{lineq}
h_{uu} f_{vv} = h_{vv} f_{uu}
\eeq
such that on a sufficiently small neighborhood of this point the 
following two equations hold identically true,
\eqa\label{hodog1}
&&
x+ t\, h_{uv}\left(u(x,t), v(x,t)\right) = f_{uv}\left(u(x,t), v(x,t)\right)
\nn\\
&&
\\
&&
\quad\quad t\, h_{vv}\left(u(x,t), v(x,t)\right) = f_{vv}\left(u(x,t), v(x,t)\right).
\nn
\eeqa
Conversely, given any solution $f=f(u,v)$ to the linear PDE \eqref{lineq} defined on a neighborhood of the point $(u_0, v_0)$, then the functions $u=u(x,t), \, v=v(x,t)$ locally defined by the system 
\eqa\label{hodog11}
&&
x+ t\, h_{uv}\left(u, v\right) = f_{uv}\left(u, v\right)
\nn\\
&&
\\
&&
\quad\quad t\, h_{vv}\left(u, v\right) = f_{vv}\left(u, v\right).
\nn
\eeqa
satisfy \eqref{pert0} provided the assumption
\eqa\label{hodog2}
&&
\det\left(\begin{array}{cc} t\, h_{uuv}-f_{uuv} & t\, h_{uvv}-f_{uvv} \\ t\, h_{uvv}-f_{uvv} & t\, h_{vvv}- f_{vvv}\end{array}\right) 
\\
&&
= \frac1{h_{vv}} \left[ h_{uu} (t\, h_{vvv}- f_{vvv})^2 - h_{vv} ( t\, h_{uvv}- f_{uvv})^2\right]\neq 0
\nn
\eeqa
 of the implicit function theorem holds true at the point $(u_0, v_0)$ such that
\eqa\label{hodog3}
&&
x_0+ t_0 h_{uv}\left(u_0, v_0\right) = f_{uv}\left(u_0, v_0\right)
\nn\\
&&
\\
&&
\quad\quad t_0 h_{vv}\left(u_0, v_0\right) = f_{vv}\left(u_0, v_0\right).
\nn
\eeqa
\end{lem}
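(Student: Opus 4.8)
I would prove the two implications through one dictionary: a solution of \eqref{pert0} nondegenerate near $(x_0,t_0)$ is the same datum as a local diffeomorphism $(x,t)\mapsto(u,v)$ whose inverse $(x,t)=(X(u,v),T(u,v))$ is, up to an affine function, the Hessian of a solution $f$ of \eqref{lineq} via $\mathrm{Hess}(f)=X\,\eta+T\,\mathrm{Hess}(h)$ with $\eta$ as in \eqref{anti}; componentwise this reads $f_{uu}=T\,h_{uu}$, $f_{uv}=X+T\,h_{uv}$, $f_{vv}=T\,h_{vv}$, which is precisely \eqref{hodog1} solved for $x$ and $t$. For the direct statement I would start from a nondegenerate solution: by \eqref{jac1} the Jacobian of $(x,t)\mapsto(u,v)$ is nonzero near $(x_0,t_0)$, so I invert it to get $x=X(u,v)$, $t=T(u,v)$ on a simply connected neighbourhood $U$ of $(u_0,v_0)$. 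Differentiating $u(X(u,v),T(u,v))\equiv u$ and $v(X(u,v),T(u,v))\equiv v$ and eliminating $u_t,v_t$ by \eqref{pert0} yields the linear system $X_u=h_{uu}T_v-h_{uv}T_u$, $X_v=h_{vv}T_u-h_{uv}T_v$.

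Next I would set $P=T\,h_{uu}$, $Q=X+T\,h_{uv}$, $R=T\,h_{vv}$ on $U$ and observe that the cross-integrability conditions $\partial_v P=\partial_u Q$ and $\partial_v Q=\partial_u R$ reduce, after the common terms $T\,h_{uuv}$ and $T\,h_{uvv}$ cancel, to exactly the two equations just displayed, hence hold. Therefore $P\,du+Q\,dv$ and $Q\,du+R\,dv$ are closed on the simply connected $U$; integrating them to $f_u,f_v$ and noting $\partial_v f_u=Q=\partial_u f_v$, I obtain a function $f$ with $f_{uu}=P$, $f_{uv}=Q$, $f_{vv}=R$. Then $h_{uu}f_{vv}=h_{uu}T\,h_{vv}=h_{vv}T\,h_{uu}=h_{vv}f_{uu}$, so $f$ solves \eqref{lineq}; moreover $f_{uv}=X+T\,h_{uv}$ and $f_{vv}=T\,h_{vv}$ are \eqref{hodog1} (they hold identically in $(x,t)$ because $X,T$ are the inverse maps), and evaluating at $(u_0,v_0)$ gives \eqref{hodog3}.

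For the converse I would take $f$ solving \eqref{lineq} near $(u_0,v_0)$ and $(x_0,t_0)$ as in \eqref{hodog3}. The Jacobian in $(u,v)$ of the two equations \eqref{hodog11} is the determinant on the left of \eqref{hodog2}, so under the stated nondegeneracy the implicit function theorem produces $u=u(x,t)$, $v=v(x,t)$ near $(x_0,t_0)$ satisfying \eqref{hodog11} identically. Writing $\alpha=f_{uuv}-t\,h_{uuv}$, $\beta=f_{uvv}-t\,h_{uvv}$, $\gamma=f_{vvv}-t\,h_{vvv}$ and $D=\alpha\gamma-\beta^2$ (which equals the left side of \eqref{hodog2}), differentiating \eqref{hodog11} in $x$ gives $\alpha u_x+\beta v_x=1$ and $\beta u_x+\gamma v_x=0$, so $u_x=\gamma/D$, $v_x=-\beta/D$; differentiating in $t$ gives $\alpha u_t+\beta v_t=h_{uv}$ and $\beta u_t+\gamma v_t=h_{vv}$, so $u_t=(\gamma h_{uv}-\beta h_{vv})/D$ and $v_t=(\alpha h_{vv}-\beta h_{uv})/D$. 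The first equation of \eqref{pert0} is then immediate, $h_{uv}u_x+h_{vv}v_x=(\gamma h_{uv}-\beta h_{vv})/D=u_t$, while the second, $h_{uu}u_x+h_{uv}v_x=v_t$, is equivalent to the single identity $h_{uu}\gamma=h_{vv}\alpha$.

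Proving $h_{uu}\gamma=h_{vv}\alpha$ is the \textbf{one genuinely non-formal step}, and the place where \eqref{lineq} is really used: differentiating \eqref{lineq} in $v$ gives $h_{vv}f_{uuv}-h_{uu}f_{vvv}=h_{uuv}f_{vv}-h_{vvv}f_{uu}$, and along the solution one has $f_{vv}=t\,h_{vv}$ (the second equation of \eqref{hodog11}) and, by \eqref{lineq} once more together with $h_{vv}\neq 0$, also $f_{uu}=t\,h_{uu}$, so the right-hand side equals $t(h_{uuv}h_{vv}-h_{vvv}h_{uu})$, whence $h_{vv}(f_{uuv}-t\,h_{uuv})=h_{uu}(f_{vvv}-t\,h_{vvv})$, i.e. $h_{vv}\alpha=h_{uu}\gamma$. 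This same identity turns $D$ into $\frac{h_{uu}}{h_{vv}}\gamma^2-\beta^2=\frac1{h_{vv}}\bigl[h_{uu}(t\,h_{vvv}-f_{vvv})^2-h_{vv}(t\,h_{uvv}-f_{uvv})^2\bigr]$, which is exactly the equality asserted in \eqref{hodog2}, and it also gives $h_{uu}u_x^2-h_{vv}v_x^2=h_{vv}/D\neq 0$, so the solution produced is automatically nondegenerate, consistently with \eqref{jac1}. Beyond this identity I anticipate no real obstacle: the only care needed is to shrink $U$ to be simply connected so the closed $1$-forms integrate, and --- when the leading system \eqref{cano00} is elliptic, $h_{uu}h_{vv}<0$ --- to run the (purely algebraic, signature-independent) computation in the analytic category.
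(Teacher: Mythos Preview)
Your argument is correct and follows essentially the same route as the paper: invert the nondegenerate solution, derive the linear system $X_u=h_{uu}T_v-h_{uv}T_u$, $X_v=h_{vv}T_u-h_{uv}T_v$, recognize these as the closedness conditions that let one integrate up to a single potential $f$ with $f_{uu}=T\,h_{uu}$, $f_{uv}=X+T\,h_{uv}$, $f_{vv}=T\,h_{vv}$; for the converse, apply the implicit function theorem and verify \eqref{pert0} from the explicit formulas for $u_x,v_x,u_t,v_t$. Your Hessian framing $\mathrm{Hess}(f)=X\,\eta+T\,\mathrm{Hess}(h)$ and your isolation of the identity $h_{vv}\alpha=h_{uu}\gamma$ (obtained by differentiating \eqref{lineq} in $v$ and substituting $f_{vv}=t\,h_{vv}$, $f_{uu}=t\,h_{uu}$) make explicit the one point the paper leaves as a ``straightforward computation,'' but the underlying proof is the same.
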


\pf For the inverse functions $x=x(u,v)$, $t=t(u,v)$ one obtains from \eqref{pert0}
\eqa\label{sys12}
&&
x_u = ~~~ h_{uu} t_v -h_{uv} t_u
\nn\\
&&
\\
&&
x_v= - h_{uv} t_v +h_{vv} t_u.
\nn
\eeqa
This system can be recast into the form
\eqa
&&
\frac{\pal}{\pal u} \left( x+t\, h_{uv}\right) = \frac{\pal}{\pal v} (t\, h_{uu})
\nn\\
&&
\frac{\pal}{\pal v} \left( x+t\, h_{uv}\right) = \frac{\pal}{\pal u} (t\, h_{vv}).
\nn
\eeqa
Hence there locally exists a pair of functions $\phi=\phi(u,v)$, $\psi=\psi(u,v)$ such that
\eqa
&&
x+t\, h_{uv} = \phi_v, \quad t\, h_{uu}= \phi_u
\nn\\
&&
x+t\, h_{uv} = \psi_u, \quad t\, h_{vv}= \psi_v.
\nn
\eeqa
This implies
$$
\phi_v=\psi_u.
$$
Therefore a function $f=f(u,v)$ locally exists such that
$$
\phi=f_u, \quad \psi=f_v.
$$
Thus
$$
t\, h_{uu}=f_{uu}, \quad t\, h_{vv}=f_{vv}.
$$ 
The linear PDE \eqref{lineq} as well as the implicit function equations \eqref{hodog1} readily follow.  The proof of the converse statement can be obtained by a straightforward computation using the expressions derived with the help of the implicit function theorem
\eqa
&&
u_x=\frac{f_{vvv} -t\, h_{vvv}}{D}, \quad v_x =-\frac{f_{uvv}-t\, h_{uvv}}{D}
\nn\\
&&
u_t=~~~h_{uv}\frac{f_{vvv}-t\, h_{vvv}}{D} -h_{vv}\frac{f_{uvv}-t\, h_{uvv}}{D}
\nn\\
&& 
v_t=-h_{uv} \frac{f_{uvv}-t\, h_{uvv}}{D} +h_{vv}\frac{f_{uuv}-t\, h_{uuv}}{D}
\nn
\eeqa
(here $D$ is the determinant \eqref{hodog2}). \epf

\begin{rem} Observe invariance of the implicit function equations \eqref{hodog1} with respect to transformations of the dependent variables $(u,v)$ preserving the antidiagonal form \eqref{anti} of the metric $\eta$.
\end{rem}

\section{Hyperbolic case}\label{section2}
In this section we study solutions to the system (\ref{cano100}) when the unperturbed systems (\ref{pert0}) is hyperbolic. We will restrict our analysis to smooth  initial data.
We first derive the generic singularity of the solution to  the hyperbolic  systems of the form (\ref{pert0}) and then we  study the local behaviour of the solution of 
the system (\ref{cano100})  with $\epsilon>0$ near such a singularity.  Our first observation is that,  in a suitable system of  dependent and independent coordinates, the system  of equations (\ref{cano100})   decouples in a double scaling limit near the singularity   into two equations: one ODE and one PDEs equivalent to the Korteweg de Vries equation.
We then   argue that the local behaviour of the solution of (\ref{cano100}) near the singularity of the solution to (\ref{pert0}),  in such  a double scaling limit is described by a particular solution to  the P$_{I}^2$  equation.

The system  \eqref{pert0} is hyperbolic if  the eigenvalues of the coefficient matrix
\beq\label{eigen1}
\lambda_\pm = h_{uv} \pm \sqrt{h_{uu} h_{vv}}
\eeq
are real and distinct, i.e.,
\beq\label{hyper1}
h _{uu} h _{vv}>0.
\eeq 
The proof of the following statement is straightforward.

\begin{lem}
The hodograph equations \eqref{hodog11} can be rewritten in the form
\eqa\label{hodog4}
&&
x+\lambda_+(u,v)\, t=\mu_+(u,v)
\nn\\
&&
\\
&&
x+\lambda_-(u,v)\, t=\mu_-(u,v)
\nn
\eeqa
where
\beq\label{hodog5}
\lambda_\pm=h_{uv} \pm \sqrt{h_{uu} h_{vv}}, \quad \mu_\pm =f_{uv}\pm \sqrt{\frac{h_{uu}}{h_{vv}}}\,f_{vv}.
\eeq
\end{lem}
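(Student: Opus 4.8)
The plan is to deduce \eqref{hodog4} from \eqref{hodog11}, and conversely, purely by taking the two natural linear combinations of the hodograph relations. Throughout one works on a neighborhood of $(u_0,v_0)$ contained in the hyperbolic region \eqref{hyper1}, where $h_{uu}$ and $h_{vv}$ are each of constant sign; without loss of generality one may assume $h_{uu}>0$ and $h_{vv}>0$ there (the case of both negative being entirely analogous), so that $\sqrt{h_{uu}h_{vv}}$ and $\sqrt{h_{uu}/h_{vv}}$ denote the usual positive real square roots and, crucially, $\sqrt{h_{uu}h_{vv}}=h_{vv}\,\sqrt{h_{uu}/h_{vv}}$ with $\sqrt{h_{uu}/h_{vv}}$ nowhere vanishing.

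First I would note that the first equation of \eqref{hodog11}, $x+t\,h_{uv}=f_{uv}$, is exactly the half-sum of the two equations in \eqref{hodog4}: adding $x+\lambda_+t=\mu_+$ and $x+\lambda_-t=\mu_-$ and using $\lambda_++\lambda_-=2h_{uv}$ and $\mu_++\mu_-=2f_{uv}$ returns it. It then remains only to identify the second equation $t\,h_{vv}=f_{vv}$ with the half-difference: subtracting the two relations in \eqref{hodog4} yields $(\lambda_+-\lambda_-)\,t=\mu_+-\mu_-$, i.e. $\sqrt{h_{uu}h_{vv}}\,t=\sqrt{h_{uu}/h_{vv}}\,f_{vv}$, which by the identity above reads $\sqrt{h_{uu}/h_{vv}}\,(h_{vv}\,t-f_{vv})=0$, hence $t\,h_{vv}=f_{vv}$ after dividing by the nonvanishing factor $\sqrt{h_{uu}/h_{vv}}$. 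For the converse one simply runs this backwards: multiply $t\,h_{vv}=f_{vv}$ by $\sqrt{h_{uu}/h_{vv}}$ to get $\sqrt{h_{uu}h_{vv}}\,t=\sqrt{h_{uu}/h_{vv}}\,f_{vv}$, then add this to and subtract it from $x+t\,h_{uv}=f_{uv}$ to recover \eqref{hodog4} with $\lambda_\pm$, $\mu_\pm$ as in \eqref{hodog5}. Since every operation used is invertible on the neighborhood in question, the two systems of equations are equivalent there.

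The argument is essentially just bookkeeping, and I do not expect any genuine obstacle. The only point that deserves a word of care is the branch of the square roots and the sign of $h_{vv}$: this is handled either by the normalization $h_{vv}>0$ used above, or by observing that flipping the sign of $h_{vv}$ merely interchanges $\lambda_+\leftrightarrow\lambda_-$ and $\mu_+\leftrightarrow\mu_-$, hence the two rows of \eqref{hodog4}, and so leaves the \emph{system} unchanged. It is also worth recording in the write-up that $\mu_\pm=\mu_\pm(u,v)$ are smooth functions of $(u,v)$ near $(u_0,v_0)$, built from $f_{uv}$, $f_{vv}$ and $h$, and that $\lambda_\pm$ are precisely the eigenvalues \eqref{eigen1} of the coefficient matrix of \eqref{pert0}, so that \eqref{hodog4} is nothing but the Riemann-invariant (characteristic) form of the hodograph system — the form convenient for the subsequent analysis of the generic singularity.
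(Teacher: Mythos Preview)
Your proof is correct and is exactly the straightforward linear-algebra verification the paper has in mind; indeed the paper does not write out a proof at all, saying only that ``the proof of the following statement is straightforward.'' Your half-sum/half-difference argument is the natural one, and your remark about the sign of $h_{vv}$ (that it at worst swaps the two rows of \eqref{hodog4}) is the right way to dispose of the branch issue, consistent with the paper's own later convention \eqref{kappa0} for handling the case $h_{uu},h_{vv}<0$.
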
 
Denoting by $r_\pm$ the Riemann invariants of the system, we get for 
their differentials 
\beq\label{eigen2}
dr_\pm =\kappa_\pm \left( \pm \sqrt{h_{uu} }\,du + \sqrt{h_{vv}}\,dv \right)
\end{equation}
where $\kappa_\pm =\kappa_\pm(u,v)$ are integrating factors. 
The leading order system \eqref{pert0} becomes diagonal in the coordinates $r_+$, $r_-$, i.e.
\eqa\label{lead}
&&
\pal_t r_+ = \lambda_+ (r) \pal _x r_+
\nn\\
&&
\\
&&
\pal_t r_- = \lambda_- (r) \pal _x r_-.
\nn
\eeqa
It is convenient to write the hodograph equations \eqref{hodog4} in 
terms of the Riemann invariants $r=(r_+, r_-)$
\eqa\label{impl}
&&
x+\lambda_+(r) \, t = \mu_+(r)
\nn\\
&&
\\
&&
x+\lambda_-(r) \, t = \mu_-(r)
\nn
\eeqa
where the functions $\mu_\pm=\mu_\pm(r)$ must satisfy the linear system
\beq\label{hod}
\frac{\pal \mu_+}{\pal r_-} =\frac{\mu_+ -\mu_-}{\lambda_+-\lambda_-}\, \frac{\pal \lambda_+}{\pal r_-} , \quad \frac{\pal \mu_-}{\pal r_+} = \frac{\mu_+ -\mu_-}{\lambda_+-\lambda_-}\,\frac{\pal \lambda_-}{\pal r_+} 
\eeq
equivalent to \eqref{lineq}. The functions $\mu_+(r)$, $\mu_-(r)$ have to be determined from the system \eqref{hod} along with the conditions at $t=0$
\beq\label{init0}
r_+(x):=r_+(u(x,0), v(x,0)),\quad  r_-(x)=:r_-(u(x,0), v(x,0))
\eeq
and 
\[
x=\mu_{\pm}(r_+(x),r_-(x))
\]
for given Cauchy data $u(x,0)$, $v(x,0)$ for the system \eqref{pert0}.
It is easy to see that such a solution is determined uniquely and it is smooth on any interval of monotonicity of both initial Riemann invariants $r_+(x)$, $r_-(x)$
provided the values of the characteristic velocities $\lambda_\pm (r(x)):=\lambda_\pm(r_+(x), r_-(x))$ on the initial curve are distinct
$$
\lambda_+(r(x))\neq \lambda_-(r(x)).
$$

Our first goal is to derive a normal form of the system \eqref{lead} near a point of  gradient catastrophe of the leading term \eqref{lead}.
 The limiting values of the solutions $r_\pm(x,t)$ to \eqref{lead} at the point of gradient catastrophe  $(x_0, t_0)$  will be denoted
$$
r^0_\pm:= r^0_\pm (x_0, t_0).
$$
Let us also introduce the shifted dependent variables denoted as
\beq\label{shift}
\bar r_\pm =r_\pm -r^0_\pm
\eeq
and the notation
$$
\lambda^0_\pm =\lambda_\pm (r^0_+, r^0_-)
$$
etc. for the values of the coefficients and their derivatives at the point of catastrophe.

In the generic situation the $x$-derivative of only one of the Riemann invariants becomes infinite at the point of catastrophe. To be more specific let us assume
\beq\label{cata1}
\begin{array}{l}\pal_x r_-(x,t) \to \infty\\
\pal_x r_+(x,t) \to \mbox{const}\end{array} \quad \mbox{for}\quad x\to x_0, \quad t\to t_0.
\eeq
We say that the point of catastrophe \eqref{cata1} is {\it generic} if
\beq\label{gen1}
\lambda_{-,-}^0:= \frac{\pal \lambda_-(r)}{\pal r_-}|_{r=r^0}\neq 0
\eeq
and, moreover, the graph of the function $r_-(x,t_0)$ has a nondegenerate inflection point at $x=x_0$.

Introduce characteristic variables
\beq\label{char}
x_\pm =x-x_0 +\lambda^0_\pm (t-t_0)
\eeq
at the point of catastrophe. One can represent the functions $r_\pm=r_\pm(x,t)$ as functions of $(x_+, x_-)$. Let us redenote $\bar r_\pm=r_\pm(x_+, x_-)-r_\pm^0$ the resulting transformed functions. It will be convenient to normalize\footnote{Sometimes a different normalization of Riemann invariants is more convenient - see, e.g., \eqref{nls01} below.} the Riemann invariants in such a way that
\beq\label{kappa0}
\kappa_+^0=\kappa_-^0=\left\{ \begin{array}{rl} 1, & h_{uu}^0 ~\mbox{and}~ h_{vv}^0 >0\\
\sqrt{-1}, & h_{uu}^0 ~\mbox{and}~ h_{vv}^0 <0.\end{array}\right.
\eeq

\begin{lem} For a generic solution to the system \eqref{lead} there exist the limits
\eqa\label{lim1}
&&
R_+(X_+, X_-)=\lim_{k\to 0} k^{-2/3}\bar r_+ \left( k^{2/3} X_+,k\,X_-\right)
\nn\\
&&
\\
&& 
R_-(X_+, X_-)=\lim_{k\to 0} k^{-1/3} \bar r_- \left( k^{2/3} X_+,k\,X_-\right)
\nn
\eeqa
for sufficiently small $|X_\pm |$ satisfying
$$
\frac{X_+-X_-}{\lambda_+^0-\lambda_-^0}<0.
$$
The limiting functions satisfy the system
\eqa\label{whit}
&&
X_+ = \alpha\,R_+
\nn\\
&&
\\
&&
X_- = \beta\, X_+ R_- - \frac16\gamma\, R_-^3
\nn
\eeqa
with
\eqa\label{abc}
&&
\alpha=\mu_{+,+}^0 -t_0 \lambda_{+,+}^0
\\
&&
\beta=-\frac{\lambda_{-,-}^0}{\lambda_+^0-\lambda_-^0}=-\frac1{8 \kappa_-^0} \left[ 3 h_{uvv}^0 \sqrt{h_{uu}^0} -3 h_{uvv}^0 \sqrt{h_{vv}^0} +\frac{h_{uuu}^0 h_{vv}^0}{\sqrt{h_{uu}^0}} -\frac{h_{vvv}^0 h_{uu}^0}{\sqrt{h_{vv}^0}}\right]
\nn\\
&&
\gamma=-\mu_{-,---}^0 +t_0 \lambda_{-,---}^0.
\nn
\eeqa
\end{lem}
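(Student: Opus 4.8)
The plan is to work entirely with the hodograph representation \eqref{impl}, to repackage the hodograph data into a single pair of potentials, and then to read off \eqref{whit} from the Taylor expansion of those potentials at the catastrophe point $r^0$; the existence of the double--scaling limit \eqref{lim1} is obtained by matching orders in $k$ inside the hodograph equations together with a perturbed implicit function argument.

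First I would set $T(r):=(\mu_+(r)-\mu_-(r))/(\lambda_+(r)-\lambda_-(r))$ and $S(r):=\mu_+(r)-\lambda_+(r)\,T(r)$, noting that $S=\mu_--\lambda_-T$ as well and that, upon subtracting the two equations \eqref{impl}, the solution is recovered from $x=S(r)$, $t=T(r)$. In the characteristic variables \eqref{char} this reads $x_\pm=\sigma(r)+\lambda^0_\pm\,\theta(r)$ with $\sigma:=S-S(r^0)$, $\theta:=T-T(r^0)$, using $x_0=S(r^0)$, $t_0=T(r^0)$ from \eqref{hodog3}. The crucial point is that the linear system \eqref{hod} is nothing but the two identities $\partial_{r_-}\mu_+=T\,\partial_{r_-}\lambda_+$ and $\partial_{r_+}\mu_-=T\,\partial_{r_+}\lambda_-$, whence $\partial_{r_-}S=-\lambda_+\,\partial_{r_-}T$ and $\partial_{r_+}S=-\lambda_-\,\partial_{r_+}T$, so that
\[
\partial_{r_-}x_+=(\lambda^0_+-\lambda_+)\,\partial_{r_-}\theta,\qquad \partial_{r_-}x_-=(\lambda^0_--\lambda_+)\,\partial_{r_-}\theta,
\]
and similarly $\partial_{r_+}x_\pm=(\lambda^0_\pm-\lambda_-)\,\partial_{r_+}\theta$. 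Moreover, differentiating \eqref{impl} in $x$ and using \eqref{hod} to cancel the cross terms gives $\partial_x r_+=[(\lambda_+-\lambda_-)\,\partial_{r_+}T]^{-1}$ and $\partial_x r_-=-[(\lambda_+-\lambda_-)\,\partial_{r_-}T]^{-1}$. With these formulas the genericity hypotheses become conditions at $r^0$: \eqref{cata1} is $\partial_{r_-}\theta|_0=0$ together with $\partial_{r_+}\theta|_0\neq0$ (the latter being $\alpha\neq0$), \eqref{gen1} reads $\lambda^0_{-,-}\neq0$, and the nondegenerate inflection of $x\mapsto r_-(x,t_0)$ becomes $\partial^2_{r_-}\theta|_0=0$ and $\partial^3_{r_-}\theta|_0\neq0$ (the last two after expanding $x_-$ along the curve $\theta=0$, on which $\bar r_+=O(\bar r_-^2)$).

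Substituting $\partial_{r_-}\theta|_0=\partial^2_{r_-}\theta|_0=0$ into the above formulas and differentiating once or twice more, a short computation shows that the monomials $\bar r_-$, $\bar r_-^2$, $\bar r_+\bar r_-$ do not occur in the Taylor expansion of $x_+$ at $r^0$, while $\bar r_+$, $\bar r_-$, $\bar r_-^2$ do not occur in that of $x_-$, so that
\[
x_+=\alpha\,\bar r_+ + O\big(\bar r_+^2,\,\bar r_+\bar r_-^2,\,\bar r_-^4\big),\qquad x_-=-\lambda^0_{-,-}\,\partial_{r_+}\theta|_0\;\bar r_+\bar r_- \;-\;\tfrac16\gamma\,\bar r_-^3\;+\;(\text{higher order}),
\]
with $\alpha=(\lambda^0_+-\lambda^0_-)\,\partial_{r_+}\theta|_0=\mu^0_{+,+}-t_0\lambda^0_{+,+}$ and $\gamma=(\lambda^0_+-\lambda^0_-)\,\partial^3_{r_-}\theta|_0=-\mu^0_{-,---}+t_0\lambda^0_{-,---}$, these last identities following by differentiating the expressions for $S$, $T$ and $\partial_x r_\pm$ and evaluating at $r^0$. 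Now insert the ansatz $x_+=k^{2/3}X_+$, $x_-=k\,X_-$, $\bar r_+=k^{2/3}\rho^{(k)}_+$, $\bar r_-=k^{1/3}\rho^{(k)}_-$: a monomial $\bar r_+^{\,j}\bar r_-^{\,l}$ enters at order $k^{(2j+l)/3}$, and precisely the monomials eliminated above are those that would otherwise contribute at order $k^{2/3}$ in the first equation and at order $k^{1}$ in the second. Dividing by $k^{2/3}$ and by $k$ respectively and letting $k\to0$ gives $X_+=\alpha R_+$ and $X_-=-\lambda^0_{-,-}\,\partial_{r_+}\theta|_0\,R_+R_--\tfrac16\gamma R_-^3$; eliminating $R_+$ and setting $\beta=-\lambda^0_{-,-}/(\lambda^0_+-\lambda^0_-)$ this is precisely \eqref{whit}, and the explicit form of $\beta$ in \eqref{abc} then follows from \eqref{eigen2} by a direct computation. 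To make the limit rigorous rather than formal I would note that for $t<t_0$ the map $r\mapsto(x,t)$ is a local diffeomorphism, so $\bar r_\pm$ are smooth in $(x_+,x_-)$ near the origin; the limiting system \eqref{whit} has $\alpha\neq0$ and a cubic second equation with a unique real root $R_-$ on the region $(X_+-X_-)/(\lambda^0_+-\lambda^0_-)<0$, bounded for bounded $(X_+,X_-)$; reinserting this boundedness into the expansions shows that the neglected terms are uniformly $o(1)$ there, so the implicit function theorem applied to the $k$--dependent hodograph system --- a uniformly small perturbation of the limiting one on compact subsets of that region --- yields $\rho^{(k)}_\pm\to R_\pm$, which is \eqref{lim1}.

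The two steps I expect to cost the most effort are: (i) the bookkeeping that turns the identities $\partial_{r_-}S=-\lambda_+\partial_{r_-}T$, $\partial_{r_+}S=-\lambda_-\partial_{r_+}T$ together with $\partial_{r_-}\theta|_0=\partial^2_{r_-}\theta|_0=0$ into the precise vanishing pattern of the Taylor coefficients of $x_\pm$ and into the coefficients $\alpha,\beta,\gamma$ of \eqref{abc}; and (ii) the uniform control of the discarded terms near the boundary $(X_+-X_-)/(\lambda^0_+-\lambda^0_-)=0$, where the implicit function theorem for the limiting cubic degenerates at the fold of the cusp, so that the convergence in \eqref{lim1} must be established on compacts staying away from that locus.
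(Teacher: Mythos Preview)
Your proposal is correct and follows essentially the same route as the paper: Taylor expand the hodograph relations \eqref{impl} at the catastrophe point, use the compatibility conditions \eqref{hod} to kill the cross terms, then apply the rescaling \eqref{scal} and read off the limiting system. Your repackaging via $S(r)=x$ and $T(r)=t$ makes the vanishing pattern of the Taylor coefficients more transparent than the paper's terse ``expanding \eqref{impl} and using \eqref{hod}'', but it is the same computation; your added implicit-function argument for the actual convergence (and the care about the cusp boundary) goes beyond what the paper supplies, which simply records the formal $\mathcal{O}(k^{1/3})$ remainders.
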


\pf A generic solution to \eqref{lead} for $t<t_0$ is determined from the implicit function equations \eqref{impl}.
At the point of catastrophe of the Riemann invariant $r_-$ one has
\beq\label{cata}
\mu_{-,-}^0 -t_0\lambda_{-,-}^0=0, \quad \mu_{-,--}^0 -t_0\lambda_{-,--}^0=0
\eeq
where we use notations similar to those in \eqref{gen1}
$$
\mu_{-,-}^0=\left(\frac{\pal \mu_-}{\pal r_-}\right)_{r=r^0}, \quad \lambda_{-,--}^0=\left(\frac{\pal^2 \lambda_-}{\pal r_-^2}\right)_{r=r^0}
$$
etc. 
The point is generic if, along with the condition \eqref{gen1} one also has
\beq\label{gen2}
\mu_{+,+}^0 -t_0\lambda_{+,+}^0\neq 0, \quad \mu_{-,---}^0-t_0\lambda_{-,---}^0\neq 0.
\eeq
Expanding equations \eqref{impl} in Taylor series near the point $(r_+^0, r_-^0)$ and using \eqref{hod} one obtains, after the rescaling
\beq\label{scal}
\begin{array}{cc} x_+ = k^{2/3}X_+ , & x_-= k\, X_-\\
\\
\bar r_+ = k^{2/3} R_+, & \quad \bar r_- = k^{1/3} R_-
\end{array}
\eeq
the relations
\eqa
&&
X_+ = \left( \mu_{+,+}^0 -t_0\lambda_{+,+}^0\right) R_+ + {\mathcal O}\left( k^{1/3}\right)
\nn\\
&&
X_-= -\frac{\lambda_{-,-}^0}{\lambda_+^0-\lambda_-^0} \,X_+ R_- + \frac16 \left( \mu_{-,---}^0-t_0\lambda_{-,---}^0\right) R_-^3+ {\mathcal O}\left( k^{1/3}\right).
\nn
\eeqa\epf

Applying a similar procedure directly to the system \eqref{lead} one obtains the following

\begin{lem}\label{lem15} The limiting functions \eqref{lim1} satisfy the following system of PDEs
\eqa\label{lead1}
&&
\frac{\pal R_+}{\pal X_-}=0
\nn\\
&&
\\
&&
\frac{\pal R_-}{\pal X_+}= -\beta\, R_- \frac{\pal R_-}{\pal X_-} 
\nn
\eeqa
where the constant $\beta$ is defined in \eqref{abc}.
\end{lem}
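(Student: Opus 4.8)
The plan is to carry out on the diagonal system \eqref{lead} exactly the substitution and rescaling used in the preceding lemma. First I would pass from $(x,t)$ to the characteristic variables $x_\pm$ of \eqref{char}; since these are affine in $(x,t)$, the chain rule gives $\pal_x=\pal_{x_+}+\pal_{x_-}$ and $\pal_t=\lambda_+^0\pal_{x_+}+\lambda_-^0\pal_{x_-}$. Substituting into \eqref{lead} and solving for the ``transverse'' derivatives produces the two scalar relations
\beq
\pal_{x_-} r_+ = \frac{\lambda_+^0 - \lambda_+}{\lambda_+ - \lambda_-^0}\,\pal_{x_+} r_+, \qquad \pal_{x_+} r_- = \frac{\lambda_- - \lambda_-^0}{\lambda_+^0 - \lambda_-}\,\pal_{x_-} r_-,
\eeq
which are equivalent to \eqref{lead} near the catastrophe point since $\lambda_+^0\neq\lambda_-^0$ keeps all denominators bounded away from zero there.

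Next I would insert the scaling ansatz $x_+=k^{2/3}X_+$, $x_-=kX_-$, $\bar r_+=k^{2/3}R_+$, $\bar r_-=k^{1/3}R_-$ of \eqref{scal}, so that $\pal_{x_+}=k^{-2/3}\pal_{X_+}$ and $\pal_{x_-}=k^{-1}\pal_{X_-}$, and track powers of $k$. Expanding $\lambda_\pm=\lambda_\pm^0+\lambda_{\pm,+}^0\bar r_++\lambda_{\pm,-}^0\bar r_-+O(|\bar r|^2)$ one gets $\lambda_+^0-\lambda_+=O(k^{1/3})$ while $\lambda_+-\lambda_-^0\to\lambda_+^0-\lambda_-^0\neq0$; the first relation then reads $k^{-1/3}\pal_{X_-}R_+=O(k^{1/3})\,\pal_{X_+}R_+$, so letting $k\to0$ gives $\pal_{X_-}R_+=0$. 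In the second relation the leading term of $\lambda_--\lambda_-^0$ is $\lambda_{-,-}^0\bar r_-=\lambda_{-,-}^0k^{1/3}R_-$ (the $\bar r_+$ contribution is of higher order) while $\lambda_+^0-\lambda_-\to\lambda_+^0-\lambda_-^0$; both sides then carry the factor $k^{-1/3}$, and the limit $k\to0$ yields $\pal_{X_+}R_-=\frac{\lambda_{-,-}^0}{\lambda_+^0-\lambda_-^0}R_-\pal_{X_-}R_-=-\beta\,R_-\pal_{X_-}R_-$ with $\beta$ as in \eqref{abc}. This is precisely \eqref{lead1}.

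The one genuinely non-formal point is the passage to the limit inside the $x_\pm$-derivatives: the preceding lemma only furnishes the pointwise limits $R_\pm$, whereas the argument above needs the rescaled solutions to converge in $C^1$ near each admissible $(X_+,X_-)$. I expect this to be the main obstacle, and I would handle it in one of two ways. Either upgrade the convergence in the preceding lemma to uniform $C^1$ convergence on compacts by differentiating the implicit-function representation \eqref{impl} — the relevant Jacobian \eqref{hodog2} is nonzero for $t<t_0$, and after rescaling it stays bounded away from zero exactly under the sign condition $(X_+-X_-)/(\lambda_+^0-\lambda_-^0)<0$. Or, more economically, bypass the PDE analysis and simply differentiate the algebraic system \eqref{whit}: from $X_+=\alpha R_+$ we see $R_+$ depends on $X_+$ only, hence $\pal_{X_-}R_+=0$; and differentiating $X_-=\beta X_+R_--\tfrac16\gamma R_-^3$ in $X_-$ and in $X_+$ gives $1=\bigl(\beta X_+-\tfrac12\gamma R_-^2\bigr)\pal_{X_-}R_-$ and $0=\beta R_-+\bigl(\beta X_+-\tfrac12\gamma R_-^2\bigr)\pal_{X_+}R_-$, whence $\pal_{X_+}R_-=-\beta R_-\pal_{X_-}R_-$. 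Either route establishes the claimed system \eqref{lead1}.
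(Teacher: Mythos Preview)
Your main computation is essentially the same as the paper's: change to characteristic variables $x_\pm$, rewrite \eqref{lead} so that the transverse derivatives $\pal_{x_-}r_+$ and $\pal_{x_+}r_-$ appear with a factor $\lambda_\pm(r)-\lambda_\pm^0$, Taylor-expand, insert the scaling \eqref{scal}, and pass to the limit $k\to0$. The paper writes the right-hand sides in terms of $\pal_x r_\pm=\pal_{x_+}r_\pm+\pal_{x_-}r_\pm$ rather than solving for the pure $\pal_{x_+}r_+$ and $\pal_{x_-}r_-$ as you do, but this is algebraically equivalent and leads to the same limit.

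Two remarks worth making. First, the paper treats the passage to the limit under the derivative entirely formally --- it does not discuss the $C^1$ convergence issue you flag. Second, your ``more economical'' alternative of differentiating the algebraic system \eqref{whit} directly is a genuinely different and cleaner argument that the paper does not give: since the preceding lemma already establishes $X_+=\alpha R_+$ and $X_-=\beta X_+R_--\tfrac16\gamma R_-^3$ for the limits, the PDE system \eqref{lead1} is an immediate consequence of implicit differentiation, and no limiting argument on derivatives is needed at all. This route is both shorter and fully rigorous given what precedes.
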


\pf Using
\eqa\label{part1}
&&
\frac{\pal}{\pal x_+} =~\,\, \frac1{\lambda_+^0-\lambda_-^0} \left[ \frac{\pal}{\pal t} -\lambda_-^0 \frac{\pal}{\pal x}\right]
\nn\\
&&
\\
&&
\frac{\pal}{\pal x_-} =-\frac1{\lambda_+^0-\lambda_-^0} \left[ \frac{\pal}{\pal t} -\lambda_+^0 \frac{\pal}{\pal x}\right]
\nn
\eeqa
we obtain from (\ref{lead})
\eqa
&&
\frac{\pal r_+}{\pal x_-} = - \frac{\lambda_+(r)-\lambda_+^0}{\lambda_+^0-\lambda_-^0} \frac{\pal r_+}{\pal x} = -\frac1{\lambda_+^0-\lambda_-^0}\left[ \lambda_{+,+}^0 \bar r_+ +\lambda_{+,-}^0 \bar r_-+O(|\bar r|^2)\right] \frac{\pal r_+}{\pal x}
\nn\\
&&
\nn\\
&&
\frac{\pal r_-}{\pal x_+} = ~\,\, \frac{\lambda_-(r)-\lambda_-^0}{\lambda_+^0-\lambda_-^0} \frac{\pal r_-}{\pal x} =~\,\,\frac1{\lambda_+^0-\lambda_-^0}\left[ \lambda_{-,+}^0 \bar r_+ +\lambda_{-,-}^0 \bar r_-+O(|\bar r|^2)\right] \frac{\pal r_-}{\pal x}
\nn
\eeqa
Substituting
\eqa
&&
\frac{\pal r_+}{\pal x} = \frac{\pal r_+}{\pal x_+}+\frac{\pal r_+}{\pal x_-}=\quad\quad \frac{\pal R_+}{\pal X_+} +k^{-1/3} \frac{\pal R_+}{\pal X_-}
\nn\\
&&
\frac{\pal r_-}{\pal x} = \frac{\pal r_-}{\pal x_+}+\frac{\pal r_-}{\pal x_-}=k^{-1/3}\frac{\pal R_-}{\pal X_+} +k^{-2/3} \frac{\pal R_-}{\pal X_-}
\nn
\eeqa
in (\ref{lead}),
we obtain, in the limit $k\to 0$, the equation~(\ref{lead1}). \epf

Let us proceed to the study of solutions to the perturbed system (\ref{cano100}).
Choosing the Riemann invariants $r_\pm=r_\pm (u,v)$ of the leading 
term as a system of coordinates on the space of dependent variables, we obtain the system \eqref{cano100} in the form
\eqa\label{pde1}
&&
\pal_t r_+ = \lambda_+ (r) \pal _x r_+  +\epsilon^2 \left[C^+_+(r)  \pal_x^3 r_+ +C^+_-(r) \pal_x^3 r_-+\dots\right] +O\left( \epsilon^3\right)
\nn\\
&&
\\
&&
\pal_t r_- = \lambda_- (r) \pal _x r_-+\epsilon^2 \left[C^-_+(r)  \pal_x^3 r_+ +C^-_-(r) \pal_x^3 r_-+\dots\right] +O\left( \epsilon^3\right)
\nn
\eeqa
with 
\eqa\label{cc}
&&
C_+^+=\frac{a\, h_{vv} + 2 b\, \sqrt{h_{uu} h_{vv}} +c\, h_{uu}}{2\sqrt{h_{uu}h_{vv}}}, \quad C_-^+= \frac{\kappa_+}{\kappa_-}\frac{c\, h_{uu} -a\, h_{vv}}{2\sqrt{h_{uu} h_{vv}}},
\nn\\
&&
\\
&&
C_+^- =\frac{\kappa_-}{\kappa_+} \frac{a\, h_{vv}-c\, h_{uu}}{2\sqrt{h_{uu} h_{vv}}}, \quad C_-^-=-\frac{a\, h_{vv} - 2 b\, \sqrt{h_{uu} h_{vv}} +c\, h_{uu}}{2\sqrt{h_{uu} h_{vv}}}.
\nn
\eeqa

We are now ready to prove the first result of this Section.

\begin{theorem} Let $r_\pm=r_\pm(x,t;\epsilon)$ be a solution to the system \eqref{pde1} defined for $|x-x_0|<\xi$, $0\leq t < \tau$ such that

\noindent $\bullet$ there exists a time $t_0$ satisfying $0<t_0 <\tau$ such that for any $0\leq t<t_0$ and sufficiently small $|x-x_0|$ the limits
$$
{\rm r}_\pm(x,t)=\lim_{\epsilon\to 0} r(x,t; \epsilon)
$$
exist and satisfy  the system \eqref{lead}. 

Let us consider the solution ${\rm r}_\pm(x,t)$  represented in the hodograph form \eqref{impl} and,
assume that

\noindent $\bullet$  it has a gradient catastrophe at the point $(x_0, t_0)$ of the form described in Lemma \ref{lem15};

\noindent $\bullet$ there exist the limits 
\eqa\label{lim2}
&&
R_+(X_+, X_-;\varepsilon)=\lim_{k\to 0} k^{-2/3}\bar r_+ \left( k^{2/3} X_+,k\,X_-; k^{7/6} \varepsilon\right)
\nn\\
&&
\\
&& 
R_-(X_+, X_-;\varepsilon)=\lim_{k\to 0} k^{-1/3} \bar r_- \left( k^{2/3} X_+,k\,X_-; k^{7/6} \varepsilon\right);
\nn
\eeqa

\noindent $\bullet$ the constants $\alpha$, $\beta$, $\gamma$ in \eqref{abc} do not vanish and $\beta\, \gamma >0$;

\noindent $\bullet$ the constant
\beq\label{rho}
\rho = -\frac{C_-^-(r^0)}{2\sqrt{h_{uu}^0 h_{vv}^0}}=\frac{a_0 h_{vv}^0 -2 b_0 \sqrt{h_{uu}^0 h_{vv}^0} +c_0 h_{uu}^0}{4 h_{uu} ^0h_{vv}^0} \neq 0.
\eeq
Then the limiting function $R_-=R_-(X_+, X_-; \varepsilon)$ satisfies the KdV equation
\beq\label{kdv1}
\frac{\pal R_-}{\pal X_+} +\beta\, R_- \frac{\pal R_-}{\pal X_-} +\varepsilon^2 \rho\, \frac{\pal^3 R_-}{\pal X_-^3}=0.
\eeq
The limiting function $R_+=R_+(X_+, X_-; \epsilon)$ is given by the formula
\beq\label{kdv2}
R_+= \alpha^{-1} X_+ -\varepsilon^2 \sigma\, \frac{\pal^2 R_-}{\pal X_-^2}
\eeq
where
\beq\label{sigma}
\sigma=\frac{C_-^+(r_0)}{2\sqrt{h_{uu}^0 h_{vv}^0}}= \frac{c_0 h_{uu}^0-a_0 h_{vv}^0}{4 h_{uu}^0 h_{vv}^0}.
\eeq
\end{theorem}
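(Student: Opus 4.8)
The proof is a double-scaling expansion of the perturbed system \eqref{pde1}, carried out exactly along the lines of the proof of Lemma~\ref{lem15} but now retaining the $\epsilon^2$-terms. First I would pass to the characteristic coordinates $x_\pm$ of \eqref{char} by means of the operator identities \eqref{part1}, which give $\partial_x=\partial_{x_+}+\partial_{x_-}$ and $\partial_t=\lambda_+^0\partial_{x_+}+\lambda_-^0\partial_{x_-}$, and then substitute the rescaling of \eqref{scal} together with the scaling $\epsilon=k^{7/6}\varepsilon$ dictated by \eqref{lim2}, so that $\partial_x=k^{-2/3}\partial_{X_+}+k^{-1}\partial_{X_-}$. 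In each of the two equations of \eqref{pde1} one then Taylor-expands $\lambda_\pm(r)$ and the coefficients $C^{\pm}_{\pm}(r)$ about $r=r^0$, collects powers of $k$, and keeps only the dominant order $k^{-1/3}$.

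The crucial bookkeeping is the power counting of the dispersive contributions. Writing the $r_-$-equation of \eqref{pde1} as $(\lambda_+^0-\lambda_-^0)\partial_{x_+}r_-=(\lambda_-(r)-\lambda_-^0)\partial_x r_-+\epsilon^2[C^-_+\partial_x^3 r_++C^-_-\partial_x^3 r_-+\dots]$, the left side is $(\lambda_+^0-\lambda_-^0)k^{-1/3}\partial_{X_+}R_-$, the first term on the right contributes $\lambda_{-,-}^0 R_-\,\partial_{X_-}R_-$ at order $k^{-1/3}$, and among the $\epsilon^2$-terms only $\epsilon^2 C^-_-\partial_x^3 r_-\sim k^{7/3}k^{-3}k^{1/3}=k^{-1/3}$ survives (with coefficient frozen at $C^-_-(r^0)$), since $\epsilon^2 C^-_+\partial_x^3 r_+\sim k^{0}$, the quadratic and cubic dispersive terms are $O(k^{0})$, and the $O(\epsilon^3)$ remainder is $O(k^{-1/6})$, all of which vanish relative to $k^{-1/3}$ as $k\to 0$. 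Using $\lambda_{-,-}^0=-\beta(\lambda_+^0-\lambda_-^0)$, $C^-_-(r^0)=-\rho\,(\lambda_+^0-\lambda_-^0)$ and $\lambda_+^0-\lambda_-^0=2\sqrt{h_{uu}^0 h_{vv}^0}$, the order-$k^{-1/3}$ balance becomes precisely $\partial_{X_+}R_-+\beta R_-\partial_{X_-}R_-+\rho\,\varepsilon^2\partial_{X_-}^3 R_-=0$, which is \eqref{kdv1}. For the $r_+$-equation, written as $-(\lambda_+^0-\lambda_-^0)\partial_{x_-}r_+=(\lambda_+(r)-\lambda_+^0)\partial_x r_++\epsilon^2[C^+_+\partial_x^3 r_++C^+_-\partial_x^3 r_-+\dots]$, the left side is $-(\lambda_+^0-\lambda_-^0)k^{-1/3}\partial_{X_-}R_+$; on the right, $(\lambda_+(r)-\lambda_+^0)\partial_x r_+$, $\epsilon^2 C^+_+\partial_x^3 r_+$, the nonlinear dispersive terms and the $O(\epsilon^3)$ remainder are all negligible against $k^{-1/3}$, while $\epsilon^2 C^+_-\partial_x^3 r_-\sim k^{-1/3}$ with coefficient $C^+_-(r^0)\varepsilon^2$. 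Hence at order $k^{-1/3}$ one obtains $\partial_{X_-}R_+=-\sigma\,\varepsilon^2\partial_{X_-}^3 R_-$ with $\sigma$ as in \eqref{sigma}; integrating once in $X_-$ gives $R_+=g(X_+)-\varepsilon^2\sigma\,\partial_{X_-}^2 R_-$, and the integration ``constant'' $g(X_+)$ is fixed to be $\alpha^{-1}X_+$ by matching with the unperturbed rescaled solution, for which \eqref{whit} gives $X_+=\alpha R_+$ (this is where $\alpha\neq0$ is used), yielding \eqref{kdv2}.

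The bulk of the argument is thus the lengthy but routine Taylor bookkeeping sketched above; I expect two genuine points of care. First, the assertion that every subleading term is indeed subleading presupposes that $R_\pm$ and their $X$-derivatives are $O(1)$, which is exactly the content of the standing hypothesis \eqref{lim2}; so what is being proved is the identification of the double-scaling limit assuming it exists, rather than its existence (the latter being, for KdV, the zero-dispersion-limit story away from the oscillatory zone). Second, fixing $g(X_+)$ requires arguing that the $X_+$-dependence of $R_+$ is unaffected at this order, which is cleanest by matching to the outer hodograph description \eqref{impl} in a region where the $O(\varepsilon^2)$ correction is subdominant and $R_+\to\alpha^{-1}X_+$. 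Finally, I note that $\rho\neq0$ and $\beta\gamma>0$ are not used in deriving \eqref{kdv1}--\eqref{kdv2}, but $\rho\neq0$ guarantees that \eqref{kdv1} is genuinely dispersive, and $\beta\gamma>0$ enters later, when one selects the Gurevich--Pitaevsky solution of \eqref{kdv1} whose dispersionless limit matches \eqref{whit}.
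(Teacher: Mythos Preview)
Your proposal is correct and follows precisely the approach the paper intends: the theorem is presented in the paper as the dispersive extension of Lemma~\ref{lem15}, and the paper gives no separate proof beyond the scaling computations already set up there, so your power-counting argument in the characteristic coordinates $x_\pm$ with $\epsilon=k^{7/6}\varepsilon$ is exactly the expected derivation. Your observations that the hypothesis \eqref{lim2} is what makes the subleading estimates legitimate, and that $\rho\neq0$, $\beta\gamma>0$ are not used in obtaining \eqref{kdv1}--\eqref{kdv2} but only later in selecting the particular KdV solution, are also accurate and worth keeping.
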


A solution $r_\pm (x, t; \epsilon)$ to the system \eqref{pde1} with a hyperbolic leading term satisfying the assumption \eqref{gen1} along with
\beq\label{gen00}
\alpha\neq 0, \quad \beta\neq 0, \quad \gamma\neq 0, \quad \beta\, \gamma>0, \quad C_-^-\left(r^0\right)\neq 0
\eeq
will be called \emph{generic}.

\begin{conj} A generic solution to the $\epsilon$-independent Cauchy 
    problem for the generic Hamiltonian perturbation of a hyperbolic system \eqref{pert0}  containing no ${\mathcal O}(\epsilon)$ terms near a generic point of break-up of the second Riemann invariant admits the following asymptotic representation
\eqa\label{conj1}
&&
r_+(x,t,\epsilon)-r_+^0= \epsilon^{4/7}\left[ \alpha^{-1} x_+ -\dfrac{\sigma\nu_+\nu_-}{\beta}U_{XX}\left( \frac{\nu_- x_-}{\epsilon^{6/7}}, \frac{\nu_+ x_+}{\epsilon^{4/7}}\right)\right] +{\mathcal O}\left( \epsilon^{6/7}\right)
\nn\\
&&
r_-(x,t,\epsilon)-r_-^0= \dfrac{\nu_+  \epsilon^{2/7}}{\beta\nu_-} U\left( \frac{\nu_- x_-}{\epsilon^{6/7}}, \frac{\nu_+ x_+}{\epsilon^{4/7}}\right) +{\mathcal O}\left( \epsilon^{4/7}\right)
\nn\\
&&
\nn\\
&&
x_\pm =(x-x_0) +\lambda_\pm^0 (t-t_0)
\\
&&
\nn\\
&&
\nu_-= \left(\frac{\beta^3 }{12^3 \rho^3\gamma}\right)^{1/7}, \quad \nu_+ =\left(\frac{\beta^9 }{12^2 \rho^2\gamma^3}\right)^{1/7}
\nn
\eeqa
with $\alpha$, $\beta$, $\gamma$ and $\rho$ defined in (\ref{abc}) and (\ref{rho}) respectively and 
where $U=U(X,T)$ is the  smooth solution to the P$_{I}^2$ equation
\beq\label{p12}
X=U\, T -\left[ \frac16 U^3 +\frac1{24} \left( U_X^2 +2 U\, U_{XX}\right) +\frac1{240} U_{XXXX}\right]
\eeq
uniquely determined by  the  asymptotic behavior 
     \begin{equation}
\label{PI2asym}
        U(X,T)=\mp (6|X|)^{1/3}\mp \frac{1}{3}6^{2/3}T|X|^{-1/3}
            +O(|X|^{-1}),
            \qquad\mbox{as $X\to\pm\infty$,}
\end{equation}
for each fixed $T\in\mathbb{R}$. 
\end{conj}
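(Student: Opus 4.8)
The plan is to obtain \eqref{conj1} by composing the reduction established in the Theorem above with the known universality result for the single KdV equation. That Theorem shows that in the double scaling limit $x_+=k^{2/3}X_+$, $x_-=k\,X_-$, $\bar r_+=k^{2/3}R_+$, $\bar r_-=k^{1/3}R_-$, $\epsilon=k^{7/6}\varepsilon$, the two-component system \eqref{pde1} collapses to the scalar KdV equation \eqref{kdv1} for $R_-$, with $R_+$ slaved to $R_-$ through \eqref{kdv2}. So the first step is to identify the Cauchy problem for \eqref{kdv1}: sending $\varepsilon\to 0$ in \eqref{kdv1} recovers, by Lemma~\ref{lem15}, the Hopf equation $\partial_{X_+}R_-+\beta\,R_-\,\partial_{X_-}R_-=0$, whose solution is given implicitly by the cubic hodograph relation $X_-=\beta\,X_+\,R_--\tfrac16\gamma\,R_-^3$ from \eqref{whit}. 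A direct check shows that this dispersionless solution has a nondegenerate gradient catastrophe exactly at the origin: both $\partial X_-/\partial R_-=\beta X_+-\tfrac12\gamma R_-^2$ and $\partial^2 X_-/\partial R_-^2=-\gamma R_-$ vanish at $(X_+,X_-,R_-)=(0,0,0)$, while $\partial^3 X_-/\partial R_-^3=-\gamma\neq 0$. The hypotheses $\alpha,\beta,\gamma\neq 0$ and $\beta\gamma>0$ are exactly what make this catastrophe generic and of the compressive (oscillatory-regularization) type, and $\rho\neq 0$ keeps the dispersive term in \eqref{kdv1} nondegenerate, so we are precisely in the setting to which the KdV universality theorem applies.

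The second step is to invoke that theorem --- the universality conjecture for KdV of \cite{du2}, proved in \cite{cg1}. For \eqref{kdv1} with the cubic initial profile above it gives, to leading order as $\varepsilon\to 0$ near the origin, $R_-(X_+,X_-;\varepsilon)\sim c_1\,\varepsilon^{2/7}\,U\!\bigl(c_2\,X_-/\varepsilon^{6/7},\,c_3\,X_+/\varepsilon^{4/7}\bigr)$, with $U=U(X,T)$ the P$_I^2$ solution singled out by \eqref{p12}--\eqref{PI2asym}. Concretely one rescales $R_-$, $X_-$, $X_+$ so that \eqref{kdv1} is brought to the canonical KdV form whose scaling limit at the catastrophe is governed by \eqref{p12}, and so that the cubic profile becomes $\hat X=\hat T\,\hat U-\tfrac16\hat U^3$; the rescaling factors, read off from $\beta$, $\rho$, $\gamma$, are precisely the constants $\nu_\pm$ of \eqref{conj1}, and the shape $U$ is universal (independent of the perturbation) after this rescaling. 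Substituting the resulting $R_-$ into the slaving relation \eqref{kdv2} produces $R_+=\alpha^{-1}X_+-\varepsilon^2\sigma\,\partial_{X_-}^2 R_-$, which is the $U_{XX}$ term of \eqref{conj1}. Finally one undoes the double scaling: since $k^{1/3}\varepsilon^{2/7}=\epsilon^{2/7}$, $X_-/\varepsilon^{6/7}=x_-/\epsilon^{6/7}$ and $X_+/\varepsilon^{4/7}=x_+/\epsilon^{4/7}$, the auxiliary parameter $k$ cancels and \eqref{conj1} emerges in the original variables with the stated exponents $\tfrac27,\tfrac47,\tfrac67$.

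The hard part is making each link of this chain rigorous rather than formal, which is precisely why the statement is only a conjecture. One needs: Part~1 of the Main Conjecture, i.e.\ existence of the perturbed solution on a time interval reaching $t_0$; existence of the double scaling limits \eqref{lim2} together with their commutation with $\epsilon\to 0$ --- the step where the discarded $\mathcal{O}(\epsilon^3)$ terms of \eqref{cano100} and the $\mathcal{O}(k^{1/3})$ corrections appearing in the derivation of \eqref{whit} and \eqref{lead1} have to be shown not to affect the leading term; and, most seriously, a version of the KdV universality result \cite{cg1} uniform enough to be applied inside a limiting procedure, since there it is established for a fixed KdV equation with fixed analytic data, whereas here the effective KdV equation and its data arise only in the limit and depend on the geometry of the unperturbed hyperbolic system. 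Producing these uniform statements --- presumably by combining the integrability-up-to-$\mathcal{O}(\epsilon^4)$ techniques of \cite{dub3} with Riemann--Hilbert/steepest-descent estimates for the KdV part --- is what would upgrade this conjecture to a theorem.
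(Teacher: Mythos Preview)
Your derivation matches the paper's own heuristic justification essentially step for step: reduce to the scalar KdV \eqref{kdv1} via the double scaling of the preceding Theorem, match the dispersionless limit to the cubic hodograph profile \eqref{whit}, rescale to the normalized KdV \eqref{kdv12} (the paper does this by setting $k=\epsilon^{6/7}$ so that $\varepsilon=1$, which is equivalent to your observation that $k$ drops out), and identify the special KdV solution with the P$_I^2$ transcendent $U$ via its asymptotics. Your explicit acknowledgement that the argument is formal and your list of the missing rigorous ingredients is appropriate, since the statement is a conjecture in the paper as well and no proof is given there beyond this same consistency check.
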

The existence of such solution to the P$_{I}^2$ equation
 has been conjectured in \cite{du2} (for $T=0$ such a conjecture gas already been formulated in \cite{bmp}) and proved in \cite{cv2}. See Figure~\ref{figPI2}  below for a plot of such solution in the $(x,t)$ plane.
 \begin{figure}[htb!]
    \includegraphics[width=0.7\textwidth]{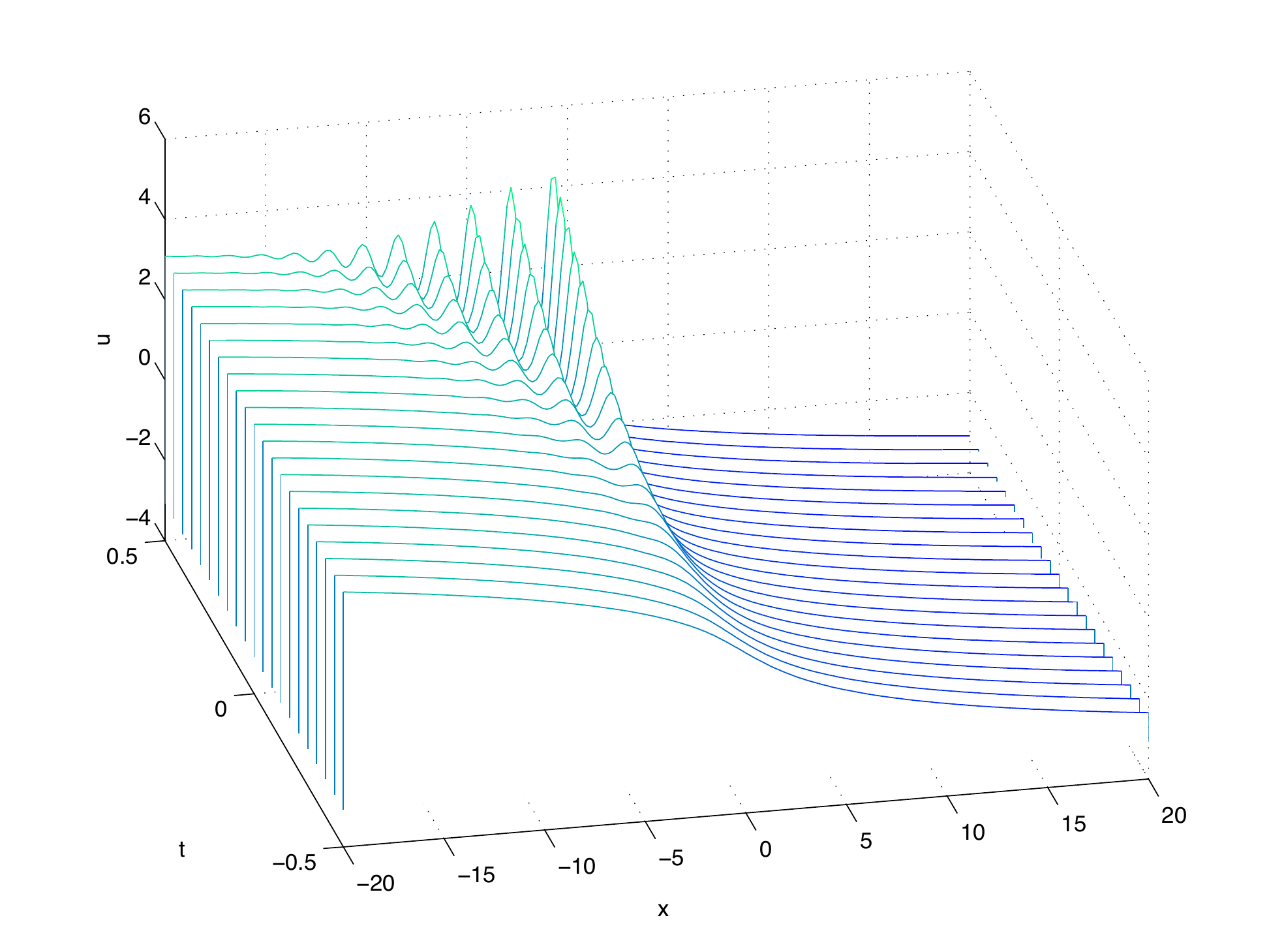}
 \caption{The special solution to the P$_{I}^{2}$ equation for 
 several values of $t$.}
 \label{figPI2}
\end{figure}

Recalling that the function $U(X,T)$ satisfies the KdV equation
\beq\label{kdv12}
U_T +U\, U_X +\frac1{12} U_{XXX}=0,
\eeq
the asymptotic formulae \eqref{conj1} meet the following two conditions:

\noindent $\bullet$ for $t< t_0$ the solution \eqref{conj1} tends to the hodograph solution \eqref{whit} as $\epsilon \to 0$;

\noindent $\bullet$ near the point of break-up the rescaled Riemann invariant $r_-$ approximately satisfies the KdV equation \eqref{kdv1} while the rescaled Riemann invariant $r_+$ admits an approximate representation \eqref{kdv2}. Indeed, choosing
$$
k=\epsilon^{6/7}
$$
one obtains
$$
\varepsilon=1.
$$
So, after rescaling of the characteristic variables
$$
X_+=\left(\frac{12^2 \rho^2 \gamma^3}{\beta^9}\right)^{1/7} \hat X+, \quad X_-=\left( \frac{12^3 \rho^3 \gamma}{\beta^3}\right)^{1/7}\hat X_-
$$
one derives from \eqref{kdv1} that the rescaled function
$$
\hat R_-= \left(\frac{\beta\,\gamma^2}{12 \rho}\right)^{1/7} R_-
$$
satisfies the normalized KdV equation \eqref{kdv12},
$$
\frac{\pal \hat R_-}{\pal \hat X_+}+ \hat R_- \frac{\pal \hat R_-}{\pal \hat X_-} +\frac1{12} \frac{\pal^3\hat R_-}{\pal \hat X_-^3}=0.
$$
Moreover, for large $\hat X_-$ and negative $\hat X_+$ it behaves like the root of the cubic equation
$$
\hat X_-=\hat X_+ \hat R_- -\frac16 \hat R_-^3.
$$
The function
$$
\hat R_-= U\left( \hat X_-, \hat X_+\right)
$$
is a solution to KdV satisfying these properties. Returning to the original variables $\bar r_-$, $\bar x_\pm$ one arrives at the formula \eqref{conj1}.

\noindent

\medskip

\section{Elliptic case}\label{section3}
In this section we study solutions to the system (\ref{cano100}) when the unperturbed systems (\ref{pert0}) is elliptic. We will restrict our analysis to analytic initial data.
We first derive the generic singularity of the solution to  the elliptic systems of the form (\ref{pert0}) and then we  study the local behaviour of the solution of the system (\ref{cano100})  with $\epsilon>0 $
near such a singularity. We  argue that such behaviour in a double scaling limit is described by the tritronqu\'ee solution to the Painlev\'e I equation.

Let us now proceed to considering the elliptic case for the system (\ref{pert0}), namely
\beq\label{ell1}
h_{uu} h_{vv}<0.
\eeq
The initial data $u(x,0)$ and $v(x,0)$ are analytic functions.
The Riemann invariants
\beq\label{rme}
dr_\pm =\kappa_\pm \left( \sqrt{|h_{vv}|}\, dv \pm i \sqrt{|h_{uu}|}\, du\right), \quad \kappa_-=\kappa_+^*
\eeq
and the characteristic speeds
\beq\label{rme1}
\lambda_\pm=h_{uv}\pm i\, \sign (h_{vv}) \sqrt{|h_{uu} h_{vv}|}.
\eeq 
are complex conjugate (the asterisk will be used for the complex conjugation),
$$
r_- =r_+^* \qquad \lambda_-=\lambda_+^*
$$
At the point of elliptic break-up of a solution, written in the form \eqref{impl}, the following two complex conjugated equations hold
\eqa\label{ell2}
&&
\mu_{+,+}^0 = \lambda_{+,+}^0 t_0
\nn\\
&&
\\
&&
\mu_{-,-}^0 = \lambda_{-,-}^0 t_0.
\nn
\eeqa
The characteristic variables at the point of catastrophe are defined as
\beq\label{ell3}
x_\pm =(x-x_0) +\lambda_\pm^0 (t-t_0)
\eeq
and are also complex conjugate.
One can represent the functions $r_\pm=r_\pm(x,t)$ as functions of $(x_+, x_-)$. Let us redenote $\bar r_\pm=r_\pm(x_+, x_-)-r_\pm^0$ the resulting shifted and transformed  Riemann invariants.

\begin{lem} For a generic solution to the system \eqref{lead} near a point of elliptic break-up the limits 
\beq\label{ell4}
R_\pm (X_\pm) =\lim_{k\to 0} k^{-1/2} \bar r_\pm (k\, X_+, k\, X_-)
\eeq
exist and satisfy the quadratic equation
\beq\label{ell5}
X_\pm = \frac12 a_\pm R_\pm^2
\eeq
with
\beq\label{ell6}
x_{\pm} = k X_{\pm}, \qquad a_\pm = \mu_{\pm, \pm\pm}^0 - t_0 \lambda_{\pm, \pm\pm}^0.
\eeq
\end{lem}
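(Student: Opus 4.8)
The plan is to run, in the elliptic setting, the same computation that produced the normal form \eqref{whit} of the hyperbolic case, the essential difference being that here both Riemann invariants break up simultaneously (they are complex conjugate, $r_-=r_+^*$) and the break-up is of codimension two, so that the local picture is governed by a single quadratic term instead of the anisotropic cubic of \eqref{whit}. First I would start from the hodograph representation \eqref{impl} of the unperturbed solution, pass to the characteristic variables \eqref{ell3}, introduce $M_\pm(r):=\mu_\pm(r)-t_0\lambda_\pm(r)$ and use $t-t_0=(x_+-x_-)/(\lambda_+^0-\lambda_-^0)$ to rewrite \eqref{impl} as the exact identity
\[
x_\pm=\bigl(M_\pm(r)-M_\pm^0\bigr)+\bigl(\lambda_\pm(r)-\lambda_\pm^0\bigr)\,\frac{x_+-x_-}{\lambda_+^0-\lambda_-^0},
\]
which I would then expand in a Taylor series around $r=r^0$.

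The core of the argument is to show that $M_+$ has a critical point at $r^0$ whose Hessian has rank one. The vanishing of $\partial M_+/\partial r_+$ at $r^0$ is precisely the elliptic break-up condition \eqref{ell2}; the vanishing of $\partial M_+/\partial r_-$ at $r^0$ follows from the linear constraint \eqref{hod} together with the fact that on the solution $(\mu_+-\mu_-)/(\lambda_+-\lambda_-)=t$, hence equals $t_0$ at $r^0$. For the second derivatives I would rewrite \eqref{hod} as $\partial M_+/\partial r_-=\Theta\,\partial\lambda_+/\partial r_-$ with $\Theta:=(M_+-M_-)/(\lambda_+-\lambda_-)$, observe that $M_+^0=M_-^0=x_0$ forces $\Theta(r^0)=0$, and then check that $d\Theta$ also vanishes at $r^0$ (using \eqref{ell2}, its complex conjugate, and \eqref{hod}); differentiating once more gives $\partial^2M_+/\partial r_-^2=\partial^2M_+/\partial r_+\partial r_-=0$ at $r^0$. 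Hence the only surviving second-order term of $M_+$ at $r^0$ is $\tfrac12 a_+\bar r_+^2$ with $a_+=\mu_{+,++}^0-t_0\lambda_{+,++}^0$ as in \eqref{ell6}, nonzero by the genericity of the break-up; the corresponding statements for $M_-$ follow by complex conjugation.

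With this the rescaling is mechanical: inserting $x_\pm=k\,X_\pm$ and $\bar r_\pm=k^{1/2}R_\pm$ into the exact identity, the linear terms as well as the mixed and the $\bar r_\mp^2$ quadratic terms of $M_\pm$ drop out, the cubic and higher remainder is $O(k^{3/2})$, and the correction $(\lambda_\pm(r)-\lambda_\pm^0)(x_+-x_-)/(\lambda_+^0-\lambda_-^0)$ is $O(k^{1/2})\cdot O(k)=O(k^{3/2})$ as well, so that $k\,X_\pm=\tfrac12 a_\pm k\,R_\pm^2+O(k^{3/2})$; dividing by $k$ and letting $k\to0$ gives \eqref{ell5}. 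To justify the existence of the limits \eqref{ell4} I would read the rescaled identity, for small $k>0$, as an analytic perturbation $X_\pm=\tfrac12 a_\pm R_\pm^2+k^{1/2}\Phi_\pm(R_+,R_-;k^{1/2})$ of the pair of quadratics; at a nonzero root $R_\pm=\pm\sqrt{2X_\pm/a_\pm}$ its Jacobian is the invertible diagonal matrix $\mathrm{diag}(a_+R_+,a_-R_-)$, so the analytic implicit function theorem produces a solution depending analytically on $k^{1/2}$ and hence a limit as $k\to0$, namely the branch of the square root selected by analytic continuation of the smooth solution of \eqref{lead} for $t<t_0$.

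I expect the decoupling in the second paragraph to be the main obstacle: one must check that the off-diagonal and the $\bar r_-^2$ Hessian entries of $M_+$ vanish at $r^0$, so that the limiting relation is the genuinely uncoupled pair \eqref{ell5} and not a coupled quadratic system. This is exactly where the codimension-two nature of the elliptic umbilic matters---only the single complex condition \eqref{ell2}, with no additional second-derivative condition of the kind imposed in the hyperbolic case---together with the precise form of \eqref{hod} and of its first derivatives. A minor additional point, which in any case has to be settled when stating the conjecture, is to pin down the domain of $(X_+,X_-)$ (corresponding to $t$ slightly below $t_0$) on which the chosen square-root branch is single-valued.
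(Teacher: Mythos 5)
Your proof is correct and takes essentially the same approach as the paper: it establishes, by differentiating \eqref{hod} and invoking \eqref{ell2}, that all first derivatives and all mixed second derivatives of $\mu_\pm - t_0\lambda_\pm$ vanish at the break-up point, and then Taylor-expands the hodograph relations under the rescaling \eqref{ell4}. Your packaging via $M_\pm$ and the auxiliary function $\Theta=(M_+-M_-)/(\lambda_+-\lambda_-)$ is a mild streamlining of the paper's direct verification of the four second-derivative identities, and the implicit-function argument for the existence of the limits supplies a detail the paper leaves implicit.
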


In the sequel it will be assumed that
\beq\label{ell61}
a_\pm \neq 0
\eeq
(this condition will be added to the genericity assumptions). 

\pf  Differentiating the hodograph relations \eqref{impl} one obtains
$$
\mu_{+,-} - t\, \lambda_{+, -}\equiv 0, \quad \mu_{-,+} - t\, \lambda_{-, +}\equiv 0.
$$
Moreover, differentiating \eqref{hod} one finds that
\eqa
&&
\mu_{+, + -} -t\, \lambda_{+, + -} =\lambda_{+, -} \frac{\mu_{+,+}-t\, \lambda_{+, +}}{\lambda_+-\lambda_-}
\nn\\
&&
\mu_{-, + -} -t\, \lambda_{-, + -} =-\lambda_{-, +} \frac{\mu_{-,-}-t\, \lambda_{-, -}}{\lambda_+-\lambda_-}
\nn\\
&&
\mu_{+,--} -t\, \lambda_{+,--}=-\lambda_{+,-}\frac{\mu_{-,-}-t\, \lambda_{-,-}}{\lambda_+ -\lambda_-}
\nn\\
&&
\mu_{-,++} -t\, \lambda_{-,++}=\lambda_{-,+}\frac{\mu_{+,+}-t\, \lambda_{+,+}}{\lambda_+ -\lambda_-}.
\nn
\eeqa
Hence, due to \eqref{ell2}, all these combinations of the second 
derivatives vanish at the break-up point. Expanding the hodograph equations \eqref{impl} in Taylor series near the point of catastrophe, one easily arrives at \eqref{ell5}. \epf

Choosing Riemann invariants $r_\pm=r_\pm (u,v)$ of the leading term as a system of coordinates on the space of dependent variables, and $x_{\pm}$ as independent variables,
 the system \eqref{cano100} takes the form

\eqa\label{pde1e}
&&
\pal_t r_+ = \lambda_+ (r) \pal _x r_+  +\epsilon^2 \left[C^+_+(r)  \pal_x^3 r_+ +C^+_-(r) \pal_x^3 r_-+\dots\right] +O\left( \epsilon^3\right)
\nn\\
&&
\\
&&
\pal_t r_- = \lambda_- (r) \pal _x r_-+\epsilon^2 \left[C^-_+(r)  \pal_x^3 r_+ +C^-_-(r) \pal_x^3 r_-+\dots\right] +O\left( \epsilon^3\right)
\nn
\eeqa
with 
\eqa\label{cce}
&&
C_+^+=\frac{a\, h_{vv} + 2i b\, \sqrt{|h_{uu} h_{vv}|} +c\, h_{uu}}{2i\sqrt{|h_{uu}h_{vv}}|}, \quad C_-^+= \frac{\kappa_+}{\kappa_-}\frac{c\, h_{uu} -a\, h_{vv}}{2i\sqrt{|h_{uu} h_{vv}}|}
\nn\\
&&
\\
&&
C_+^- =\frac{\kappa_-}{\kappa_+} \frac{a\, h_{vv}-c\, h_{uu}}{2i\sqrt{|h_{uu} h_{vv}}|}, \quad C_-^-=-\frac{a\, h_{vv} - 2i b\, \sqrt{|h_{uu} h_{vv}|} +c\, h_{uu}}{2i\sqrt{|h_{uu} h_{vv}}|}.
\nn
\eeqa

As above we will denote $\bar r_\pm =\bar r_\pm ( x_+,  x_-; \epsilon)$ a shifted generic solution to the system \eqref{pde1e} with $\epsilon$-independent initial data written as functions of the complex conjugated linearized characteristic variables \eqref{ell3}. Like above we will be interested in the multiscale expansion of these complex conjugated functions
\eqa\label{musc}
&&
\bar r_\pm (\bar x_+, \bar x_-; \epsilon) =k^{1/2} R_\pm \left( X_+, X_-; \varepsilon\right) +k\, \Delta R_\pm \left( X_+, X_-; \varepsilon\right)+{\mathcal O}\left( k^{3/2}\right)
\\
&&
x_\pm = k\, X_\pm, \quad \epsilon =k^{5/4}\varepsilon, \quad k\to 0.
\nonumber
\eeqa
We will now show that existence of such expansions implies that the leading term  is a holomorphic/antiholomorphic function
$$
\dfrac{\partial R_{\pm}}{\partial X_{\mp}}=0
$$
satisfying an ODE.

\begin{theorem}\label{theoexpansion}
Let $r_{\pm}(x,t,;\epsilon)$ be a solution of the system (\ref{pde1e}) such that there exist the limits 
\begin{equation}
\begin{split}
\label{R}
R_{\pm}(X_+,X_-;\varepsilon)&=\lim_{k\rightarrow 0}k^{-\frac{1}{2}} \bar{r}_{\pm}(kX_+,kX_-;k^{\frac{5}{4}}\varepsilon)\\
\Delta R_{\pm}(X_{+}, X_-;\varepsilon)&=\lim_{k\rightarrow 0}\dfrac{ \bar{r}_{\pm}(kX_+,kX_-;k^{\frac{5}{4}}\varepsilon)-k^{\frac{1}{2}}R_{\pm}(X_+, X_-;\varepsilon)}{k}
\end{split}
\end{equation}
Then the function $R_+=R_{+}(X_{+}, X_-;\varepsilon)$  satisfies the Cauchy--Riemann equation 
\beq\label{aa}
\dfrac{\partial R_{+}(X_{+}, X_-;\varepsilon)}{\partial X_-}=0
\eeq  and also the equation
\begin{equation}
\label{eqR1}
\lambda_{+,+}^0R_+\dfrac{\partial R_+}{\partial X_+}+\varepsilon^2C_+^+(r^0)\dfrac{\partial^3 R_+}{\partial X^3_+}=c_+
\end{equation}
where $c_+$ is a holomorphic function of $X_+$ such that
\begin{equation}
\label{cp}
c_{+}(X_+)=\dfrac{\lambda^0_{+,+}}{a_+}+O(1/X_+^{\delta}),\;\;\mbox{\rm as} \quad X_+\rightarrow \infty\quad \mbox{\rm and}\quad  
\delta>0.
\end{equation} 
Here $C^+_+$ has been defined in (\ref{cce}). The function $R_{-}=R_-( X_-;\varepsilon)$ is antiholomorphic and satisfies the complex conjugate of \eqref{eqR1}.
The function $\Delta R_{+}(X_{+}, X_-;\varepsilon)$ satisfies the equation
\begin{equation}\label{eqR22}
(\lambda_-^0-\lambda_+^0)\dfrac{\partial }{\partial X_-}\Delta R_+=\lambda_{+,-}^0R_-\dfrac{\partial R_+}{\partial X_+}+\varepsilon^2C_-^+(r^0)\dfrac{\partial^3 R_-}{\partial X_-^3}+c_+,
\end{equation}
where $C^+_-$ has been defined in (\ref{cce}). The function $\Delta R_{-}(X_{+}, X_-;\varepsilon)$ satisfies the complex conjugate of the above equation.
\end{theorem}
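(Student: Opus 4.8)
\medskip

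\noindent The plan is to rewrite \eqref{pde1e} in the linearized characteristic coordinates $x_\pm$ of \eqref{ell3} and to substitute the multiscale ansatz \eqref{musc}, collecting powers of $k$. Since $\partial_x=\partial_{x_+}+\partial_{x_-}$ and $\partial_t=\lambda_+^0\partial_{x_+}+\lambda_-^0\partial_{x_-}$ (the same change of variables as in \eqref{part1}), the $r_+$-equation of \eqref{pde1e} becomes
\[
\bigl(\lambda_+^0-\lambda_+(r)\bigr)\,\partial_{x_+}r_+ +\bigl(\lambda_-^0-\lambda_+(r)\bigr)\,\partial_{x_-}r_+ =\epsilon^2\bigl[C_+^+(r)\,\partial_x^3 r_+ +C_-^+(r)\,\partial_x^3 r_- +\dots\bigr]+O(\epsilon^3),
\]
and similarly for $r_-$. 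Under \eqref{musc} one has $\partial_{x_\pm}=k^{-1}\partial_{X_\pm}$, $\bar r_\pm=k^{1/2}R_\pm+k\,\Delta R_\pm+O(k^{3/2})$, $\lambda_+^0-\lambda_+(r)=-k^{1/2}\bigl(\lambda_{+,+}^0R_++\lambda_{+,-}^0R_-\bigr)+O(k)$, and $\epsilon^2\partial_x^3 r_\pm=\varepsilon^2(\partial_{X_+}+\partial_{X_-})^3R_\pm+O(k^{1/2})$; a quick degree count shows that the cubic and mixed ``$\dots$'' terms in \eqref{pde1e} and the $O(\epsilon^3)$ remainder contribute only at order $k^{1/2}$ and below, hence are irrelevant in what follows.

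\medskip

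\noindent The coefficient of the highest power, $k^{-1/2}$, appears only through $(\lambda_-^0-\lambda_+^0)\,\partial_{X_-}R_+$ (the dispersive right-hand side being only $O(1)$), so it must vanish; as $\lambda_+^0\neq\lambda_-^0$ in the elliptic case (cf.\ \eqref{rme1}) this is the Cauchy--Riemann equation \eqref{aa}, i.e.\ $R_+=R_+(X_+;\varepsilon)$ depends on $X_+$ alone, and, by the conjugate computation on the $r_-$-equation, $R_-$ depends only on $X_-$. At the next order, $k^{0}$, using $\partial_{X_-}R_+=0$ and $\partial_{X_+}R_-=0$ to replace $(\partial_{X_+}+\partial_{X_-})^3$ by $\partial_{X_+}^3$ on $R_+$ and by $\partial_{X_-}^3$ on $R_-$, the $r_+$-equation reduces to
\[
(\lambda_-^0-\lambda_+^0)\,\partial_{X_-}\Delta R_+ =\bigl[\lambda_{+,+}^0R_+\partial_{X_+}R_+ +\varepsilon^2C_+^+(r^0)\partial_{X_+}^3R_+\bigr] +\lambda_{+,-}^0R_-\partial_{X_+}R_+ +\varepsilon^2C_-^+(r^0)\partial_{X_-}^3R_-.
\]
The bracketed combination depends on $X_+$ alone and is holomorphic there because $R_+$ is; calling it $c_+(X_+)$ yields \eqref{eqR1} (at this stage merely a definition), while the remaining identity is exactly \eqref{eqR22}. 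Since the hypotheses grant the existence of $\Delta R_+$ in \eqref{R}, there is no integrability obstruction: \eqref{eqR22} then recovers $\Delta R_+$ by a quadrature in $X_-$.

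\medskip

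\noindent To pin down the asymptotics \eqref{cp} of $c_+$ I would match the rescaled profile with the dispersionless hodograph solution analysed just before the theorem. Away from the immediate vicinity of the catastrophe the $\varepsilon^2$-term in \eqref{eqR1} decays, so for $|X_+|$ large $R_+(X_+)$ must agree to leading order with the relevant root of the quadratic relation $X_+=\frac12 a_+R_+^2$ of \eqref{ell5}--\eqref{ell6}, i.e.\ $R_+\sim(2X_+/a_+)^{1/2}$; differentiating this relation gives $R_+\partial_{X_+}R_+=1/a_+$ and $\partial_{X_+}^3R_+=O(X_+^{-5/2})$, so that $c_+(X_+)=\lambda_{+,+}^0/a_+ +O(X_+^{-5/2})$, which is \eqref{cp}. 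Finally, because the initial data are real the solution of \eqref{pde1e} preserves $r_-=r_+^*$, hence $R_-=R_+^*$, $\Delta R_-=\Delta R_+^*$ and $c_-=c_+^*$, and the equations for $R_-$ and $\Delta R_-$ are the complex conjugates of those for $R_+$ and $\Delta R_+$.

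\medskip

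\noindent The main obstacle is this last step: turning the matching heuristic into the precise statement \eqref{cp} requires exhibiting an overlap region where both the expansion \eqref{musc} and the $\epsilon\to0$ hodograph description are valid and the dispersive corrections are genuinely subleading --- or, equivalently, recognising \eqref{cp} as the boundary condition singling out, among the solutions of the third-order ODE \eqref{eqR1}, the one relevant to the Cauchy problem, in close analogy with the tritronqu\'ee selection of a solution of P$_I$ that governs the elliptic case. By comparison, the power counting that discards the ``$\dots$'' and $O(\epsilon^3)$ terms, and the verification that no $k^{-1/2}$ or $k^{0}$ contribution has been overlooked, is routine bookkeeping.
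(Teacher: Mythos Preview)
Your argument is correct and follows essentially the same route as the paper: substitute the multiscale ansatz \eqref{musc} into \eqref{pde1e} written in the linearized characteristic variables, read off the Cauchy--Riemann relation at order $k^{-1/2}$, and at order $k^{0}$ split the resulting identity into its holomorphic part (defining $c_+$, hence \eqref{eqR1}) and the remainder \eqref{eqR22}; the asymptotics \eqref{cp} then come from matching $R_+$ with the dispersionless quadratic profile \eqref{ell5}. Your presentation is in fact slightly more explicit than the paper's on two points --- the degree count disposing of the lower-order ``$\dots$'' and $O(\epsilon^3)$ terms, and the concrete decay $O(X_+^{-5/2})$ obtained from $R_+\sim(2X_+/a_+)^{1/2}$ --- and your closing caveat about the matching step being a heuristic rather than a fully justified overlap argument is an honest assessment that applies equally to the paper's own treatment.
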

\begin{proof}
In order to prove the theorem it is sufficient to plug the expansion \eqref{musc} and (\ref{R}) into equation (\ref{pde1e}) giving the following expansions
\begin{equation}
\label{Mexpansion}
\begin{split}
&k^{-1/2} \left( \lambda_+^0-\lambda_-^0\right) \frac{\pal R_+}{\pal X_-} +\left( \lambda_+^0-\lambda_-^0\right) \frac{\pal \Delta R_+}{\pal X_-} +\left( \lambda_{+,+}^0 R_+ +\lambda_{+,-}^0 R_-\right)\left( \frac{\pal}{\pal X_+} +\frac{\pal}{\pal X_-}\right) R_+
\\
&
 +\varepsilon^2 C_+^+(r^0) \left( \frac{\pal}{\pal X_+} +\frac{\pal}{\pal X_-}\right)^3 R_+ +\varepsilon^2 C_+^-(r^0) \left( \frac{\pal}{\pal X_+} +\frac{\pal}{\pal X_-}\right)^3 R_-= {\mathcal O}\left( k^{1/2}\right)
\\
&
k^{-1/2} \left( \lambda_-^0-\lambda_+^0\right) \frac{\pal R_-}{\pal X_+} +\left( \lambda_-^0-\lambda_+^0\right) \frac{\pal \Delta R_-}{\pal X_+} +\left( \lambda_{-,+}^0 R_+ +\lambda_{-,-}^0 R_-\right)\left( \frac{\pal}{\pal X_+} +\frac{\pal}{\pal X_-}\right) R_-
\\
&
 +\varepsilon^2 C_-^+(r^0) \left( \frac{\pal}{\pal X_+} +\frac{\pal}{\pal X_-}\right)^3 R_++\varepsilon^2 C_-^-(r^0) \left( \frac{\pal}{\pal X_+} +\frac{\pal}{\pal X_-}\right)^3 R_- = {\mathcal O}\left( k^{1/2}\right)
\end{split}
\end{equation}
Since $\lambda_+^0\neq \lambda_-^0$, from the leading term it readily follows that
\begin{equation}
\label{holo}
\frac{\pal R_+}{\pal X_-}=0, \quad \frac{\pal R_-}{\pal X_+}=0.
\end{equation}
Separating holomorphic and antiholomorphic parts in the terms of order ${\mathcal O}(1)$ one arrives at equations 
\eqref{eqR1}, \eqref{eqR22} and their complex conjugates. 

Equation (\ref{eqR1}) must have a solution with asymptotic behaviour determined by (\ref{ell5}), namely 
\begin{equation}
\label{Rboundary}
R_+(X_+)\rightarrow\pm\sqrt{\dfrac{2X_+}{a_+}}, \quad \mbox{as}   \;\; X_+\rightarrow\infty.
\end{equation}
 This immediately gives  that $c_+$ is an analytic function of $X_+$ 
 with asymptotic behavior at infinity    
\begin{equation}
\label{cpp}
c_{+}(X_+)=\dfrac{\lambda^0_{+,+}}{a_+}+O(1/X_+^{\delta}),\;\;\delta>0\quad c_-(X_-)=\bar{c}_+(\bar{X}_+).
\end{equation} 
\end{proof}

Assuming $c_+={\rm const}$ we arrive at an ODE for the function $R_+=R_+(X_+)$ equivalent to the Painlev\'e-I equation,
\beq\label{ur1}
\varepsilon^2 C_+^+(r_0) R_+'' +\frac12 \lambda_{+,+}^0 R_+^2  =\frac{\lambda_{+,+}^0}{a_+},
\eeq
with asymptotic behaviour (\ref{Rboundary}).
The complex conjugate of the above equation gives the  corresponding   Painlev\'e-I equation for  $R_-=R_-(X_-)$.
 If we linearized  the  increments of the Riemann invariants   we obtain
 \beq\label{sdvigelliptic}
r_\pm-r^0_\pm =\kappa^0_\pm \left(   \sqrt{|h^0_{vv}|}\,(v-v_0)\pm i \sqrt{|h^0_{uu}| }\,(u-u_0) \right)+O(\epsilon^{\frac{4}{5}}).
\end{equation}
For simplicity we normalize the constant $\kappa^0_{\pm}$ to
\beq\label{kappa01}
\kappa_+^0=\kappa_-^0=\left\{ \begin{array}{rl} 1, & h_{uu}^0<0 ~\mbox{and}~ h_{vv}^0 >0\\
\sqrt{-1}, & h_{uu}^0>0 ~\mbox{and}~ h_{vv}^0 <0.\end{array}\right.
\eeq
From (\ref{ur1}) and (\ref{sdvigelliptic}) we arrive at the following
\begin{conj}
\label{conjelliptic}
The functions  $u(x,t,\epsilon)$ and $v(x,t,\epsilon)$  that solves  the system (\ref{cano100}) 
admits the  following asymptotic representation in the double scaling limit  $x\rightarrow x_0$, $t\rightarrow t_0$ and $\epsilon\rightarrow 0$  in such a way that 
\begin{equation}
\dfrac{x-x_0+\lambda^0_{\pm}(t-t_0)}{\epsilon^{\frac{4}{5}}}
\end{equation}
remains bounded
\begin{equation}
\label{rasymp}
    \sqrt{|h^0_{vv}|}\,(v(x,t,\epsilon)-v_0)+i\sqrt{|h^0_{uu}| }\,(u(x,t,\epsilon)-u_0) =-12 \left(\dfrac{\epsilon^2C_+^+(u_0,v_0)}{(12a_+)^2\lambda_{+,+}^0}\right)^{\frac{1}{5}}\Omega(\xi)+O(\epsilon^{\frac{4}{5}}),
\end{equation}
where
\begin{equation}
\label{xi}
\xi=\left(\dfrac{(\lambda_{+,+}^0)^2}{12a_+\epsilon^4(C_+^+(u_0,v_0))^2}\right)
^{\frac{1}{5}}(x-x_0+\lambda_+^0(t-t_0))
\end{equation}
where $a_+$ and  $C_+^+(u_0, v_0)$  have been defined in (\ref{ell6}) and (\ref{cce}) respectively
and $\Omega=\Omega(\xi)$ is the tritronqu\'ee solution to the Painlev\'e-I equation
\begin{equation}
\label{PItheo}
\Omega_{\xi\xi}=6\Omega^2-\xi,
\end{equation}
determined uniquely by the asymptotic conditions\footnote{Note that 
there are additional tritronqu\'ee solutions $\Omega_{n}$, 
$n=\pm1,\pm2$, related to $\Omega$ via $\Omega_{n}(\xi)=e^{\frac{4\pi 
i}{5}}\Omega(e^{\frac{2\pi i}{5}}\xi)$.}
\begin{equation}
\label{PIasymp}
\Omega(\xi)\simeq -\sqrt{\dfrac{\xi}{6}},\quad |\xi|\rightarrow \infty,\quad |\arg\xi|<\dfrac{4}{5}\pi.
\end{equation}
\end{conj}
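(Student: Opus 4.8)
\noindent
The plan is to deduce the conjecture from Theorem~\ref{theoexpansion} together with a selection principle for the relevant solution of Painlev\'e-I, and then to translate back from the leading Riemann-invariant profile $R_+$ to the physical fields $u,v$. Assuming, as in the hypotheses of Theorem~\ref{theoexpansion}, that the double-scaling limits \eqref{R} exist, that theorem already yields that $R_+=R_+(X_+;\varepsilon)$ is holomorphic in $X_+$ and satisfies \eqref{eqR1}, with $c_+$ holomorphic and $c_+(X_+)=\lambda_{+,+}^0/a_+ + O(X_+^{-\delta})$ as $X_+\to\infty$. The first point is to argue that $c_+$ is in fact identically equal to its limit value $\lambda_{+,+}^0/a_+$: a genuine $X_+$-dependence of $c_+$ would deform \eqref{eqR1} into a non-autonomous perturbation of Painlev\'e-I whose solution matching the outer profile \eqref{ell5} would generically acquire poles, contradicting smoothness of the perturbed solution. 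Granting $c_+\equiv\lambda_{+,+}^0/a_+$ and using $R_+\,\partial_{X_+}R_+=\tfrac12\partial_{X_+}(R_+^2)$, integrating \eqref{eqR1} once in $X_+$ (the matching with the outer solution fixing the integration constant) leaves the second-order ODE
\begin{equation}
\varepsilon^2\,C_+^+(r^0)\,R_+'' + \tfrac12\,\lambda_{+,+}^0\,R_+^2 = \frac{\lambda_{+,+}^0}{a_+}\,X_+ ,
\end{equation}
that is, the Painlev\'e-I equation \eqref{ur1} with $X_+$-linear right-hand side, to be solved with the boundary behaviour \eqref{Rboundary} inherited from the dispersionless hodograph.

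Next I would put this ODE into the canonical form \eqref{PItheo} by the affine substitution $R_+=A\,\Omega(\xi)$, $\xi=B\,X_+$, fixing the complex constants $A,B$ by requiring the equation to become $\Omega_{\xi\xi}=6\Omega^2-\xi$. This forces $6B^2/A=-\lambda_{+,+}^0/(2\varepsilon^2 C_+^+)$ and $-AB^3=\lambda_{+,+}^0/(a_+\varepsilon^2 C_+^+)$, whose ratio gives the compatibility relation $A^2B=12/a_+$; solving, $B=\big((\lambda_{+,+}^0)^2/(12 a_+\varepsilon^4(C_+^+)^2)\big)^{1/5}$ and $A=-12\big(\varepsilon^2 C_+^+/((12a_+)^2\lambda_{+,+}^0)\big)^{1/5}$, which upon setting $\varepsilon=1$, i.e.\ $k=\epsilon^{4/5}$ in \eqref{musc}, reproduces exactly the scaling variable \eqref{xi} and the prefactor of \eqref{rasymp} and fixes the double-scaling window in which $(x-x_0+\lambda_\pm^0(t-t_0))/\epsilon^{4/5}$ stays bounded. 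The boundary condition \eqref{Rboundary}, $R_+(X_+)\sim\pm\sqrt{2X_+/a_+}$ as $X_+\to\infty$, then becomes $\Omega(\xi)\simeq-\sqrt{\xi/6}$; this asymptotics must moreover persist in as wide a sector of $\arg\xi$ as matching with the regular outer solution on both sides of $t_0$ requires, with the solution pole-free there so that no spurious singularity forms before the catastrophe. These two demands single out the tritronqu\'ee solution $\Omega$ of P$_I$, characterised by \eqref{PIasymp} and pole-free in $|\arg\xi|<4\pi/5$ by \cite{co1} (the existence of this sector having been conjectured in \cite{DGK}); this is what makes the conjecture well posed.

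It then remains to undo the scalings: $\bar r_+ = k^{1/2}R_+ = \epsilon^{2/5}A\,\Omega(\xi)+O(\epsilon^{4/5})$, the $O(\epsilon^{4/5})$ correction being controlled at next order by the equation \eqref{eqR22} for $\Delta R_+$. Combining this with the linearisation \eqref{sdvigelliptic} of the Riemann invariant and the normalisation \eqref{kappa01} of $\kappa_\pm^0$, which identify $\bar r_+$ with $\sqrt{|h_{vv}^0|}(v-v_0)+i\sqrt{|h_{uu}^0|}(u-u_0)$ up to $O(\epsilon^{4/5})$, and inserting the explicit value of $A$, one arrives at \eqref{rasymp}; the complex-conjugate equation treats $R_-$, and splitting into real and imaginary parts recovers $u$ and $v$ separately.

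The genuinely hard ingredient---and the reason this is stated as a conjecture rather than a theorem---is the standing hypothesis of Theorem~\ref{theoexpansion}, namely the existence of the solution $r_\pm(x,t;\epsilon)$ of the perturbed elliptic system \eqref{pde1e} for $t$ up to and slightly beyond $t_0$, together with the existence of the double-scaling limits \eqref{R}. For the scalar KdV case the analogous statement was proved in \cite{cg1} by Riemann--Hilbert steepest-descent methods, but for elliptic leading terms---the semiclassical focusing NLS being the model case---only partial results are available (see \cite{bt}, \cite{co1}). A secondary gap is the claim above that $c_+$ is exactly constant: Theorem~\ref{theoexpansion} only delivers $c_+(X_+)=\lambda_{+,+}^0/a_++O(X_+^{-\delta})$, and turning the ``no extra poles'' heuristic into a proof would require controlling the isomonodromy problem attached to P$_I$ in the full sector, in the spirit of \cite{co1}.
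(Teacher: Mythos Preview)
Your derivation follows the same heuristic route as the paper: invoke Theorem~\ref{theoexpansion} to obtain \eqref{eqR1}, assume $c_+$ is exactly the constant $\lambda_{+,+}^0/a_+$, integrate once to a second-order ODE of Painlev\'e-I type, rescale to the canonical form \eqref{PItheo}, select the tritronqu\'ee solution by matching to the dispersionless profile \eqref{Rboundary}, and translate back via \eqref{sdvigelliptic}. You are more explicit than the paper on several points---the determination of the rescaling constants $A,B$, the sectorial matching argument for why the tritronqu\'ee branch is singled out, and the identification of the two analytic gaps (existence of the double-scaling limits \eqref{R}, and the passage from $c_+(X_+)=\lambda_{+,+}^0/a_++O(X_+^{-\delta})$ to $c_+\equiv\mathrm{const}$)---but the skeleton is identical. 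Note also that your integrated ODE, with $\lambda_{+,+}^0 X_+/a_+$ on the right-hand side, agrees with the paper's equation \eqref{PI} obtained independently from the string-equation route; the paper's display \eqref{ur1} omits the factor $X_+$, which appears to be a slip there rather than in your argument.
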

 The   smoothness of the solution of (\ref{PItheo}) with asymptotic condition (\ref{PIasymp})  in a sector of 
the complex $z$-plane   of angle $|\arg z|<4\pi/5$ conjectured in \cite{DGK} has only 
recently been  proved in \cite{co1}. For a plot of such solution in the complex plane see \cite{DGK}.
\begin{rem}
Observe that the tritronqu\'ee solution to the Painlev\'e-I equation is invariant with respect to complex conjugation 
\beq\label{cpx}
\overline{\Omega\left(\bar\xi\right)}=\Omega(\xi).
\eeq
So the asymptotic representation of the  linearised Riemann invariant 
$  \sqrt{|h^0_{vv}|}\,(v(x,t,\epsilon)-v_0)-i\sqrt{|h^0_{uu}| }\,(u(x,t,\epsilon)-u_0)$  is given by the complex conjugate of \eqref{rasymp}.
\end{rem}
\begin{rem}
We write  the constant $a_+$ in the form
\[
\dfrac{1}{a_+}=i\left(\dfrac{C^+_+}{\lambda_{+,+}^0}\right)^2 qe^{i\psi},
\]
with $q>0$  and $\psi\in[-\pi,\pi]$. One  can check that when $\psi=0$ and 
$t=t_0$ the quantity  $\xi$ defined in
 (\ref{xi})  has to be purely imaginary, and this gives a rule for the selection of the fifth root, namely 
\[
\xi=i\left(\dfrac{qe^{i\psi}}{12\epsilon^4}\right)^{\frac{1}{5}}(x-x_0+\lambda_+^0(t-t_0)).
\]
Note that the angle of the line $\xi=\xi(x-x_0)$ for fixed  $t$,  is equal to 
$\dfrac{\pi}{2}+\dfrac{\psi}{5}$ with 
 $\psi\in[-\pi,\pi]$, thus the maximal value of  $\arg \xi$ is equal to 
$\dfrac{7}{10}\pi<\dfrac{4}{5}\pi$.
\end{rem}

In the next subsection we consider an alternative derivation of the P$_I$ equation for a subclass of Hamiltonian PDEs having the structure of a generalized nonlinear Schr\"odinger equation.
\subsection{Painlev\'e-I equation and almost integrable PDEs}
In this subsection we  give a different derivation of the conjecture~\ref{conjelliptic} for Hamiltonian equations  (\ref{pert0}) with  Hamiltonian $H_0=\int h(u,v)\, dx$ satisfying equation
\beq\label{nwave}
h_{uu}=P''(u) h_{vv}
\end{equation}
for some smooth function $P(u)$, with $P''(u)<0$.  We will see below that, in particular, a very general family of nonlinear Schr\"odinger equations belongs to this subclass. 
The condition
\begin{equation}
\label{P2}
\dfrac{h_{uu}}{h_{vv}}=P''(u)<0
\end{equation}
guarantees that the unperturbed quasi-linear system is elliptic.

A local solution of  the system (\ref{cano00})  with $h(u,v)$ satisfying (\ref{nwave})  for  given analytic  initial data $u_0(x)$, $v_0(x)$ takes the form
\begin{equation}
\begin{split}
x+&th_{uv}=f_u\\
&th_{vv}=f_v
\end{split}
\end{equation}
where the function $f=f(u,v)$ satisfies equation
\begin{equation}
\label{ellipticlinear}
f_{uu}=P''(u) f_{vv}
\end{equation}
and the condition
\begin{equation}
x=f_u(u_0(x),v_0(x)), \quad 0=f_v(u_0(x),u_0(x)).
\end{equation}
The equation for determining the  
point of elliptic umbilic catastrophe characterized by equation (\ref{ell2})
in the variables $u$ and $v$ takes the form
\begin{equation}
\begin{split}
h_{uvv}t_0-f_{uv}^0&=0\\
h^0_{vvv}t_0-f_{vv}^0&=0
\end{split}
\end{equation}
and the constants $a_{\pm}$ takes the form
\beq\label{ell6b}
a_\pm = f^0_{uvv}-t_0h_{uvvvv}^0\pm i\sqrt{|P''(u_0)|}(f_{vvv}^0-t_0h_{vvvv}^0).
\eeq

To study the critical behaviour of solutions of (\ref{pert3}) 
 we first restrict ourselves to the almost integrable cases and recall some results in \cite{dub3}. We describe Hamiltonian integrable perturbations of equation (\ref{pert0}) up to terms of
 order $O(\epsilon^3)$. 
 
 \begin{theorem}\cite{dub3}
 Any Hamiltonian perturbation integrable up to order $\epsilon^3$  of the system of equations
(\ref{pert0}) satisfying (\ref{nwave}) is given by equations
 \begin{equation}
\label{system1pert}
\begin{split}
u_t& =\pal_x \frac{\delta H_h}{\delta v(x)}  \\
v_t&=\pal_x \frac{\delta H_h}{\delta u(x)} 
\end{split}
\end{equation}
 with Hamiltonian $H_h=\int Dh\, dx$ and  Hamiltonian density  $Dh$ given by
 \begin{equation}
\label{DHamiltonian}
 \begin{split}
Dh& = h -\frac{\epsilon^2}2 \left\{ \left[P''( \rho_u h_{vvv} + \rho_v h_{uvv}) +\frac12 P''' \rho_v h_{vv}\right]\, u_x^2
\right.\\
&
+2\left( P''\rho_v h_{vvv} +\rho_u h_{uvv} +\frac{P'''}{4 P''} \rho_u h_{vv}\right)\, u_x v_x\\
&
\left.
+\left( \rho_u h_{vvv} +\rho_v h_{uvv}\right) \, v_x^2+ s_3 \,  \left( v_x^2 -P'' u_x^2\right)\, h_{vv}\right\}+{\mathcal O}\left( \epsilon^3\right),
\end{split}
\end{equation}
where   the function $\rho=\rho(u,v)$ satisfies the linear PDE
\begin{equation}
\label{rhoequation}
\rho_{uu} -P'' \rho_{vv} =\frac{P'''}{2P''} \,\rho_u
\end{equation}
and $s_3=s_3(u,v)$ is an arbitrary function.
For any function $f=f(u,v)$ that satisfies (\ref{ellipticlinear}) the corresponding Hamiltonian $H_f$  given  by  an equivalent expression to (\ref{DHamiltonian}),
Poisson commute with $H_h$ up to $\epsilon^3$, namely
\[
\{H_h,\,H_f\}=O(\epsilon^3).
\]
 \end{theorem}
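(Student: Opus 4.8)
The plan is to follow the strategy of \cite{dub3}: rephrase ``integrability up to order $\epsilon^3$'' as deformability of the whole dispersionless commutative algebra of Hamiltonians, and reduce the classification to a single linear equation for the coefficients of the $\epsilon^2$ correction. First I would normalize the perturbation. By the canonical transformation generated by a Hamiltonian \eqref{cano1}, any perturbed Hamiltonian of the form \eqref{pert2} with $h$ satisfying \eqref{nwave} is brought to the form \eqref{pert20}, so it suffices to classify the admissible triples $a(u,v),b(u,v),c(u,v)$. Recall that at $\epsilon=0$ the integrals of motion of \eqref{pert0} are $H_f^{[0]}=\int f\,dx$ with $f$ ranging over the solution space of the linear PDE $f_{uu}=P''f_{vv}$ \eqref{ellipticlinear} (equivalently $h_{uu}f_{vv}=h_{vv}f_{uu}$), and any two of them Poisson-commute. ``Integrable up to order $\epsilon^3$'' then means exactly that this family admits deformations $H_f=\int(f+\epsilon^2 D_2f+\dots)\,dx$, with $D_2f$ a degree-two graded differential polynomial in the jets, such that $\{H_f,H_g\}=O(\epsilon^3)$ for all such $f,g$, the flow $f=h$ being the given one.

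Next I would extract the order-$\epsilon^2$ obstruction. Since the dispersionless bracket vanishes, $\{H_f,H_g\}=\epsilon^2\bigl(\{H_f^{[0]},H_g^{[2]}\}+\{H_f^{[2]},H_g^{[0]}\}\bigr)+O(\epsilon^4)$, and this $\epsilon^2$ term is a local functional, bilinear in the jets of $(u,v)$, whose integrand after integration by parts is a cubic differential polynomial with coefficients built from $h,f,g$ and from $a,b,c$ together with their analogues for $f$ and $g$. Setting it to zero and using the structural identity $h_{uu}=P''h_{vv}$ with its differential consequences ($h_{uuv}=P''h_{vvv}$, $h_{uuu}=P''h_{uvv}+P'''h_{vv}$, etc.) collapses the a priori over-determined system of PDEs for $a,b,c$ into: a linear expression of $a,b,c$ through a single potential $\rho(u,v)$, subject to the closure relation $\rho_{uu}-P''\rho_{vv}=\frac{P'''}{2P''}\rho_u$ \eqref{rhoequation}. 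The extra term $s_3\,(v_x^2-P''u_x^2)h_{vv}$ turns out to lie in the kernel of the whole system — geometrically the quasi-trivial direction, removable by a quasi-Miura transformation — which one verifies by direct substitution, so $s_3$ stays arbitrary. Reassembling, the general admissible $\epsilon^2$ density is exactly \eqref{DHamiltonian}.

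Finally, for the commuting hierarchy I would exploit that the computation above is symmetric in $f$ and $g$: the same $\rho$ and $s_3$ that make $H_h$ integrable turn the object obtained from \eqref{DHamiltonian} by the replacement $h\to f$ (any $f$ solving \eqref{ellipticlinear}, same $\rho$, same $s_3$) into a genuine graded deformation $H_f$ of $H_f^{[0]}$, and the vanishing of its order-$\epsilon^2$ bracket with $H_h$ is precisely one of the relations already imposed; hence $\{H_h,H_f\}=O(\epsilon^3)$. It remains only to check that \eqref{DHamiltonian} with $h\to f$ is still a homogeneous degree-two polynomial correction (immediate from the formula) and to rerun the bracket identity with the two Hamiltonians interchanged.

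\textbf{Main obstacle.} The delicate point is the order-$\epsilon^2$ Poisson-bracket computation and the bookkeeping that follows it: one must carry the bilinear local functional $\{H_f^{[0]},H_g^{[2]}\}+\{H_f^{[2]},H_g^{[0]}\}$ through repeated integration by parts, recognize the many total-$x$-derivative combinations that drop out of $\int\cdot\,dx$, and use $h_{uu}=P''h_{vv}$ and its consequences to see that the resulting over-determined system for $a,b,c$ is in fact solvable, with general solution parametrized by $(\rho,s_3)$ and $\rho$ constrained by \eqref{rhoequation}. Isolating cleanly the free $s_3$-direction — proving both that adding $s_3(v_x^2-P''u_x^2)h_{vv}$ never spoils commutativity and that nothing outside the $(\rho,s_3)$ family survives — is best phrased in the language of the Poisson-bracket cohomology of the dispersionless bracket, where \eqref{rhoequation} is the cocycle condition and the $s_3$-term represents the coboundary (trivial) part.
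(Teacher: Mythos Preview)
The paper does not prove this theorem at all: it is quoted verbatim from \cite{dub3} and stated without proof. There is therefore no ``paper's own proof'' to compare your attempt against.

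Your sketch is a faithful outline of the argument in \cite{dub3}: normalize away the $O(\epsilon)$ terms by a canonical transformation, interpret integrability up to $O(\epsilon^3)$ as the simultaneous deformability of the commuting family $\{H_f^{[0]}\}$ with $f_{uu}=P''f_{vv}$, impose vanishing of the $\epsilon^2$ part of $\{H_f,H_g\}$, and reduce the resulting over-determined linear system for $a,b,c$ (using $h_{uu}=P''h_{vv}$ and its differential consequences) to a single potential $\rho$ constrained by \eqref{rhoequation}, with $s_3$ unconstrained. One small caution: your description of the $s_3$-term as ``the quasi-trivial direction, removable by a quasi-Miura transformation'' is more than what is needed or claimed. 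In \cite{dub3} the $s_3$-piece is simply the residual freedom left after the commutativity conditions are imposed --- a genuine free functional parameter in the class of integrable deformations, not a coboundary to be quotiented out. Indeed the paper later (in the Lemma following Proposition~\ref{stringPainleve}) uses $s_3$ as an adjustable parameter to match a given non-integrable perturbation at the critical point, which would be meaningless if it were pure gauge. So keep $s_3$ as a free parameter rather than asserting it is trivial.
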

 Furthermore, a class of solutions of the system (\ref{system1pert}) characterized by 
 an analogue of the \emph{string equation} is given by the following theorem.
 \begin{theorem} \cite{dub3} The solutions to the 
{\rm string equation}
\eqa\label{string}
&&
x+ t \,\frac{\delta H_{h'}}{\delta u(x)} =\frac{\delta H_{f}}{\delta u(x)}
\nn\\
&&
\\
&&
 \qquad t \,\frac{\delta H_{h'}}{\delta v(x)} =\frac{\delta H_{f}}{\delta v(x)}
\nn
\eeqa
also solve the Hamiltonian equations
\eqa\label{fl1}
&&
 u_t =\pal_x \frac{\delta H_{h}}{\delta v(x)} 
 \nn\\
 &&
 \\
 &&
 v_t =\pal_x \frac{\delta H_{h}}{\delta u(x)} 
\nn
\eeqa
 where $f=f(u,v)$ is another solution to $f_{uu}=P''(u) f_{vv}$, and 
 $$
 h':= \frac{\pal h}{\pal v}.
 $$
\end{theorem}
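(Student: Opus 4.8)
\medskip

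The statement is the weakly dispersive counterpart of the hodograph Lemma of Section~\ref{section1}: at $\epsilon=0$ one has $\delta H_{h'}/\delta u=h_{uv}$, $\delta H_{h'}/\delta v=h_{vv}$ (since $h'=h_v$) and $\delta H_f/\delta u=f_u$, $\delta H_f/\delta v=f_v$, so \eqref{string} collapses to a hodograph system of the form \eqref{hodog11} (with the potential here in the role of $f_v$ there), and the $\epsilon^0$ part of the claim is exactly that Lemma. The plan is to promote this, order by order, to the asserted statement modulo $O(\epsilon^3)$ by exploiting the commutativity of the family of Hamiltonians furnished by the preceding theorem. The first point is that $h'=h_v$ again solves $h'_{uu}=P''(u)\,h'_{vv}$ — because $P$ depends on $u$ only, $(h_v)_{uu}=\partial_v(P''h_{vv})=P''(h_v)_{vv}$ — so $H_{h'}$ belongs to that family; in particular $\{H_h,H_f\}=O(\epsilon^3)$ and $\{H_h,H_{h'}\}=O(\epsilon^3)$.

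Write $w=(u,v)^{T}$, $\mathrm{grad}\,G=\bigl(\delta G/\delta u,\,\delta G/\delta v\bigr)^{T}$, and let $\mathcal H_{G}$ be the formally self-adjoint matrix differential operator representing the Hessian of a local functional $G$, so that $\partial_{x}\,\mathrm{grad}\,G=\mathcal H_{G}[w_{x}]$ and $\partial_{t}\,\mathrm{grad}\,G=\mathcal H_{G}[w_{t}]$ along a solution. Differentiating the two equations \eqref{string} in $x$ and in $t$ and setting $\mathcal M:=\mathcal H_{f}-t\,\mathcal H_{h'}$ gives
\[
\mathcal M[w_{x}]=\binom{1}{0},\qquad \mathcal M[w_{t}]=\mathrm{grad}\,H_{h'}.
\]
The infinitesimal form of commutativity of two Hamiltonians is obtained by taking the variational derivative of $\{H_{1},H_{2}\}=\int\langle\mathrm{grad}\,H_{1},\partial_{x}\eta\,\mathrm{grad}\,H_{2}\rangle\,dx$ and using that $\partial_{x}\eta$ is skew-symmetric with constant coefficients and that Hessians are self-adjoint: $\mathcal H_{1}\,\partial_{x}\eta\,\mathrm{grad}\,H_{2}-\mathcal H_{2}\,\partial_{x}\eta\,\mathrm{grad}\,H_{1}=\mathrm{grad}\{H_{1},H_{2}\}$, the right-hand side being $O(\epsilon^{3})$ for $(H_{1},H_{2})=(H_{f},H_{h})$ and $(H_{h'},H_{h})$. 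Inserting these two identities into $\mathcal M[\partial_{x}\eta\,\mathrm{grad}\,H_{h}]$, and using \eqref{string} itself in the form $\mathrm{grad}\,H_{f}-t\,\mathrm{grad}\,H_{h'}=(x,0)^{T}$ together with $\partial_{x}\eta\,(x,0)^{T}=(0,1)^{T}$, one finds
\[
\mathcal M\bigl[w_{t}-\partial_{x}\eta\,\mathrm{grad}\,H_{h}\bigr]=\mathrm{grad}\,H_{h'}-\mathcal H_{h}\binom{0}{1}+O(\epsilon^{3}).
\]

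Everything then reduces to the identity $\mathcal H_{h}\,(0,1)^{T}=\mathrm{grad}\,H_{h'}+O(\epsilon^{3})$, i.e.\ to $H_{h'}$ being, at the level of variational derivatives, the ``$\partial_{v}$-derivative'' of $H_{h}$; at $\epsilon=0$ it reads simply ``Hessian matrix of $h$ times $(0,1)^{T}$'' $=(h_{uv},h_{vv})^{T}=\mathrm{grad}\,H_{h'}$. I expect its persistence to order $\epsilon^{2}$ to be the main obstacle: one has to verify that the perturbed Hamiltonian density of $H_{h'}$ coincides, modulo total $x$-derivatives, with the $v$-derivative of the perturbed density \eqref{DHamiltonian} of $H_{h}$, which fixes the way the auxiliary functions $\rho$ and $s_{3}$ in \eqref{DHamiltonian} are carried over from $h$ to $h'=h_{v}$ — a computational but delicate piece of bookkeeping on the structure of the integrable perturbation.

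Granting that identity, $\Phi:=w_{t}-\partial_{x}\eta\,\mathrm{grad}\,H_{h}$ satisfies $\mathcal M[\Phi]=O(\epsilon^{3})$. The leading symbol of $\mathcal M$ is the $2\times2$ matrix $M^{0}$ built from the Hessian matrices of $f$ and of $h'$ at $\epsilon=0$; up to the $f\leftrightarrow f_{v}$ relabelling between the two forms of the hodograph, $\det M^{0}$ is (a nonzero multiple of) the non-degeneracy quantity \eqref{hodog2}, so $M^{0}$ is invertible near the point under consideration. Expanding $\Phi=\Phi_{0}+\epsilon^{2}\Phi_{2}+O(\epsilon^{3})$ (there are no $O(\epsilon)$ terms) and solving $\mathcal M[\Phi]=O(\epsilon^{3})$ order by order forces $\Phi_{0}=\Phi_{2}=0$, that is $w_{t}=\partial_{x}\eta\,\mathrm{grad}\,H_{h}$ up to $O(\epsilon^{3})$, which is precisely \eqref{fl1}. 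The conclusion holds only to that order because both the truncation of $\mathcal M$ and the commutators contribute at order $\epsilon^{3}$, consistently with the perturbative scope of the preceding theorem.
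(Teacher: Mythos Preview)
The paper does not prove this theorem; it is quoted from \cite{dub3} without argument, so there is no in-paper proof to compare against. What follows is therefore an assessment of your sketch on its own merits.

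Your strategy is the natural one and the algebraic skeleton is correct: differentiating \eqref{string} in $x$ and in $t$ yields $\mathcal M[w_x]=(1,0)^T$ and $\mathcal M[w_t]=\mathrm{grad}\,H_{h'}$; the Hessian identity $\mathcal H_1\,\eta\partial_x\,\mathrm{grad}\,H_2-\mathcal H_2\,\eta\partial_x\,\mathrm{grad}\,H_1=\mathrm{grad}\{H_1,H_2\}$ for the constant-coefficient bracket \eqref{sys2} is standard, and combining it with \eqref{string} rewritten as $\mathrm{grad}\,H_f-t\,\mathrm{grad}\,H_{h'}=(x,0)^T$ indeed gives $\mathcal M[\eta\partial_x\,\mathrm{grad}\,H_h]=\mathcal H_h(0,1)^T+O(\epsilon^3)$. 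The final inversion of $\mathcal M$ order by order under the nondegeneracy \eqref{hodog2} is fine.

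The gap you flag is genuine and somewhat sharper than ``bookkeeping''. The required identity $\mathcal H_h(0,1)^T=\mathrm{grad}\,H_{h'}$ is equivalent to $\partial_v(Dh)-D(h_v)$ being a total $x$-derivative, where $D$ is the map \eqref{DHamiltonian} and $\partial_v$ hits only the undifferentiated $v$. At the density level this difference is \emph{not} zero: since $\rho$ and $s_3$ depend on $v$, extra terms proportional to $\rho_{uv}$, $\rho_{vv}$, $s_{3,v}$ survive in each of the three coefficients of $u_x^2$, $u_xv_x$, $v_x^2$. Showing that the resulting quadratic form in $(u_x,v_x)$ is an exact $x$-derivative is precisely where the linear constraint \eqref{rhoequation} on $\rho$ must be invoked; without it the identity fails. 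So this step carries actual content from the structure of the integrable perturbation, not merely coefficient matching. An alternative route, closer in spirit to what one finds in \cite{dub3}, is to recast \eqref{string} as $\mathrm{grad}(H_f-tH_{h'}-K)=0$ with $K=\int xu\,dx$, observe that the $K$-flow is the constant shift $v\mapsto v+s$, and use $\{H_h,K\}=H_{h'}$ (as functionals, modulo $O(\epsilon^3)$) together with the Jacobi identity; this packages the same computation but makes the role of the $v$-shift symmetry explicit.
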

We remark that (\ref{string}) is  a system of couple ODEs for $u$ and $v$ having $t$ has a parameter.

We can apply to the system (\ref{string}) the rescaling (\ref{musc}). Let us first introduce the Riemann invariants for the  Hamiltonians $H_0$ satisfying (\ref{nwave})
\[
r_{\pm}=v\pm i Q(u),\quad Q'(u)=\sqrt{P''(u)}.
\]
Choosing the Riemann invariants $r_{\pm}=r_{\pm}(u,v)$ as a systems of coordinates on the space of dependent variables
one can write the string equation (\ref{string}) in the form
\begin{equation}
\label{stringriemann}
\begin{split}
x+\lambda_+t&=\mu_++\epsilon^2\left(\tilde{C}_+^+\dfrac{\partial^2}{\partial x^2}r_+
+\tilde{C}_-^+\dfrac{\partial^2}{\partial x^2}r_-+\dots\right)\\
x+\lambda_-t&=\mu_-+\epsilon^2\left(\tilde{C}_+^-\dfrac{\partial^2}{\partial x^2}r_++
\tilde{C}_-^-\dfrac{\partial^2}{\partial x^2}r_-+\dots\right)
\end{split}
\end{equation}
where the coefficients $\tilde{C}_{\pm}^{\pm}$ are as in (\ref{cce}) with  $a=a(u,v)$, $b=b(u,v)$ and $c=c(u,v)$  obtained by comparing 
the Hamiltonian $H_f-tH_{v'}$ to the general form 
(\ref{pert20}).
 \begin{prop}\label{stringPainleve}
 The string equation (\ref{string}) in the scaling (\ref{musc})  reduces to the Painlev\'e I equation 
\begin{equation}
\label{PI}
\begin{split}
&X_{+}=\dfrac{1}{2}a_{+}R^2_{+}+\epsilon^2a_{+}\dfrac{C_+^+(u_0,v_0)}{\lambda_{+,+}^0}\dfrac{\partial ^2}{\partial X_{+}^2} R_{+}\\
&X_{-}=\dfrac{1}{2}a_{-}R^2_{-}+\epsilon^2a_{-}\dfrac{C_-^-(u_0,v_0)}{\lambda_{-,-}^0}\dfrac{\partial ^2}{\partial X_{-}^2} R_{-}
\end{split}
\end{equation}
where  $C_+^+$  and $C^-_-$ have been defined in (\ref{cce})
with  $a=a(u,v)$, $b=b(u,v)$ and $c=c(u,v)$  obtained by comparing 
the Hamiltonian (\ref{DHamiltonian}) to the general form 
(\ref{pert20}),
 $a_{\pm}$ as in (\ref{ell6}), $\lambda_{\pm,\pm}^0=\dfrac{\partial}{\partial r_{\pm}}\lambda_{\pm}|_{r_{\pm}=r_{\pm}^0}$.
\end{prop}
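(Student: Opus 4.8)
The plan is to expand the string equation, written in the hodograph form \eqref{stringriemann}, around the point of elliptic umbilic catastrophe, insert the double-scaling substitution \eqref{musc}, and keep the leading order in $k$. First I would centre the equations at $(x_0,t_0)$, $r=r^0$: write $x+\lambda_\pm(r)\,t=\mu_\pm(r^0)+x_\pm+[\lambda_\pm(r)-\lambda_\pm^0]\,t$ with $\mu_\pm(r^0)=x_0+\lambda_\pm^0 t_0$ and $x_\pm$ as in \eqref{ell3}, and Taylor-expand $\mu_\pm$, $\lambda_\pm$ in $\bar r_\pm=r_\pm-r_\pm^0$. The linear terms cancel by the catastrophe conditions \eqref{ell2} together with the identities $\mu_{\pm,\mp}-t\,\lambda_{\pm,\mp}\equiv 0$ obtained by differentiating \eqref{hod}; exactly as in the derivation of \eqref{ell5}, the surviving quadratic contribution of the dispersionless side is $\tfrac12 a_\pm\bar r_\pm^2$. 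Thus, before scaling,
\[
x_\pm=\tfrac12 a_\pm\,\bar r_\pm^2+(\text{cubic and }(t-t_0)\text{ corrections})+\epsilon^2\bigl(\tilde C_+^\pm\,\partial_x^2 r_+ +\tilde C_-^\pm\,\partial_x^2 r_- +\dots\bigr).
\]

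Next I would substitute $\bar r_\pm=k^{1/2}R_\pm+k\,\Delta R_\pm+O(k^{3/2})$, $x_\pm=kX_\pm$, $\epsilon=k^{5/4}\varepsilon$ and record the order in $k$ of every term. Since $\partial_x=k^{-1}(\partial_{X_+}+\partial_{X_-})$ and $t-t_0=(x_+-x_-)/(\lambda_+^0-\lambda_-^0)=O(k)$, one finds that $x_\pm=kX_\pm$, the quadratic term $\tfrac12 a_\pm\bar r_\pm^2\sim\tfrac12 a_\pm k\,R_\pm^2$, and the dispersive terms $\epsilon^2\partial_x^2 r_\pm\sim k\,\varepsilon^2(\partial_{X_+}+\partial_{X_-})^2 R_\pm$ are all of order $k$, whereas the cubic terms, the $(t-t_0)$ corrections and all contributions of $\Delta R_\pm$ are $O(k^{3/2})$. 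Collecting the order-$k$ terms gives
\[
X_+=\tfrac12 a_+ R_+^2+\varepsilon^2\tilde C_+^+(r^0)(\partial_{X_+}+\partial_{X_-})^2 R_+ +\varepsilon^2\tilde C_-^+(r^0)(\partial_{X_+}+\partial_{X_-})^2 R_-
\]
together with its complex conjugate.

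The decisive step is to show that the cross-coupling coefficient vanishes at the critical point, $\tilde C_-^+(r^0)=0$ (and likewise $\tilde C_+^-(r^0)=0$). By \eqref{cce}, $\tilde C_-^+\propto\tilde c\,h_{uu}-\tilde a\,h_{vv}$, where $\tilde a,\tilde c$ are the $u_x^2$- and $v_x^2$-coefficients of the dispersive part of the Hamiltonian $H_f-t\,H_{h'}$ entering \eqref{string}. Using $h_{uu}=P''h_{vv}$, the densities \eqref{DHamiltonian} for $H_f$ and $H_{h'}$ (with $h$ replaced by $f$, respectively by $h'=h_v$, built from the same $\rho$ and $s_3$, and with $h'_{vv}=h_{vvv}$) and the equation \eqref{rhoequation} for $\rho$, a short computation gives $\tilde c\,h_{uu}-\tilde a\,h_{vv}=h_{vv}\bigl(2P''s_3-\tfrac12 P'''\rho_v\bigr)(f_{vv}-t\,h_{vvv})$, whose last factor at $(r^0,t_0)$ is $f_{vv}^0-t_0 h_{vvv}^0$, i.e. precisely the defining equation of the elliptic umbilic point; hence it vanishes. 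The cross terms are therefore of order $k^{3/2}$ and drop, $R_+$ is determined as a function of $X_+$ alone---so $\partial R_+/\partial X_-=0$, as in Theorem~\ref{theoexpansion}---and the leading equation collapses to $X_+=\tfrac12 a_+ R_+^2+\varepsilon^2\tilde C_+^+(r^0)R_+''$.

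It remains to bring the coefficient to the form in \eqref{PI}. The quickest route is consistency with Theorem~\ref{theoexpansion}: a solution of the string equation also solves the Hamiltonian flow \eqref{fl1}, hence \eqref{pde1e}, so the same limit $R_+$ must satisfy \eqref{eqR1}; differentiating $X_+=\tfrac12 a_+ R_+^2+\varepsilon^2\tilde C_+^+(r^0)R_+''$ once in $X_+$ and comparing with \eqref{eqR1} forces $c_+\equiv\lambda_{+,+}^0/a_+$ and $\tilde C_+^+(r^0)=a_+ C_+^+(u_0,v_0)/\lambda_{+,+}^0$, which is exactly \eqref{PI}; alternatively one checks this identity by a direct comparison of the dispersive densities of $H_f-t\,H_{h'}$ and $H_h$ via \eqref{rhoequation} and \eqref{ellipticlinear}. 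The equation for $R_-$ follows by complex conjugation. I expect the main obstacle to be the bookkeeping in the last two steps---tracking the dispersive coefficients of the auxiliary Hamiltonians and showing that the cross-coupling is proportional to the catastrophe condition---rather than the scaling count, which is routine once the orders are laid out.
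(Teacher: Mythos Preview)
Your approach is correct and follows the same scaling-and-expansion scheme as the paper's proof. Two points of execution differ. First, you make explicit something the paper leaves off-page: the cross-coupling $\tilde C_-^+(r^0)$ vanishes because it factors as $h_{vv}(2P''s_3-\tfrac12P'''\rho_v)(f_{vv}-t\,h_{vvv})$, and the last bracket is precisely the elliptic-umbilic condition. The paper simply writes down \eqref{PI00} without the $R_-$ term, having done this computation silently; your structural explanation is cleaner. Second, for the surviving coefficient the paper computes directly that $\tilde C_+^+(r^0)=a_+(\rho_u^0+i\sqrt{|P_0''|}\rho_v^0)$ and then checks this equals $a_+C_+^+/\lambda_{+,+}^0$, whereas you obtain the identification by differentiating the reduced string equation and matching it against \eqref{eqR1}. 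Both routes work; yours avoids the explicit $\rho$ bookkeeping at the cost of invoking Theorem~\ref{theoexpansion}.

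One step to tighten: after the cross term drops, the equation $X_+=\tfrac12a_+R_+^2+\varepsilon^2\tilde C_+^+(r^0)(\partial_{X_+}+\partial_{X_-})^2R_+$ is still a PDE in $(X_+,X_-)$ and does \emph{not} by itself force $R_+$ to be a function of $X_+$ alone. Holomorphy has to be imported from the evolution side: a solution of the string equation also solves \eqref{pde1e}, so Theorem~\ref{theoexpansion} gives $\partial R_+/\partial X_-=0$. This is exactly the compatibility argument the paper uses to pass from \eqref{PI00} to \eqref{PI0}. You already invoke this compatibility one sentence later for the coefficient; the fix is simply to move that invocation earlier, before you collapse $(\partial_{X_+}+\partial_{X_-})^2$ to $\partial_{X_+}^2$.
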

 \pf
Using  the Riemann invariants as a system of dependent coordinates    the string equation (\ref{string}) takes the form  (\ref{stringriemann}).
Changing the independent coordinates $(x,t)$ to $(x_+,x_-)$ defined in (\ref{ell3}) and  performing  the scalings (\ref{musc}) one obtains  for $k\rightarrow 0$
\begin{equation}
\label{PI00}
X_{\pm}=\dfrac{1}{2}a_{\pm}R^2_{\pm}+\epsilon^2a_{\pm}(\rho_u^{0}\pm i\sqrt{|P_0''|}\rho_v^0)
\left(\dfrac{\partial}{\partial X_+}+\dfrac{\partial}{\partial X_-}\right)^2 R_{\pm},
\end{equation}
where $P_0=P(u_0,v_0)$.
Requiring the compatibility of the leading order expansion of the string equation 
with the leading order expansion of the system (\ref{pde1e}), we get that (\ref{holo}) has to be compatible with (\ref{PI00}), namely  
\begin{equation}
\label{PI0}
X_{\pm}=\dfrac{1}{2}a_{\pm}R^2_{\pm}+\epsilon^2a_{\pm}(\rho_u^0\pm i\sqrt{|P_0''|}\rho_v^0)\dfrac{\partial ^2}{\partial X_{\pm}^2} R_{\pm}
\end{equation}
which is equivalent to the Painlev\'e-I equation.
We observe that  the quantity $\rho_u^0+ i\sqrt{|P_0''|}\rho_v^0$ can be rewritten in the form
\begin{align}
&\rho_u^0+ i\sqrt{|P_0''|}\rho_v^0=\dfrac{C_+^+(u_0,v_0)}{\lambda_{+,+}^0}\\
&\lambda_{+,+}^0=h_{uvv}+ih_{vvv}\sqrt{|P''_0|}+\dfrac{P'''_0}{4P''_0}h_{vv}
\end{align}
with $C_+^+$ as in (\ref{cce}). In a similar way one can write the complex conjugate.
Therefore equations (\ref{PI0}) can be written in the form (\ref{PI}).
\epf
We finish this subsection by observing  that  for a subclass of Hamiltonian PDEs of the form 
(\ref{cano100}) with $h_{uu}=P''(u)h_{vv}$,  one can find solutions to 
quasi-integrable and non-integrable perturbations of the form (\ref{cano100}) that are close
at leading order up to the critical time $t_0$.
\begin{lem}
For any Hamiltonian system of the form (\ref{pert3})  with $h_{uu}=h_{vv}P''(u)$,  there exists an
almost  integrable system of the form 
(\ref{DHamiltonian}) such that the two systems of equations tend  in the multiple scale limit described in theorem~\ref{theoexpansion}
 to the same   equations (\ref{Mexpansion}).  \end{lem}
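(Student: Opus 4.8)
The plan is to exploit the observation, immediate from Theorem~\ref{theoexpansion}, that the limiting system \eqref{Mexpansion} depends on the $O(\epsilon^{2})$ part of the Hamiltonian perturbation \emph{only through the four numbers} $C_\pm^\pm(r^0)$, the values at the catastrophe point of the coefficients \eqref{cce}; everything else that enters \eqref{Mexpansion} and the boundary conditions \eqref{Rboundary}, \eqref{cp} (namely $\lambda_+^0-\lambda_-^0$, the $\lambda_{\pm,\pm}^0$, $\lambda_{\pm,\mp}^0$ and the constants $a_\pm$ of \eqref{ell6}) is fixed by the dispersionless Hamiltonian $h$ and the initial data alone. First I would reduce the given system \eqref{pert3} to the normal form \eqref{pert20} by the canonical transformation of Section~\ref{section1} — this leaves $h$, hence $P$ and the whole dispersionless picture, untouched, and records the perturbation in three smooth functions $a(u,v),b(u,v),c(u,v)$. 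One then equips the almost integrable system \eqref{DHamiltonian} with the same $h$ (hence the same $P$) and the same initial data, so that it shares the catastrophe point $(x_0,t_0,u_0,v_0)$, the $\lambda^0$'s and the $a_\pm$; it therefore suffices to choose the data $\rho$ and $s_3$ in \eqref{DHamiltonian} so that the induced coefficients $\tilde C_\pm^\pm(r^0)$ coincide with $C_\pm^\pm(r^0)$.

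Two of the four matchings are automatic: since $a,b,c$ are real, \eqref{cce} gives $C_-^-=\overline{C_+^+}$ and $C_+^-=\overline{C_-^+}$ for \emph{any} system \eqref{pert20}, so it is enough to arrange $\tilde C_+^+(r^0)=C_+^+(r^0)$ and $\tilde C_-^+(r^0)=C_-^+(r^0)$, the remaining two identities then following by conjugation. Reading $(\tilde a,\tilde b,\tilde c)$ off \eqref{DHamiltonian} and substituting into \eqref{cce} with $h_{uu}^0=P''_0\,h_{vv}^0$ gives, after a short computation,
\beq
\tilde C_+^+(r^0)=\lambda_{+,+}^0\bigl(\rho_u^0+i\sqrt{|P''_0|}\,\rho_v^0\bigr),\qquad
\tilde C_-^+(r^0)=\vartheta\bigl(2P''_0\,s_3^0-\tfrac12 P'''_0\,\rho_v^0\bigr),
\eeq
where $\rho_u^0,\rho_v^0,s_3^0$ denote the values of $\rho_u,\rho_v,s_3$ at $(u_0,v_0)$ and $\vartheta=\dfrac{\kappa_+^0}{\kappa_-^0}\dfrac{|h_{vv}^0|}{2i\sqrt{|P''_0|}}\neq0$ depends only on $h$ and $P$ at $(u_0,v_0)$; the first identity is precisely the one already used in the proof of Proposition~\ref{stringPainleve}. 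The matching is thus a triangular linear system: under the standing genericity hypotheses (in particular $\lambda_{+,+}^0\neq0$, together with $P''_0\neq0$ and $h_{vv}^0\neq0$, which hold whenever the leading system is elliptic and nondegenerate) one first solves $\rho_u^0+i\sqrt{|P''_0|}\,\rho_v^0=C_+^+(r^0)/\lambda_{+,+}^0$ for the real pair $\rho_u^0,\rho_v^0$ (using $\sqrt{|P''_0|}>0$), and then, with $\rho_v^0$ fixed, solves $2P''_0\,s_3^0-\tfrac12 P'''_0\,\rho_v^0=C_-^+(r^0)/\vartheta$ for the real number $s_3^0$; the right-hand side is real because for every system \eqref{pert20} the number $C_-^+(r^0)$, like $\vartheta$, lies on the fixed ray $\dfrac{\kappa_+^0}{\kappa_-^0}\dfrac{1}{2i\sqrt{|P''_0|}}\mathbb{R}$.

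It then remains to realize the prescribed $1$-jet by genuine data in \eqref{DHamiltonian}: take $s_3\equiv s_3^0$ constant, and take for $\rho$ a solution of the linear PDE \eqref{rhoequation} near $(u_0,v_0)$ with Cauchy data $\rho(u_0,v)=\rho_v^0(v-v_0)$ and $\rho_u(u_0,v)=\rho_u^0$ on the line $\{u=u_0\}$. This line is non-characteristic (the coefficient of $\rho_{uu}$ in \eqref{rhoequation} being $1$), and since $P''(u)=h_{uu}/h_{vv}$ is analytic near $u_0$ in the analytic elliptic category in which the whole elliptic analysis is carried out, Cauchy--Kovalevskaya furnishes such a (necessarily analytic) $\rho$. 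With these choices $\tilde C_\pm^\pm(r^0)=C_\pm^\pm(r^0)$, so substituting the scaling \eqref{musc} into both systems and letting $k\to0$ yields literally the same equations \eqref{Mexpansion}.

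The only genuinely substantive step is the displayed pair of identities expressing $\tilde C_+^+(r^0)$ and $\tilde C_-^+(r^0)$ through $(\rho_u^0,\rho_v^0,s_3^0)$; once these are established the rest is triangular back-substitution together with a routine solvability statement for \eqref{rhoequation}. I expect the one point requiring care to be the role of the genericity condition $\lambda_{+,+}^0\neq0$: it is exactly what keeps the above linear map nondegenerate, and it cannot be dropped, since an almost integrable perturbation always forces $\tilde C_+^+(r^0)=\lambda_{+,+}^0(\cdots)=0$ when $\lambda_{+,+}^0=0$, whereas a general perturbation of \eqref{pert20} need not satisfy $C_+^+(r^0)=0$.
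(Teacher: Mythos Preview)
Your approach is correct and is, at bottom, the same argument as the paper's: both proofs observe that the limiting system \eqref{Mexpansion} depends on the $O(\epsilon^2)$ perturbation only through the values of the coefficients at the catastrophe point, and then match those values by choosing $\rho_u^0,\rho_v^0,s_3^0$. The paper works directly in the $(a_0,b_0,c_0)$ variables, writing down the $3\times3$ linear system \eqref{rho1} and asserting its solvability; you instead pass to the equivalent coordinates $C_\pm^\pm(r^0)$, exploit the complex-conjugation symmetry to reduce to matching $C_+^+(r^0)$ and $C_-^+(r^0)$, and thereby obtain a \emph{triangular} system that you solve explicitly. Since the map $(a_0,b_0,c_0)\mapsto(\mathrm{Re}\,C_+^+,\mathrm{Im}\,C_+^+,C_-^+/\vartheta)$ is a linear isomorphism on $\mathbb R^3$, the two computations are equivalent.

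Your version has two mild advantages worth noting. First, the triangular form makes the genericity hypothesis $\lambda_{+,+}^0\neq0$ explicit: it is exactly what is needed to invert the first equation, and without it the almost integrable family cannot hit a generic $C_+^+(r^0)$ --- a point the paper's bare assertion of solvability of \eqref{rho1} leaves implicit. Second, you actually construct $\rho$ as a solution of \eqref{rhoequation} with prescribed first derivatives at $(u_0,v_0)$ via Cauchy--Kovalevskaya, whereas the paper simply remarks that $\rho_u^0,\rho_v^0$ ``can be chosen in an arbitrary way''; your treatment makes that freedom precise.
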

\pf 
It is sufficient  to show that for given $a=a(u,v)$, $b=b(u,v)$ and 
$c=c(u,v)$ one can find $\rho_u(u,v)$, $\rho_v(u,v)$ and $s_3(u,v)$ such that at the critical point $ (u_0,v_0)$ 
the following identities hold:
\begin{equation}
\label{rho1}
\begin{split}
a_0&=P_0''( \rho^0_u h^0_{vvv} + \rho^0_v h^0_{uvv}-s_3^0h_{vv}^0) +\frac12 P_0''' \rho^0_v h^0_{vv}\\
b_0&=P_0''\rho^0_v h^0_{vvv} +\rho^0_u h^0_{uvv} +\frac{P_0'''}{4 P_0''} \rho^0_u h^0_{vv}\\
c_0&=\rho^0_u h^0_{vvv} +\rho^0_v h^0_{uvv}+s_3^0h_{vv}^0.
\end{split}
\end{equation}
The constants  $\rho^0_u$ and $\rho_v^0$ can be chosen in an arbitrary way since
 they  solve the second order equation (\ref{rhoequation}) and $s_3(u,v)$ is an arbitrary function.
 The system (\ref{rho1}) is solvable for $\rho_u^0$, $\rho_v^0$ and $s_3^0$ as a function of $a_0,b_0, c_0$.
\epf
For a given inital datum the solutions of  two different Hamiltonian perturbations 
of the form (\ref{pert3})  with  the same unperturbed Hamiltonian density  $h(u,v)$ satisfying $h_{uu}=h_{vv}P''(u)$, 
have the same approximate solution  for $t<t_0$.   From our conjecture~\ref{conjelliptic} it follows that the  solutions near the critical point have the same leading asymptotic expansion if the
coefficients of the two systems  satisfy at the critical point the  relation (\ref{rho1}).
\medskip

\section{An example: generalized nonlinear Schr\"odinger equations}\label{section4}
 Let us now consider the example of generalized nonlinear 
 Schr\"odinger (NLS) equations
\beq\label{nls1}
i\, \epsilon\, \psi_t + \frac{\epsilon^2}2 \psi_{xx} \pm V\left(|\psi|^2\right)\, \psi=0,\quad \epsilon>0,
\eeq
where  $\psi=\psi(x,t)$ is a complex variable and $V$ is a  smooth function  monotone increasing on the positive real axis.  The case $V(u)=u$ is called cubic NLS, the case $V(u)=u^2/2 $ is called quintic NLS and so on.
The case with positive sign in front of the potential $V$ is the so-called focusing NLS, while the negative sign corresponds to the  
defocusing NLS.  For sufficiently regular $V$ the initial value 
problem of the defocusing NLS equation  is globally  well posed  in some suitable functional space , see \cite{gwp1,bour} and references therein, while  the solution of the initial value problem of the focusing case is globally well posed  when the nonlinearity $V\left(|\psi|^2\right)=|\psi|^2$, \cite{gwp1}

Equation (\ref{nls1})  can be rewritten in the standard Hamiltonian form \eqref{cano} with two real-valued dependent functions, the so called Madelung 
transform
$$
u=|\psi|^2, \quad v=\frac{\epsilon}{2 i } \left( \frac{\psi_x}{\psi}-\frac{\psi^*_x}{\psi^*}\right)
$$
(the star stands for the complex conjugation). Then equation  (\ref{nls1}) reduces to  the system of equations 
\eqa\label{nls2}
&&
u_t+(u\, v)_x=0
\nn\\
&&
\\
&&
v_t +\pal_x \left[ \frac{v^2}2\mp V(u)\right]= \frac{\epsilon^2}4 \pal_x \left( \frac{u_{xx}}{u} -\frac{u_x^2}{2 u^2}\right).
\nn
\eeqa
The  above system can be written in the  Hamiltonian 
form\footnote{Observe the change of sign in the definition of the 
Hamiltonian (cf. \eqref{cano}). The normalization used in the last two sections of the present paper is more widely accepted in the physics literature.} 
\eqa\label{cano0}
&&
u_t+\pal_x\frac{\delta H}{\delta v(x)}=0
\nn\\
&&
\\
&&
v_t+\pal_x\frac{\delta H}{\delta u(x)}=0
\nn
\eeqa
with the Hamiltonian 
\beq\label{nls21}
H=\int \left[ \frac12 u\, v^2 +W(u) +\frac{\epsilon^2}{8u} u_x^2\right]\, dx, \quad W'(u)=\mp V(u).
\eeq
The semiclassical  limit of this system
\eqa\label{nls0}
&&
u_t+(u\, v)_x=0
\nn\\
&&
\\
&&
v_t +\pal_x \left[ \frac{v^2}2\mp V(u)\right]= 0,
\nn
\eeqa
is  of elliptic or hyperbolic type respectively provided $V(u)$ is a monotonically increasing smooth function on the positive semiaxis.

\noindent
Another interesting NLS type model is given by the nonlocal NLS equation  \cite{conti},\cite{krol},\cite{trillo},
\begin{equation}
\label{full_NNLS_complex}
\begin{split}
&i \epsilon \psi_{t} + \frac{\epsilon^{2}}{2} \psi_{xx} \pm \theta \psi = 0 \\
& \theta - \epsilon^{2} \eta\; \theta_{xx} = |\psi|^{2},
\end{split}
\end{equation}
where $\eta$ is a  positive constant.
In the slow variables $u$, $v$
this nonlocal NLS model can be equivalently written as 
\begin{align}
\label{full_NNLS}
&u_{t} +(u v)_{x}=0 \\
&v_{t} +v v_{x} \mp \theta_x  + \frac{\epsilon^{2}}{4} \left( \frac{u_{x}^{2}}{2 u^{2}} - \frac{u_{xx}}{u} \right)_{x}=0 \\
& \theta - \epsilon^{2} \eta \theta_{xx} = u.
\end{align}
Writing $\theta$ from the last equation as the formal series
$$
\theta=u+\epsilon^2 \eta \, u_{xx}+\epsilon^4\eta^2 u_{xxxx}+\dots
$$
and
keeping terms up to order $\epsilon^{2}$ only, one arrives at a system of the above class
\begin{align*}
&u_{t} + (u v)_{x}=0 \\
&v_{t} +v v_{x} \mp u_{x} + \frac{\epsilon^{2}}{4} \left( \frac{u_{x}^{2}}{2 u^{2}} - \frac{u_{xx}}{u} \right)_{x}\mp \epsilon^{2}  \eta \; u_{xxx} = O(\epsilon^{4}).
\end{align*}

The nonlocal NLS can be written in the Hamiltonian form   (\ref{cano0})  with the Hamiltonian $ H =  \int h \; dx $ and 
\begin{equation}
h = \frac{1}{2} u v^{2} \mp \dfrac{u^2}{2}\pm \epsilon^2 \eta \dfrac{u_x^2}{2}+ \frac{\epsilon^{2}}{8} \frac{u_{x}^{2}}{u}+O(\epsilon^4)
\end{equation}
The above Hamiltonian coincides with the one of the cubic NLS when $\eta=0$. 

We are going to study the critical points of the solutions of the 
system (\ref{nls0}) for some initial data and then the solutions of 
equation (\ref{nls1})  or (\ref{full_NNLS_complex}) for the same data
near the critical points of the solution of (\ref{nls0}).
We first treat the hyperbolic case.
\subsection{Defocusing generalized NLS}
 The Riemann invariants and the characteristic velocities of equation (\ref{nls0}), in the hyperbolic case,  are
\beq\label{nls01}
r_\pm =v\pm Q(u), \quad Q'(u) =\sqrt{\frac{V'(u)}{u}}, \quad \lambda_\pm = v\pm \sqrt{u\, V'(u)}.
\eeq
The general solution to \eqref{nls0} can be represented in the implicit form
\beq\label{nls3}
\begin{array}{ccc} x & =& v\, t +f_u\\
\\
0 & = & u\, t +f_v\end{array}
\eeq
where the function $f=f(u,v)$ solves the linear PDE of the form \eqref{lineq}
\beq\label{nls4}
f_{uu}= \frac{V'(u)}{u} f_{vv}.
\eeq

The coordinates $(u_0, v_0)$ of the point of a generic break-up of the second Riemann invariant $r_-$ can be determined from the system
\eqa\label{nls5}
&&
f_{uv}^0 =\sqrt{\frac{V'_0}{u_0}} f_{vv}^0 +\dfrac{f_v^0}{u_0}
\nn\\
&&
\\
&&
f_{uvv}^0 =\sqrt{\frac{V'_0}{u_0}} f_{vvv}^0+\frac{V'_0-u_0 V''_0}{4 u_0 V'_0} f_{vv}^0.
\nn
\eeqa
In the Riemann invariants the system \eqref{nls2} reads
\beq\label{nls22}
\pal_t r_\pm +\left( v \pm \sqrt{u\, V'(u)}\right) \pal_x r_\pm 
=\pm\frac{\epsilon^2}{8 \sqrt{u\, V'(u)}} \left[ \pal_x^3 r_+ - \pal_x^3 r_-+\dots\right].
\eeq
The asymptotic representation of the shifted Riemann invariants
$$
r_\pm -r_\pm^0 \simeq \frac1{\sqrt{u_0}} \left[ \sqrt{u_0}(v-v_0)\pm \sqrt{V'_0} (u-u_0)\right]$$
is given as a function of the shifted characteristic variables
$$
x_\pm =(x-x_0) -\left(v_0 \pm \sqrt{ u_0 V'_0}\right) (t-t_0)
$$
in the form \eqref{conj1} with
\eqa\label{nls6}
&&
\alpha= 2 \frac{\sqrt{V'_0}}{\sqrt{u_0}} f_{vv}^0
\nn\\
&&
\beta=-\frac{u_0 V''_0 +3 V'_0}{8 \sqrt{u_0{V'_0}^{3}}}
\nn\\
&&
\\
&&
\gamma=-f_{uvvv}^0 + \sqrt{\frac{V'_0}{u_0}} f_{vvvv}^0 +\frac{V'_0 -u_0 V''_0}{4 u_0 V'_0} f_{vvv}^0+\frac{3{V'_0}^2 + 2 u_0 V'_0 V''_0 -5 u_0^2 {V''_0}^2 +4 u_0^2 V'_0 V'''_0}{32 u_0^{3/2} {V'_0}^{5/2}}f_{vv}^0
\nn\\
&&
\rho=\frac1{16 u_0 V'_0}
\nn\\
&&
\sigma=-\frac1{16 u_0 V'_0}.
\nn
\eeqa
In particular for the nonlocal defocusing NLS equation,  the shifted Riemann invariants
$$
r_\pm -r_\pm^0 \simeq \frac1{\sqrt{u_0}} \left[ \sqrt{u_0}(v-v_0)\pm \sqrt{V'_0} (u-u_0)\right]$$
as functions of the shifted characteristic variables
$$
x_\pm =(x-x_0) -\left(v_0 \pm \sqrt{ u_0 V'_0}\right) (t-t_0)
$$
behave in the vicinity of the point of gradient catastrophe as in  
\eqref{conj1}  with   $\alpha,\beta$ and  $\gamma$  as in (\ref{nls6}) with $V'_0=1$  and $\rho$ and $\sigma$ given by
\begin{equation}
\label{rhononlocal}
\rho=\dfrac{1-4\eta u_0}{16 u_0}=-\sigma.
\end{equation}

%
%
%
%
\subsection{Focusing generalized NLS}
The Riemann invariants and the characteristic velocities of system (\ref{nls0}) in the elliptic case are
\[
r_{\pm}=v\pm iQ(u),\;\;Q'(u)=\sqrt{\dfrac{V'(u)}{u}},\quad \lambda_{\pm}=-\left(v\pm i\sqrt{uV'(u)}\right).
\]
The general solution of  (\ref{nls0}) is obtained via the hodograph equations
\begin{equation}
\label{hodograph}
\begin{split}
&v\,t+f_u(u,v)=x\\
&u\,t+f_v(u,v)=0
\end{split}
\end{equation}
where the function $f(u,v)$ solves the linear equation
\begin{equation}
\label{eqf}
f_{uu}+\dfrac{ V'(u)}{u}f_{vv}=0.
\end{equation}
The point of elliptic umbilic catastrophe is determined by the equations (\ref{hodograph}) and the conditions
\begin{equation}
\label{umbilicatastrophe}
f_{uu}=0,\quad t+f_{uv}=0.
\end{equation}
The asymptotic formula (\ref{rasymp})  near the point of elliptic umbilic catastrophe takes the form
\begin{equation}
\label{F}
v-v_0+i\sqrt{\dfrac{V'_0}{u_0}}(u-u_0)=-12 \left(\dfrac{\epsilon^2C_+^+}{(12a_+)^2\lambda_{+,+}^0}\right)^{\frac{1}{5}}\Omega(\xi)+O(\epsilon^{\frac{4}{5}}),
\end{equation}
where
\[
\xi=\left(\dfrac{(\lambda_{++}^0)^2}{12a_+\epsilon^4(C_+^+)^2}\right)^{\frac{1}{5}}(x-x_0-(v_0+i\sqrt{u_0V'_0})(t-t_0))
\]
and 
\begin{equation}
\label{constantFNLS}
C_+^+=\dfrac{1}{8 i}\dfrac{1}{\sqrt{V'_0u_0}},\quad \lambda_{+,+}^0=-\dfrac{3}{4}-\dfrac{u_0V''_0}{4V'_0},\;\;a_+=f_{uvv}^0+iQ'_0f_{vvv}^0
\end{equation}
and $Q'(u)=\sqrt{\dfrac{V'(u)}{u}}$, $V'_0=V'(u_0)$, $Q'_0=Q'(u_0)$, $V''_0=V''(u_0)$.

\begin{rem}
In the formula (\ref{F})  the convention for choosing the fifth root 
is defined by the following condition: for symmetric initial data and $t=t_0$ the argument 
of the tritronqu\'ee solution has to be purely imaginary.
So, defining 
\[
a_+=-\dfrac{i}{r e^{i\psi}}
\]
one arrives at the formula
\begin{equation}
\label{F1}
v-v_0+i\sqrt{\dfrac{V'_0}{u_0}}(u-u_0)=6i \left(\dfrac{\epsilon^2r^2 e^{2i\psi} }{9\sqrt{\frac{u_0}{V_0'}}\left(3V_0'+u_0V''_0\right)}\right)^{\frac{1}{5}}\Omega(\xi)+O(\epsilon^{\frac{4}{5}}),
\end{equation}
where 
\begin{equation}
\label{xiNLS}
\xi=-i\left(\dfrac{u_0}{V_0'}\left(3V_0'+u_0V''_0\right)^2\dfrac{re^{i\psi}}{3\epsilon^4}\right)^{\frac{1}{5}}(x-x_0-(v_0+i\sqrt{u_0V'_0})(t-t_0)).
\end{equation}
\end{rem}

\begin{rem}
In the focusing nonlocal NLS model (\ref{full_NNLS_complex}) the behaviour of the solution near the point of elliptic umbilic catastrophe is given by the expression 
(\ref{F}) with $a_+$ and $\lambda_{+,+}^0$ as in (\ref{constantFNLS}) and 
\[
C_+^+=-\dfrac{1+4\eta u_0}{8 i\sqrt{u_0}}, 
\]
that is,
\begin{equation}
\label{Fnonlocal}
v-v_0+\dfrac{i}{\sqrt{u_0}}(u-u_0)=6i \left( \dfrac{\epsilon^2(1+4\eta u_0)r^2 e^{2i\psi} }{27\sqrt{u_0}}\right)^{\frac{1}{5}}\Omega(\xi)+O(\epsilon^{\frac{4}{5}}),
\end{equation}
where 
\[
\xi=-i\left(\dfrac{3u_0 r e^{i\psi}}{(1+4\eta u_0)^2\epsilon^4}\right)^{\frac{1}{5}}(x-x_0-(v_0+i\sqrt{u_0})(t-t_0))
\]
For $\eta=0$ such a formula was derived in an equivalent form in \cite{DGK}.
\end{rem}

\section{Studying particular solutions}\label{section5}
The present section is devoted to the comparison of solutions to the defocusing and focusing NLS equations with their unperturbed
counterparts near the critical points of solutions of the unperturbed system with, respectively, the asymptotic formula (\ref{conj1}) and (\ref{F}). We consider various examples of nonlinear potentials $V$ and initial data.

Let us consider the  Cauchy problem 
\begin{equation}
\label{cauchynls0}
\begin{split}
&\dfrac{\partial r_+}{\partial t}=\lambda_+ (r_+, r_-)\dfrac{\partial r_+}{\partial x},\quad \dfrac{\partial r_-}{\partial t}=\lambda_-(r_+, r_-) \dfrac{\partial r_-}{\partial x},\\
&r_+(x,t=0)=\varphi_+(x),\quad r_-(x,t=0)=\varphi_-(x).
\end{split}
\end{equation}
If the initial data $\varphi_{\pm}(x)$ are  bounded   analytic 
functions of $x$, then in virtue of the Cauchy--Kowalevskaya theorem (see \cite{Bressan})
$r_{\pm}(x,t) $ are analytic functions in the variable $x$  up to the time $t<t_0$ where $t_0$ is the time of gradient catastrophe.

The implicit solution of (\ref{cauchynls0}) is given by the hodograph equations as 
\begin{equation}
\label{hodographhyper}
x=-\lambda_{\pm}(r_+,r_-)t+\mu_{\pm}(r_+,r_-)
\end{equation}
where $\mu_{\pm}$ solves the system of linear PDEs equivalent to (\ref{nls4})
\begin{equation}
\label{overdetermined}
\dfrac{\partial \mu_+}{\partial r_-}=\dfrac{\mu_+-\mu_-}{\lambda_+-\lambda_-}\,\dfrac{\partial \lambda_+}{\partial r_-},\;\;\quad \dfrac{\partial \mu_-}{\partial r_+}=\dfrac{\mu_+-\mu_-}{\lambda_+-\lambda_-}\,\dfrac{\partial \lambda_-}{\partial r_+},
\end{equation}
with the constraint
\begin{equation}
\label{tzero}
x=\mu_+(\varphi_+(x),\varphi_-(x)),\quad x=\mu_-(\varphi_+(x),\varphi_-(x)).
\end{equation}

\subsection{Defocusing cubic NLS}

\noindent
The  cubic NLS equation written as 
$$
i\epsilon\psi_t +\frac{\epsilon^2}2\psi_{xx}- |\psi|^2\psi=0,
$$
corresponds to the case $V(u)=u$, and the Riemann invariants  and the characteristics velocities (\ref{nls01}) take the form
\[
r_{\pm}=v\pm 2\sqrt{ u},\quad  \lambda_{+}=-\dfrac{1}{4}(3r_++r_-),\quad \lambda_{-} = -\dfrac{1}{4}(r_++3r_-).
\]
Let us consider an initial datum rapidly going to a constant value at infinity
\[
r_{\pm}(x, t=0) = \varphi_{\pm}(x).
\]
The solution of the corresponding quasilinear system (\ref{lead})  is obtained as described below. 
Let us suppose that  the initial datum  $\varphi_+(x)$ has  a single positive hump at 
$x_M$ and that $\varphi_-(x)$ has  a single negative hump  at 
$x_m\leq x_M$, and   denote by $h_{L/R}^+(r_+)$, the inverse of the increasing and decreasing part of 
$\varphi_+(x)$ and by  $h_{L/R}^-(r_-)$, the inverse of the 
decreasing and increasing   part of $\varphi_-(x)$ respectively.
 Since  $\lambda_+>\lambda_-$, it follows that $x_M(t)\geq  x_m(t)$  for all $t\geq 0$.
In order to obtain the quantities $\mu_{\pm}(r_+,r_-)$  we use the 
formula by  Tian and Ye \cite{FRTYE}:

\noindent
$\bullet$  $x>x_M(t)$
\begin{equation}
\label{cNLSsol1}
\mu_{\pm} (r_{+},r_{-}) =h_{R}^{+}(r_+) - \frac{2}{\pi (r_{+}-r_{-})} \int\limits_{h_R^-(r_{-})}^{h_R^{+}(r_{+})} dx\int\limits^{\varphi_{-}(x)}_{r_{-}} 
\sqrt{\frac{\tau - r_{\mp}}{r_{\pm} - \tau}} \; \frac{\left(\tau - \frac{\varphi_{+}(x) + \varphi_{-}(x)}{2}\right)d\tau}{\sqrt{(\tau-\varphi_{+}(x)) (\tau - \varphi_{-}(x))}} 
\end{equation}
$\bullet$  $x_m(t)\leq x\leq x_M(t)$
\begin{equation}
\mu_{\pm} (r_{+},r_{-}) =h_{L}^{+}(r_+) - \frac{2}{\pi (r_{+}-r_{-})} \int\limits_{h_R^-(r_{-})}^{h_L^{+}(r_{+})}dx \int\limits^{\varphi_{-}(x)}_{r_{-}} 
\sqrt{\frac{\tau - r_{\mp}}{r_{\pm} - \tau}} \; \frac{\left(\tau - \frac{\varphi_{+}(x) + \varphi_{-}(x)}{2}\right)d\tau}{\sqrt{(\tau-\varphi_{+}(x)) (\tau - \varphi_{-}(x))}} 
\end{equation}
$\bullet$  $x<x_m(t)$\begin{equation}
\mu_{\pm} (r_{+},r_{-}) =h_{L}^{+}(r_{+}) - \frac{2}{\pi (r_{+}-r_{-})} \int\limits_{h_L^-(r_{-})}^{h_L^{+}(r_{+})} \int\limits^{\varphi_{-}(x)}_{r_{-}} 
\sqrt{\frac{\tau - r_{\mp}}{r_{\pm} - \tau}} \; \frac{\tau - \frac{\varphi_{+}(x) + \varphi_{-}(x)}{2}}{\sqrt{(\tau-\varphi_{+}(x)) (\tau - \varphi_{-}(x))}} \; d\tau \; dx
\end{equation}
For different choices of initial data, more complicated relations can be obtained.
Within the interval of monotinicity of the function $\varphi_{\pm}$ the solution (\ref{cNLSsol1}) can be written also in the equivalent  form
\cite{FRTYE}
\begin{equation}
\label{cNLSsol2}
\mu_{\pm} (r_+,r_-) = \frac{2}{\pi (r_+-r_-)}\left(\int\limits^{\varphi_-(\infty)}_{r_-}+\int\limits_{\varphi_+(\infty)}^{r_+}\right)\sqrt{
\dfrac{ \tau-r_{\mp} }{ r_{\pm}-\tau} }\theta'(\tau)d\tau,
\end{equation}  
with 
\begin{equation}
\label{theta}
\begin{split}
\theta'(\tau)&=\dfrac{\tau-\frac{\varphi_+(\infty)+\varphi_-(\infty)}{2}}{\sqrt{(\tau-\varphi_-(\infty))(\tau-\varphi_+(\infty))}}x(\tau)-\\
-&\int\limits^{\infty}_{x(\tau)}\left( \dfrac{\tau-\frac{\varphi_+(x)+\varphi_-(x)}{2}}{\sqrt{(\tau-\varphi_-(x))(\tau-\varphi_+(x))}}-\dfrac{\tau-\frac{\varphi_+(\infty)+\varphi_-(\infty)}{2}}{\sqrt{(\tau-\varphi_-(\infty))(\tau-\varphi_+(\infty))}}  \right)dx
\end{split} 
\end{equation}
where $x(\tau)$ is the inverse function of $\varphi_{\pm}(x)$ in the interval of monotonicity.


For the particular case $v(x,0)=0$, $u(x,0)=a^2\mbox{sech}^2 x$  one has
\[
\theta'(\tau)=\dfrac{1}{2}\log\dfrac{4a^2-\tau^2}{\tau^2}   
\]
and for $x>x_M(t)$
\begin{align}
\mu_{\pm}(r_=,r_-)&=-\log(\sqrt{2a+r_+}+\sqrt{2a+r_-})-\log(\sqrt{2a-r_+}+\sqrt{2a-r_-})+\log(r_+-r_-)\nn\\
&\pm\dfrac{1}{r_+-r_-}\left(\sqrt{(2a+r_+)(2a+r_-)}-\sqrt{(2a-r_+)(2a-r_-)})\right).
\end{align}
The critical point is obtained by the two equations (\ref{hodographhyper}) together with
\[
\dfrac{3}{4}t+\frac{\partial \mu_-}{\pal r_-}=0 \quad \dfrac{\partial^2\mu_-}{\partial r_-^2}=0,
\]
which give the equations
\eqa
\label{eqt}
&&
\dfrac{3}{4}t-\dfrac{1}{(r_+-r_-)^2}\left(\sqrt{\dfrac{(2a+r_+)^3}{2a+r_-}}-\sqrt{\dfrac{(2a-r_+)^3}{2a-r_-}}\right)=0
\nn\\
&&
\\
&&
\dfrac{\sqrt{(2a+r_+)^3}}{\sqrt{(2a+r_-)^3}}(8a+5r_--r_+)-\dfrac{\sqrt{(2a-r_+)^3}}{\sqrt{(2a-r_-)^3}}(8a-5r_-+r_+)=0.
\nn
\eeqa
Solving the above two equations together with (\ref{hodographhyper}) yields 
\[
r^0_+=\dfrac{a}{3}(6-\sqrt{33})\sqrt{2\sqrt{33}+6},\;\;r^0_-=-\dfrac{a}{3}\sqrt{2\sqrt{33}+6}, 
\]
\[
t_0=\dfrac{3\sqrt{2}}{32 a}\sqrt{69+11\sqrt{33}},\;\;x_0=-2.209395255.
\]
\begin{equation}
\begin{split}
\dfrac{\partial^3 \mu_-}{\partial r_-^3}&=\dfrac{1}{2 (r_--r_+)^4}\sqrt{\dfrac{(2a-r_+)^3}{(2a-r_-)^5}}(48a^2+3/2r_+^2+35/2 r_-^2-7r_-r_+-56ar_-+8ar_+)\\
&-\dfrac{1}{2(r_--r_+)^4}\sqrt{\dfrac{(2a+r_+)^3}{(2a+r_-)^5}}(48a^2+3/2r_+^2+35/2 r_-^2-7r_-r_++56ar_--8ar_+)
\end{split}
\end{equation}

The constants $b=12\dfrac{\rho}{\beta},\beta, \gamma$  and $\alpha$ defined in (\ref{nls6}) at the critical time are given by
\[
b=\dfrac{2}{\sqrt{u_0}}=\dfrac{8}{r^0_+-r^0_-}=\dfrac{3(7+\sqrt{33})}{2a\sqrt{6+2\sqrt{33}}}
\]
\[
\beta=-\dfrac{3}{8\sqrt{u_0}}=-\dfrac{3}{2(r_+^0-r_-^0)}=-\dfrac{9(7+\sqrt{33})}{32a\sqrt{6+2\sqrt{33}}}
\]
\[
\gamma=\left(-\dfrac{\partial^3 \mu_-}{\partial r_-^3}+t_0\dfrac{\partial^3 \lambda_-}{\partial r_-^3}\right)_{r_-=r_-^0, r_+=r_+^0} \simeq \frac{2.3269}{a^{3}}
\]
and 
\[
\alpha=\left.\left(\dfrac{\partial \mu_+}{\partial r_+}+\dfrac{3}{4}t_0\right)\right|_{r_+=r_+^0,r_-=r_-^0}=2.635171951.
\]



\subsection{Defocusing quintic  NLS}

Let us now proceed to the case $V(u)=u^2/2$.
The Riemann invariants of the quintic defocusing NLS
$$
i\epsilon\psi_t +\frac{\epsilon^2}2\psi_{xx}-\frac12 |\psi|^4\psi=0
$$  
are given by
\[
r_{\pm}=v\pm u.
\]
The equations (\ref{nls0}) reduce to the two decoupled Riemann wave equations
\[
\partial_tr_{\pm}+r_{\pm}\partial_x r_{\pm}=0,
\]
which can be solved by the method of characteristics. For the initial 
data $r_{\pm}(x,0)=\rho_{\pm}(x)$, one has the solution in implicit form
\begin{equation}
\label{hodograph11}
r_{\pm}(x,t)=\rho_{\pm}(\xi),\quad x=\rho_{\pm}(\xi)t+\xi.
\end{equation}
The point of gradient catastrophe is determined by the conditions
\[
F_{\pm}''(r)=0,\quad t+F_{\pm}(r)=0
\]
where $F_{\pm}$ is the inverse of the decreasing part of the initial data $\rho_{\pm}(x)$.
The constants $b=12\dfrac{\rho}{\beta},\, \beta$ and $\sigma$   defined in (\ref{nls6}) at the critical time are given by
\begin{equation}
\label{betasigma2}
b=\dfrac{3}{2u_0}=\dfrac{3}{r_+^0-r_-^0},\quad 
\beta=-\dfrac{1}{2u_0}=-\dfrac{1}{r_+^0-r_-^0},\quad \sigma=\dfrac{1}{16u_0^2}=\dfrac{1}{4(r_+^0-r_-^0)^2}.
\end{equation}
The constants $\alpha $ and $\gamma$ in (\ref{nls6}) depend on the initial data and are evaluated for several initial data below.

\noindent
{\bf Symmetric initial data.}
We consider the initial data 
\[
u(x,t=0)=A\,\mbox{sech}^2x,\quad v(x,t=0)=-B\tanh^2x,\quad B\leq A,
\]
with $A$ a positive constant.
For such initial data, both $r_{\pm}$ have a point of gradient catastrophe.
The evolution in time of the  decreasing part of  $r_+(x,t)$ gives
\begin{equation}
\label{Fp}
x=r_+t+F_+(r_+),\quad F_+(r_+)=\log\dfrac{\sqrt{B+A}+\sqrt{A-r_+}}{\sqrt{B+r_+}}.
\end{equation}
The point of gradient catastrophe is given by
\[
r^0_+=\dfrac{2A-B}{3},\;\;t^0_+=\dfrac{3\sqrt{3}}{4(A+B)},\quad x_+^0=\dfrac{\sqrt{3}}{4}\dfrac{2A-B}{A+B}+\log\dfrac{\sqrt{3}+1}{\sqrt{2}}.
\]
The second Riemann invariant $r_-(x^0_+,t^0_+)$ is determined from the equation  $x^0_+=r^0_-t^0_++F_-(r^0_-)$ with
\begin{equation}
\label{Fm}
 F_-(r_-)=\log\dfrac{\sqrt{A-B}+\sqrt{A+r_-}}{\sqrt{-B-r_-}}.
 \end{equation}
The constants $\gamma$ and $\alpha$ in (\ref{abc}) take the form 
\[
\gamma=-F'''_+(r^0_+)=\dfrac{81\sqrt{3}}{16(A+B)^3},\quad \alpha=F'_-(r^0_-)+t^0_+=-\dfrac{\sqrt{A-B}}{2\sqrt{A+r_-^0}(B+r_-^0)}+t^0_+ .
\]

The evolution in time of the  decreasing part of  $r_-(x,t)$ gives
\[
x=r_-t-F_-(r_-),
\]
with $F_-(r_-)$ as  in (\ref{Fm}).
The point of gradient catastrophe is given by
\[
r^0_-=-\dfrac{2A+B}{3},\;\;t^0_-=\dfrac{3\sqrt{3}}{4(A-B)},\quad x^0_-=-\dfrac{\sqrt{3}}{4}\dfrac{2A+B}{A-B}-\log\dfrac{\sqrt{3}+1}{\sqrt{2}}.
\]
The constants $\gamma$ and $\alpha$ in (\ref{abc}) take the form 
\[
\gamma=F'''_-(r^0_-)=\dfrac{81\sqrt{3}}{16(A-B)^3},\quad \alpha=-F'_+(r^0_+)+t^0_-=\dfrac{\sqrt{A+B}}{2\sqrt{A-r_+^0}(B+r_+^0)}+t^0_-
\]
where $r^0_+$ is determined from the equation $x^0_-=r^0_+t^0_--F_+(r^0_+)$ with $F_+(r_+)$ as in (\ref{Fp}).\\
 
\noindent
{\bf `Dark' initial data.}
We consider the initial data 
\[
u(x,0)=A\tanh^4\frac{x}{B},\qquad v(x,0)=0.
\]
In the evolution of this initial data  two points of gradient catastrophe occur, one at $x_0<0$ for the Riemann invariant $r_+$ and one at $x_0>0$ for the Riemann invariant $r_-$. 
For this initial data the Riemann invariant $r_+(x,t)$ for $x<x_m$, where 
$x_m$ is the point of  the minimum of $u$,  takes the form
\[
x=r_{+}t-F_+(r_{+}),\quad F_+(r_+)=\dfrac{1}{2B}\log\dfrac{1+\left(\dfrac{r_+}{A}\right)^{\frac{1}{4}}}{1-\left(\dfrac{r_+}{A}\right)^{\frac{1}{4}}}
\]
with critical point
\[
r_+^0=\dfrac{9A}{25},\quad t^0_+=\dfrac{25\sqrt{15}}{72AB},\quad x^0_+=\dfrac{\sqrt{15}}{8B}-\dfrac{1}{2B}\log(4+\sqrt{15}).
\]
The point $r_-^0(x^0_+,t^0_+)$ is determined from the condition $x^0_+=r_-^0t^0_+-F_-(r^0_-)$ with 
\[
F_{-}(r_-)=\dfrac{1}{2B}\log\dfrac{1+\left(-\dfrac{r_-}{A}\right)^{\frac{1}{4}}}{1-\left(-\dfrac{r_-}{A}\right)^{\frac{1}{4}}}.
\]
The constants $\gamma$ and $\alpha $ in (\ref{abc}) take the form
\[
\gamma=\dfrac{78125\sqrt{15}}{31104A^3B},\quad \alpha=t^0_+-\dfrac{1}{4AB \left(-\frac{r_-^0}{A}\right)^{\frac{3}{4}}(-1 +\sqrt{-\dfrac{r_-^0}{A}})}.
\]
The evolution of $r_-(x,t)$ for $x>x_m$, where $x_m$ is the point of  minimum,  is determined by the equation
\[
x=r_{-}t+F_{-}(r_{-}).
\]
The point of gradient catastrophe occurs at 
\[
r_-^0=-\dfrac{9A}{25},\quad t^0_-=\dfrac{25\sqrt{15}}{72AB},\quad x^0_-=-\dfrac{\sqrt{15}}{8B}+\dfrac{1}{2B}\log(4+\sqrt{15}).
\]
The point $r^0_+(x^0_-,t^0_-)$ is determined by the equation $x^0_-=r_+^0t^0_-+F_+(r^0_+)$.
The constants $\gamma$ and $\alpha $ in (\ref{abc}) take the form
\[
\gamma=\dfrac{78125\sqrt{15}}{31104A^3B},\quad \alpha=t^0_--\dfrac{1}{4AB \left(\frac{r_+^0}{A}\right)^{\frac{3}{4}}(-1 +\sqrt{\dfrac{r_+^0}{A}})}.
\]

\subsection{Focusing cubic NLS}
The case of the focusing cubic NLS equation
$$
i\epsilon \psi_t +\frac{\epsilon^2}2\psi_{xx}+|\psi|^2\psi=0
$$ 
was considered  extensively in \cite{DGK}.

For the initial data 
\begin{equation}
\label{initial}
\psi(x,t=0)=A_0\,\mbox{sech}\, x,\quad \mbox{or equivalently~~ $u=A_0^2\,$sech$^2x$, $v=0$,}
\end{equation}
 the solution of the equations
(\ref{nls0})   in the elliptic case is  given by 
\begin{align}
\label{eq1}
x&=vt+\Re\left[\mbox{arcsinh}\left(\dfrac{-\frac{1}{2}v+iA_0}{\sqrt{u}}\right)\right]\\
\label{eq2}
0&=tu-\Re\left[\sqrt{(-\frac{1}{2}v+iA_0)^2+u}\right]
\end{align}
and the function $f(u,v)$ takes the form
\beq\label{eqf1} 
f(u,v)=\Re \left[ \left( - \frac{v}{2} +i\, A_0\right) 
\sqrt{u+\left(- \frac{v}{2} +i\, A_0\right)^2} + u \, \log\frac{ 
(- \frac{v}{2} +i\, 
A_0)^{2}+\sqrt{(-\frac{v}{2}+iA_{0})^{2}+u}}{\sqrt{u}}\right].
\eeq
The point of elliptic umbilic catastrophe is given by 
\[
u_0=2A_0^2,\;\;v_0=0,\;\;x_0=0,\;\;\;t_0=\dfrac{1}{2A_0}
\]
and 
\[
f_{uvv}^0=0,\quad f_{vvv}^0=-\dfrac{u_0}{4A_0^3}.
\]
so that  $a_+$ in (\ref{constantFNLS}) becomes $a_+=-i\dfrac{\sqrt{u_0}}{4A_0^3}.$

\subsection{Focusing quintic NLS}
The Riemann invariants of the equation
$$
i\epsilon \psi_t +\frac{\epsilon^2}2\psi_{xx}+\frac12|\psi|^4\psi=0
$$  
are given by
\[
r_{+}=v+ iu,\quad r_-=r^{*}_+=v- iu.
\]
The equations (\ref{nls0})   reduce to two uncoupled Riemann  wave equations
\[
\partial_tr_{\pm}+r_{\pm}\partial_x r_{\pm}=0.
\]
For the initial datum 
\[
 r_{\pm}(x,t=0)=\pm i A_0^2\mbox{sech}^2x,\]
the solution is given by
\begin{equation}
\label{hodograph1}
x=r_{+}t+F(r_{+}), \quad x=r_-t+F^*(r_-)
\end{equation}
where $F$ is the inverse of the increasing  part of the initial data (\ref{initial}), namely
\[
F(r_+)=\log\dfrac{A_0-\sqrt{A_0^2+ir_+}}{\sqrt{-ir_+}}.
\]
An equivalent result can be obtained considering the decreasing part of the initial data.
Comparing (\ref{hodograph1}) with (\ref{hodograph}) one has 
\begin{equation}
\label{f1}
\Re(F)=f_u,\quad \Im(F)=f_v,
\end{equation}
and it easily follows that $f_{uu}+f_{vv}=0$.
The point of elliptic umbilic catastrophe is determined by the equations (\ref{hodograph1}) and the condition
\[
t+F'(r_+)=0
\]
or
\begin{equation}
\label{shock5nls}
\begin{split}
&x=vt+\Re(F)\\
&0=ut+\Im(F)\\
&v^2(A_0^2-3u)+u^3-A_0^2u^2=\dfrac{A_0^2}{4t^2}\\
&v(3u^2-2uA^2_0-v^2)=0
\end{split}
\end{equation}
The solution is given by
\[
x_0=0,\;\;v_0=0,\;\;  t_0=\dfrac{A_0}{2u_0\sqrt{u_0-A_0^2}},\;\;\dfrac{A_0}{\sqrt{u_0}}-\cos\dfrac{A_0}{2\sqrt{u_0-A^2_0}}=0.
\] 
The constants $r$ and $\psi$ in  (\ref{F1})  are  given by
\[
a_+=F''(r^0_+)=-\dfrac{i}{r e^{i\psi}}=i\dfrac{A_0}{4(u_0)^2}\dfrac{2A_0^2-3u_0}{(u_0-A_0^2)^{\frac{3}{2}}}.
\]

{\bf Asymmetric initial data.} \\
Let us first consider initial data $u=\mbox{sech}\,x$ and $ v=-\tanh x$. The solution defined by the hodograph transform takes the form
\begin{equation}
\label{hodograph2}
x=r_{\pm}t+F(r_{\pm})
\end{equation}
where $F$ is the inverse of the increasing part of the initial data (\ref{initial}), namely
\begin{equation}
\label{Fasym}
F(r_+)=\log\dfrac{i(1-r_+)}{r_++1}.
\end{equation}
The breaking condition
\[
t+F'(r_+)=t-\dfrac{1}{1-r_+}-\dfrac{1}{1+r_+}=0
\]
implies that the critical point is given by 
\[
v_0=0, \;\; t_0=\dfrac{2}{1+(u_0)^2}, \;\;x_0=0\quad 2u_0+((u_0)^2+1)\arctan\dfrac{1-(u_0)^2}{2u_0}=0.
\]
The constants $r$ and $\psi$ in  (\ref{F1})  are  given by
\[
a_+=F''(r^0_+)=-\dfrac{i}{r e^{i\psi}}=-\dfrac{4iu_0}{((u_0)^2+1)^2}
\]
The quantities in (\ref{constantFNLS}) take the form
\[
C_+^+=-\dfrac{1}{8 i u_0},\quad \lambda_{+,+}^0=-1,
\;\;a_+=f_{uvv}^0+iQ'_0f_{vvv}^0=-\dfrac{4iu_0}{((u_0)^2+1)^2}
\]

Asymmetric initial data can be obtained as a  solution of the hodograph equation in the form
\begin{equation}
\label{Hodasy}
x=r_+t+F(r_+)+\alpha\mathcal{F}(r_+),\quad \alpha\in\mathbb{R},
\end{equation}
where 
\[
 \mathcal{F}(r_+)=(r_+-1)\log(i(1-r_+))-(1+r_+)\log(1+r_+),
 \]
  that is  $\mathcal{F}'(r_+)=F(r_+)$ and $F$ is given in (\ref{Fasym}). Since $F$ and $\mathcal{F}$ are analytic functions, their  real and imaginary parts solve the Laplace equation. Therefore formula (\ref{Hodasy}) provides a solution to
the  elliptic system (\ref{nls0})  for $V(u)=\frac{u^2}2$.
In order to determine the point of elliptic umbilic catastrophe it is sufficient to 
consider the solution of (\ref{Hodasy}) together with the condition
\begin{equation}
\label{Hodasy1}
t+F'(r_+)+ \alpha F(r_+)=0.
\end{equation}
The real and imaginary parts of the equations (\ref{Hodasy}) and (\ref{Hodasy1}) give
\eqa\label{asym}
&&
x=vt+\dfrac{1}{2}(1+v\alpha)\log\dfrac{(1-v)^2+u^2}{(1+v)^2+u^2}-u\alpha\,\arctan\dfrac{1-u^2-v^2}{2u}
\nn\\
&&
\quad\quad -\dfrac{\alpha}{2}\log((1+u^2-v^2)^2+4u^2v^2)
\nn\\
&&
ut+(1+v\alpha)\arctan\dfrac{1-u^2-v^2}{2u}+\dfrac{u\,\alpha}{2}\log\dfrac{(1-v)^2+u^2}{(1+v)^2+u^2}-\alpha\arctan\dfrac{1+u^2-v^2}{2uv}=0
\nn\\
&&
t-\dfrac{1-v}{(1-v)^2+u^2}-\dfrac{1+v}{(1+v)^2+u^2}+\dfrac{\alpha}{2}\log\dfrac{(1-v)^2+u^2}{(1+v)^2+u^2}=0
\\
&&
\dfrac{-4uv}{[(1-v)^2+u^2][(1+v)^2+u^2]}+\alpha\arctan\dfrac{1-u^2-v^2}{2u}=0.
\nn
\eeqa
The solution of the above system determines the critical point $(x_0,t_0)$ and the values $v_0=v(x_0,t_0)$, $u_0=u(x_0,t_0)$.
The constants $r$ and $\psi$ in  (\ref{F1})  are  given by
\[
a_+=F''(r^0_+)+\alpha F'(r^0_+)=-\dfrac{i}{r e^{i\psi}}=\left.2\dfrac{\alpha(r_+^2-1)-2r_+}{(r_+^2-1)^2}\right|_{r_+=v_0+iu_0}.
\]
\noindent
{\bf Dark Soliton}
We consider the initial data $u(x,t=0)=\tanh^4x$  and $v=0$.
For such initial data the hodograph equations are
\[
x=rt+F(r_+),\quad F(r_+)=\dfrac{1}{2}\log\dfrac{1+(-ir_+)^{\frac{1}{4}} }{1-(-ir_+)^{\frac{1}{4}}}
\]
where $r_+=v+iu$.
The break up point is determined by the above  complex equation together with the condition
\[
t+F'(r_+)=0.
\]
As in this case it is not possible to obtain a simple analytic 
expression for the point of elliptic umbilic catatrophe $(x_0,t_0)$ 
and for $r_+^0, $ $r_-^0$, they are determined numerically.
The constants $r$ and $\psi$ that appear in (\ref{F1})  are  given by
\[
a_+=F''(r^0_+)=-\dfrac{i}{r e^{i\psi}}=-\dfrac{1}{16}\dfrac{(3-5\sqrt{-ir^0_+})(-ir^0_+)^{\frac{1}{4}}}{(r^0_+)^2(\sqrt{-ir^0_+}-1)^2}
\]

\section{Numerical Methods}\label{section6}
The numerical task in treating the semiclassical limit of the NLS 
equations consists in solving the NLS equations, the numerical 
evaluation of implicit solutions to certain ODEs and the direct 
solution of  ODEs of Painlev\'e 
type for a given asymptotic 
behavior. The present section provides a summary of 
how these different tasks are solved numerically, and how 
the numerical accuracy is controlled.

\subsection{NLS equations}
Critical phenomena are generally believed to be independent of 
the chosen boundary conditions. Thus we study a periodic 
setting in the following. This also includes rapidly decreasing functions which can be 
periodically continued as smooth functions within the finite numerical precision. 
This allows to approximate the spatial dependence 
via truncated Fourier series which leads for
the studied equations to large systems of ordinary differential 
equations (ODEs), see below.  
Fourier methods are convenient because of their excellent approximation 
properties for smooth functions  
(the numerical error in approximating smooth functions 
decreases faster than any power of the number $N$ of Fourier modes) 
and for minimizing the introduction of numerical 
dissipation which is important in the study of the purely dispersive 
effects considered here.
In Fourier space, equations (\ref{nls1})  have the form
\begin{equation}
    v_{t}=\mathbf{L}v+\mathbf{N}(v,t)
    \label{utrans},
\end{equation}
where $v$ denotes the (discrete) Fourier transform of $u$, 
and where $\mathbf{L}$ and $\mathbf{N}$ denote linear and nonlinear 
operators, respectively. The resulting system of ODEs consists in 
this case of \emph{stiff} equations.  
A stiff system is essentially a 
system for which explicit numerical schemes as explicit Runge-Kutta 
methods are inefficient, since prohibitively small time steps have to 
be chosen to control exponentially growing terms. 
The standard remedy for this is to use stable implicit schemes, which 
require, however, the iterative solution of a system of nonlinear equations at 
each time step which is computationally expensive. In addition the 
iteration often introduces numerical errors in the Fourier 
coefficients. 

The 
stiffness appears here in the linear part $\mathbf{L}$ (it is 
a consequence of the distribution of the eigenvalues of 
$\mathbf{L}$), whereas the 
nonlinear part is free of derivatives. In the semiclassical
limit, this stiffness is still present despite the small term 
$\epsilon$ in $\mathbf{L}$. This is due to the fact that the 
smaller $\epsilon$ is, the higher wavenumbers are needed to 
resolve the strong gradients. 
A possible way to deal with stiff systems are so-called 
implicit-explicit (IMEX) methods. 
The idea of IMEX  is the use of a 
stable implicit method for the linear part of the equation 
(\ref{utrans}) and an explicit scheme for the nonlinear part which is 
assumed to be non-stiff. In \cite{KassT} such schemes did not 
perform satisfactorily for dispersive PDEs which is why we  
consider a more sophisticated variant here. 
Driscoll's \cite{D} idea  was to split the linear part of the equation in
Fourier space into regimes of high and low wavenumbers. He
used the fourth order Runge-Kutta (RK) integrator for the low wavenumbers and the
lineary implicit RK method of order three for the high wavenumbers.
He showed that this method is in practice of fourth order over a wide range of step
sizes. 
In \cite{etna} we showed that this method performs best for the 
focusing case. We use it here also for the defocusing case where it 
was very efficient, but slighly outperformed by so-called 
time-splitting schemes as in \cite{baojin,baojin2}. For a discussion 
of exponential integrators in this context, see \cite{berska,BIS,etna}. 
Numerical approaches to the semiclassical limit of NLS can be also 
found in \cite{cern,ce}.

The accuracy of the numerical solution is controlled via the 
numerically computed conserved energy of the solution
\begin{equation}
    E[\psi] = 
    \int_{\mathbb{T}}^{}\left(\frac{\epsilon^{2}}{2}|\psi_{x}|^{2}-\frac{\rho}{s(s+1)}
    |\psi|^{2s+2}\right)dx
    \label{nlsE},
\end{equation}
which is an exactly conserved quantity for NLS equations. 
Numerically the energy $E$ will be a function of time due to 
unavoidable numerical errors. We define 
$\Delta E:=|(E(t)-E(0))/E(0)|$. It was shown in \cite{etna} that this quantity can 
be used as an indicator of the numerical accuracy if sufficient 
resolution in space is provided. The quantity $\Delta E$ typically 
overestimates the precision by two to three orders of magnitude. Since we are 
interested in an accuracy at least of order $\epsilon$, we will 
always ensure that the Fourier coefficients of the final state 
decrease well below $10^{-5}$, and that the quantity $\Delta E$ is 
smaller than $10^{-6}$ (in general it is of the order of machine 
precision, i.e.\ $10^{-14}$).

Focusing NLS equations have a \emph{modulational instability} due to 
the fact that they can be seen as a hyperbolic regularization of an 
elliptic semiclassical system for which initial value problems are 
ill-posed. In our context this instability shows up in the form of 
spurious growing modes for high wave numbers. To address this 
problem, we use a \emph{Krasny filter} \cite{krasny}, which means we 
put the Fourier coefficients with modulus below some treshold 
(typically $10^{-12}$) equal to zero. Thus the effect of rounding 
errors is reduced. In \cite{etna} it was pointed out that sufficient 
spatial resolution has to be provided to resolve the maximum of the 
solution close to the critical time to avoid instabilities. Thus we 
use $2^{14}$ to $2^{16}$ Fourier modes, and $10^{4}$ to $10^{5}$ time 
steps for the computations.

\subsection{Numerical solution of the semiclassical 
equations}
The solutions to the semiclassical equations are obtained in implicit 
form via hodograph techniques. 
These equations are of the form 
\begin{equation}
    S_{i}(\{y_{i}\},x,t)=0,\quad i=1,\ldots,M
    \label{Si},
\end{equation}
where the $S_{i}$ denote some given real  function of the $y_{i}$ and $x$, 
$t$. The task is to determine the $y_{i}$ in dependence of $x$ and 
$t$. To this end we determine the $y_{i}$ for given $x$ and $t$ as 
the zeros of the function $S:=\sum_{i=1}^{M}S_{i}^{2}$. This is
done numerically via a Newton iteration which is very efficient for a 
sufficiently good initial iterate. This iteration has 
the advantage that it can be done for all values of $x$ at the same 
simultaneously, i.e., in a vectorized way. Alternatively we use the 
algorithm \cite{optim} pointwise to solve (\ref{Si}). 
We calculate the zeros to the order of machine precision. The 
residual of the equations provides a check of the numerical 
acccuracy.

\subsection{Painlev\'e transcendents}
The asymptotic solutions near the break-up point  
are given by pole-free solutions with a given asymptotic behaviour for 
$x\to\pm\infty$ to the Painlev\'e-I and the P$^{2}_{I}$ equation. 
The standard way to solve 
these equations for large $|x|$ is to give a series solution to the respective equation 
with the imposed asymptotics that is generally divergent. These 
divergent series are truncated at finite values of $x$, $x_{l}<x_{r}$ at the first 
term that is of the order of machine precision. The sum of this 
truncated series at these points is then used as boundary data, and 
similarly for derivatives at these points. Thus the problem is 
translated to a boundary value problem on the finite interval 
$[x_{l},x_{r}]$. 

In \cite{GK2} we used for the ${\rm P}_I^2$ solution a collocation method with 
cubic splines distributed as \emph{bvp4} with Matlab, and the same 
approach in \cite{DGK} 
for the tritronqu\'ee solution of ${\rm P}_{I}$. Note that the 
tritronqu\'ee solutions are constructed on lines in the complex plane 
in the sector where the solution is conjectured (see \cite{DGK}) to 
have no poles. As in \cite{physicad} 
we use here a Chebyshev 
collocation method for both equations. The solution of the ODEs is sampled on Chebyshev collocation 
points $x_{j}$, $j=0,\ldots,N_{c}$ which can be related to an expansion 
of the solution in terms  of Chebyshev 
polynomials. The action of the 
derivative operator  is in this setting
equivalent to the action of a Chebyshev differentiation matrix on this space, 
see for instance 
\cite{trefethen}. The ODE is thus replaced by $N_{c}+1$ algebraic 
equations. The boundary data are included via a so-called 
$\tau$-method: The equations for $j=0$ and for $j=N_{c}$ (for the 
fourth order equation $j=0,1,N_{c}-1,N_{c}$) are replaced by the 
boundary conditions. The resulting system of algebraic equations is 
solved with a standard Newton method with relaxation which is 
necessary for the oscillatory ${\rm P}_I^2$ solution (there is no 
good initial iterate for the oscillatory solutions).  The convergence of the 
solutions is in general very fast. We always stop the Newton 
iteration when machine precision is reached. Again the highest  
Chebyshev coefficients are taken as an indication of sufficient 
resolution of the solutions (they have to reach machine precision). 
An efficient solution of the ODE is especially important in the  
P$^{2}_{I}$ case where the asymptotic solution to (\ref{p12}) has to be 
computed for many values of the parameter $t$. It can be seen in 
Fig.~\ref{figPI2}. For a more detailed discussion of this special  P$_{I}^{2}$ 
solution, also in the complex plane, see \cite{KKG}.

\section{Numerical study of defocusing  generalized  and nonlocal  NLS equations}\label{section7}
In this section we will study numerically solutions to defocusing 
NLS before and close to the break up of the corresponding 
semiclassical solutions. The solutions for NLS are compared to the 
corresponding semiclassical ones and for $t\sim t_0$ to an asymptotic 
description in terms of a special solution to the second equation in the Painlev\'e-I 
hierarchy. We will consider the cubic and the quintic version of 
these equations. The cubic NLS is the only completely integrable equation studied in 
this paper. Since the results for both cubic and quintic are very 
similar in this case, we present a more detailed investigation for 
the non-integrable quintic NLS. We also study a nonlocal variant of 
the cubic NLS equation. Unless otherwise noted, the 
considered critical point is always at the center of the figures. 

\subsection{sech$\,x$ initial data for the cubic defocusing NLS equation}

We will study the initial data $\psi_{0}(x)=\mbox{sech }x$ for several 
values of $\epsilon$. In 
this case there are two break-up points at $\pm x_{c}$ with $x_{c}= 
\sim 2.2093$ at the same time 
$t_0=\sim1.5244$. We will 
consider in the following always the break-up for negative values of 
$x$ where the Riemann invariant $r_{-}=v-2\sqrt{u}$ has a gradient 
catastrophe.

In Fig.~\ref{nlscubicc01}, the NLS solution, the semiclassical 
solution and the P$_{I}^2$ solution (\ref{conj1}) can be seen at the 
critical time close to the critical point of the semiclassical 
solution. 
\begin{figure}[thb!]
  \includegraphics[width=0.5\textwidth]{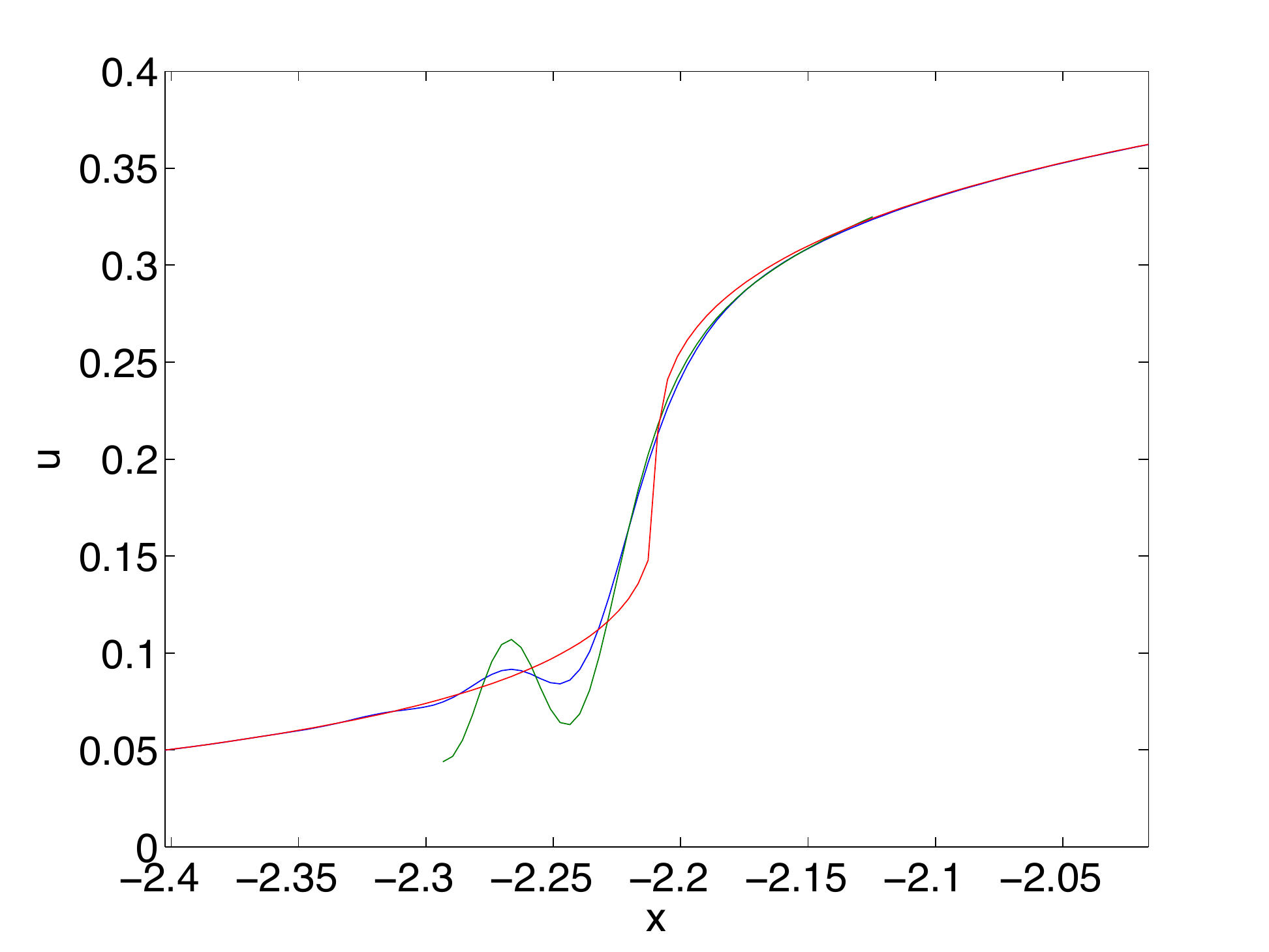}
  \includegraphics[width=0.5\textwidth]{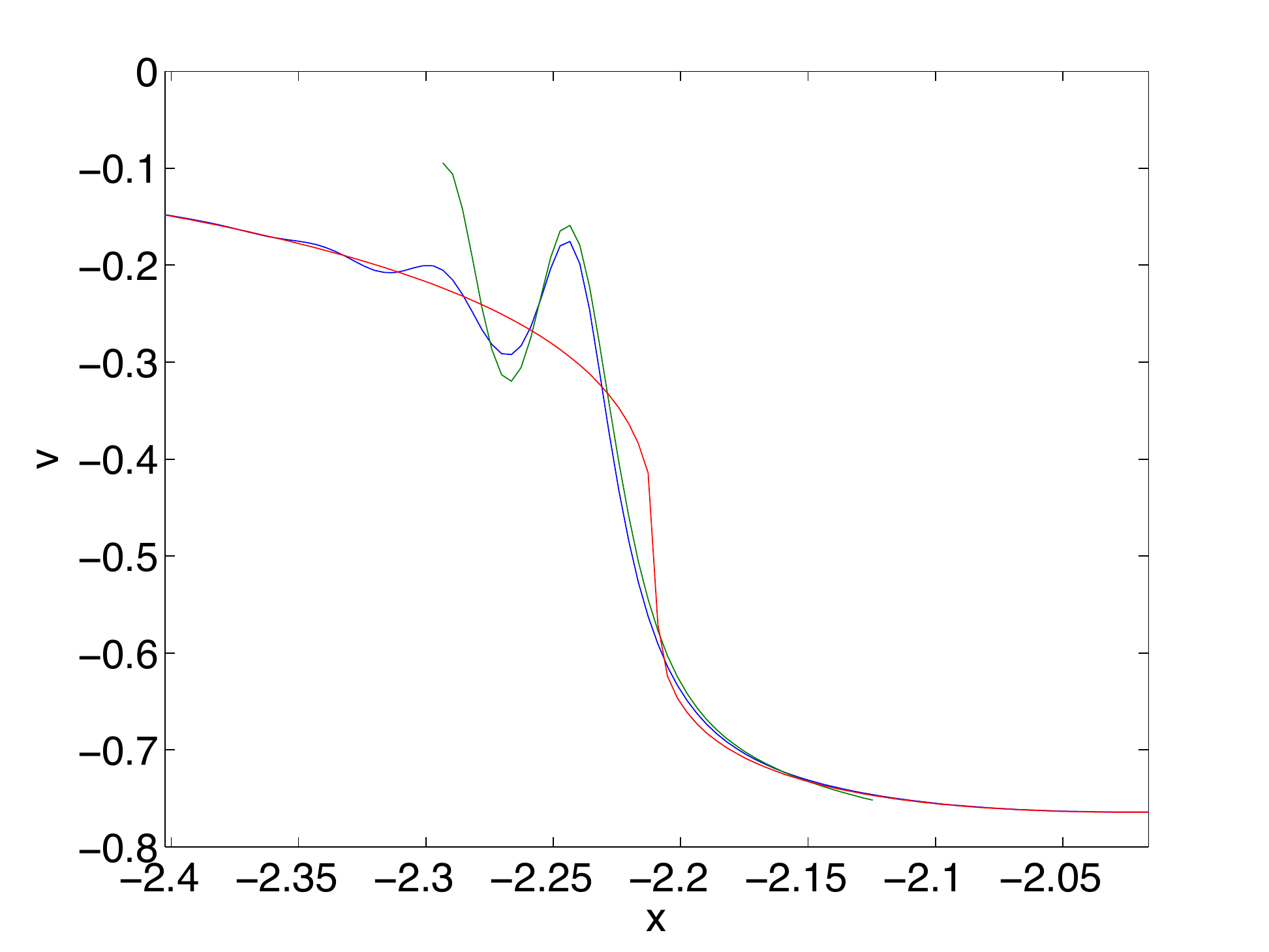}
 \caption{Solution to the defocusing cubic NLS equation for the initial data 
 $\psi_{0}(x)=\mbox{sech }x$ and $\epsilon=0.01$ at the critical time 
 $t_0$ in blue,  the corresponding semiclassical solution in red 
 and the P$_{I}^2$ solution (\ref{conj1}) in green; on the left the 
 function $u$, on the right the function $v$.}
 \label{nlscubicc01}
\end{figure}

The corresponding Riemann invariants can be seen in 
Fig.~\ref{nlscubicc01r}.

\begin{figure}[thb!]
\subfigure
{  \includegraphics[width=0.5\textwidth]{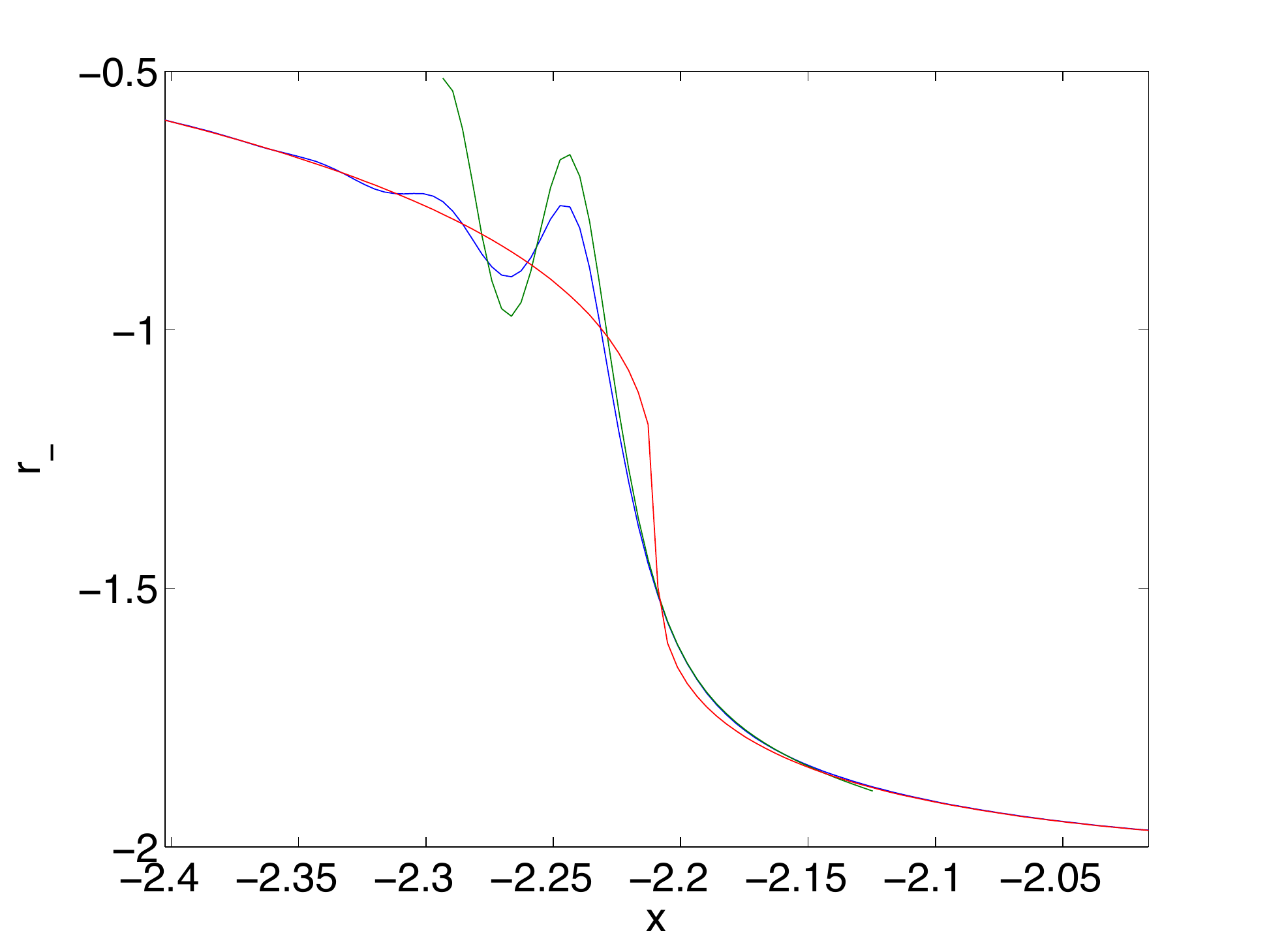}
  \includegraphics[width=0.5\textwidth]{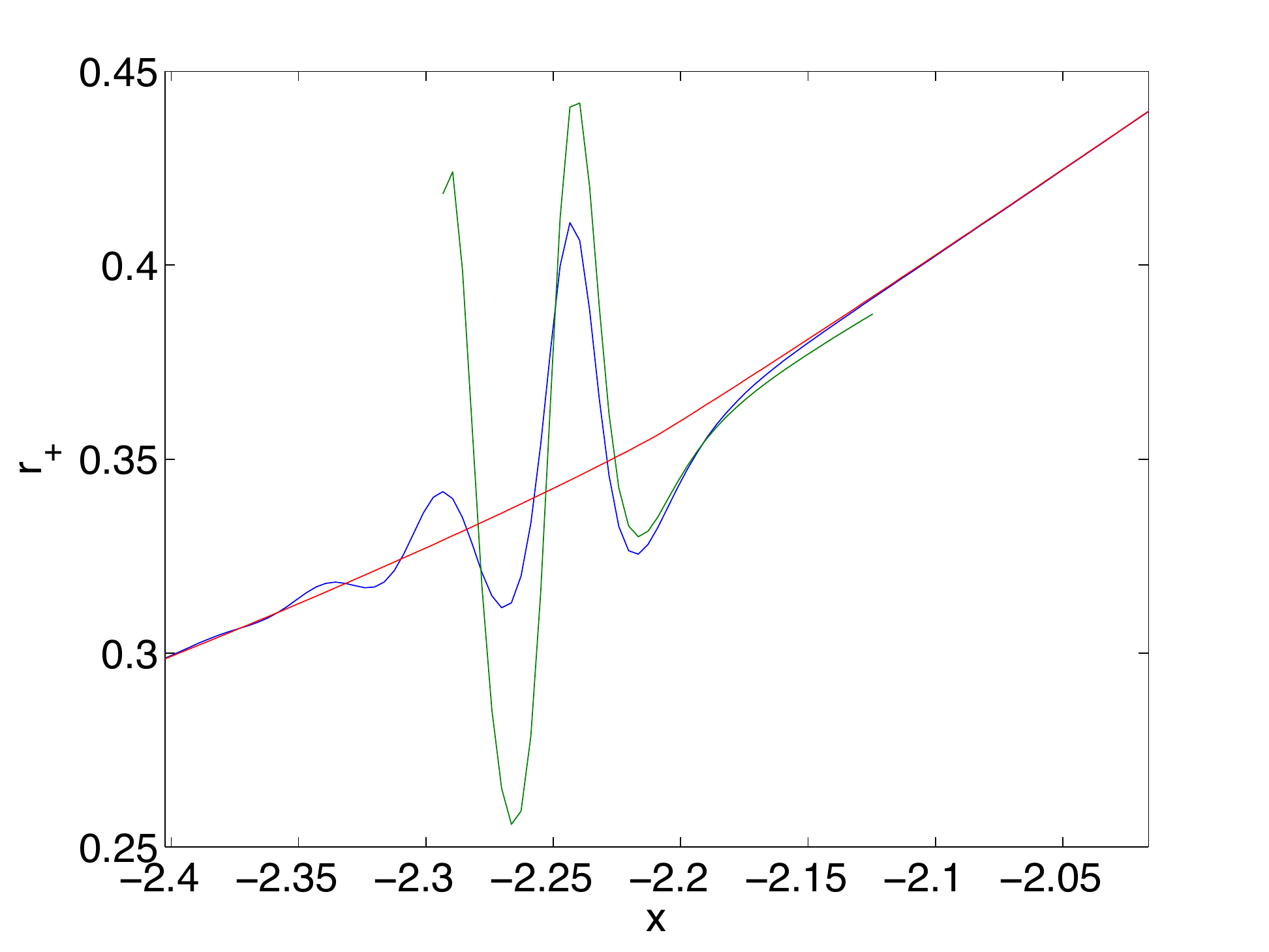}}
\vspace{5mm}
 \subfigure
 {\includegraphics[width=0.5\textwidth]{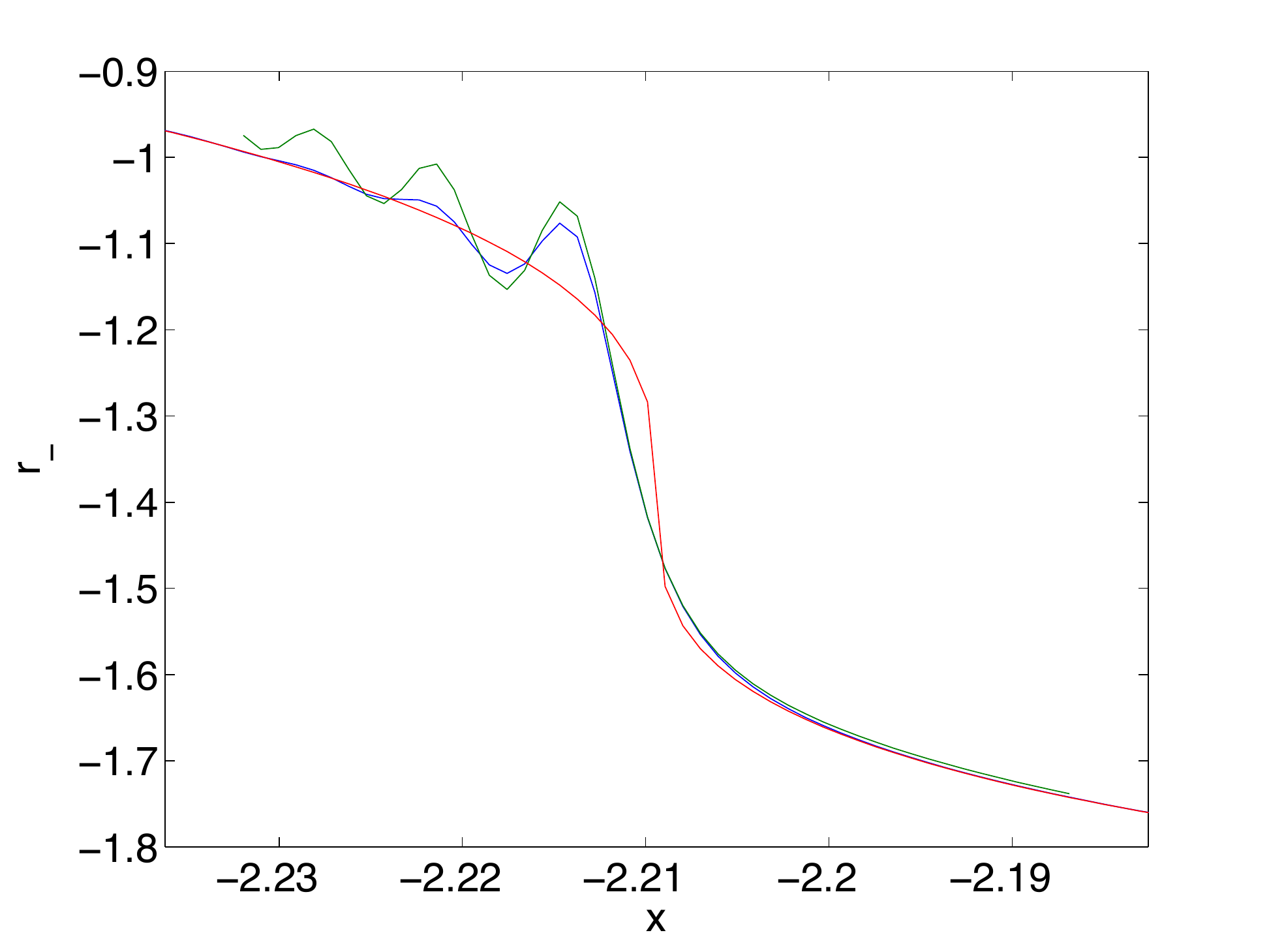}
  \includegraphics[width=0.5\textwidth]{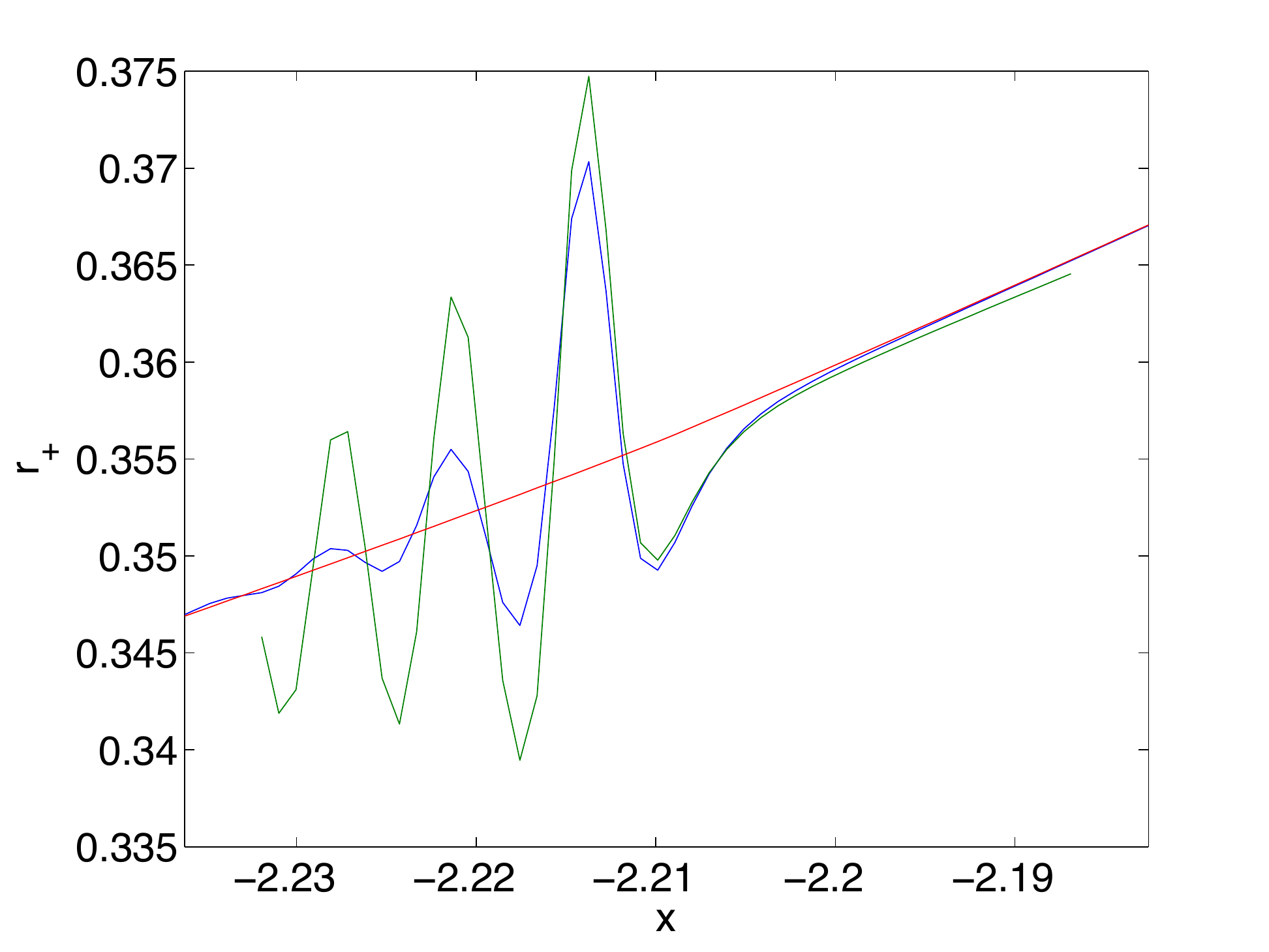}}
\caption{Solution to the defocusing cubic NLS equation for the initial data 
 $\psi_{0}(x)=\mbox{sech }x$  at the critical time 
 $t_0$ in blue,  the corresponding semiclassical solution in red 
 and the P$_{I}^2$ solution (\ref{conj1}) in green;  on the left the 
 Riemann invariant $r_{-}$, on the right the invariant $r_{+}$. The 
 upper figures are for  $\epsilon=0.01$, the lower ones for
   $\epsilon=0.001$.}
 \label{nlscubicc01r}
\end{figure}

For smaller $\epsilon$ the agreement of NLS and semiclassical 
solution becomes better. We show the Riemann invariants $r_{\pm}$ for 
$\epsilon=10^{-3}$ in Fig.~\ref{nlscubicc01r}. Notice that 
there are also oscillations in the invariant $r_{+}$ which stays smooth at 
this point in the semiclassical limit. 

\subsection{sech$\,x$ initial data for the defocusing quintic NLS equation}

We will first study the initial data $\psi_{0}(x)=\mbox{sech }x$ for 
values of $\epsilon$ of $0.1$, $0.09$,\ldots, $0.01$, $0.009$,\ldots, $0.001$. In 
this case there are two break-up points at $\pm x_{c}$ with $x_{c}= 
\ln((\sqrt{3}+1)/\sqrt{2}) + \sqrt{3}/2)\sim1.5245$ at the same time 
$t_0=3\sqrt{3}/4\sim1.2990$. The solution up to the critical time 
can be seen in Fig.~\ref{nlsquintd}, where the defocusing effect of 
the equation can be recognized. The critical value of the Riemann 
invariants at the respective break-up point is $\pm2/3$. We will 
consider in the following always the break-up for negative values of 
$x$ where the Riemann invariant $r_{-}=v-u$ has a gradient 
catastrophe. 
\begin{figure}[thb!]
  \includegraphics[width=0.7\textwidth]{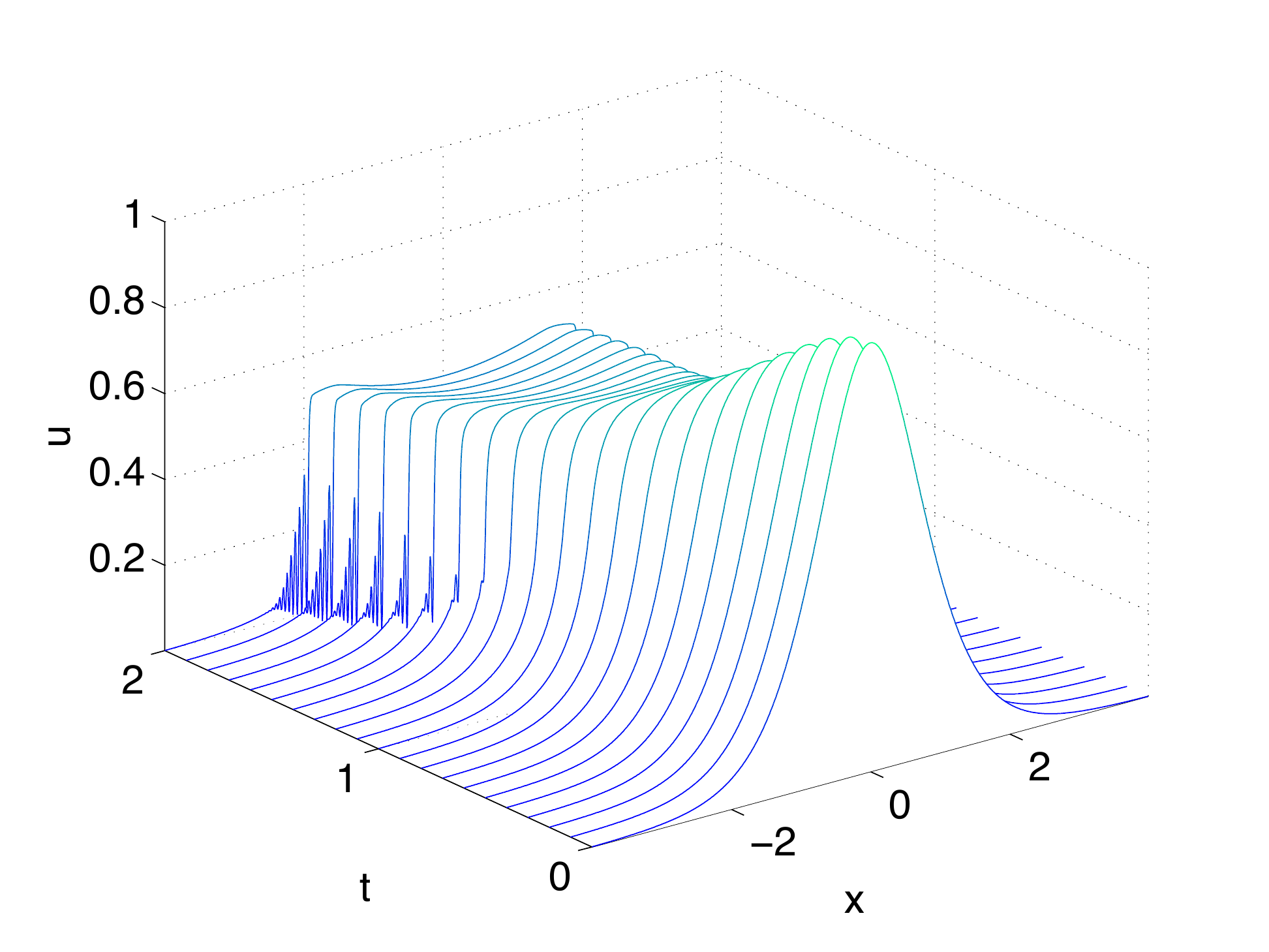}
 \caption{Solution to the defocusing quintic NLS equation for the initial data 
 $\psi_{0}(x)=\mbox{sech }x$ and $\epsilon=0.01$. The critical time is
 $t_0\sim1.2990$.}
 \label{nlsquintd}
\end{figure}

At the critical time the difference of the 
Riemann invariants $r_{-}$ between the 
semiclassical solution and the solution to the focusing quintic NLS 
scales roughly as $\epsilon^{2/7}$. More precisely we find via a linear
regression analysis for the logarithm of the difference $\Delta_{-}$
between NLS and semiclassical solution  a scaling
of the form $\Delta\propto \epsilon^{a}$ with $a=0.2952$ 
($2/7\sim0.2857$) with
standard deviation $\sigma_{a}=0.0017$ and correlation coefficient 
$r=0.9999$. At the same point the difference $\Delta_{+}$ between the 
Riemann invariants $r_{+}=v+u$ between the semiclassical and the NLS 
solution scales roughly as $\epsilon^{4/7}$ as predicted by the 
theory. A linear
regression analysis for the logarithm of the difference $\Delta_{+}$
gives  a scaling
of the form $\Delta\propto \epsilon^{a}$ with $a=0.5988$ 
($4/7\sim0.5714$) with
standard deviation $\sigma_{a}=0.0053$ and correlation coefficient 
$r=0.9998$. 

In Fig.~\ref{nlsquintdsechc01}, the NLS solution, the semiclassical 
solution and the P$_{I}^2$ solution (\ref{conj1}) can be seen at the 
critical time close to the critical point of the semiclassical 
solution. 
\begin{figure}[thb!]
  \includegraphics[width=0.5\textwidth]{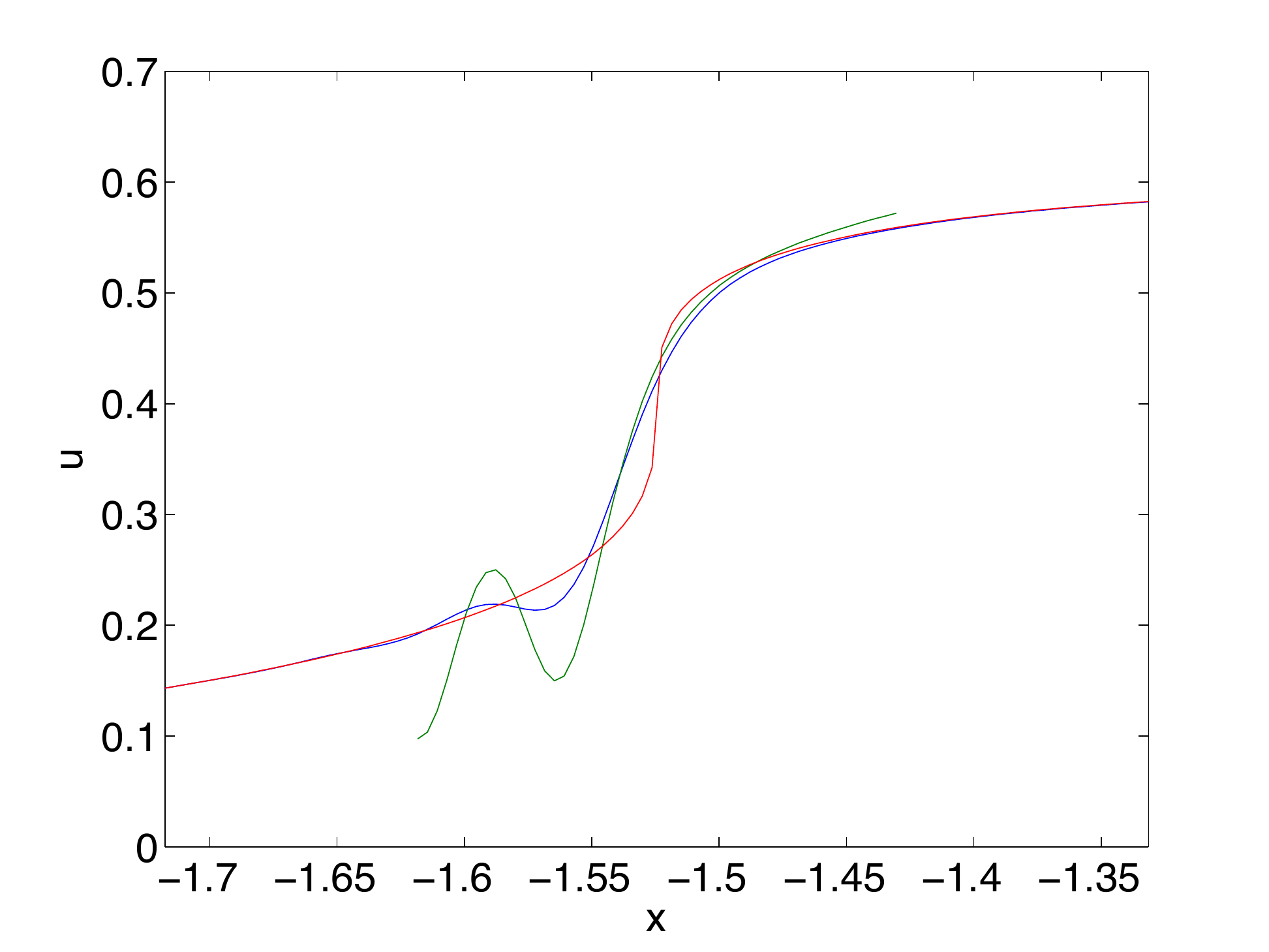}
  \includegraphics[width=0.5\textwidth]{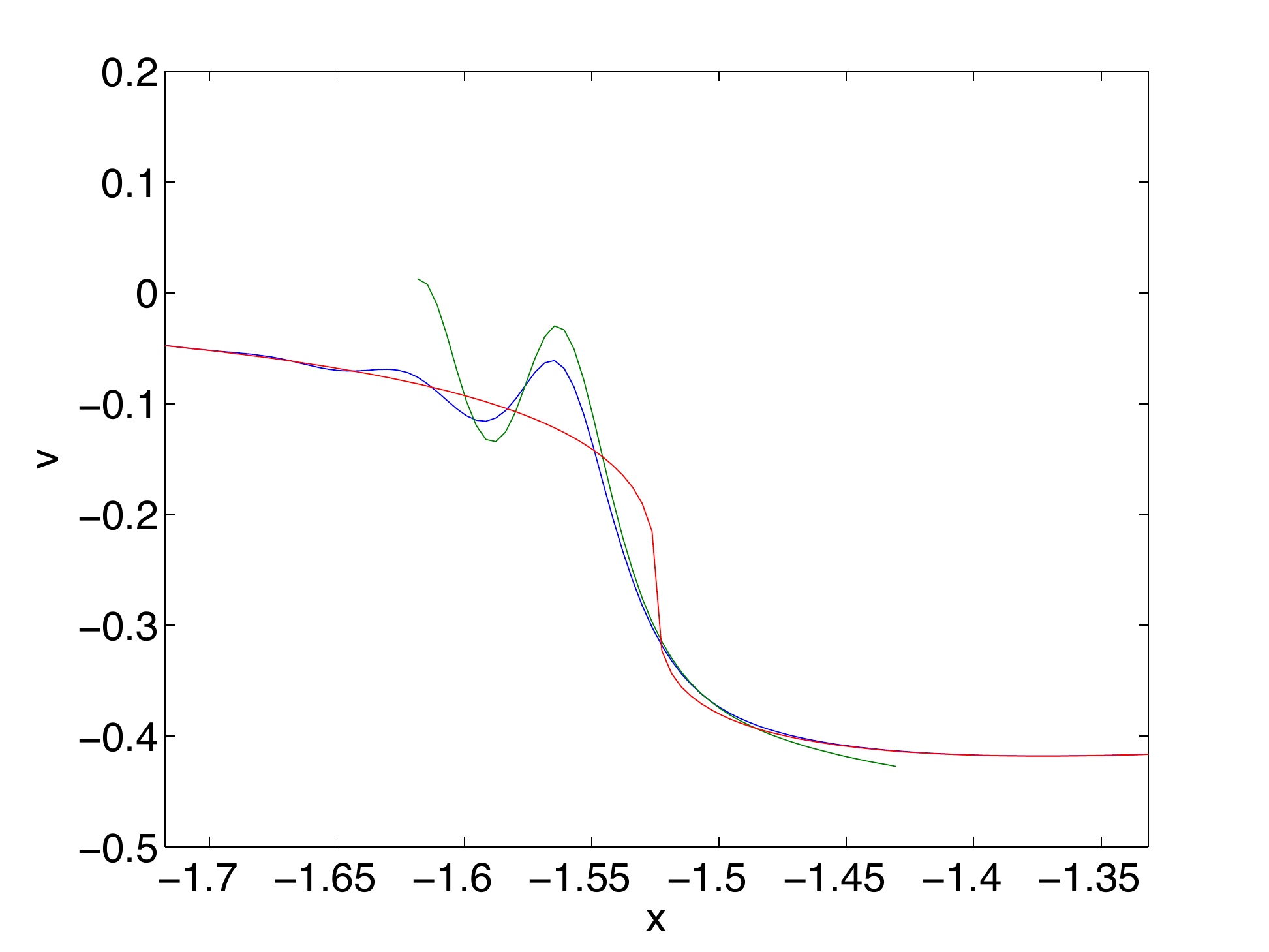}
 \caption{Solution to the defocusing quintic NLS equation for the initial data 
 $\psi_{0}(x)=\mbox{sech }\,x$ and $\epsilon=0.01$ at the critical time 
 $t_0$ in blue,  the corresponding semiclassical solution in red 
 and the P$_{I}^2$ solution (\ref{conj1}) in green; on the left the 
 function $u$, on the right the function $v$.}
 \label{nlsquintdsechc01}
\end{figure}

The corresponding Riemann invariants can be seen in 
Fig.~\ref{nlsquintdsechcr01}.

\begin{figure}[thb!]
\subfigure
{ \includegraphics[width=0.5\textwidth]{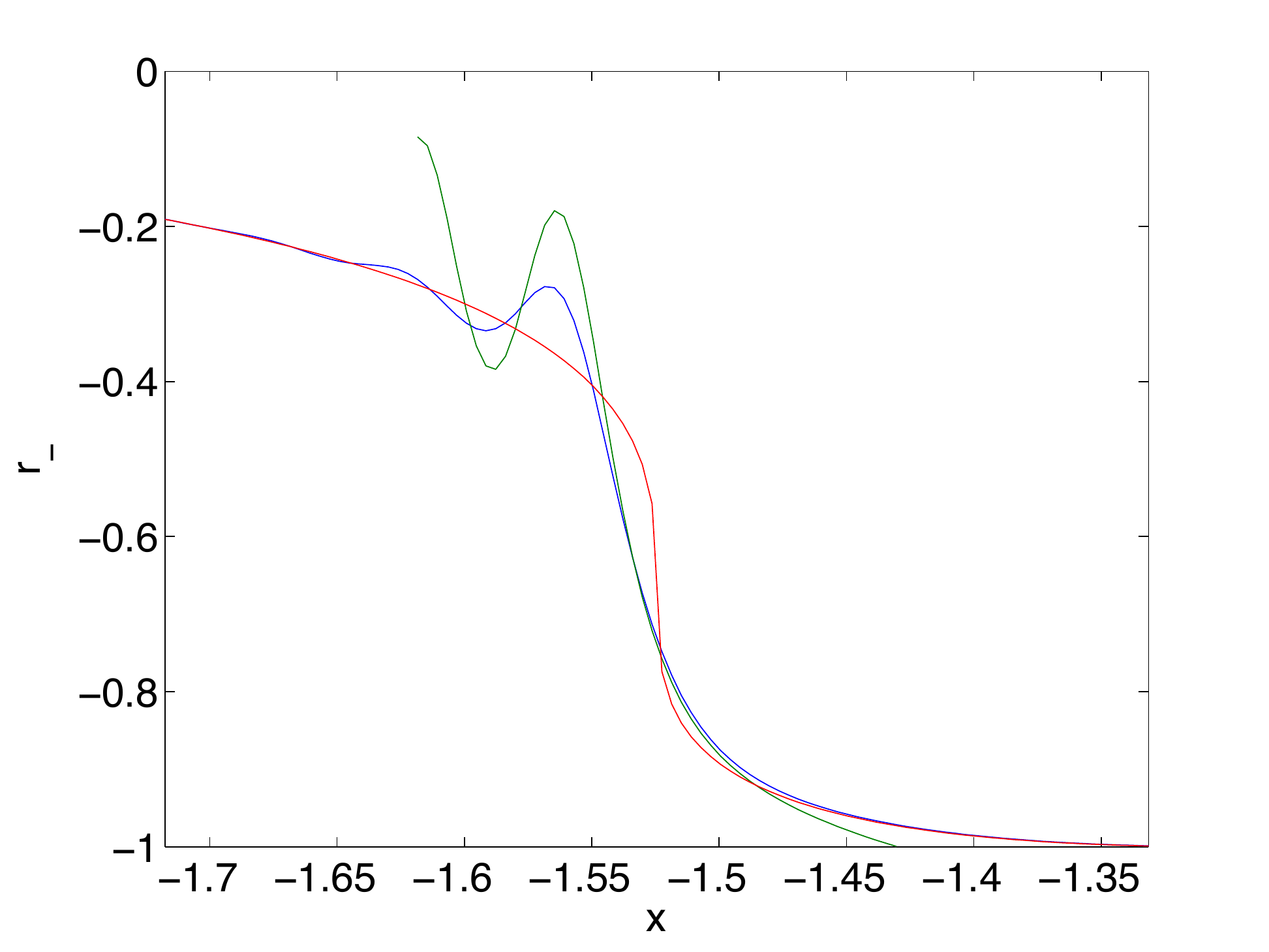}
  \includegraphics[width=0.5\textwidth]{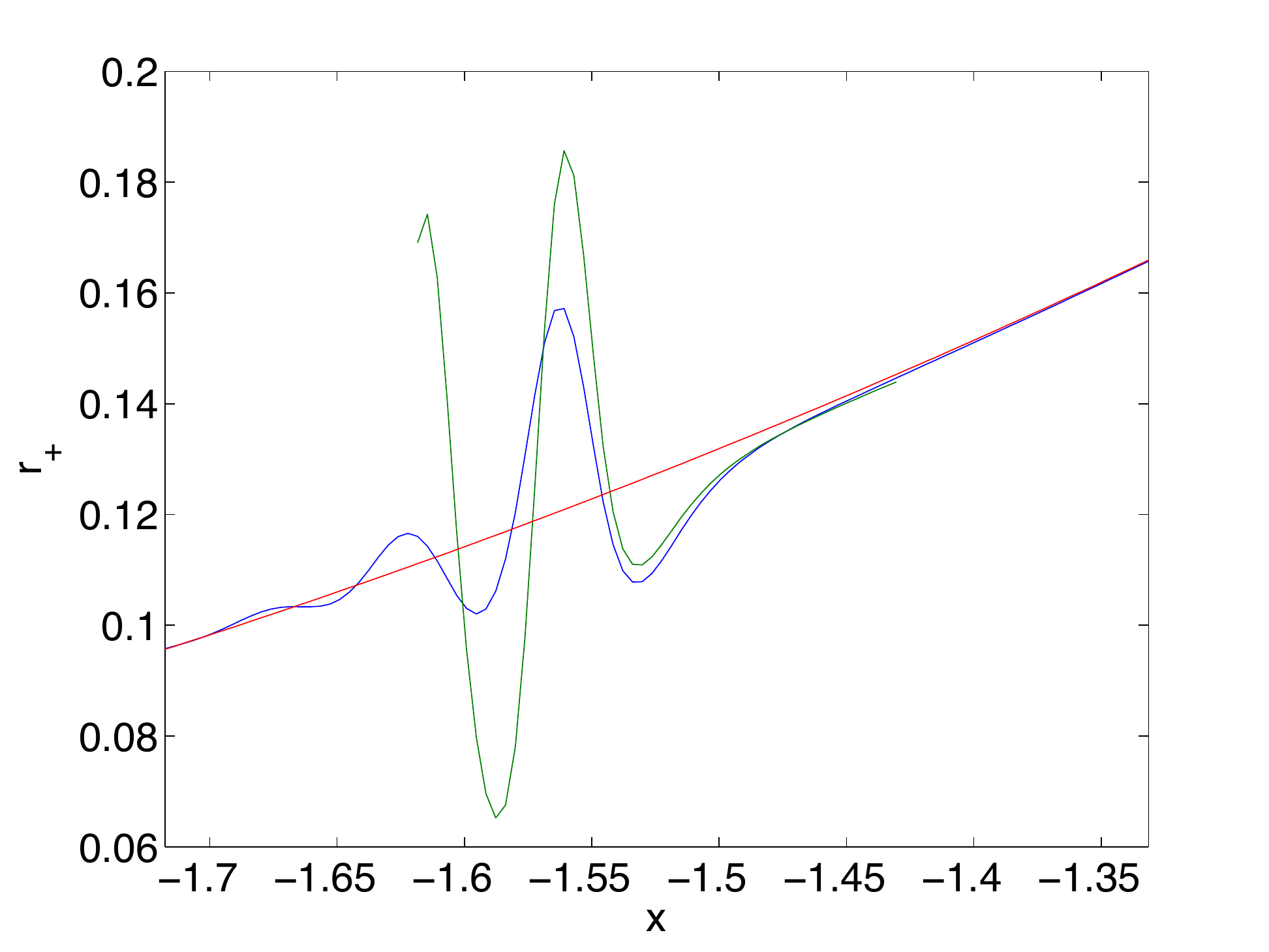}}
\vskip 0.1cm
\subfigure
{\includegraphics[width=0.5\textwidth]{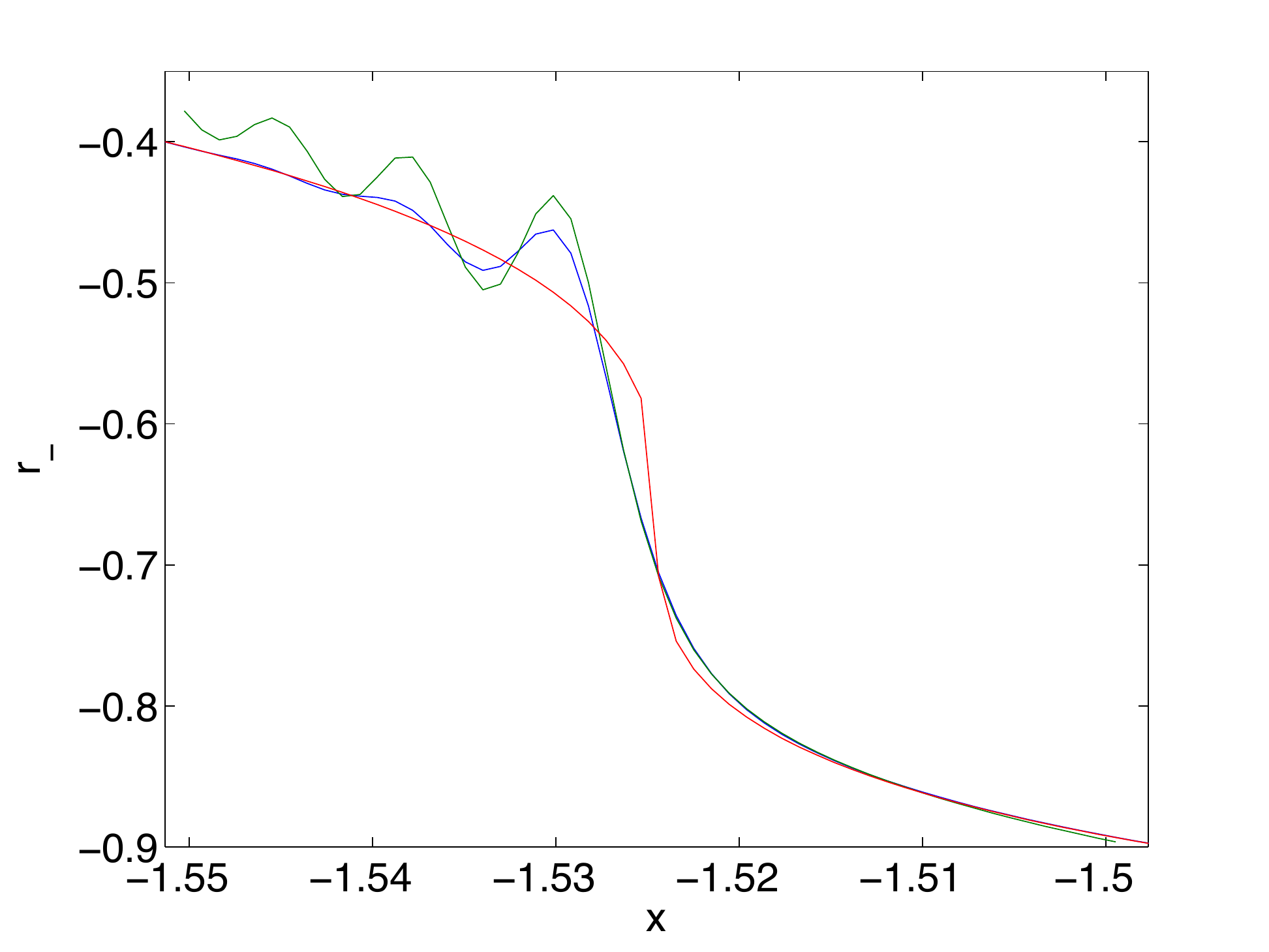}
  \includegraphics[width=0.5\textwidth]{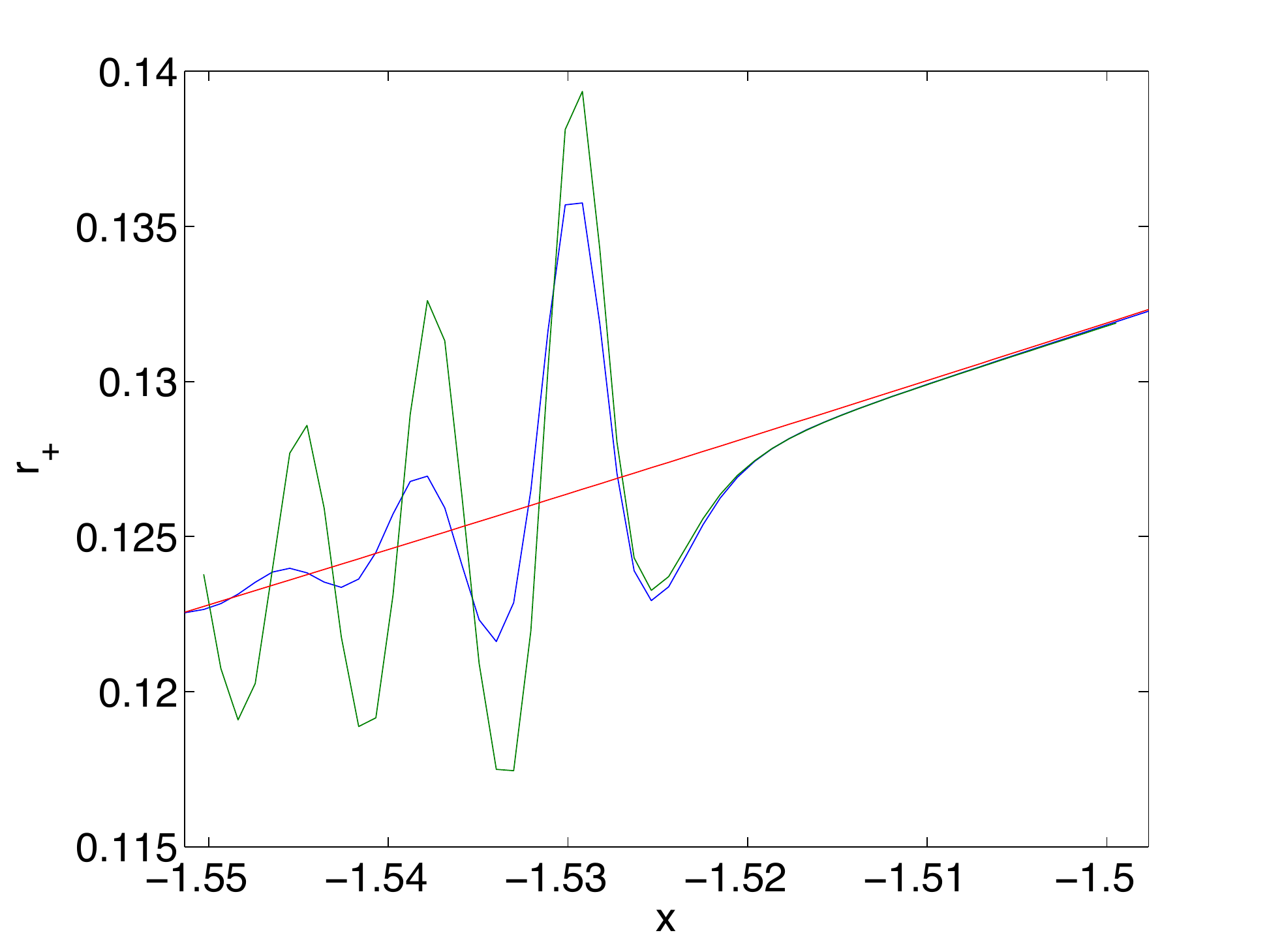}}
 \caption{Solution to the defocusing quintic NLS equation for the initial data 
 $\psi_{0}(x)=\mbox{sech }x$  at the critical time 
 $t_0$ in blue,  the corresponding semiclassical solution in red 
 and the P$_{I}^2$ solution (\ref{conj1}) in green; on the left the 
 Riemann invariant $r_{-}$, on the right the invariant $r_{+}$. The 
 upper figures are for  $\epsilon=0.01$, the lower ones for $\epsilon=0.001$.}
 \label{nlsquintdsechcr01}
\end{figure}

For smaller $\epsilon$ the agreement of NLS and semiclassical 
solution becomes better. We show the Riemann invariants $r_{\pm}$ for 
$\epsilon=10^{-3}$ in  Fig.~\ref{nlsquintdsechcr01}. Note that 
there are also oscillations in the invariant $r_{-}$ which stays smooth at 
this point in the semiclassical limit. 

The P$_{I}^2$ solution (\ref{conj1}) gives a much better agreement 
with the NLS solution close to the critical point as can be seen in 
Fig.~\ref{nlsquintdsechc01} and \ref{nlsquintdsechcr01}. The 
agreement is in fact so good that the difference of the solutions has 
to be studied. The P$_{I}^2$ solution only 
gives locally an asymptotic description, at larger distances from 
the critical point the semiclassical solution provides a better 
description as can be also seen from Fig.~\ref{nlsquintdsechcdelta}.
\begin{figure}[htb!]
  \includegraphics[width=0.5\textwidth]{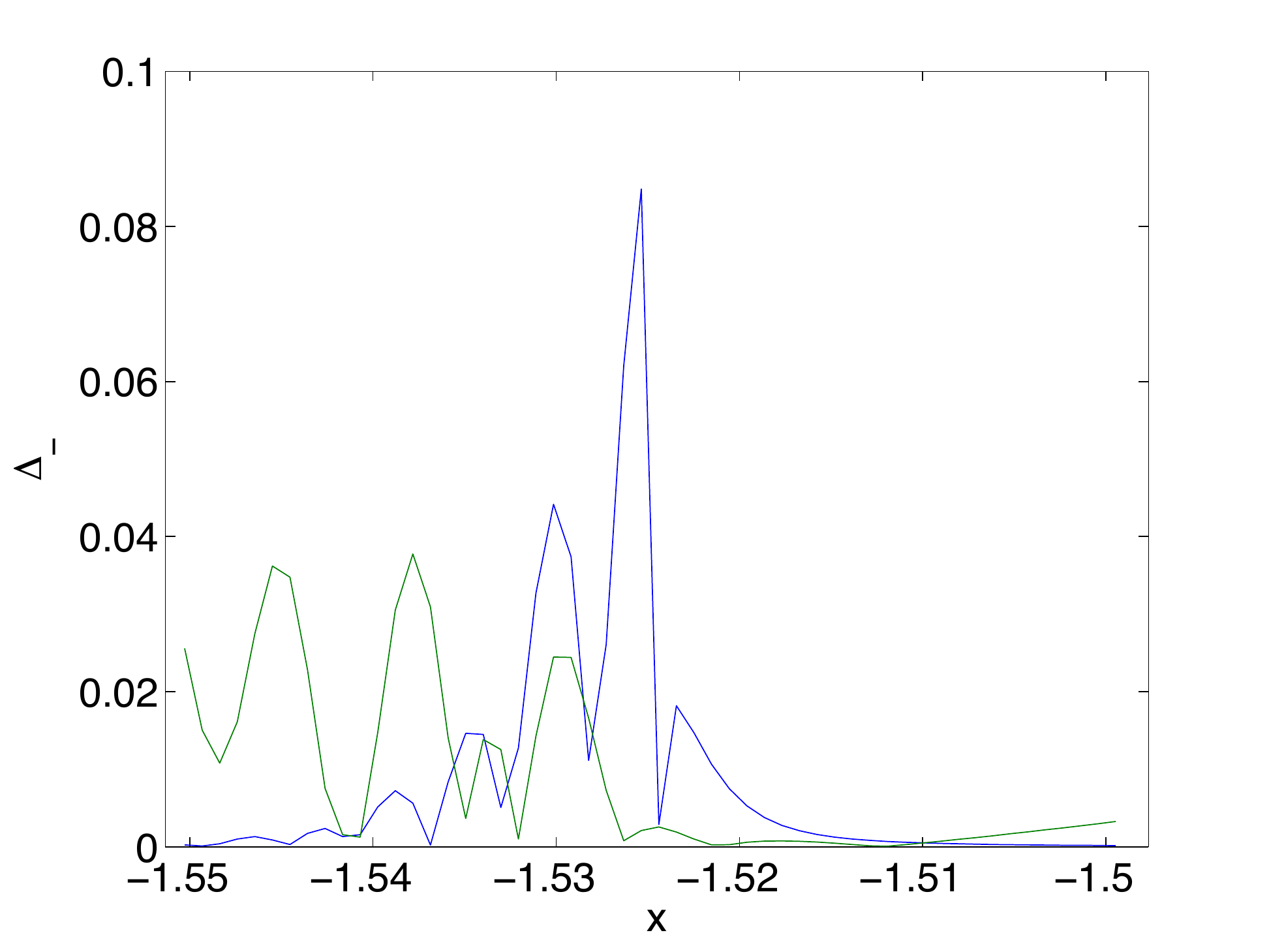}
  \includegraphics[width=0.5\textwidth]{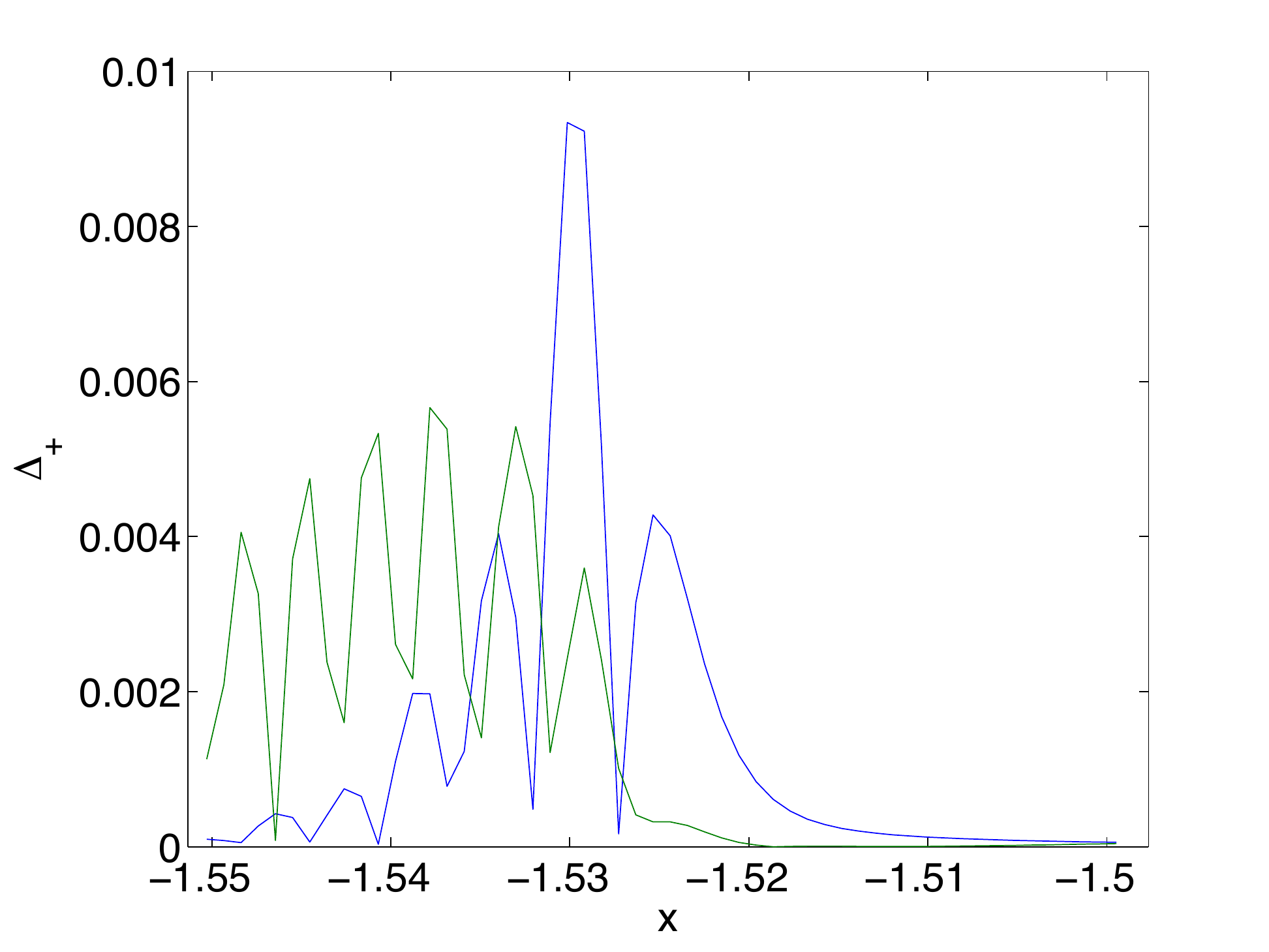}
 \caption{The modulus of the difference af the Riemann invariants for the
 defocusing quintic NLS equation for the initial data 
 $\psi_{0}(x)=\mbox{sech }x$ for $\epsilon=0.01$ at the critical time 
 $t_0$ and the semiclassical solution in blue, and the difference 
 between the corresponding P$_{I}^2$ solution (\ref{conj1}) and the NLS solution 
 in green; on the left the invariant that has a break-up in the 
 semiclassical limit, on the right 
 the invariant that stays smooth.}
   \label{nlsquintdsechcdelta}
\end{figure}

We can identify the regions where each of the asymptotic
solutions gives a better description of NLS than the other by 
identifying the values  $x_{l},x_{r}$  such that for 
all $x_{l}<x<x_{r}$ the P$_{I}^2$ solution provides a better 
asymptotic description than the semiclassical  
solution. Due to the oscillatory character of the NLS 
and the P$_{I}^2$ solution (\ref{conj1}), such a definition leads to 
ambiguities and  oscillations also in the boundaries of these zones 
for $r_{\pm}$. No clear scaling could thus be identified for these 
limits. The oscillatory character of the solution also implies there 
is no obvious scaling of the maximal error in the 
asymptotic description for the values of $\epsilon$ we could treat.  

The matching procedure nonetheless clearly improves the asymptotic description near
the critical point.
In Fig.~\ref{nlsquintdsechcdeltamatch} we see the difference between this matched
asymptotic solution and the NLS solution for two values of
$\epsilon$. Visibly the zone, where the solutions are matched,
decreases with $\epsilon$ (note the rescaling of the $x$-axes with 
a factor
$\epsilon^{6/7}$).
\begin{figure}[htb!]
 \includegraphics[width=0.5\textwidth]{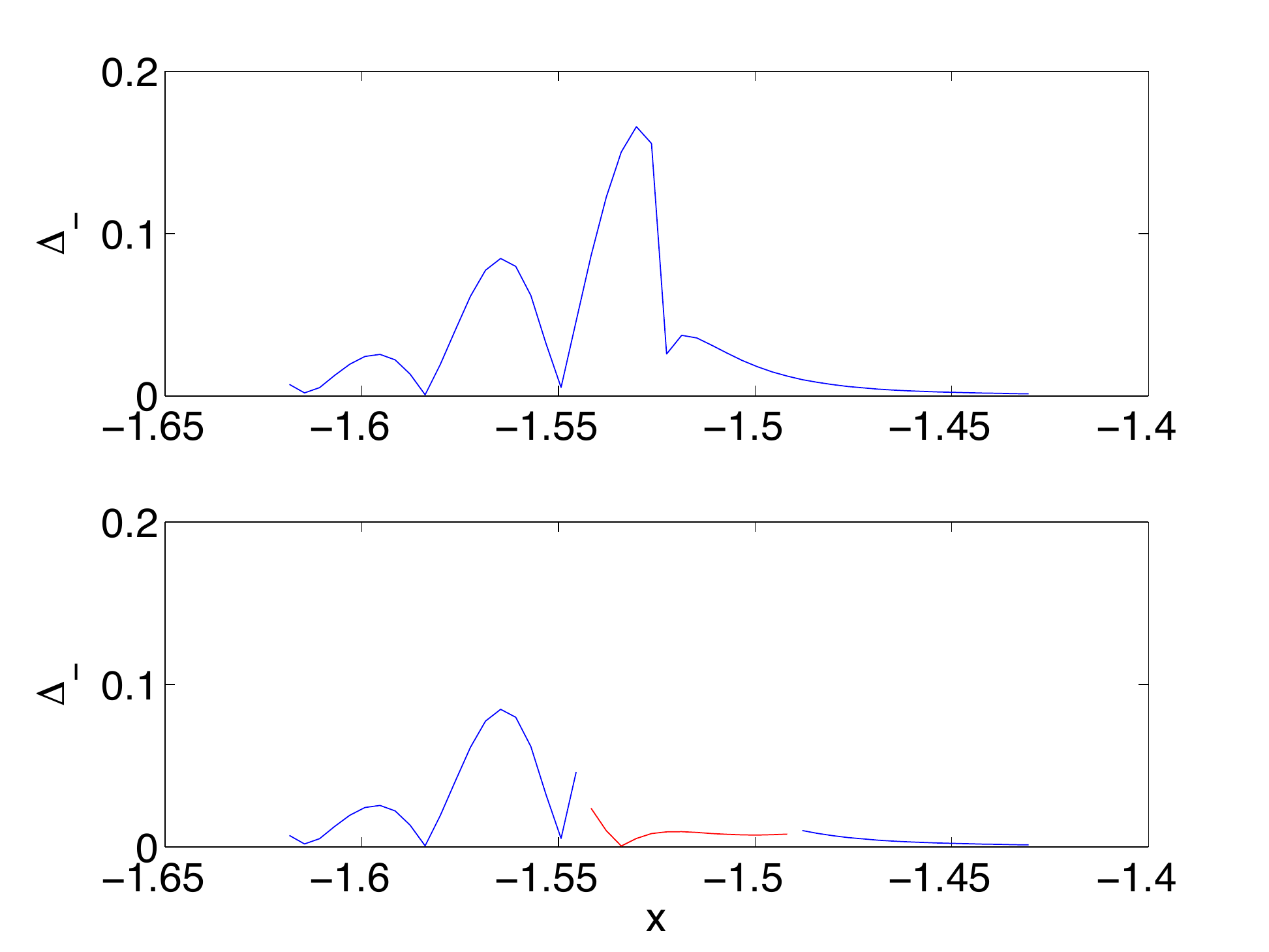}               
 \includegraphics[width=0.5\textwidth]{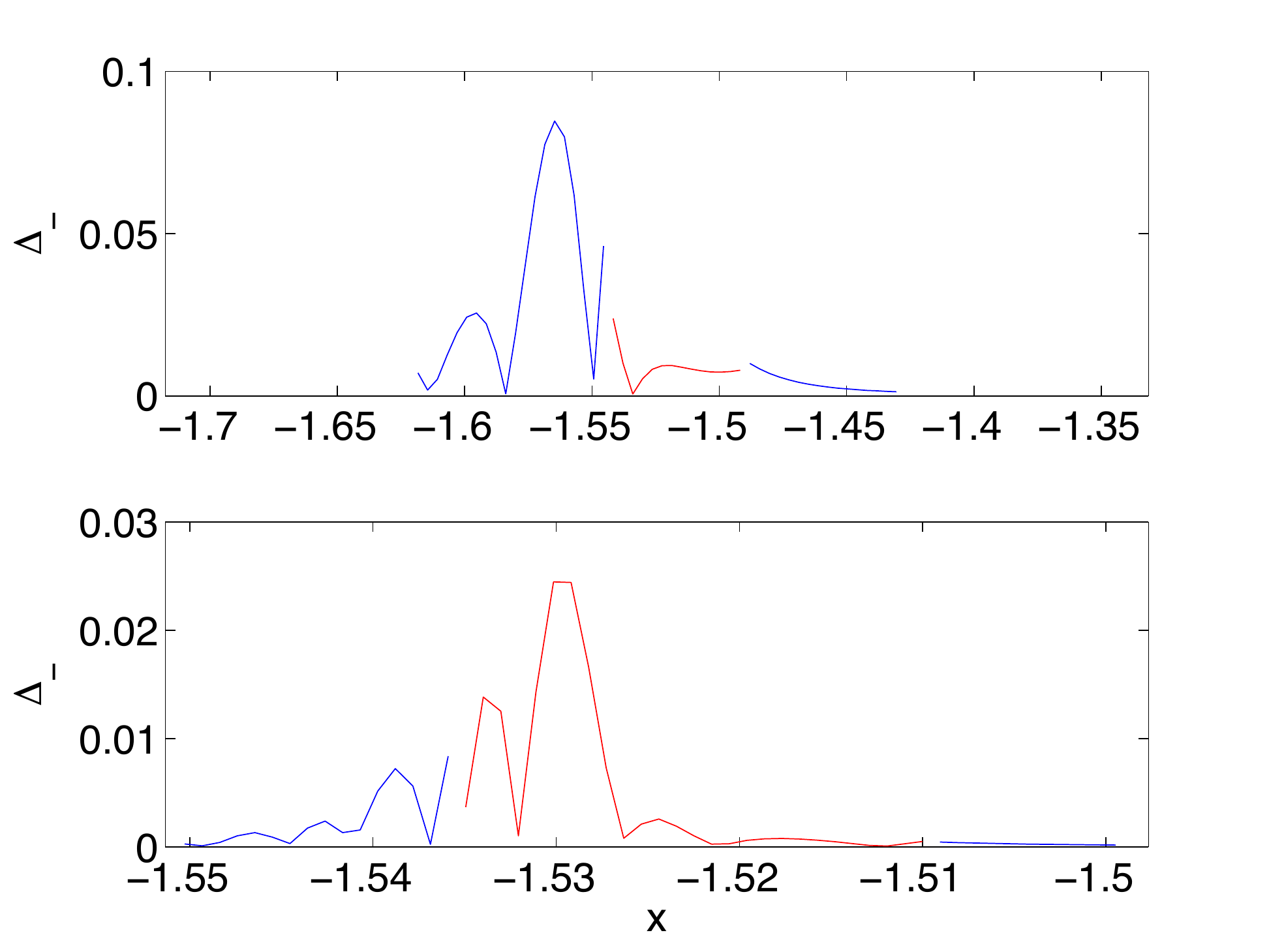}
 \caption{In
the upper part of the left figure  one can see the modulus of the 
difference $\Delta_{-}$ of the Riemann invariant
for the defocusing quintic NLS equation for the initial data 
 $\psi_{0}(x)=\mbox{sech }x$  at the critical time 
 $t_0$ and the semiclassical solution for $\epsilon=0.01$. 
 The lower part shows the same
difference, which is replaced close to the critical point 
by the difference between NLS solution and  the 
P$_{I}^2$ solution (\ref{conj1}) (in red where the error is smaller 
than the one
shown above). The right figure shows the same situation as the lower figure 
on the left  for $\epsilon=0.01$ above and $\epsilon=0.001$ below. The 
$x$-axes are rescaled  by a factor $\epsilon^{6/7}$.}  
\label{nlsquintdsechcdeltamatch}
\end{figure}
The same procedure can be carried out for the invariant $r_{+}$ which 
stays smooth at this point. Obviously the P$_{I}^2$ solution (\ref{conj1})
 provides a description of higher order at this point as 
can be seen in Fig.~\ref{nlsquintdsechcdeltamatch2}. Thus the 
P$_{I}^2$ solution (\ref{conj1}) provides as expected an asymptotic 
description of the oscillations for the Riemann invariant which 
remains smooth in the semiclassical limit.
\begin{figure}[htb!]
 \includegraphics[width=0.5\textwidth]{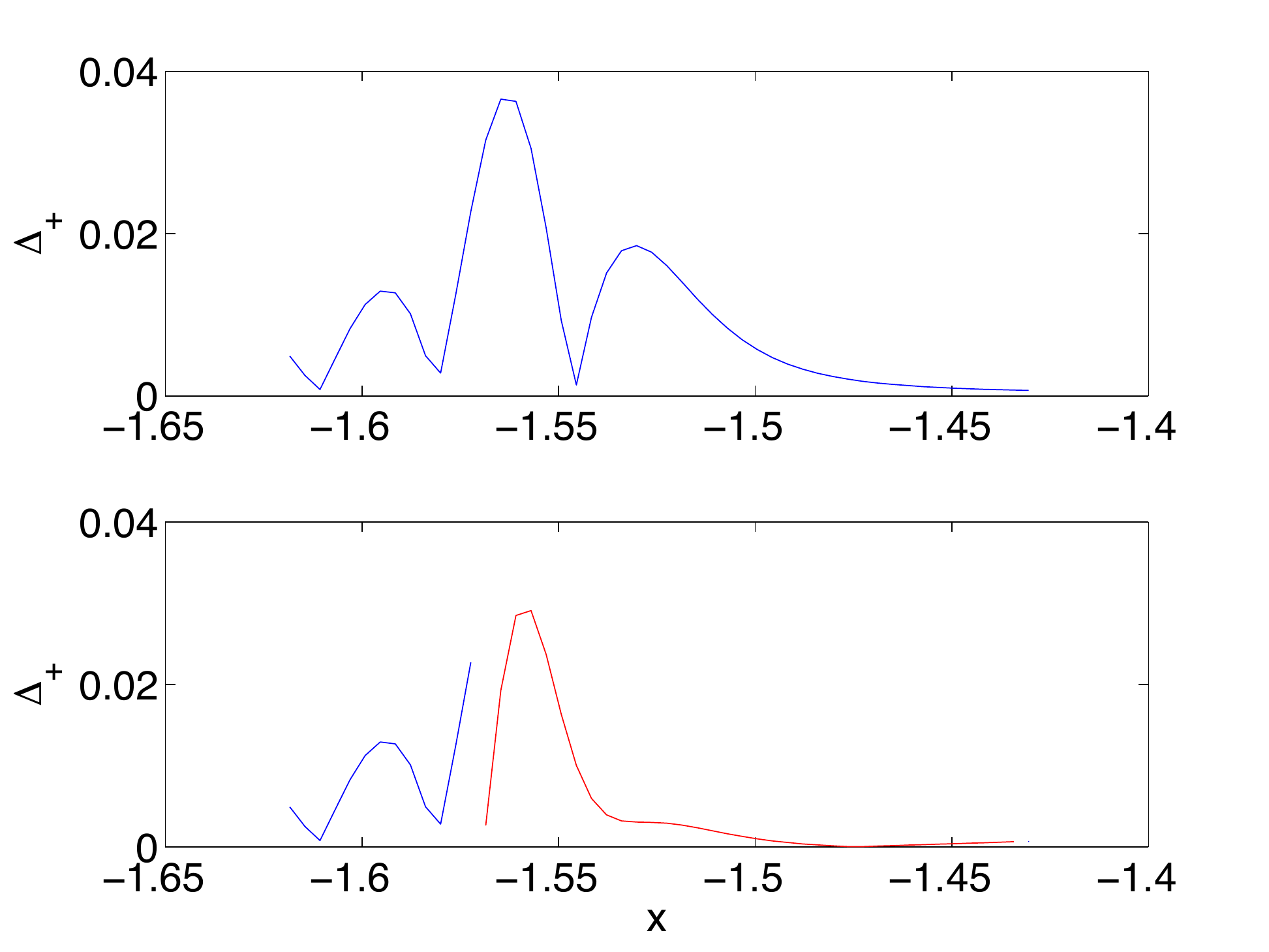}               
 \includegraphics[width=0.5\textwidth]{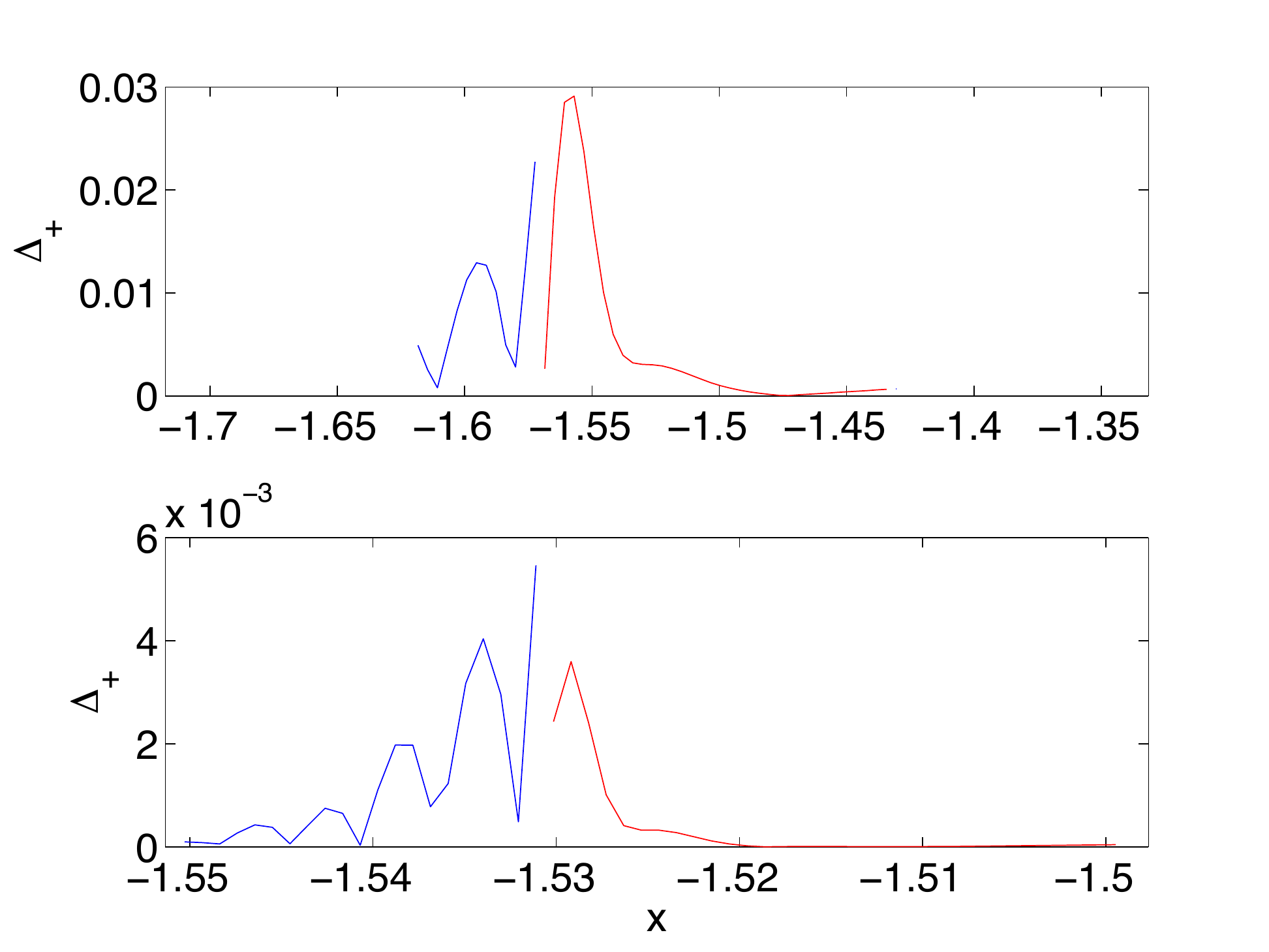}
 \caption{In
the upper part of the left figure  one can see the modulus of the 
difference $\Delta_{+}$ of the Riemann invariant
for the defocusing quintic NLS equation for the initial data 
 $\psi_{0}(x)=\mbox{sech }x$  at the critical time 
 $t_0$ and the semiclassical solution for $\epsilon=0.01$. 
 The lower part shows the same
difference, which is replaced close to the critical point 
by the difference between NLS solution and  the 
P$_{I}^2$ solution (\ref{conj1}) (in red where the error is smaller 
than the one
shown above). The right figure shows the same situation as the lower figure 
on the left  for $\epsilon=0.01$ above and $\epsilon=0.001$ below. The 
$x$-axes are rescaled  by a factor $\epsilon^{6/7}$.}  
\label{nlsquintdsechcdeltamatch2}
\end{figure}

The P$_{I}^2$ solution (\ref{conj1}) holds for small $|x-x_{c}|$ and 
$|t-t_0|$. To illustrate the latter effect, we compare it with the 
NLS solution for the times $t_{\pm}=t_0\pm0.0027$. Note that 
$t-t_0$ appears in the formula (\ref{conj1}) for the P$_{I}^2$ 
solution at several places with different powers of $\epsilon$. Thus 
in contrast to the elliptic case (\ref{F1}), there is no simple 
dependence on $t$ in the hyperbolic case. In 
Fig.~\ref{nlsquintdsechdeltam001} we show the quantities $r_{\pm}$ 
at the time $t_{\pm}$. It can be seen that the P$_{I}^2$ solution 
gives again a clearly better asymptotic description near the break-up 
point than the semiclassical solution. 
\begin{figure}[htb!]
\includegraphics[width=0.5\textwidth]{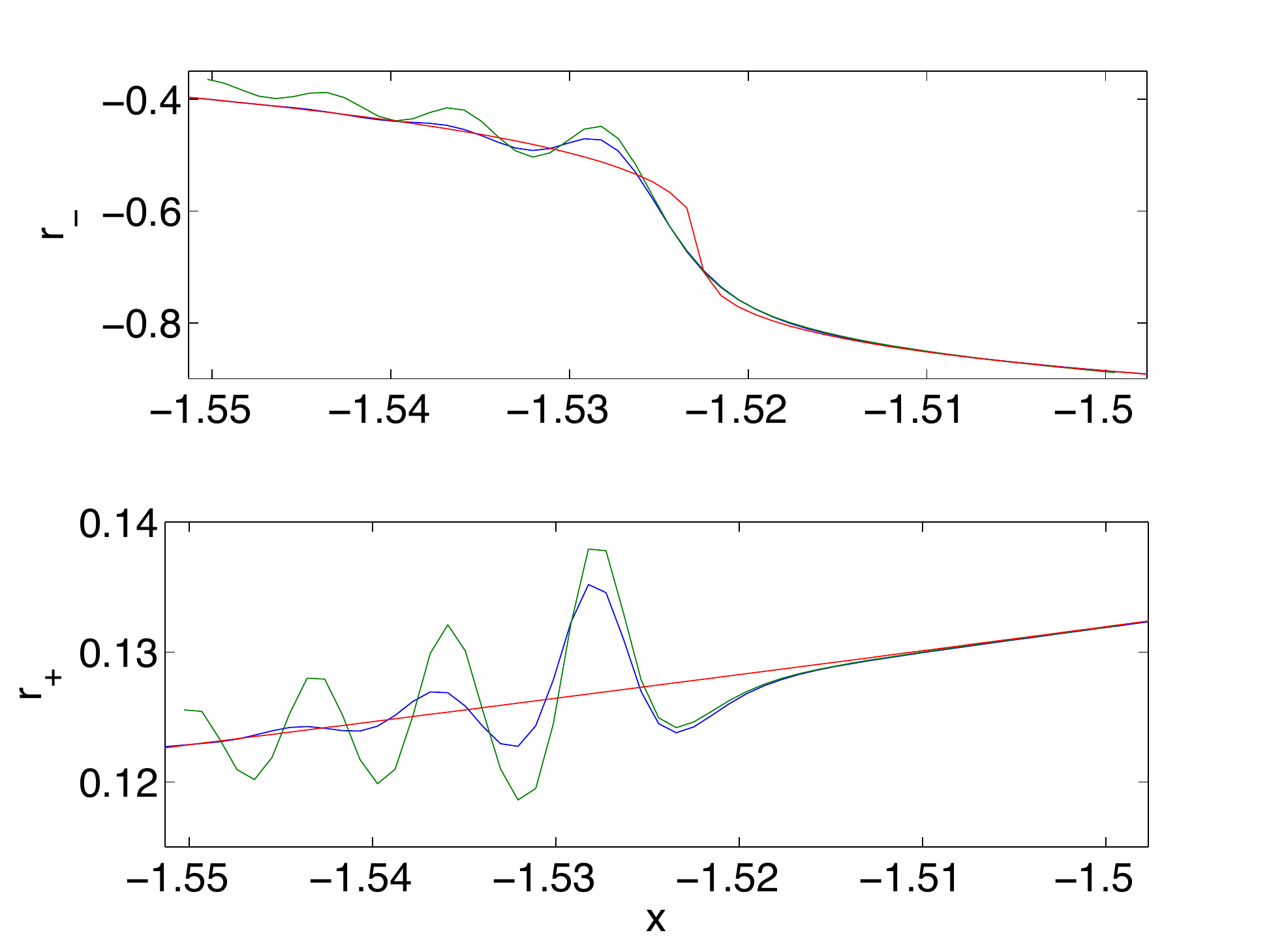}
\includegraphics[width=0.5\textwidth]{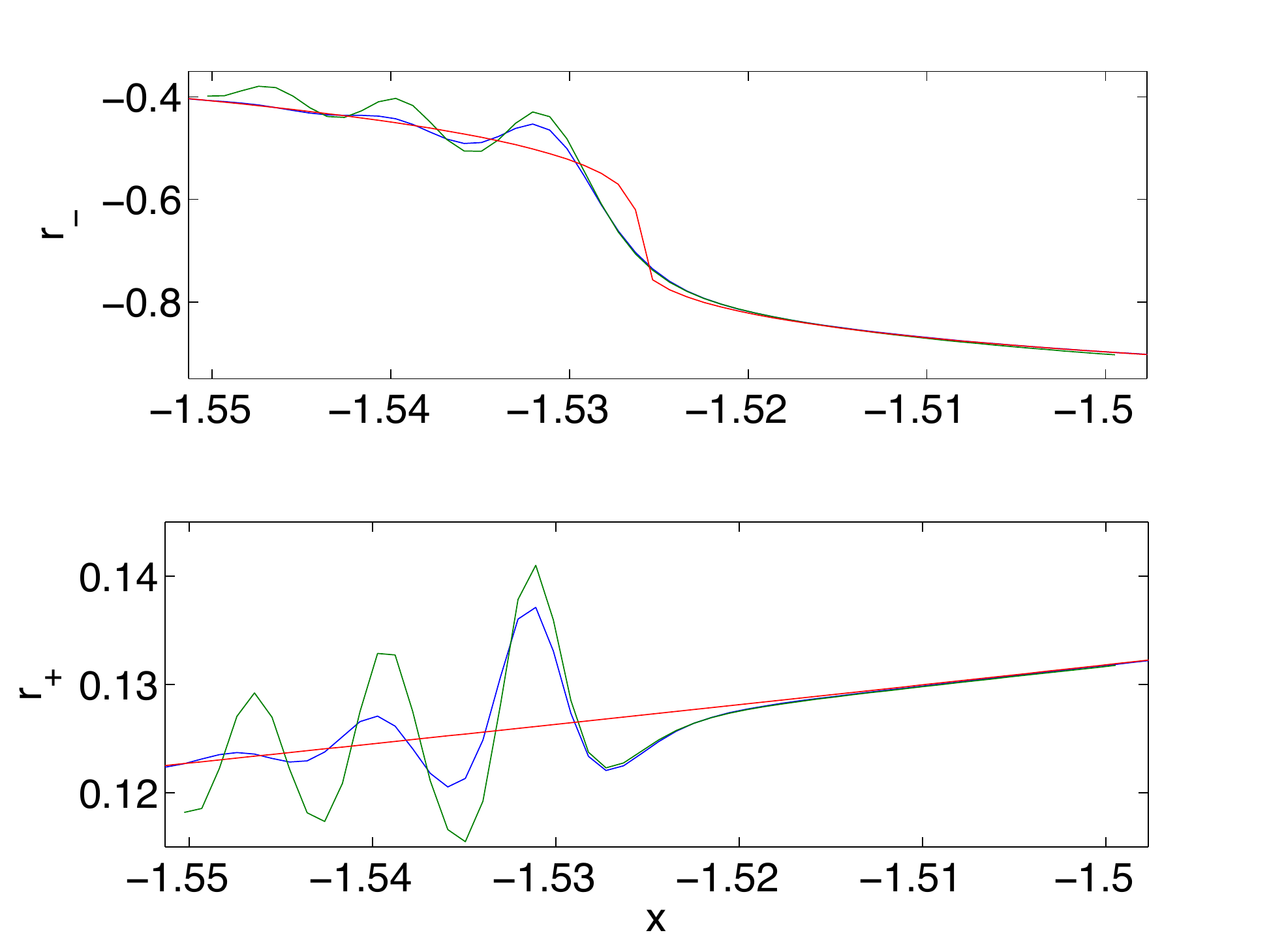}
 \caption{Solution to the defocusing quintic NLS equation for the initial data 
 $\psi_{0}(x)=\mbox{sech }x$ and $\epsilon=0.01$ in blue,  the corresponding semiclassical solution in 
 red and the P$_{I}^2$ solution (\ref{conj1}) in green; above  the 
 function $r_{-}$, below the function $r_{+}$. On the left at the time 
 $t_-=t_0-0.0027$  on the right at the time  $t_+=t_0+0.0027$ }
   \label{nlsquintdsechdeltam001}
\end{figure}


\subsection{`Dark' initial data for the defocusing quintic NLS}
It is well known that the defocusing cubic NLS equation has exact 
solutions called \emph{dark solitons}, i.e., solutions that do not 
tend to zero for $|x|\to\infty$. Such solutions are physically 
problematic since they have infinite energy and are mathematically 
difficult to handle, but they are nonetheless of importance in 
applications. Therefore we will here also study initial data which do 
not decay to zero at spatial infinity. We will consider the example
$\psi_{0}(x)=\tanh^{2}x$ in the following. The time evolution of 
the solution up to the critical time $t_0\sim 1.3448$ can be seen 
in Fig.~\ref{nlsquintddarke01}. The steepening of the two fronts 
of the pulse can be seen as well as the formation of a small 
oscillation on each side.  For times $t\gg t_0$, each of the initial oscillations develops into 
an oscillatory zone which will eventually overlap.
\begin{figure}[htb!]
\includegraphics[width=0.6\textwidth]{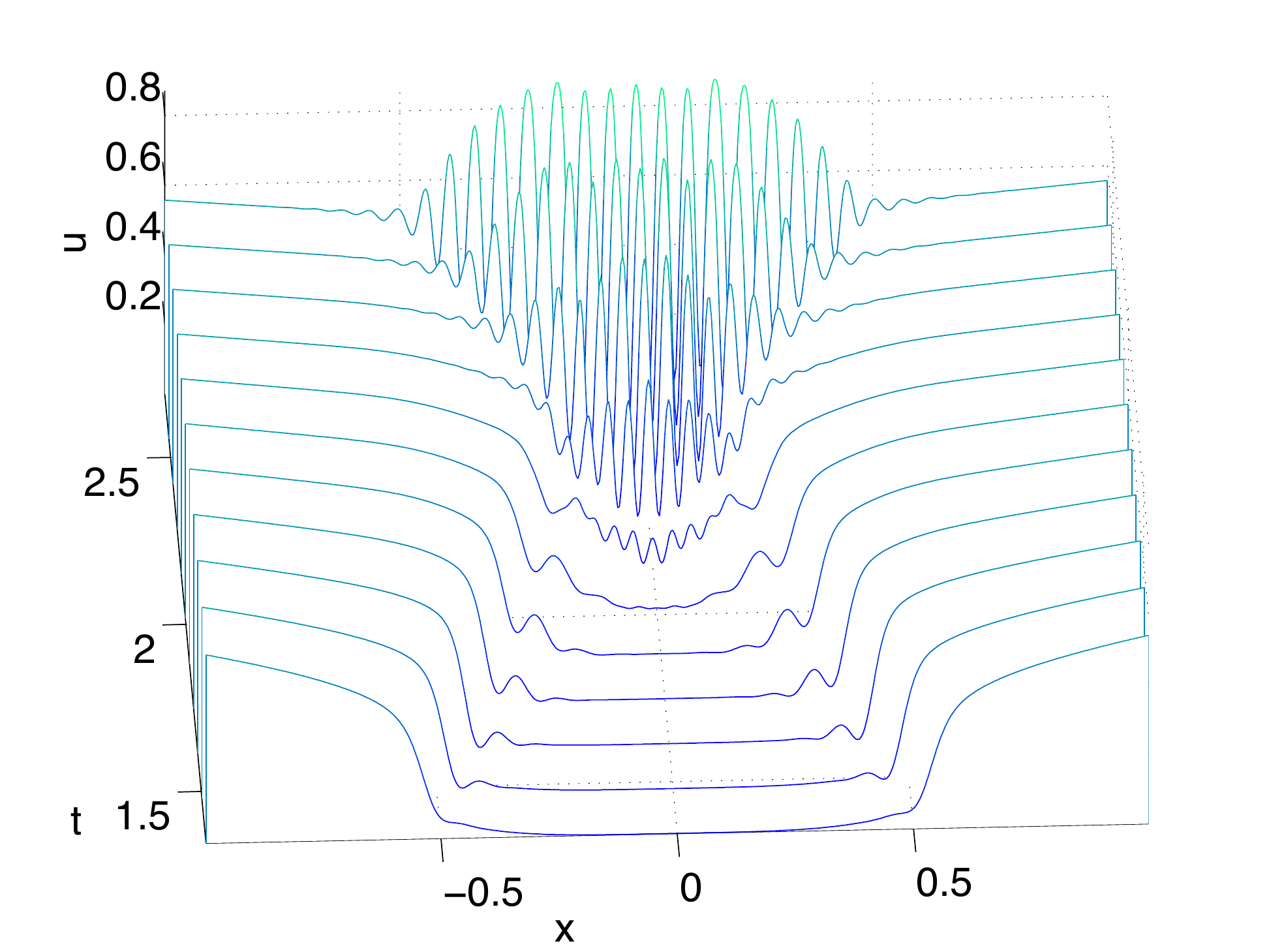}
 \caption{Solution to the defocusing quintic NLS equation for the initial data 
 $\psi_{0}(x)=\tanh^{2}x$ and $\epsilon=0.01$. The critical 
 time is $t_0\sim 1.3448$.}
   \label{nlsquintddarke01}
\end{figure}

Clearly there will be two regions with 
strong gradients symmetric in $x$. We will concentrate on 
positive values of $x$ where the Rieman invariant $r_{-}$ breaks in 
the semiclassical solution.
 In Fig.~\ref{nlsquintddarkce001rm}  the  Riemann invariants for the NLS 
solution, the corresponding semiclassical solution and the P$_{I}^2$ 
asymptotics (\ref{conj1}) can be seen close to $x_{c}\sim0.5476$ for 
$\epsilon=0.001$. 
\begin{figure}[thb!]
  \includegraphics[width=0.5\textwidth]{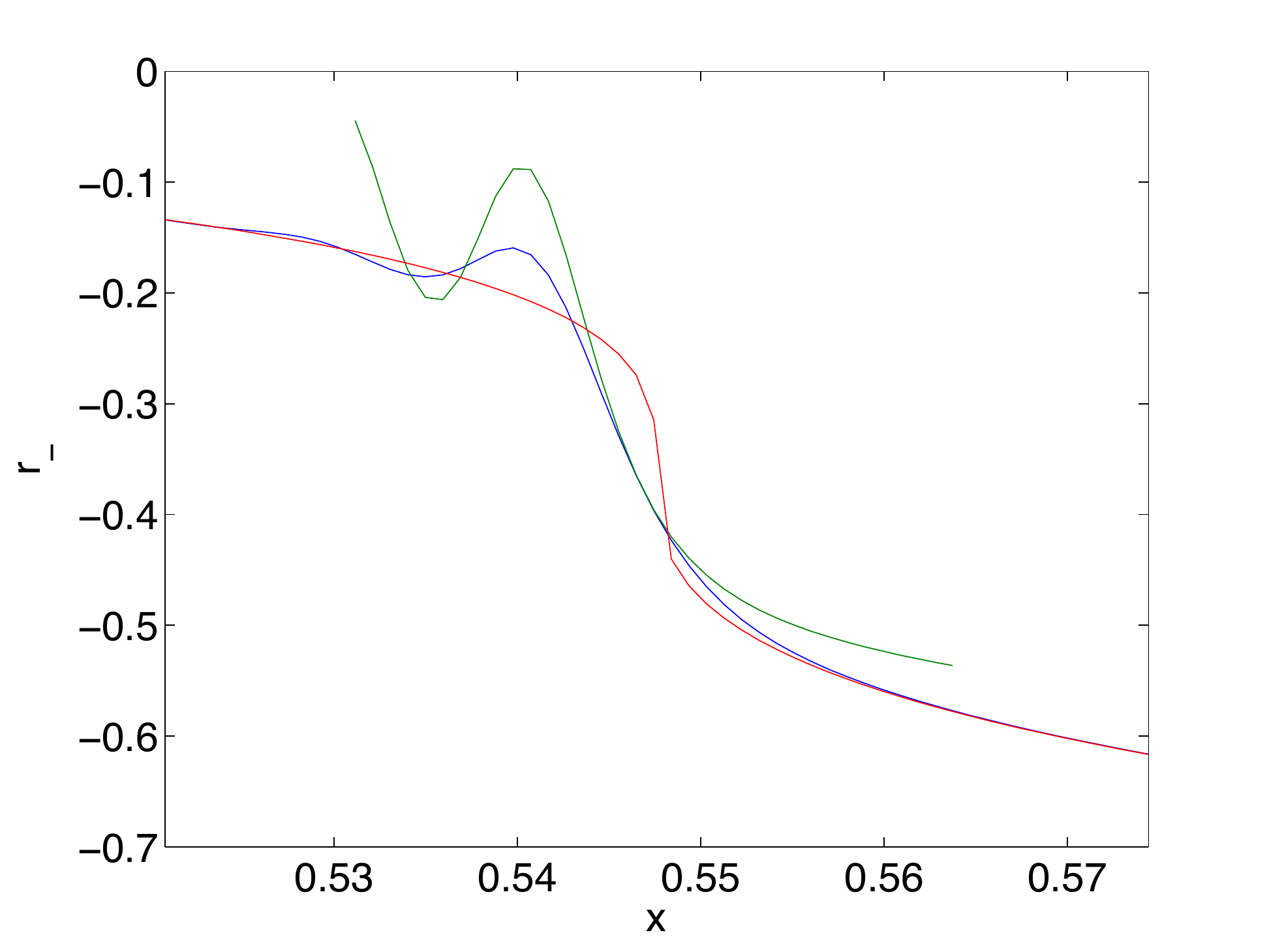}
  \includegraphics[width=0.5\textwidth]{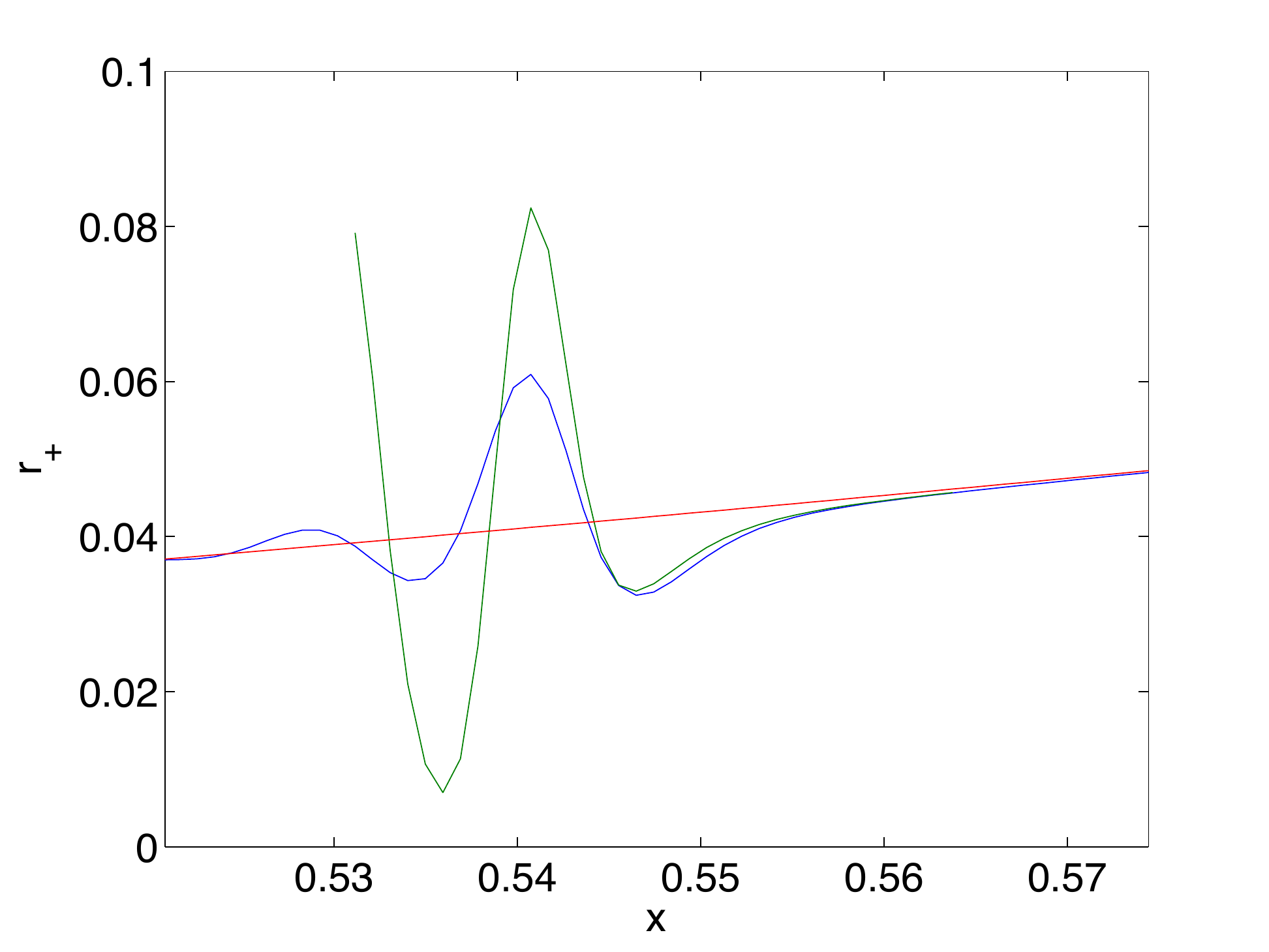}
 \caption{Solution to the defocusing quintic NLS equation for the initial data 
 $\psi_{0}(x)=\tanh^{2}x$ and $\epsilon=0.001$ at the critical time 
 $t_0$ in blue,  the corresponding semiclassical solution in red 
 and the P$_{I}^2$ solution (\ref{conj1}) in green; on the left the 
 Riemann invariant $r_{-}$, on the right the invariant $r_{+}$.}
 \label{nlsquintddarkce001rm}
\end{figure}

\subsection{Defocusing nonlocal NLS}
We will study the small dispersion limit of the 
nonlocal NLS (\ref{full_NNLS_complex}) close to the break-up of the 
corresponding semiclassical solutions. We will concentrate on values 
of $\eta$ such that $\eta\epsilon^{2}\ll1$ for all studied values of 
$\epsilon$. For both cases we will consider the initial data 
$\psi_{0}=\mbox{sech }x$. 
In the defocusing variant of the nonlocal NLS equation 
(\ref{full_NNLS_complex}), the nonlocality has the effect to reduce the defocusing effect of the equation. The dispersion and the steepening of the 
gradient close to the break-up of the corresponding semiclassical 
solution is reduced as can be seen in Fig.~\ref{nlsdnonloc2eta}. 
This also suppresses the formation of dispersive shocks, i.e., the 
oscillations close to the gradient catastrophe of the semiclassical 
solution (see \cite{trillo}). Due to the possible sign change of the quantity $\rho$ in 
(\ref{rho}), an other effect can be observed in 
Fig.~\ref{nlsdnonloc2eta}: for large enough $\eta$, the 
oscillations appear on the other side of the critical point. We again 
consider the initial data $\psi_{0}=\mbox{sech }x$ at the critical 
time $t_0\sim 1.5244$ near the break-up of the Riemann invariant 
$r_{-}$ at $x_{c}\sim-2.2094$ in the semiclassical limit. 
\begin{figure}[htb!]
\includegraphics[width=0.6\textwidth]{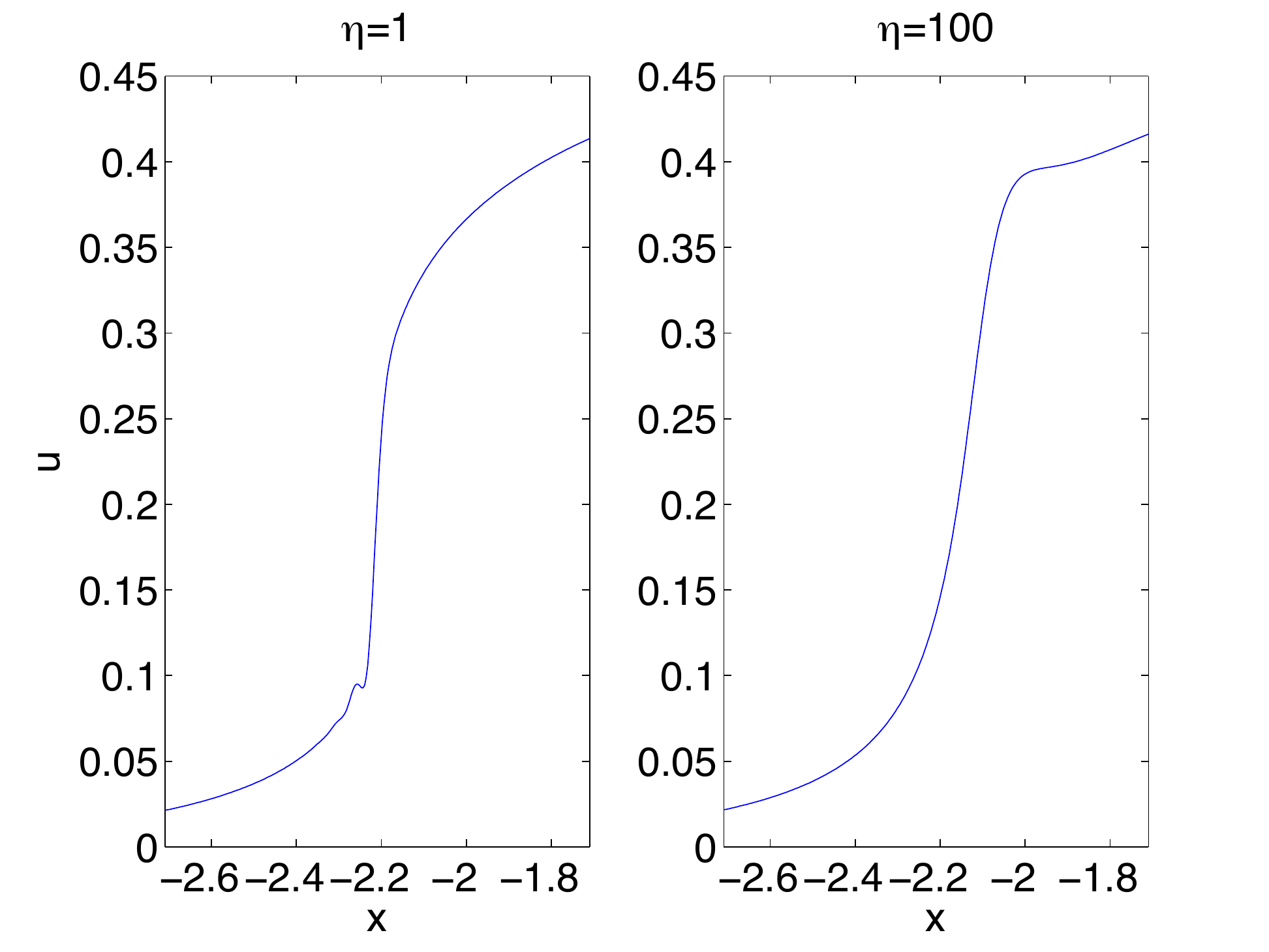}
 \caption{Solution to the defocusing nonlocal NLS equation 
 (\ref{full_NNLS_complex}) for the initial data 
 $\psi_{0}(x)=\mbox{sech }x$ and $\epsilon=0.01$ at the time 
 $t_0=\sim 1.5244$ for two values of $\eta$.}
   \label{nlsdnonloc2eta}
\end{figure}

For larger times this implies for $\rho<0$ that there is just one 
oscillation to the right of $-x_{c}$ as described asymptotically by 
the P$_{I}^2$ solution, and many small oscillations on the other side of 
the critical point as can be seen in Fig.~\ref{nlsnonlocd2tc}. The 
situation is similar to the one of certain Kawahara solutions in the 
small dipsersion limit as discussed in \cite{DGK11}.
\begin{figure}[thb!]
  \includegraphics[width=0.5\textwidth]{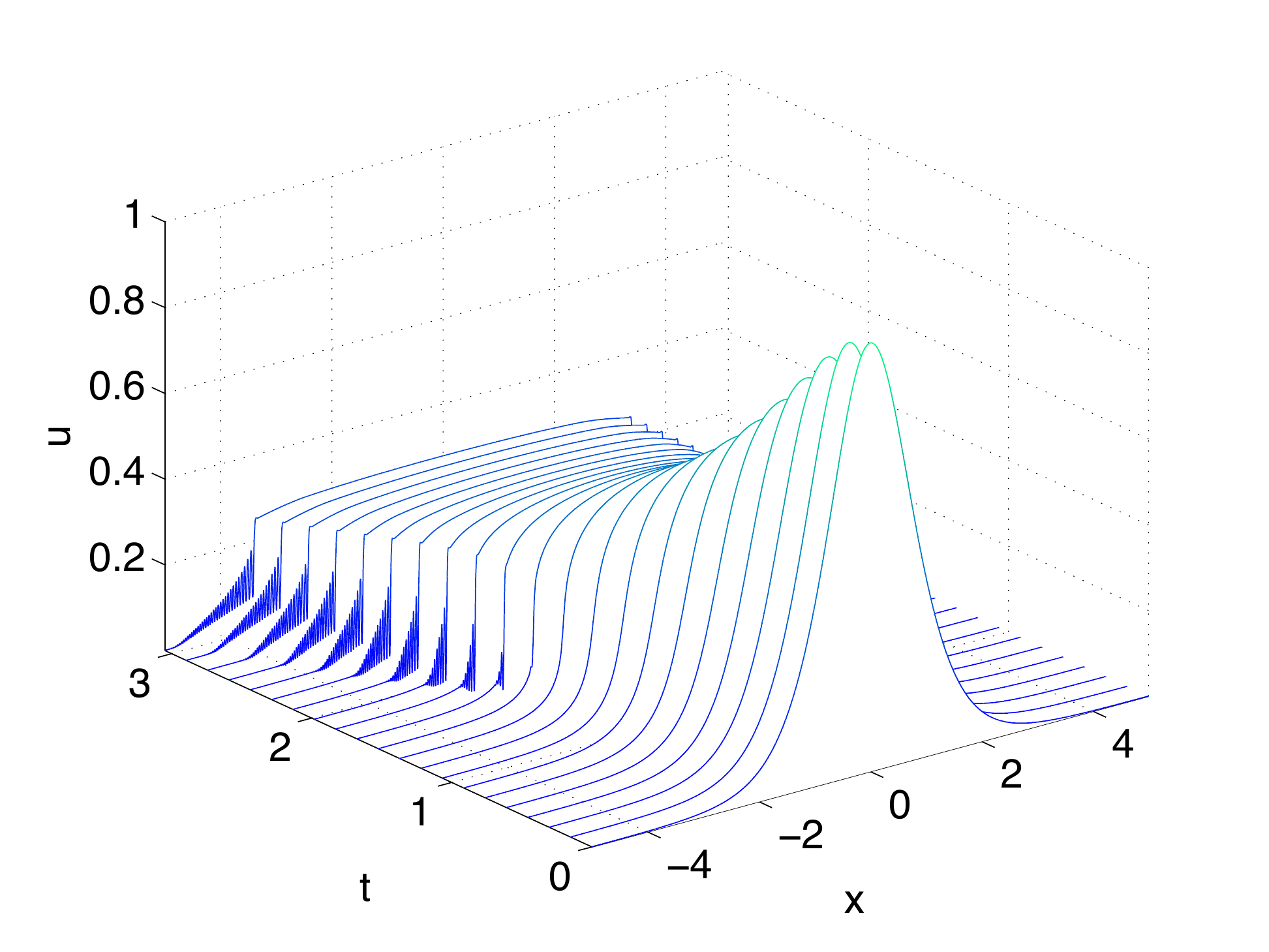}
  \includegraphics[width=0.5\textwidth]{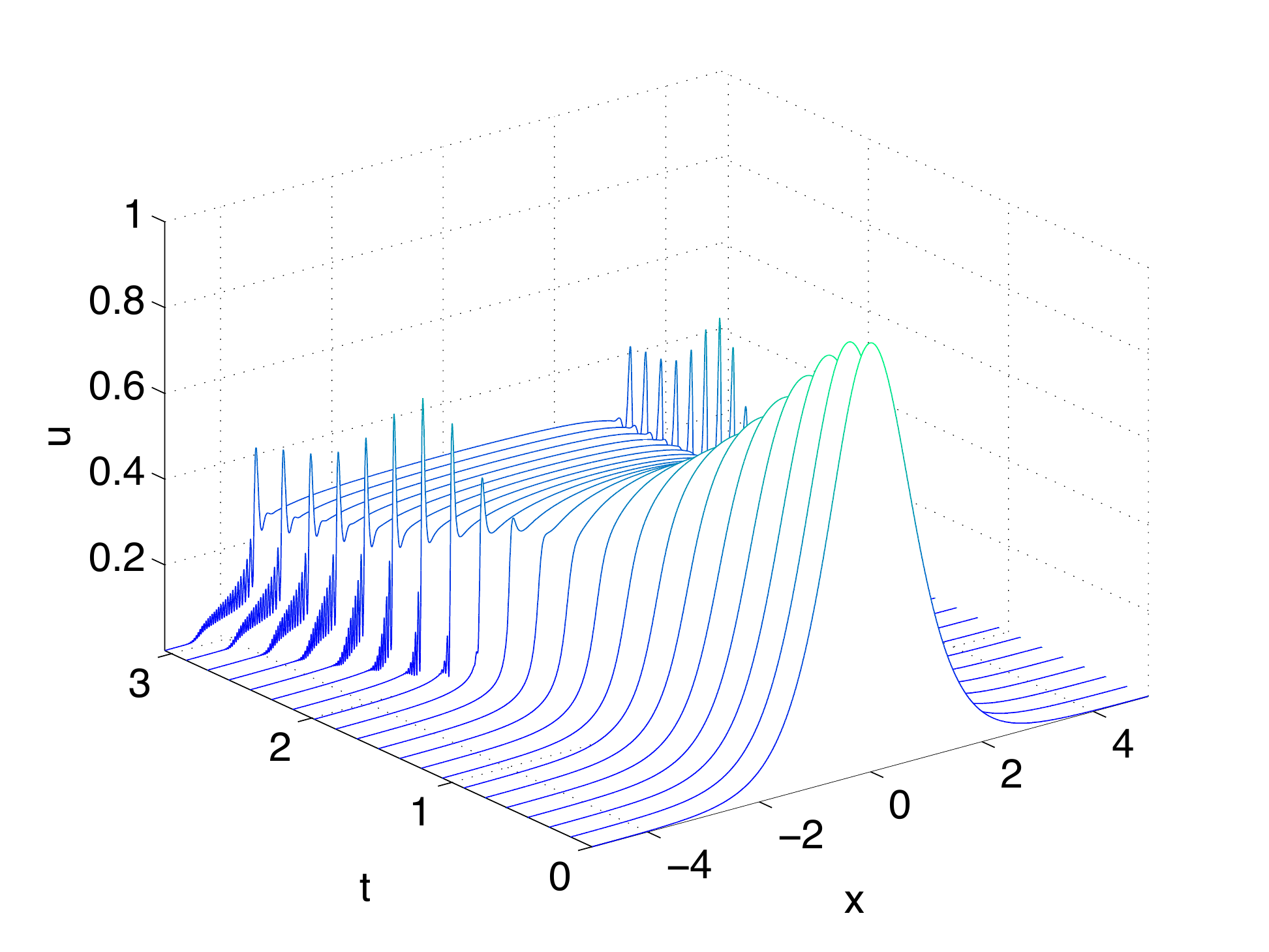}
 \caption{Solution to the defocusing nonlocal NLS equation 
 (\ref{full_NNLS_complex}) for the initial data 
 $\psi_{0}(x)=\mbox{sech }x$ and $\epsilon=0.01$; for $\eta=1$ on the 
 left, for $\eta=100$ on the right. The critical time is
 $t_0\sim 1.5244$.}
 \label{nlsnonlocd2tc}
\end{figure}

In the case $\rho=0$ in (\ref{rho}) the P$_{I}^2$ asymptotics cannot be 
used. In the present example this is the case for $\eta\sim 1.3060$. 
The solution at the critical time for this value of $\eta$ can be 
seen in Fig.~\ref{nlsdnonlocrho0c}.
\begin{figure}[thb!]
  \includegraphics[width=0.7\textwidth]{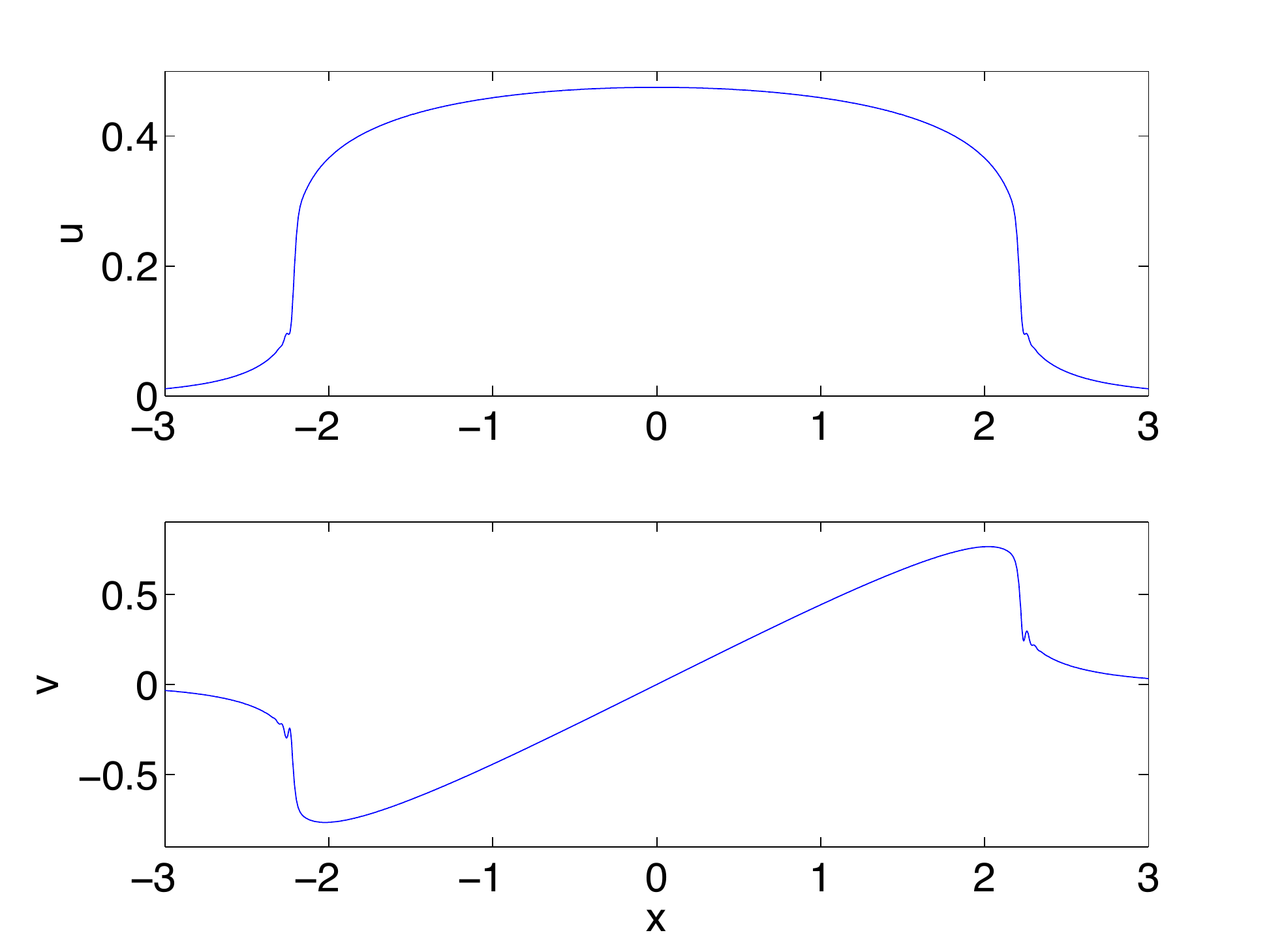}
 \caption{Solution to the defocusing nonlocal NLS equation 
 (\ref{full_NNLS_complex}) for the initial data 
 $\psi_{0}(x)=\mbox{sech }x$, $\epsilon=0.1$ and the non-generic 
 value $\eta\sim 1.3060$ at the critical time
 $t_0\sim 1.5244$.}
 \label{nlsdnonlocrho0c}
\end{figure}

For smaller $\eta$, the nonlocal NLS behaves qualitatively like the 
defocusing cubic NLS close to the critical time as can be seen in 
Fig.~\ref{nlsdnonloceta12erm} for the Riemann invariant breaking in 
the semiclassical limit. For smaller values of $\epsilon$ the same behavior can be seen, but 
on smaller scales. Again there are two different scales in the P$_{I}^2$ 
asymptotics (\ref{conj1}) which means there is no clear scaling in the 
coordinates $x$ 
and $t$. For the representation, we nonetheless rescale $x$ by a 
factor of $\epsilon^{6/7}$ to be able to compare the case 
$\epsilon=0.001$ with $\epsilon=0.01$.  The $y$-axes are rescaled to optimally use 
the space of the figure. The approximation visibly gets better with 
smaller $\epsilon$. 
The Riemann invariant staying smooth in the semiclassical limit can 
be seen for the same situation in the right part of  Fig.~\ref{nlsdnonloceta12erm}. The 
asymptotic description again improves clearly with smaller $\epsilon$.
\begin{figure}[htb!]
\includegraphics[width=0.52\textwidth]{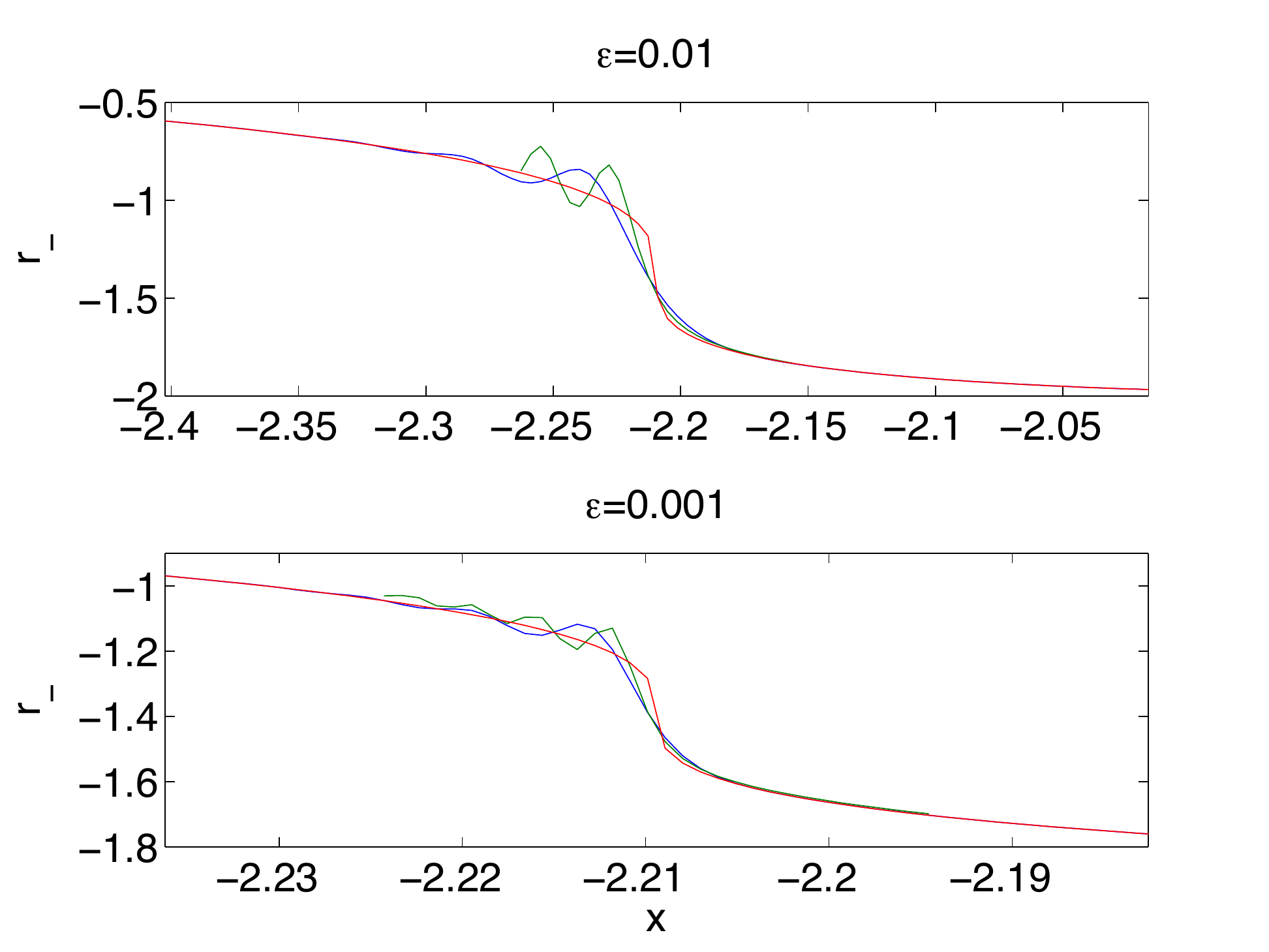}
\includegraphics[width=0.53\textwidth]{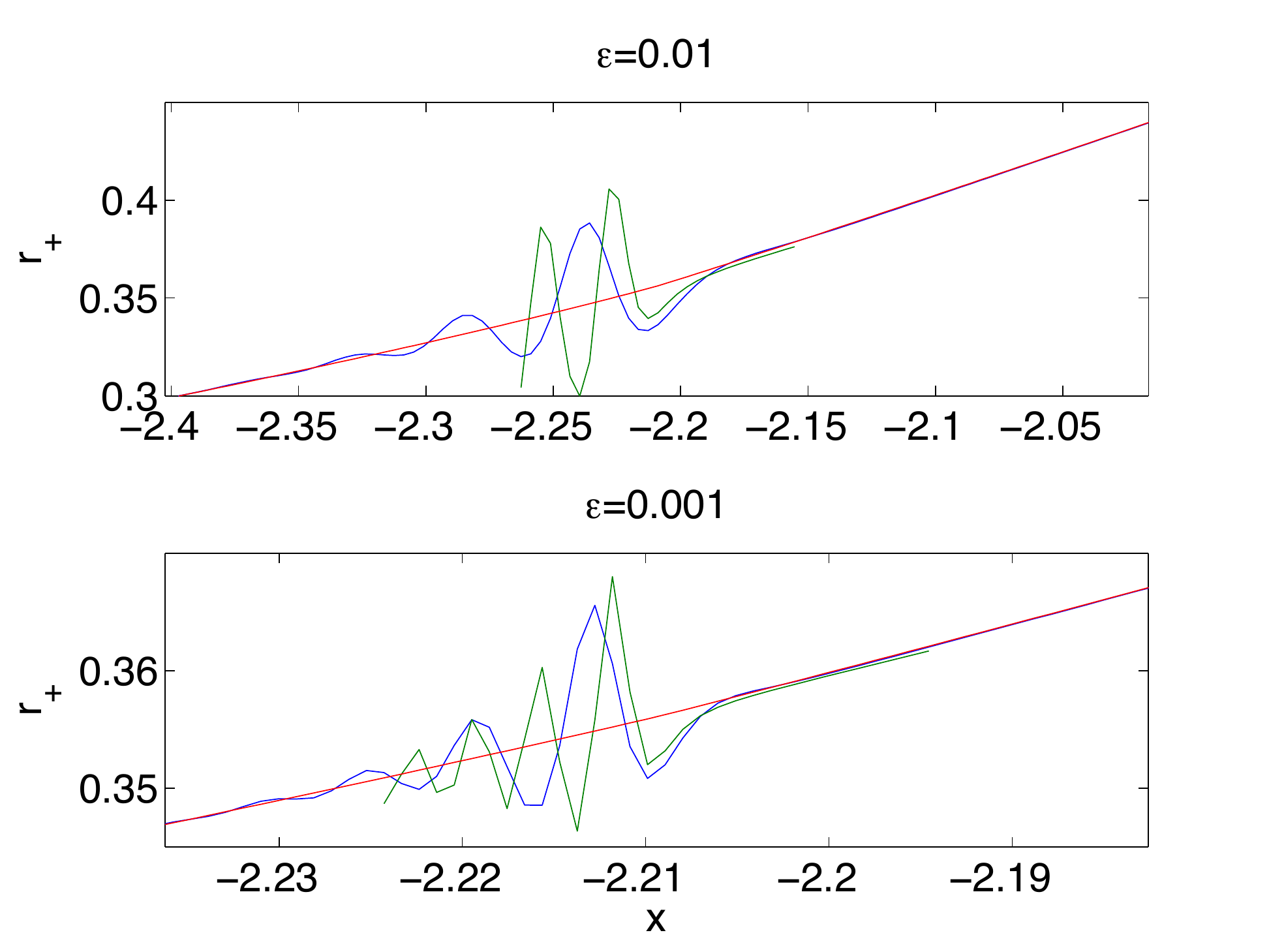}
 \caption{Riemann invariant $r_{-}$ on the left and $r_+$ on the 
 right of the solution to the defocusing nonlocal NLS equation  (\ref{full_NNLS_complex}) for the initial data 
 $\psi_{0}(x)=\mbox{sech }x$ and $\eta=1$ at the time 
 $t_0\sim 1.5244$ for two values of $\epsilon$ in blue,  the corresponding semiclassical solution in 
 red and the P12 solution  (\ref{conj1}) in green.}
 \label{nlsdnonloceta12erm}
\end{figure}


For larger $\eta$ the smoothing out of the gradients near the shock of the 
semiclassical equations implies that the semiclassical solution only 
provides a valid asymptotic description for larger $|x-x_{c}|$ than 
is the case for smaller $\eta$. The P$_{I}^2$ asymptotics (\ref{conj1}) catches this 
behavior as can be seen for $\eta=100$ in Fig.~\ref{nlsdnonloceta1002erm} on the left
 for the 
invariant breaking in the semiclassical limit. There are 
essentially no oscillations in this case. 
\begin{figure}[htb!]
\includegraphics[width=0.5\textwidth]{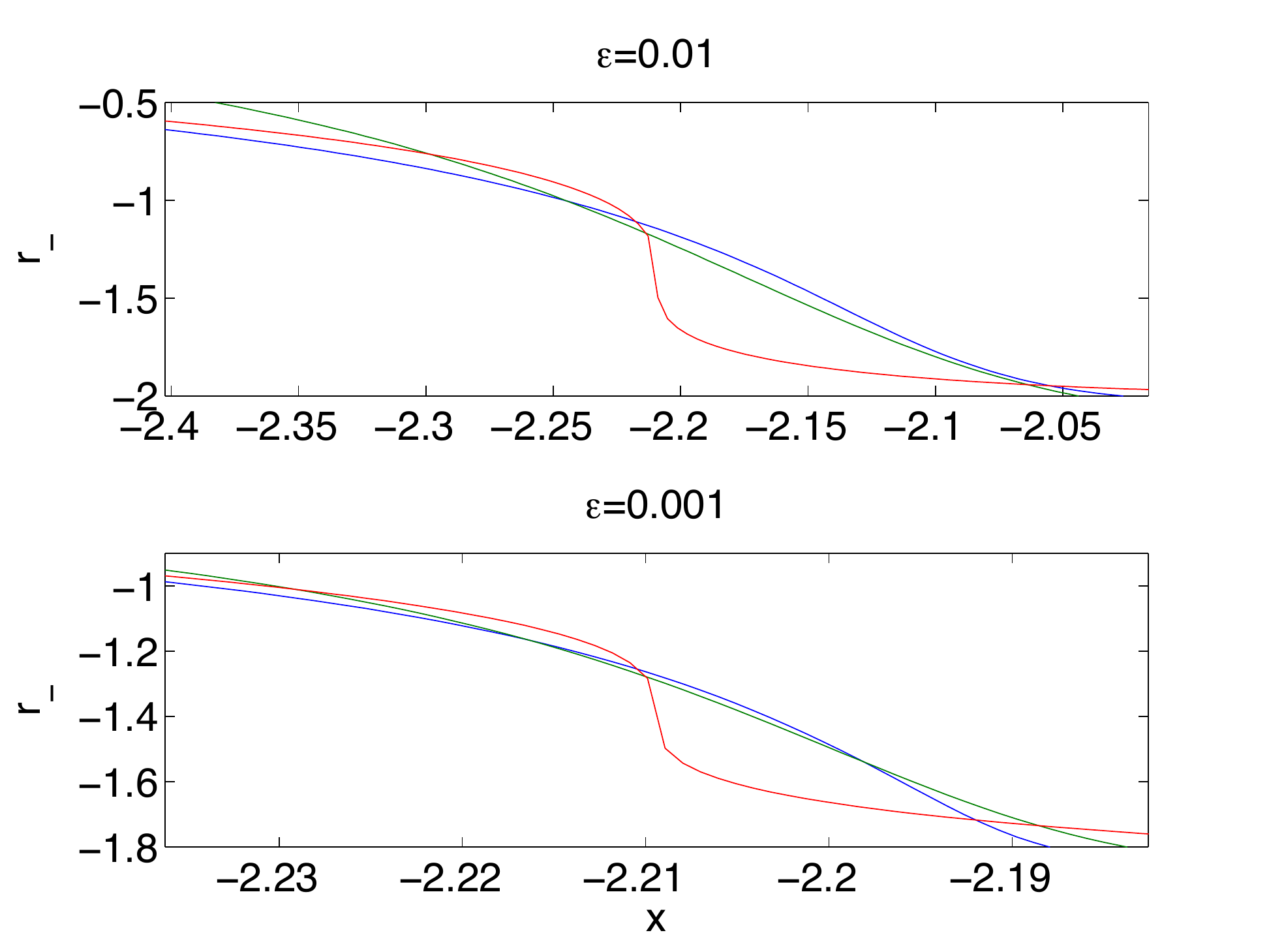}
\includegraphics[width=0.5\textwidth]{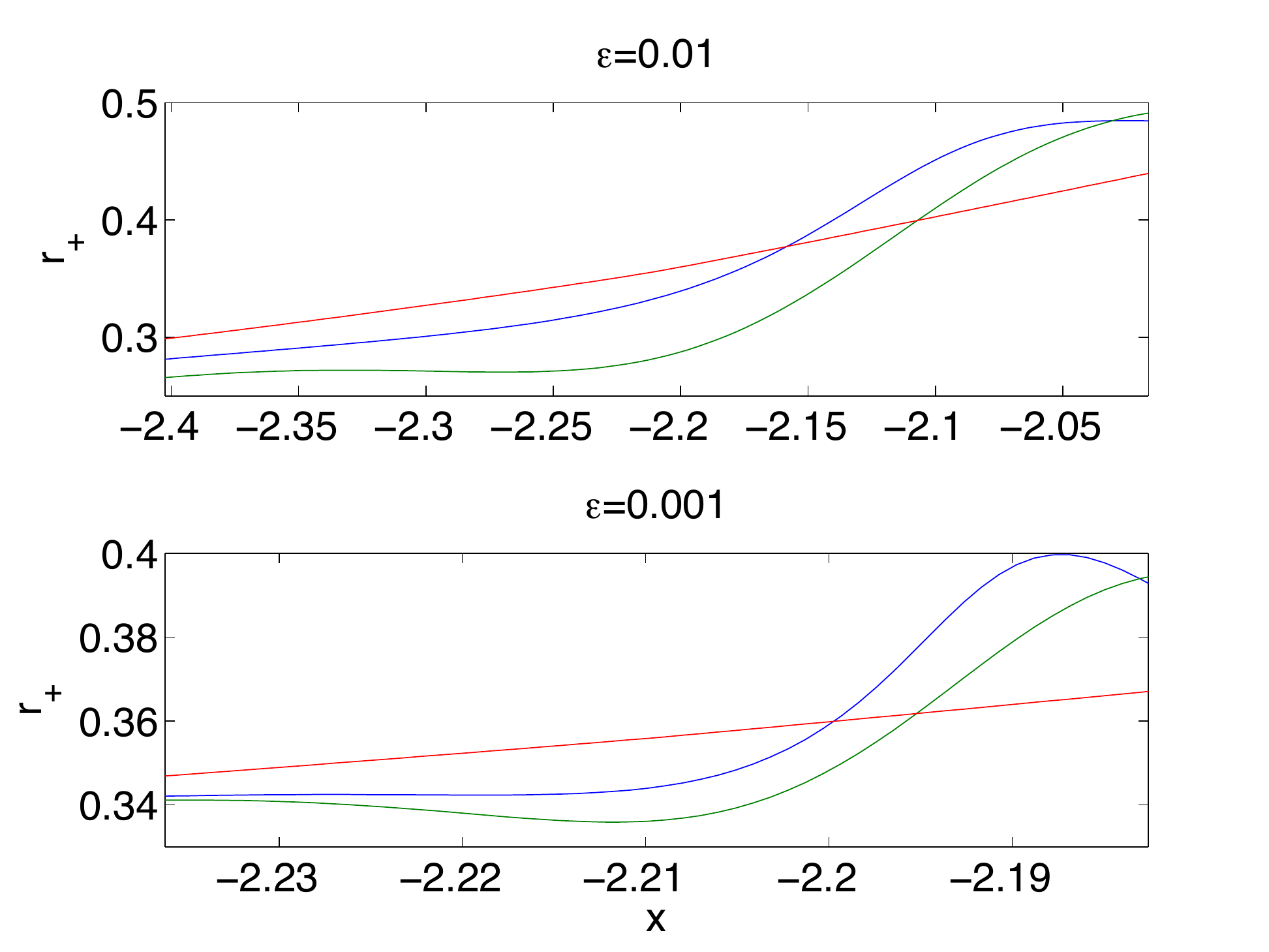}
 \caption{Riemann invariant $r_{-}$ on the left and Riemann invariant $r_+$ 
on the right for  the solution to the defocusing nonlocal NLS equation (\ref{full_NNLS_complex}) for the initial data 
 $\psi_{0}(x)=\mbox{sech }x$ and $\eta=100$ at the time 
 $t_0\sim 1.5244$ for two values of $\epsilon$ in blue, 
 the corresponding semiclassical solution in 
 red and the P$_{I}^2$ solution (\ref{conj1}) in green.}
 \label{nlsdnonloceta1002erm}
\end{figure}

The invariant $r_{+}$ can be seen  on the right part of  Fig.~\ref{nlsdnonloceta1002erm}.
There 
is essentially only one oscillation to the right of the critical 
point in this case. The P$_{I}^2$ asymptotics has an oscillation close to the 
oscillation of the nonlocal NLS and thus catches this behavior in an 
asymptotic sense.  

\section{Numerical study of focusing generalized and nonlocal  NLS equations}\label{section8}
In this section we will study numerically solutions to the focusing 
NLS before and close to the break up of the corresponding 
semiclassical solutions. Since the case of the focusing cubic NLS was 
studied in detail in \cite{DGK}, we concentrate here on the not 
integrable quintic NLS. We compare solutions to NLS and 
semiclassical equations and for $t\sim t_0$ to an asymptotic 
solution in terms of the tritronqu\'ee solution of the Painlev\'e-I 
equation. The same is done for a nonlocal variant of the cubic NLS 
equation. 

\subsection{sech$\, x$ initial data for the focusing quintic NLS}

We will first study the initial data $\psi_{0}(x)=\mbox{sech }x$ for 
several values of $\epsilon$, i.e., $\epsilon=0.1$, $0.09$,\ldots,$0.01$. For this example, 
the break-up occurs for the semiclassical solution at 
$t_0=0.4119\ldots$ at $x_{c}=0$ with the critical values 
$u_{c}=1.5858\ldots$ and $v_{c}=0$. The solution up to the critical 
time can be seen in Fig.~\ref{nlsquintf}. The focusing effect can be 
clearly recognized.
\begin{figure}[thb!]
  \includegraphics[width=0.7\textwidth]{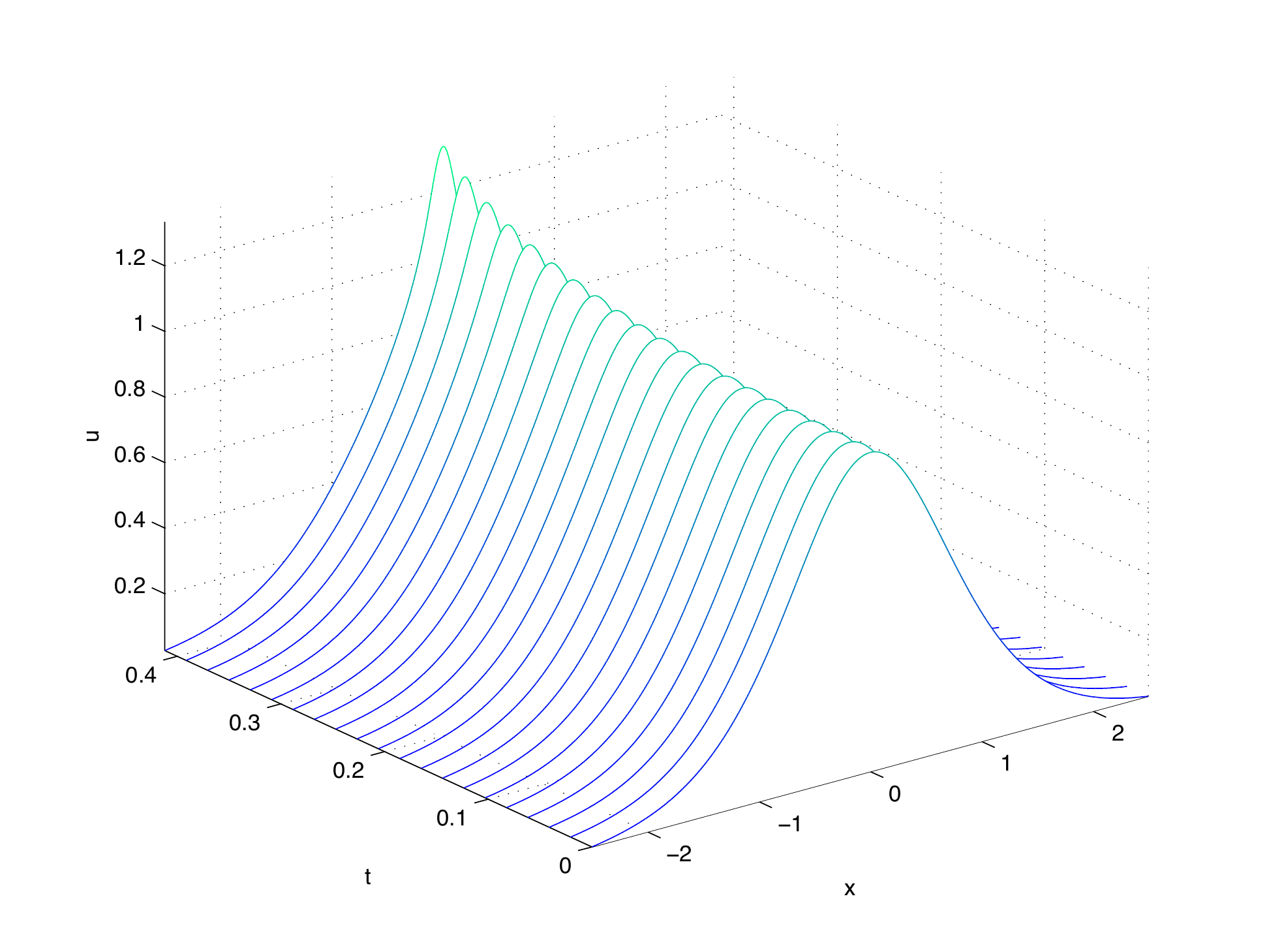}
 \caption{Solution to the focusing quintic NLS equation for the initial data 
 $\psi_{0}(x)=\mbox{sech }x$ and $\epsilon=0.1$ up to the critical time 
 $t_0$ in blue.}
 \label{nlsquintf}
\end{figure}

For times much smaller than the critical time 
one finds that the difference between 
semiclassical and NLS solution scales as $\epsilon^{2}$. For instance 
for $t=t_0/2\ll t_0$ we obtain for $\Delta=|u_{NLS}-u_{sc}|$ via a linear
regression analysis for the logarithm of  $\Delta$
a scaling
of the form $\Delta\propto \epsilon^{a}$ with $a=1.985$ with
standard deviation $\sigma_{a}=0.0018$ and correlation coefficient 
$r=0.999998$.\\
At the critical time the difference between the 
semiclassical solution and the solution to the focusing quintic NLS 
scales roughly as $\epsilon^{2/5}$. More precisely we find via a linear
regression analysis for the logarithm of the difference $\Delta$
between NLS and semiclassical solution  a scaling
of the form $\Delta\propto \epsilon^{a}$ with $a=0.403$ with
standard deviation $\sigma_{a}=0.001$ and correlation coefficient 
$r=0.99998$. As can be seen in Fig.~\ref{nlsquintfsechc3in101}, the 
semiclassical solution has a cusp. Thus the maximal difference 
between semiclassical and NLS solution is always observed for the 
critical point. 
\begin{figure}[thb!]
\subfigure
{  \includegraphics[width=0.5\textwidth]{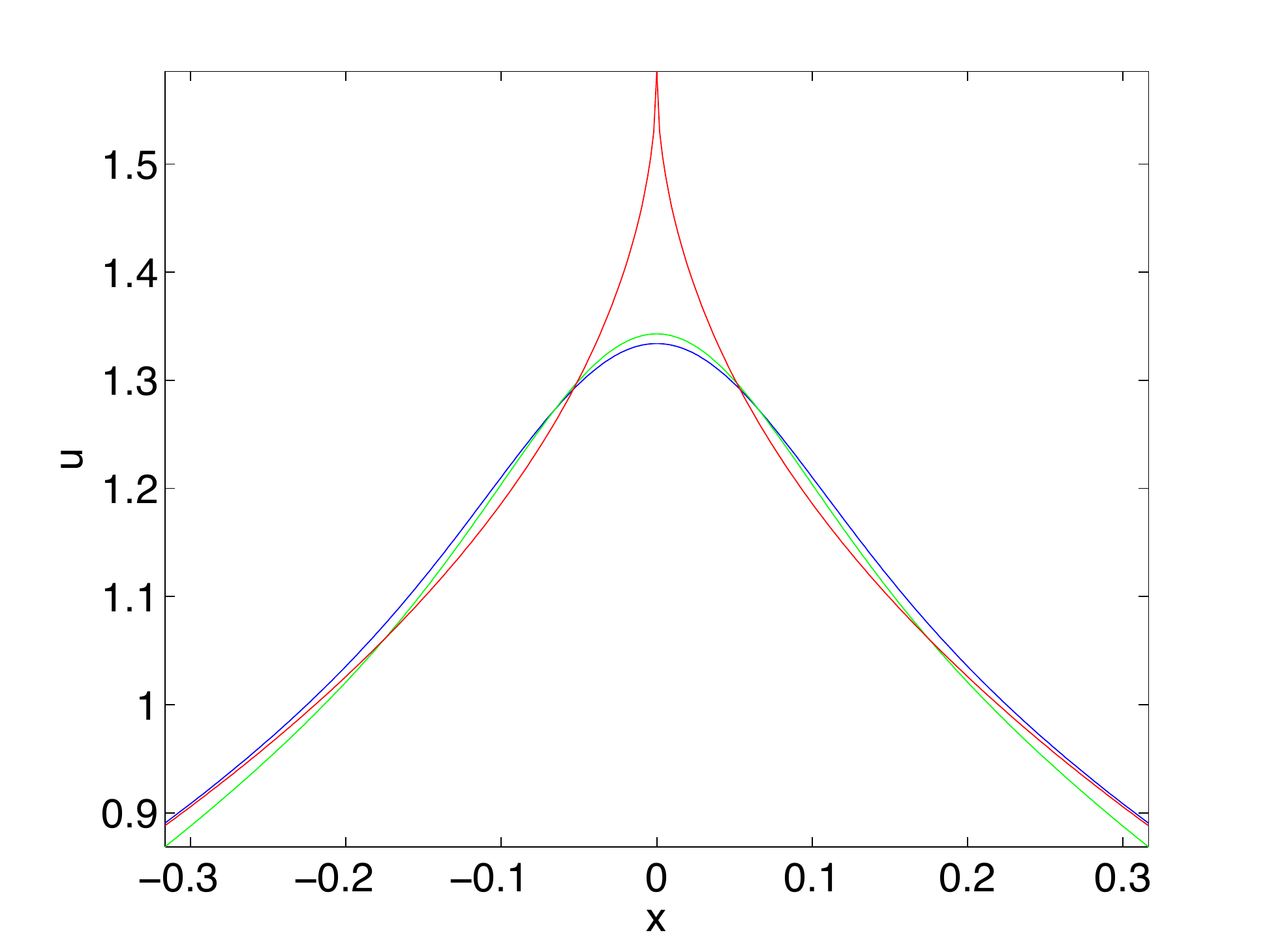}
  \includegraphics[width=0.5\textwidth]{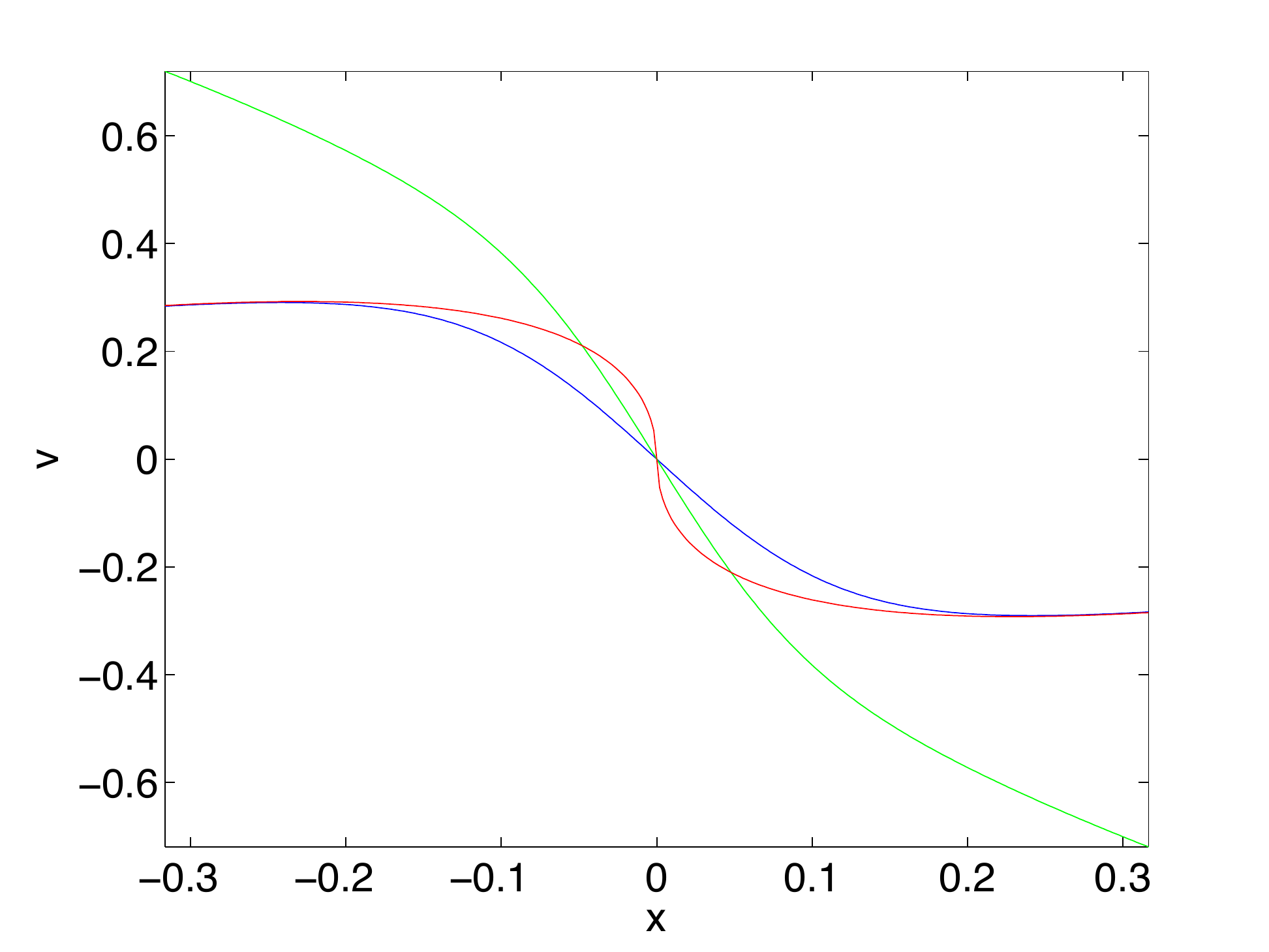}}
\vskip 0.1cm
\subfigure
{\includegraphics[width=0.5\textwidth]{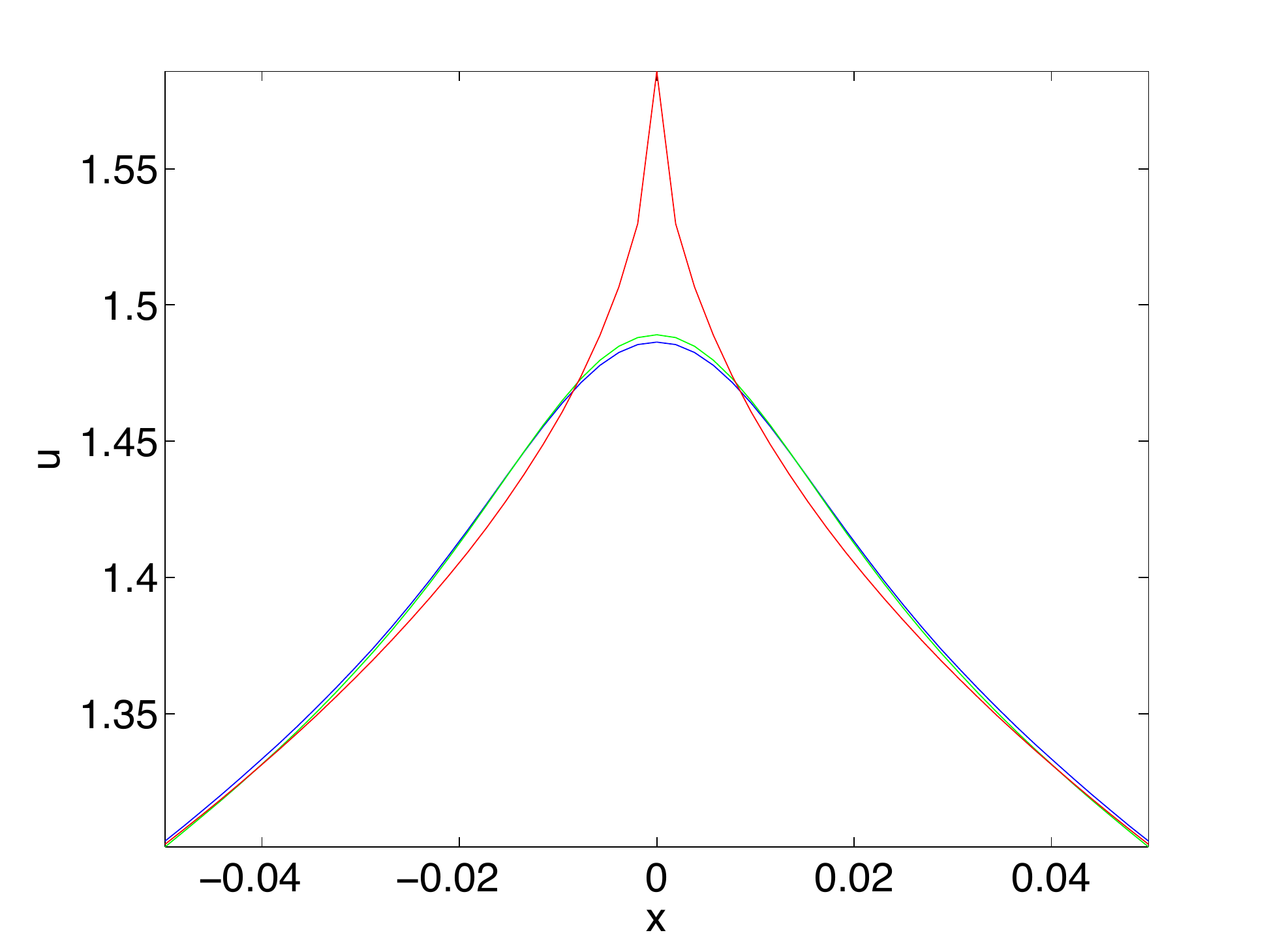}
  \includegraphics[width=0.5\textwidth]{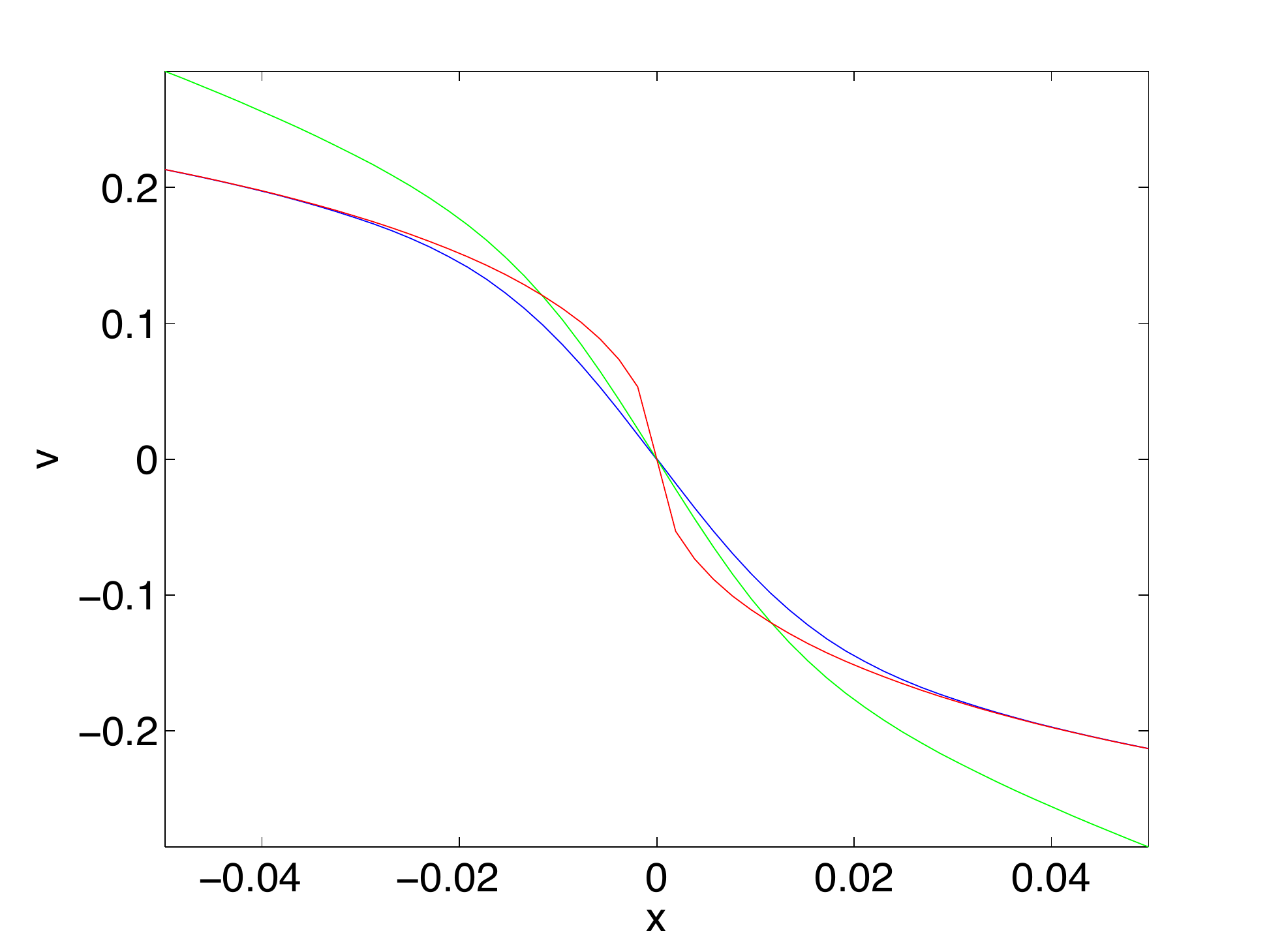}}
 \caption{Solution to the focusing quintic NLS equation for the initial data 
 $\psi_{0}(x)=\mbox{sech }x$  at the critical time 
 $t_0$ in blue,  the corresponding semiclassical solution in red 
 and the P$_{I}$ solution (\ref{F1}) in green; on the left the 
 function $u$, on the right the function $v$. For the upper two 
 figures we have $\epsilon=0.1$, for the lower ones
 $\epsilon=0.01$.}
 The $x$-axis of the figures in the lower row  is 
 rescaled by factor $\epsilon^{4/5}$ with respect to the figures in 
 the upper row.
 \label{nlsquintfsechc3in101}
\end{figure}

For smaller $\epsilon$ the agreement of NLS and semiclassical 
solution becomes better, but the biggest difference is always at the 
critical point as can be seen in the bottom of Fig.~\ref{nlsquintfsechc3in101}.

The P$_{I}$ solution (\ref{F1})  gives a much better agreement 
with the NLS solution close to the critical point as can be seen in 
Fig.~\ref{nlsquintfsechc3in101}.
 The  agreement is in fact so good that the difference of the solutions has 
to be studied. The P$_{I}$ solution only 
gives locally an asymptotic description, at larger distances from 
the critical point the semiclassical solution provides a better 
description as can be also seen from Fig.~\ref{nlsquintfsechdelta}.
\begin{figure}[htb!]
\includegraphics[width=0.5\textwidth]{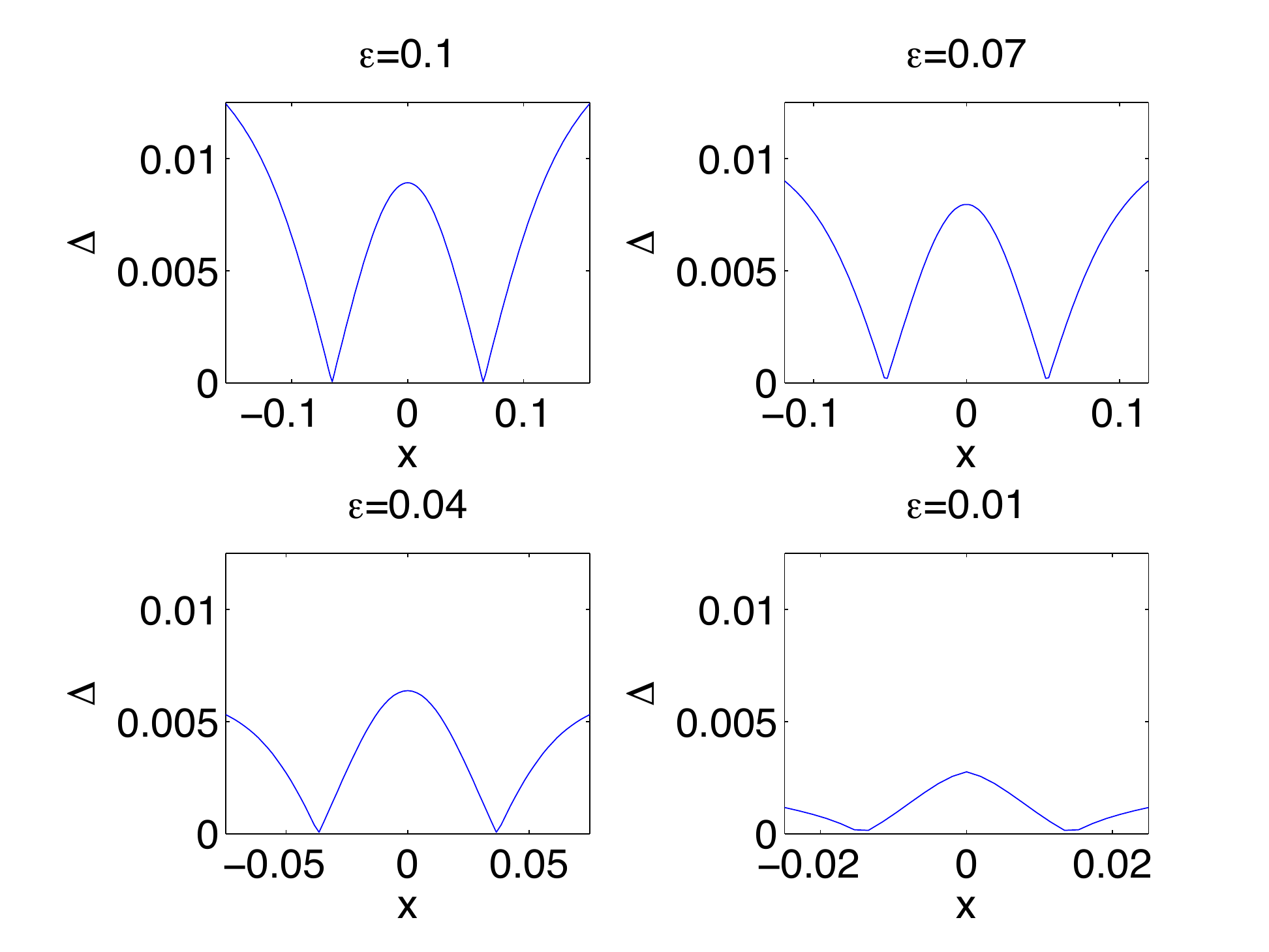}
\includegraphics[width=0.5\textwidth]{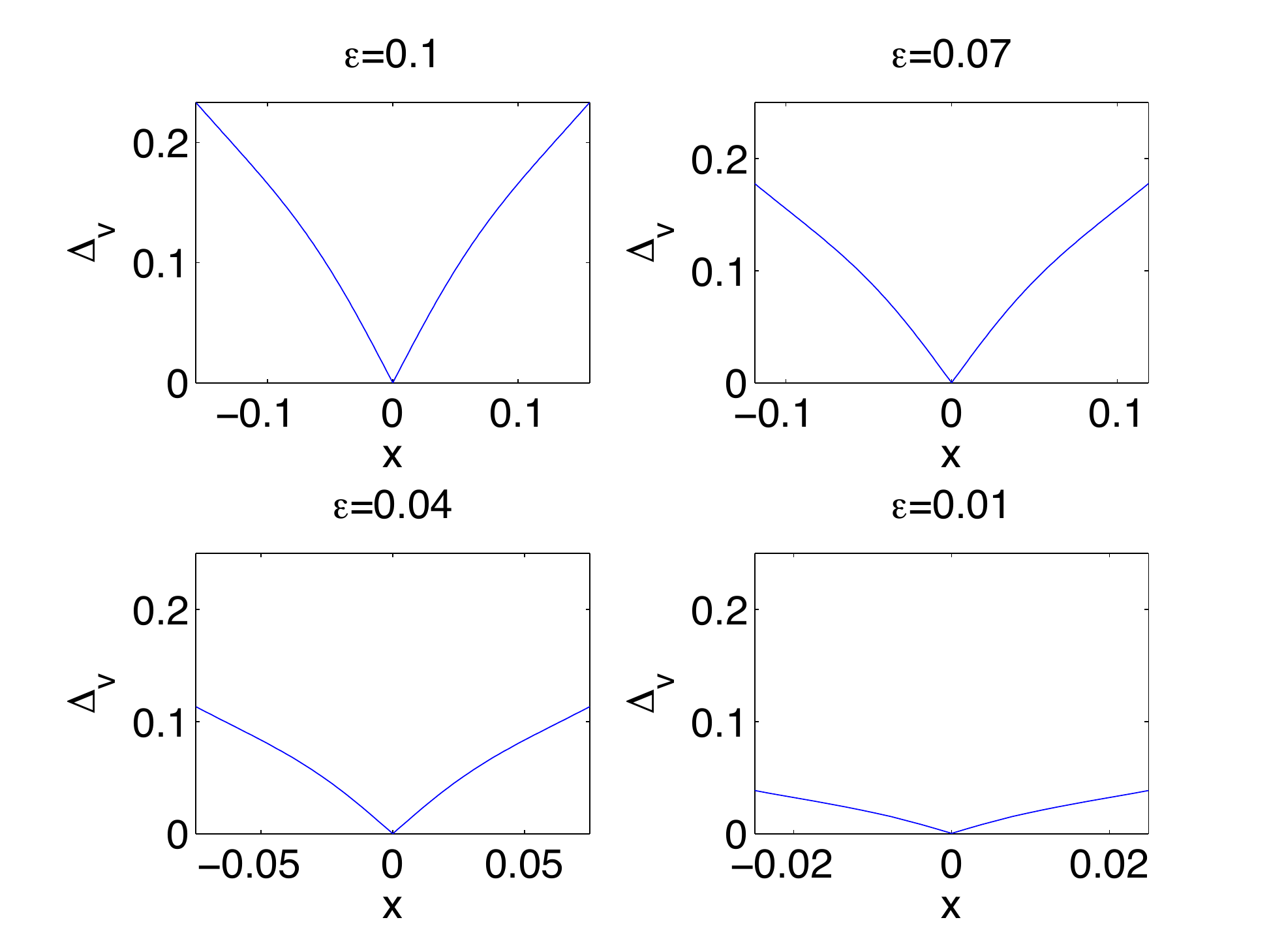}
 \caption{The modulus of the difference of the solution
 to the focusing quintic NLS equation for the initial data 
 $\psi_{0}(x)=\mbox{sech }x$ for $\epsilon=0.1$ at the critical time 
 $t_0$ and the  difference 
 between the corresponding P$_{I}$ solution (\ref{F1}) for 
 several values of $\epsilon$; on the left the difference $\Delta$ 
 for $u$, on the right the difference $\Delta_{v}$ for $v$. The 
 $x$-axes are rescaled with a factor $\epsilon^{4/5}$.}
   \label{nlsquintfsechdelta}
\end{figure}


We can identify the regions where each of the asymptotic
solutions gives a better description of NLS than the other by 
identifying the value of $x_{r}$ such that for all $x>x_{r}$ the 
semiclassical solutions gives a better asymptotic description than 
the multiscales solution (since the solution is symmetric with 
respect to $x$, we only consider positive values of $x$ here). We 
find that  the width of this zone scales 
roughly as $\epsilon^{3/5}$. A linear regression analysis for the 
dependence of $\log_{10}x_{r}$ on $\log_{10}\epsilon$ yields 
 $a=0.634$ with
standard deviation $\sigma_{a}=0.0036$ and correlation coefficient 
$r=0.99993$.

This matching procedure clearly improves the NLS description near
the critical point.
In Fig.~\ref{nlsquintsechcdeltamatch} we see the difference between this matched
asymptotic solution and the NLS solution for two values of
$\epsilon$. Visibly the zone, where the solutions are matched,
decreases with $\epsilon$ (note the rescaling of the $x$-axes by a 
factor $\epsilon^{4/5}$).
\begin{figure}[htb!]
 \includegraphics[width=0.5\textwidth]{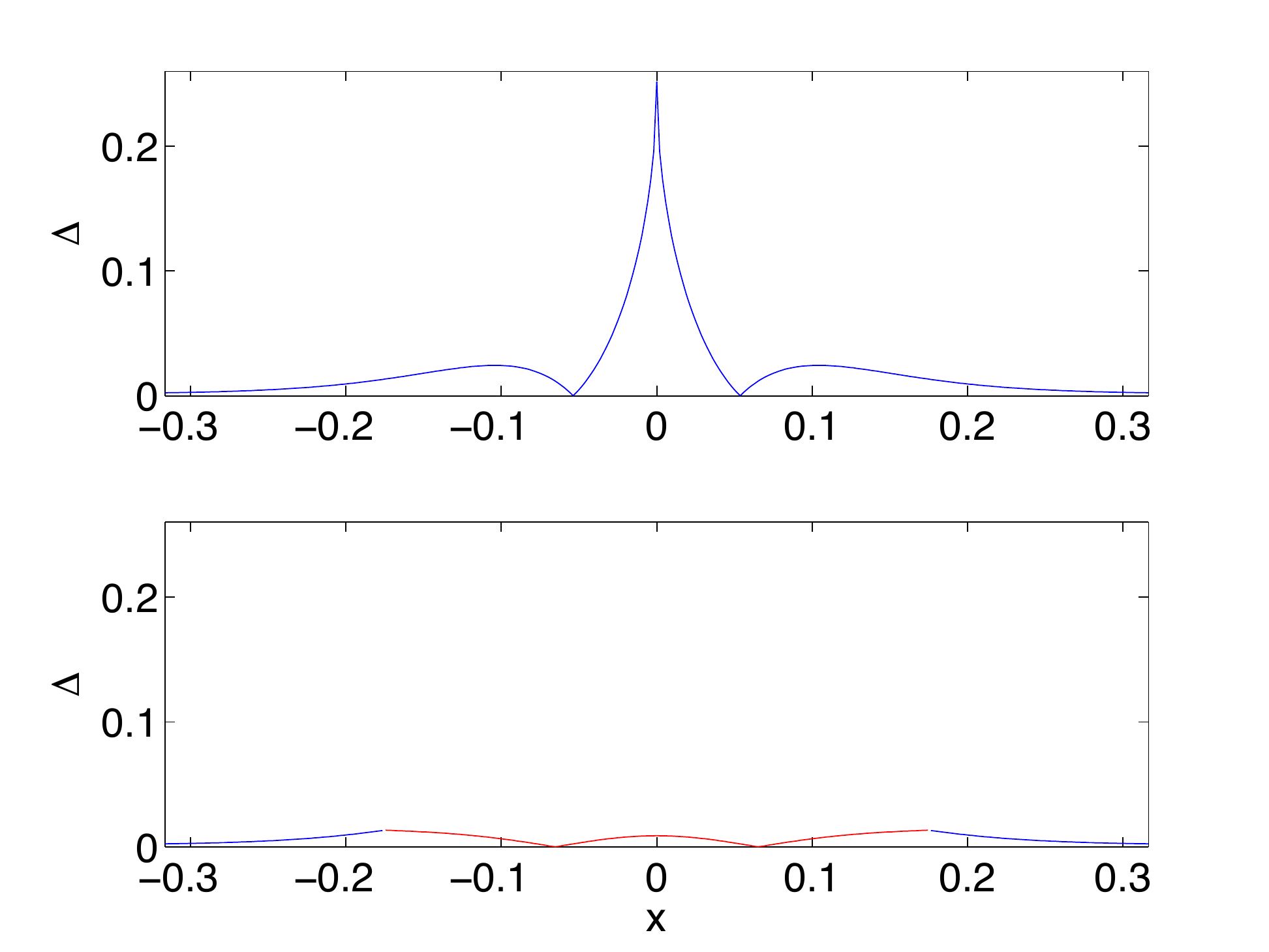}               
 \includegraphics[width=0.5\textwidth]{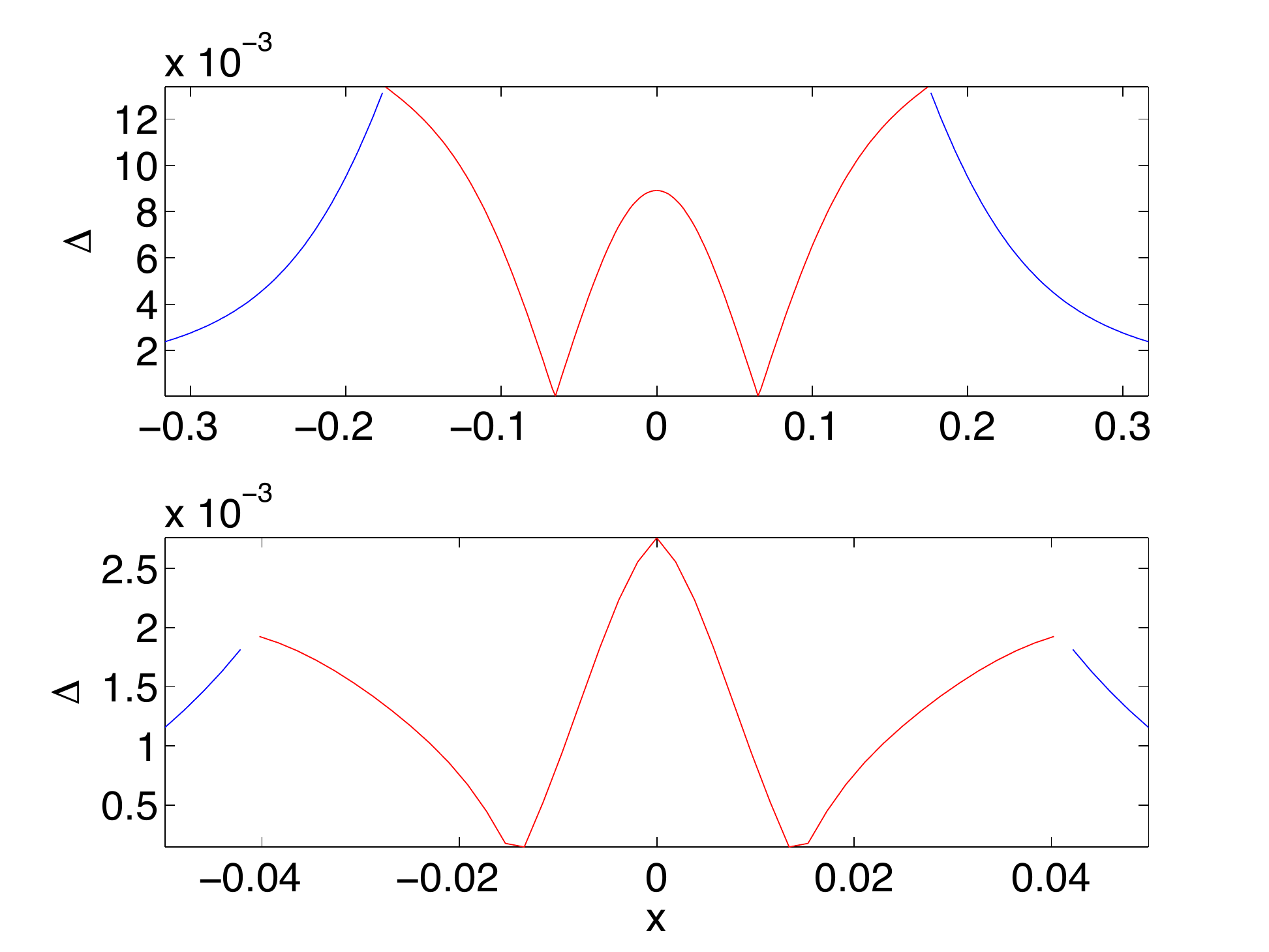}
 \caption{In
the upper part of the left figure  one can see the modulus of the 
difference of the solution $u$ 
 to the focusing quintic NLS equation for the initial data 
 $\psi_{0}(x)=\mbox{sech }x$ at the critical time 
 $t_0$ and the semiclassical solution for $\epsilon=0.1$. 
 The lower part shows the same
difference, which is replaced close to the critical point 
by the difference between NLS solution and  the 
P$_{I}$ solution (\ref{F1}) (in red where the error is smaller 
than the one
shown above). The right figure shows the same situation as the lower figure 
on the left  for $\epsilon=0.1$ above and $\epsilon=0.01$ below. The 
$x$-axes are rescaled  in this figure by a factor $\epsilon^{4/5}$.}  
\label{nlsquintsechcdeltamatch}
\end{figure}

A linear regression analysis for the logarithm of the difference $\Delta$
between NLS and multiscales solution in the matching zone gives a scaling
of the form $\Delta\propto \epsilon^{a}$ with $a=0.6659$ 
with standard deviation $\sigma_{a}=0.032$ and correlation coefficient 
$r=0.995$. The found scaling is thus in the whole interval clearly 
better than the $\epsilon^{2/5}$ 
of the semiclassical solution, but does not reach the expected 
$\epsilon^{4/5}$ scaling in the whole interval. This indicates that 
transition formulae between the multiscales and the semiclassical 
solution have to be established as in \cite{physicad} for KdV, which is, however, beyond the scope 
of the present paper.

The P$_{I}$ solution (\ref{F1}) holds for small $|x-x_{c}|$ and 
$|t-t_0|$. To illustrate the latter effect, we compare it with the 
NLS solution for the times $t_{\pm}(\epsilon)=t_0\pm 0.01\epsilon^{4/5}$ where 
we take care of the scaling of $t$ in (\ref{rasymp}). In 
Fig.~\ref{nlsquintfsechdeltam001} we show the quantity $\Delta$ for 2 
values of $\epsilon$ at the times $t_{-}(\epsilon)$. The $x$-axes are 
rescaled by a factor $\epsilon^{4/5}$. It can be seen that the quality of the 
asymptotic description is slightly lower than at the critical time, but that 
the error is of a similar order. The situation is similar at the time $t_{+}=t_0+0.01\epsilon^{4/5}$ 
as can be seen also in Fig.~\ref{nlsquintfsechdeltam001}. 
\begin{figure}[htb!]
\includegraphics[width=0.5\textwidth]{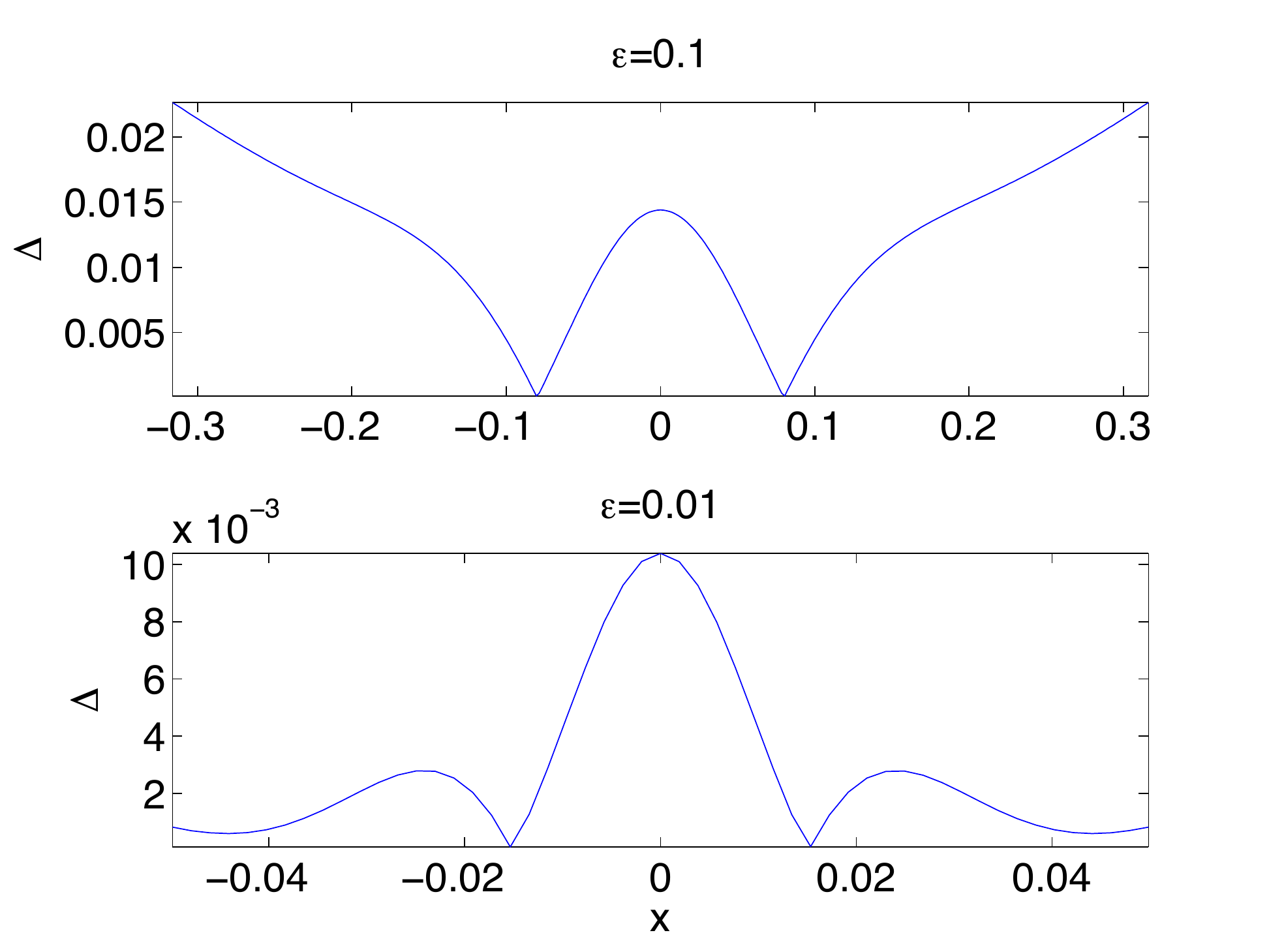}
\includegraphics[width=0.5\textwidth]{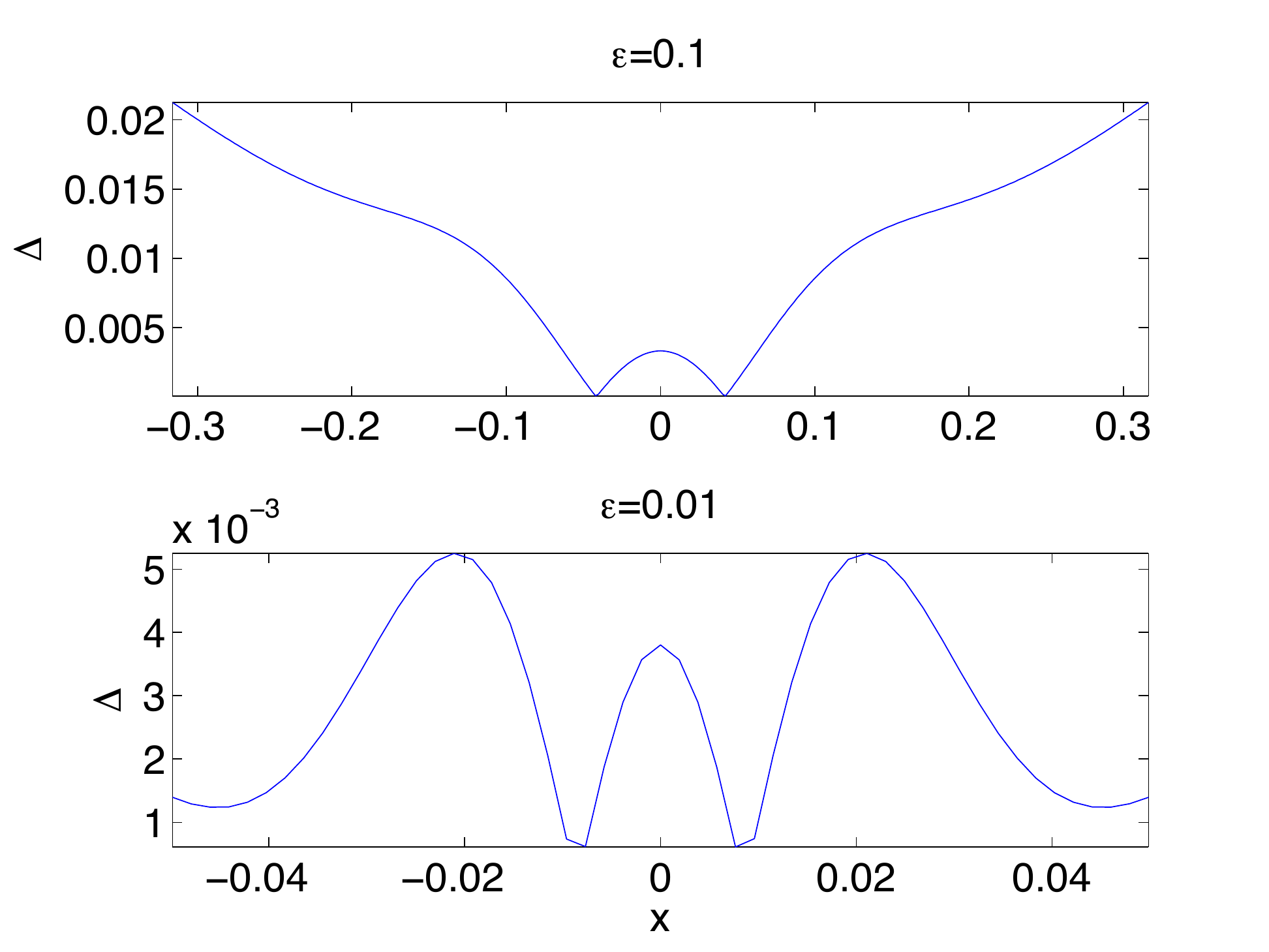}
 \caption{The modulus of the difference of the solution $u$ 
 to the focusing quintic NLS equation for the initial data 
 $\psi_{0}(x)=\mbox{sech }x$ for two values of $\epsilon$ at the time 
 $t_{-}(\epsilon)=t_0-0.01\epsilon^{4/5}$ on the left and at the 
 time  $t_{+}(\epsilon)=t_0+0.01\epsilon^{4/5}$ on the right,  and 
 the corresponding P$_{I}$ solution (\ref{F1}).}
   \label{nlsquintfsechdeltam001}
\end{figure}

\subsection{Non-symmetric initial data for the focusing quintic NLS}
To study solutions to the focusing quintic NLS for  the asymmetric 
initial data (\ref{asym}), we first have to solve equations 
(\ref{asym}) numerically. This is done for values of 
$|x|<15$ in a standard way by solving (\ref{asym}) on some 
Chebyshev collocation points with a Newton iteration. The choice of 
this interval is determined by the fact that the residual of the 
Newton iterate is smaller than $10^{-10}$ on the whole intervall. 
We choose $N_{c}=512$ 
collocation points to ensure that the coefficients of an expansion of 
the solution decrease to machine precision and that the solution is 
thus numerically fully resolved. For values of $|x|>15$, 
we solve (\ref{asym}) asymptotically,
\begin{equation}    
    \label{as1}
    r  
    =-1+(2i)^{1-2\alpha}\exp(-x)+(2i)^{2-4\alpha}\exp(-2x)(-0.5+2\alpha^2\ln(2i)
    +\alpha+\alpha x) +\mathcal{O}(\exp(-3x))
\end{equation}
for $x\to +\infty$ and 
\begin{equation}
    r = 1+i\exp(x)2^{1+2\alpha}+2^{2+4\alpha}\exp(2x)(-0.5+2\alpha^2 
\ln(2)+\alpha x-\alpha)+ \mathcal{O}(\exp(3x))
    \label{as2}
\end{equation}
for $x\to-\infty$.
Machine precision is reached for $|x|>15$ for this asymptotic 
solution. Initial data for $\alpha=0.2$ can be seen in 
Fig.~\ref{nlsfquintasyminitial}.
\begin{figure}[htb!]
\includegraphics[width=0.6\textwidth]{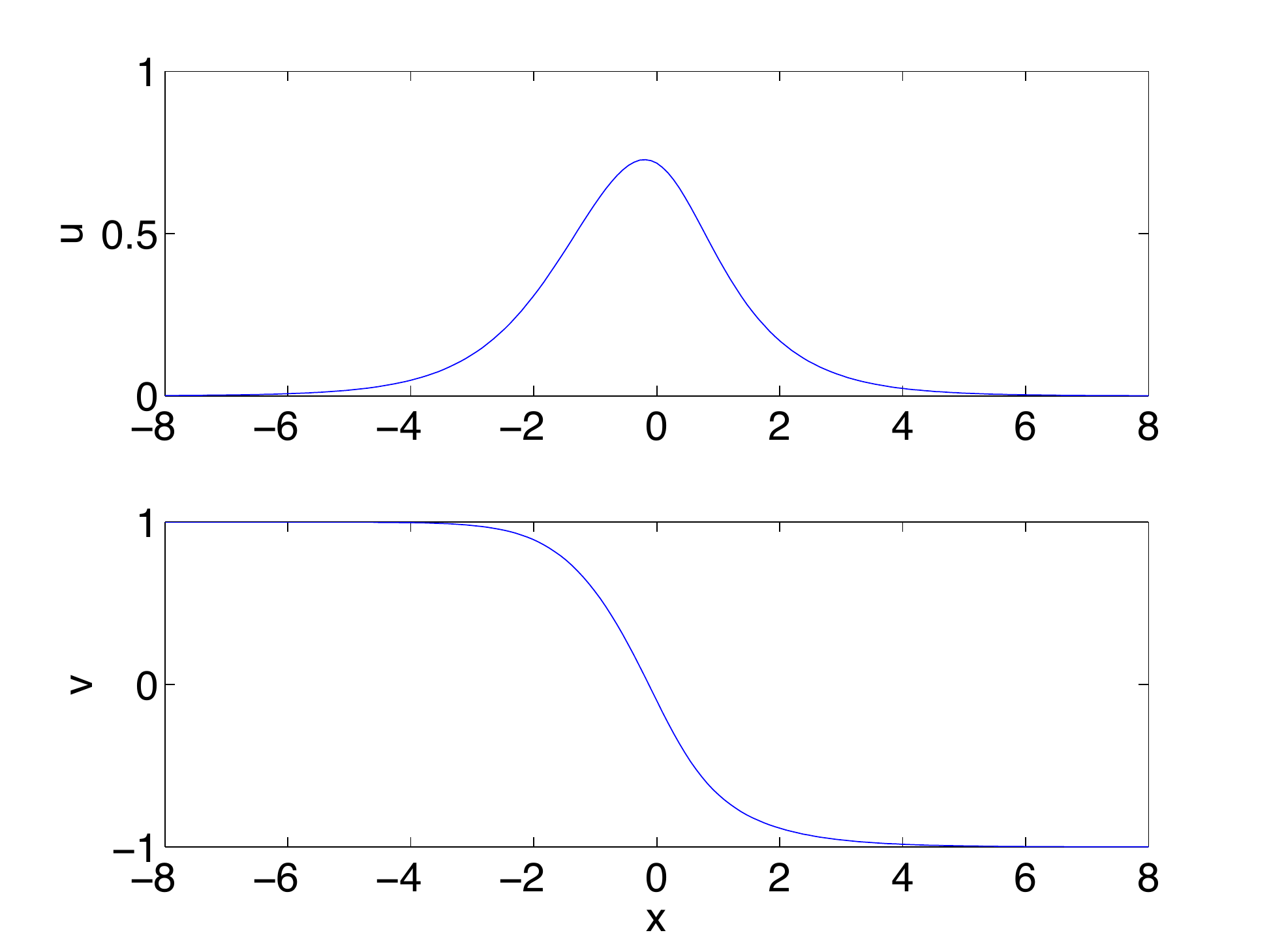}
 \caption{Asymmetric initial data for the focusing quintic NLS 
 equation according to (\ref{Hodasy}) for $\alpha=0.2$.}
   \label{nlsfquintasyminitial}
\end{figure}

To obtain initial data for the NLS equation from $r=v+iu$ in the form 
$\psi=\sqrt{u}\exp(i\int_{x_{0}}^{x}v(x')dx'/\epsilon)$, we have to 
integrate the real part of $r$ with respect to $x$. This is done by 
using an expansion of the solution for $|x|<15$ in terms of 
Chebyshev polynomials via a \emph{discrete cosine transform} (this is the 
reason why the solution was computed on Chebyshev collocation points) 
and applying the well known formula for the integral of Chebyshev 
polynomials. For values of $|x|>15$, the asymptotic formulae 
(\ref{as1}) and (\ref{as2}) are integrated analytically by choosing 
the integration constants to obtain a continuous matching with the 
numerically integrated $v$.  This way we obtain initial data with an 
accuracy of better than $10^{-10}$. We put the Krasny filter to the 
order of this treshold and thus obtain initial data resolved up to 
the level of the Krasny filter.

For $\epsilon=0.1$ the solution to the focusing quintic NLS equation 
for the asymmetric initial data as well as the semiclassical and the 
P$_{I}$ asymptotics (\ref{F1}) can be seen in Fig.~\ref{nlsfquintasyme1}. 
As expected the P$_{I}$ asymptotics gives a much better description of the 
NLS solution close to the critical point of the semiclassical 
solution. The error in the approximation is, however, also not 
symmetric here.
\begin{figure}[htb!]
\subfigure{
 \includegraphics[width=0.5\textwidth]{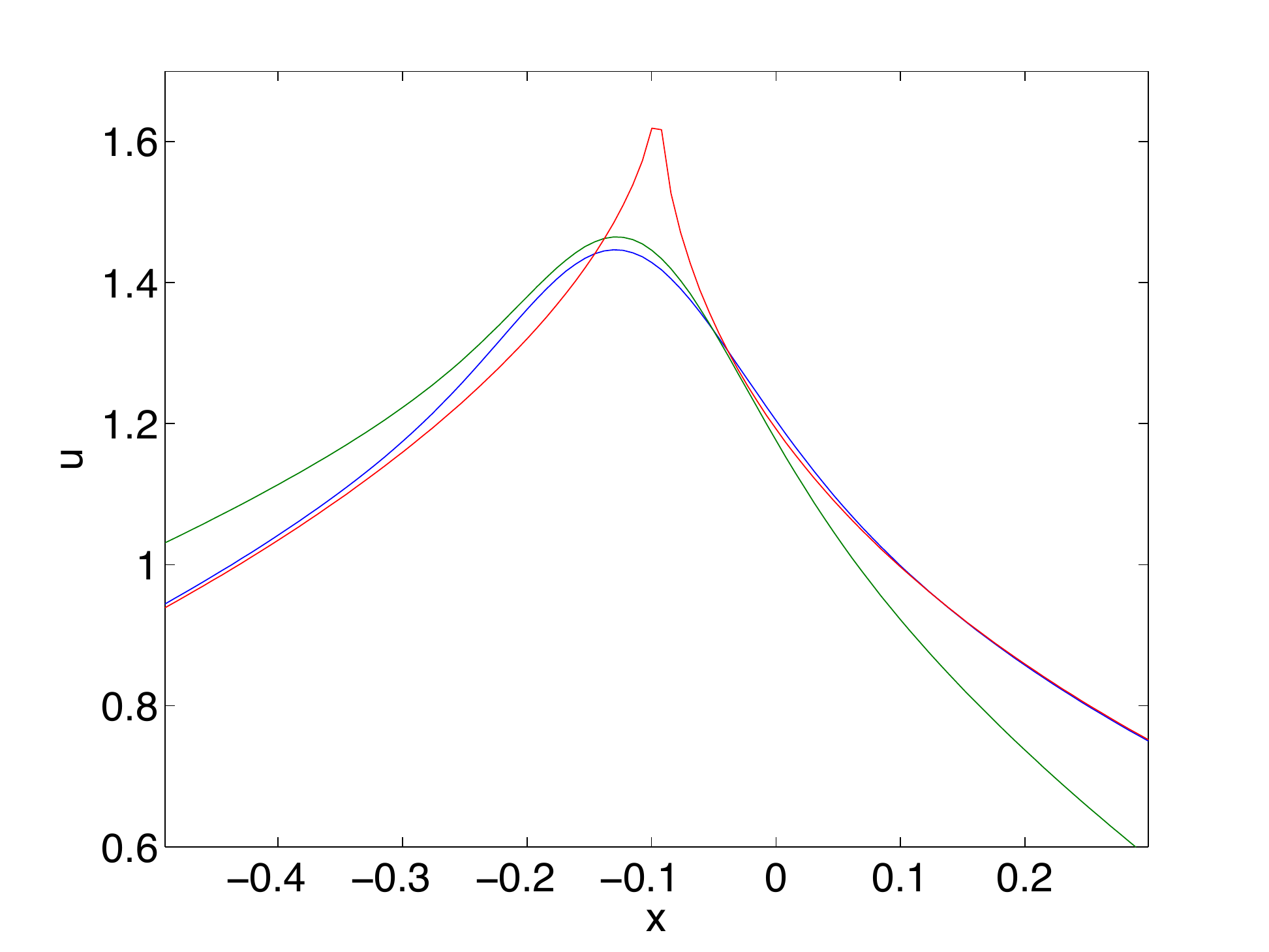}               
 \includegraphics[width=0.5\textwidth]{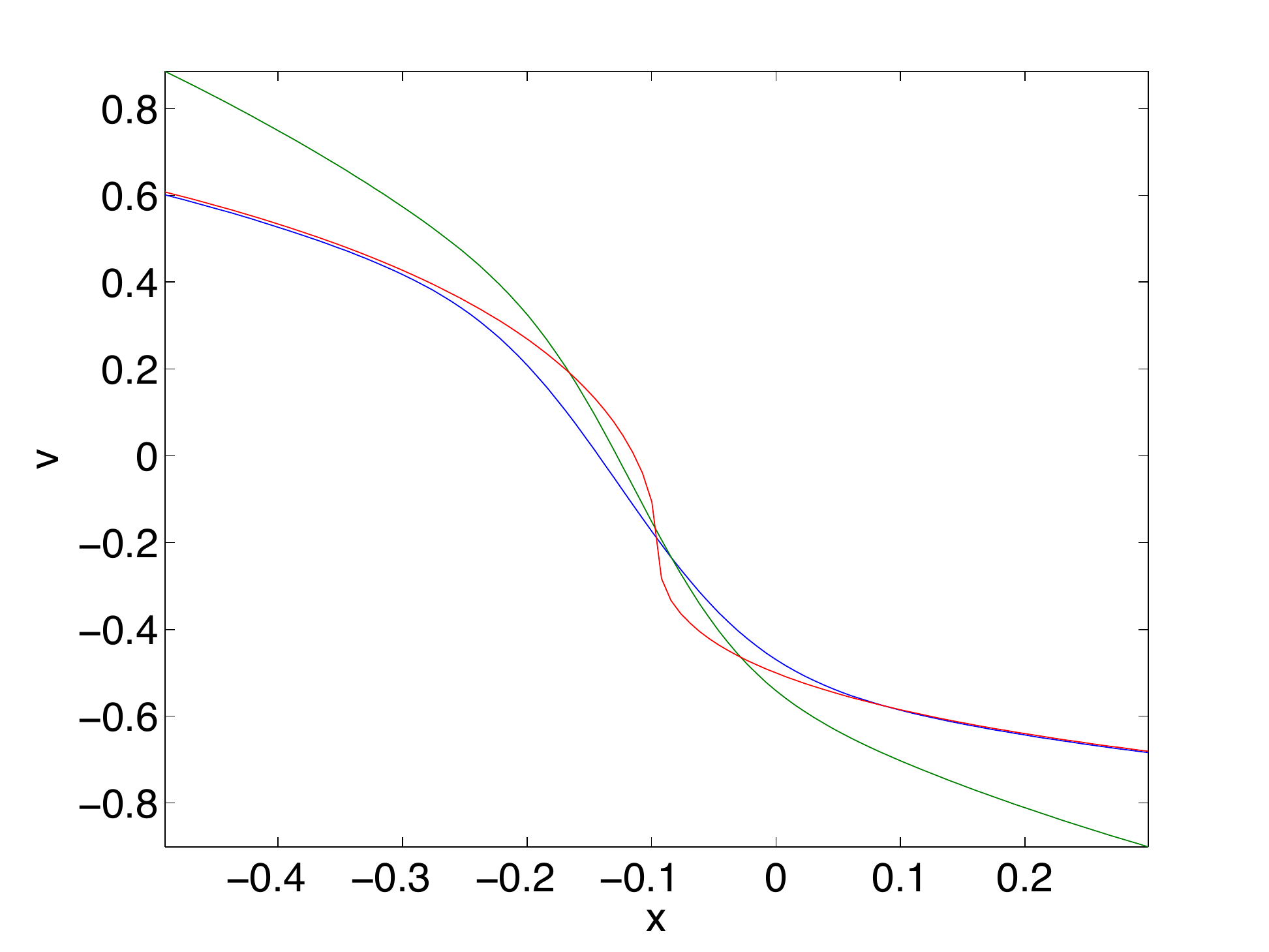}}
\hskip 0.1cm
\subfigure
{ 
 \includegraphics[width=0.5\textwidth]{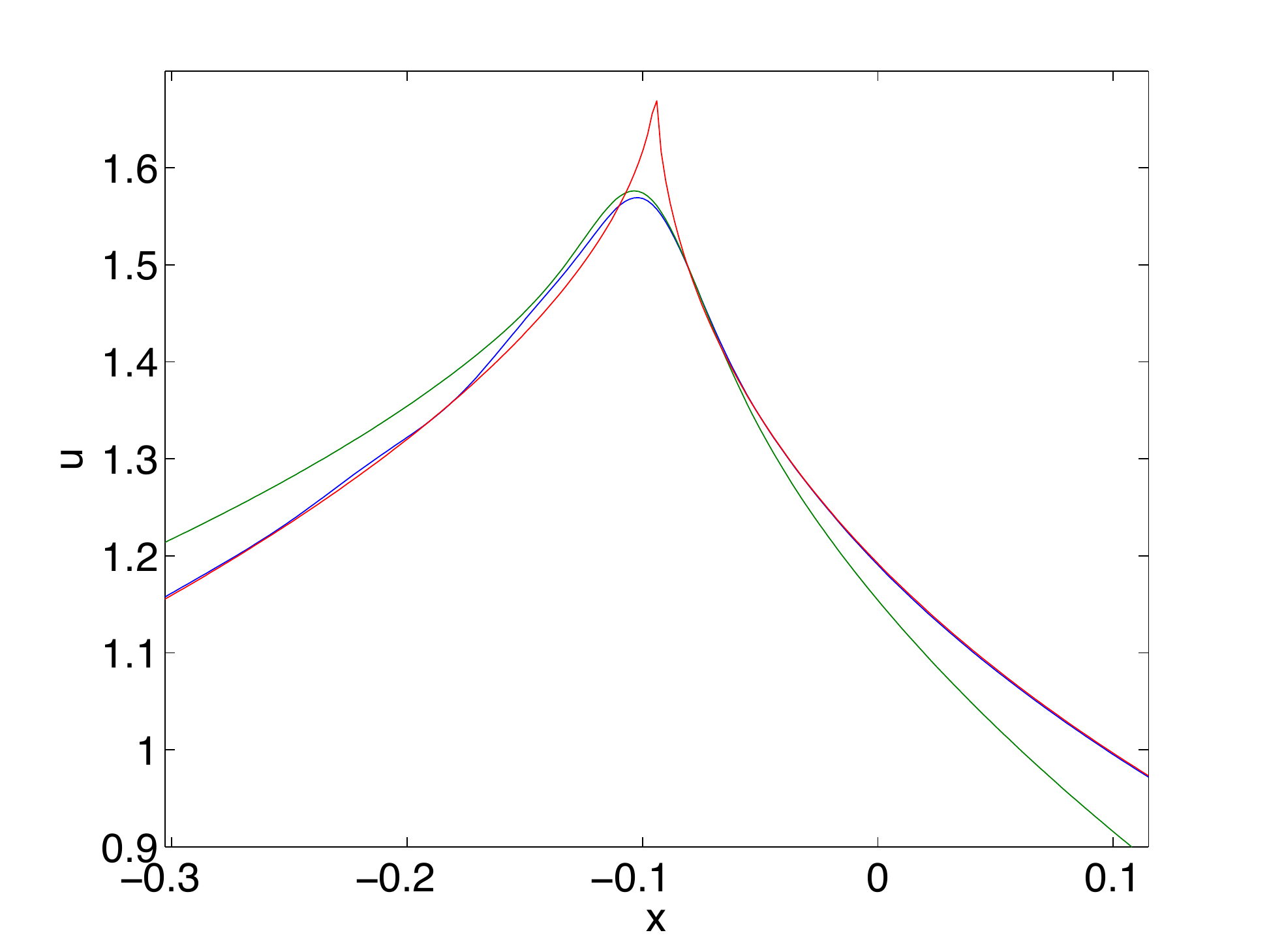}               
 \includegraphics[width=0.5\textwidth]{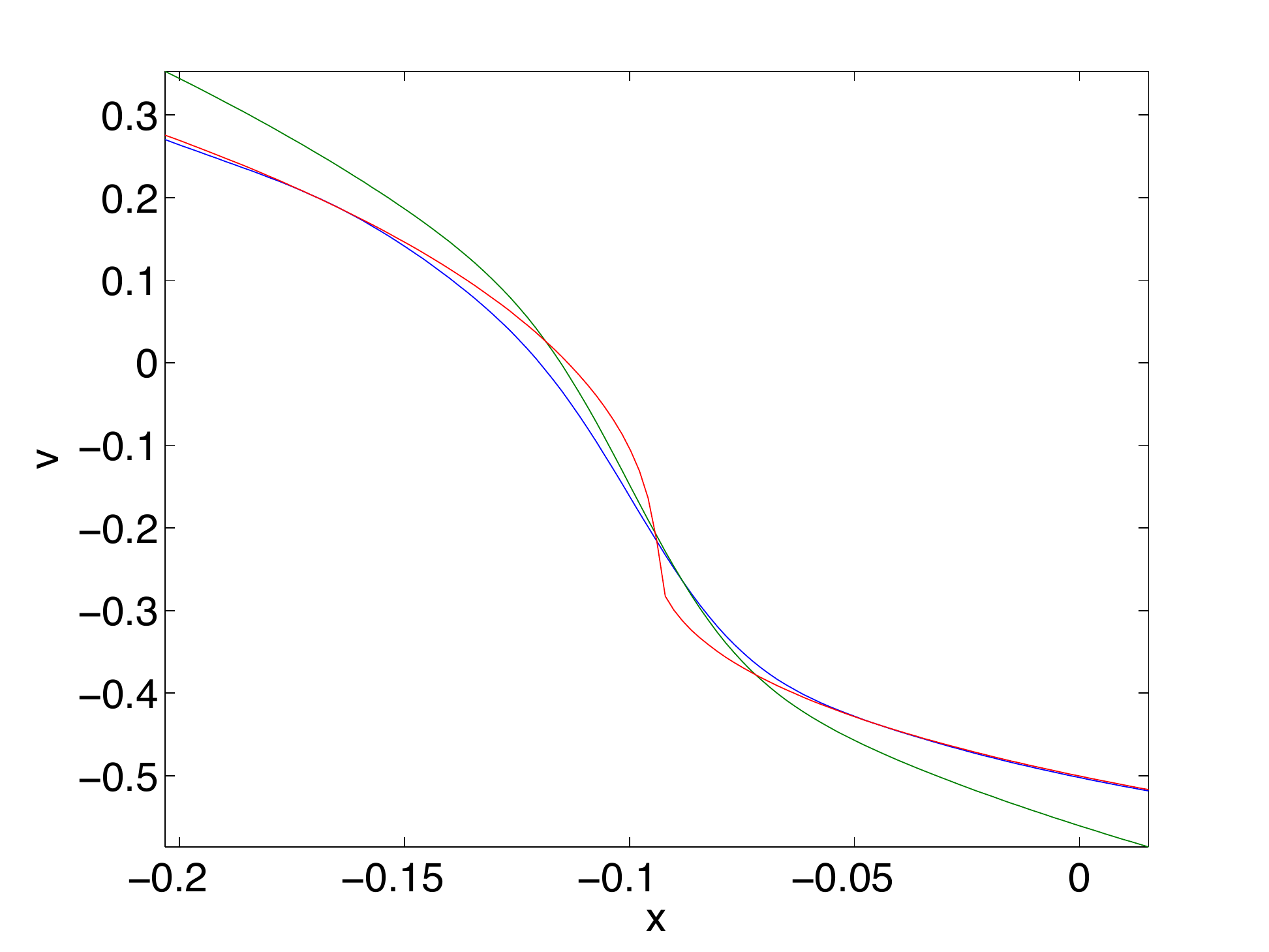}
}
 \caption{Solution to the focusing quintic NLS for the asymmetric 
 initial data as in~(\ref{Hodasy}) for $t =0$ at the critical time in 
 blue, the corresponding semiclassical solution in red and the P$_{I}$ 
 asymptotics (\ref{F1}) in green; on the left the function $u$, on 
 the right the function $v$. The upper figures are for  
 $\epsilon=0.1$, the lower ones for $\epsilon=0.02$.}  
\label{nlsfquintasyme1}
\end{figure}

The agreement gets even better for smaller $\epsilon$. We can reach 
values as low as $\epsilon=0.02$. For smaller $\epsilon$, the blow-up 
singularity of quintic NLS solutions (see below) seems to be too 
close to the critical time of the semiclassical solution which breaks 
the code. The case $\epsilon=0.02$ is, however, numerically fully 
resolved. As can be seen in the lower row of Fig.~\ref{nlsfquintasyme1}, the 
agreement is as expected. Note that also in this case  the $x$-axes of the bottom figures 
 have been rescaled  by a factor $\epsilon^{4/5}$. 

\subsection{`Dark' initial data}
Focusing NLS equations do not have dark solitons as exact solutions, 
i.e., solutions which tend asymptotically to a non-zero constant and 
which vanish for finite values of $x$. But it is mathematically 
interesting to study how initial data of this form lead to a break-up 
of the semiclassical equations, and how the corresponding NLS 
solution behaves in the vicinity of the critical point. We consider 
here initial data of the form $\psi_{0}=\tanh^{2}x$. The solution 
breaks here in the form of two cusps symmetric with respect to $x=0$. 
The critical time is at $t_0=0.9041\ldots$, the cusps form at 
$x_{c}=\pm 1.8723\ldots$. The corresponding solution can be seen in 
Fig.~\ref{nlsquintfdarkt}.
\begin{figure}[thb!]
  \includegraphics[width=0.7\textwidth]{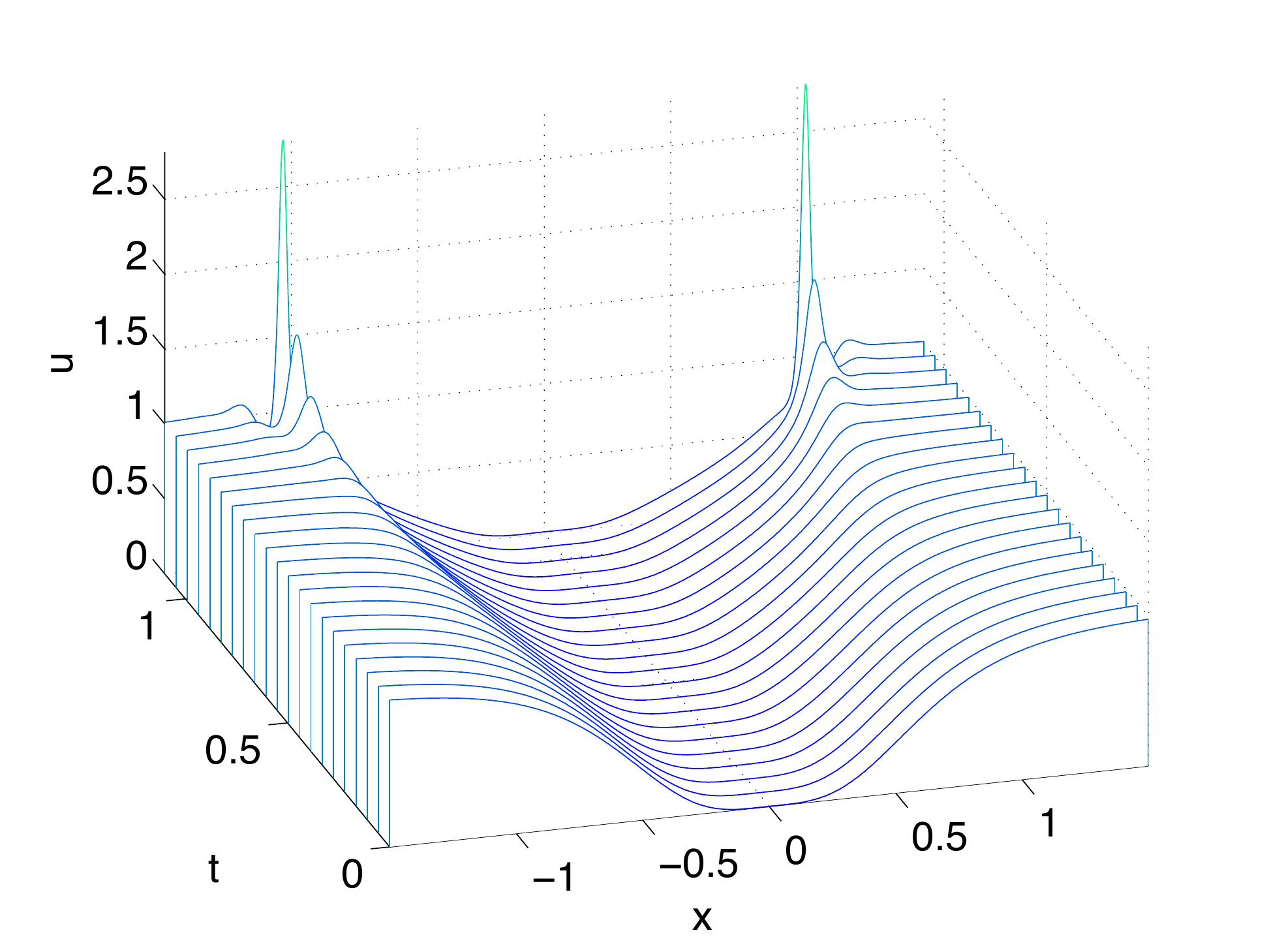}
 \caption{Solution to the focusing quintic NLS equation for the dark initial data 
 $\psi_{0}(x)=\tanh^{2}x$ and $\epsilon=0.1$. The critical time is
 $t_0=0.9041\ldots$}
 \label{nlsquintfdarkt}
\end{figure}
\begin{figure}[ht!]
\subfigure
{
 \includegraphics[width=0.5\textwidth]{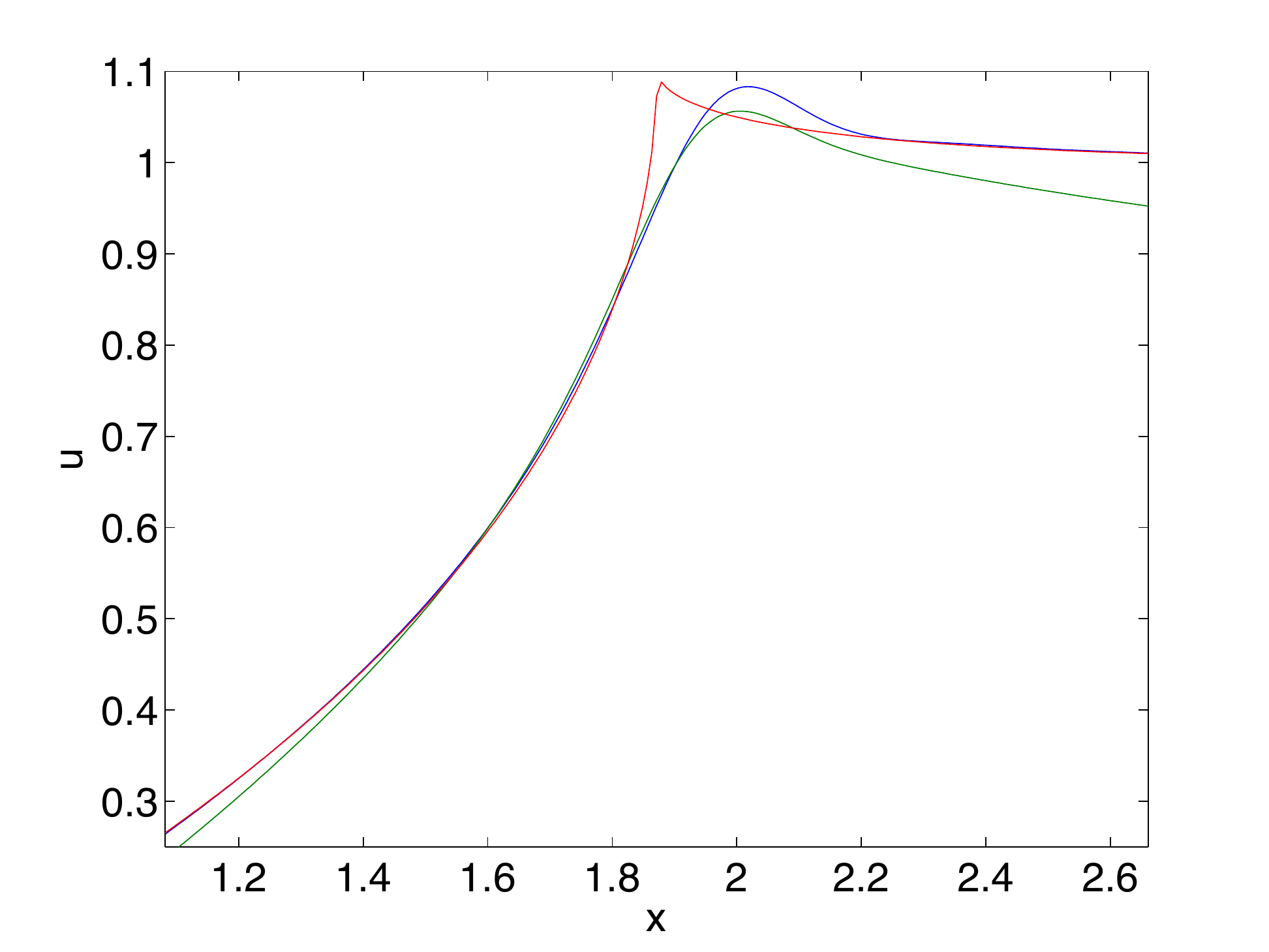}               
 \includegraphics[width=0.5\textwidth]{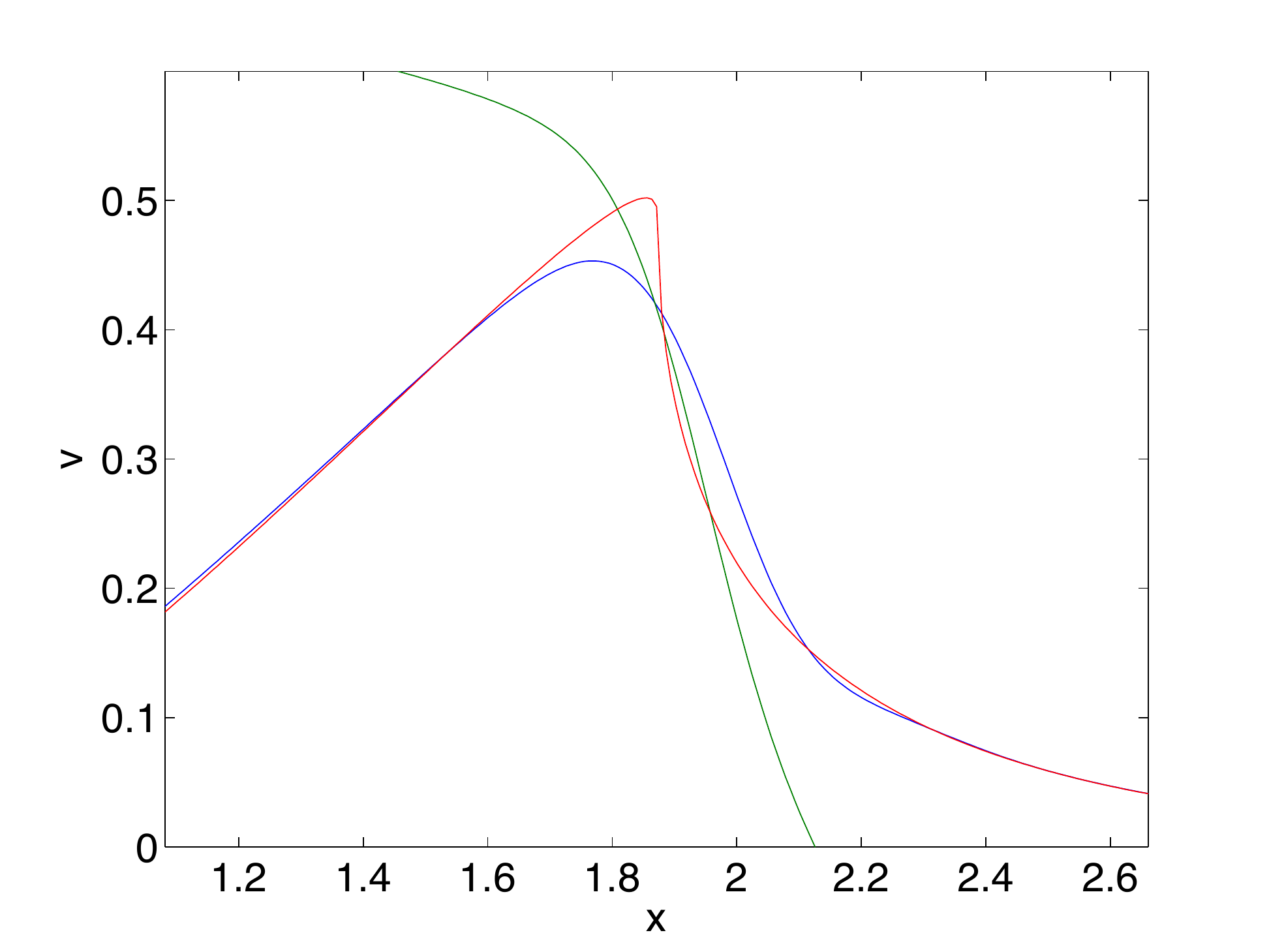}}
\vskip 01.cm
\subfigure
{
 \includegraphics[width=0.5\textwidth]{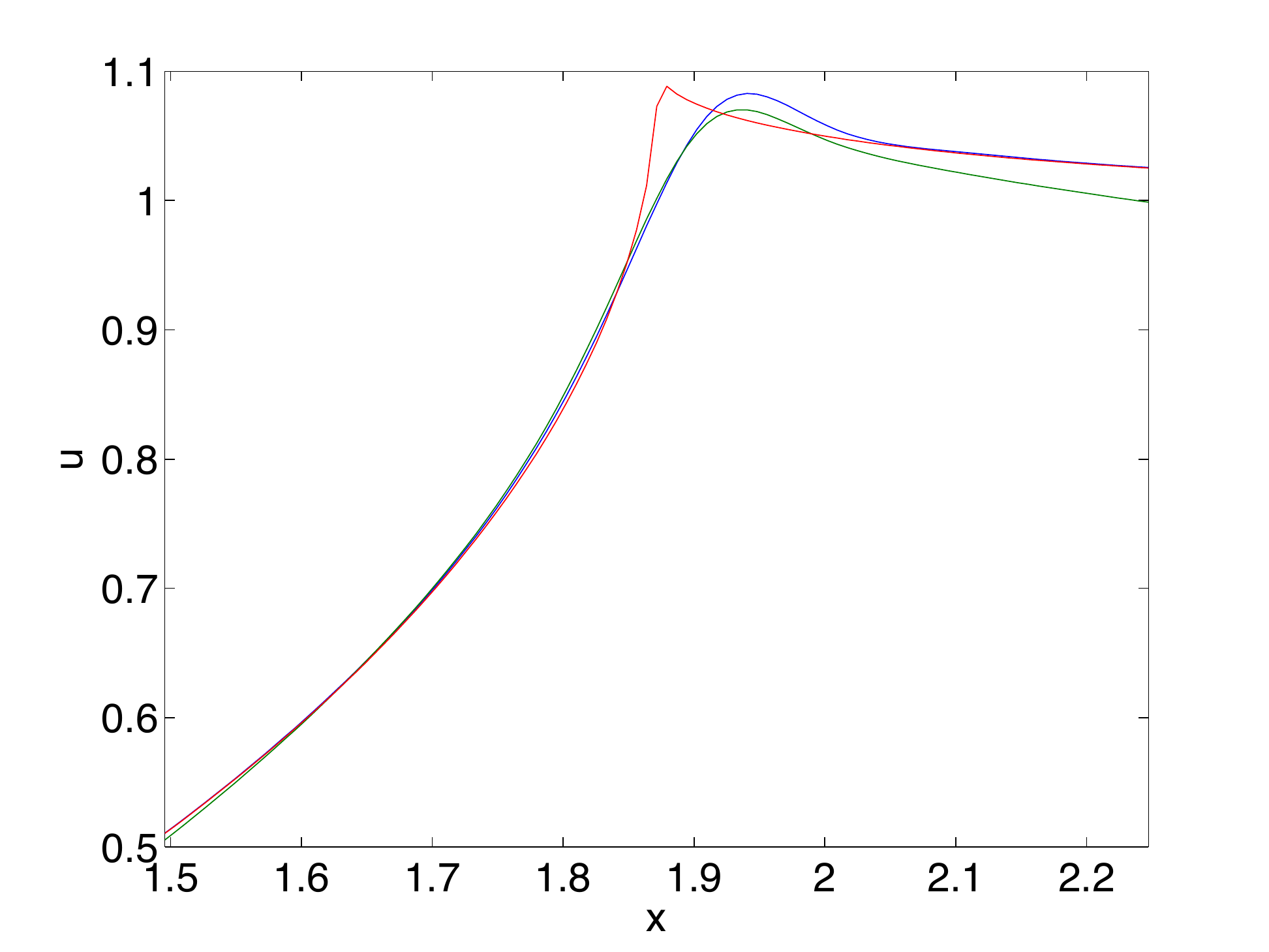}               
 \includegraphics[width=0.5\textwidth]{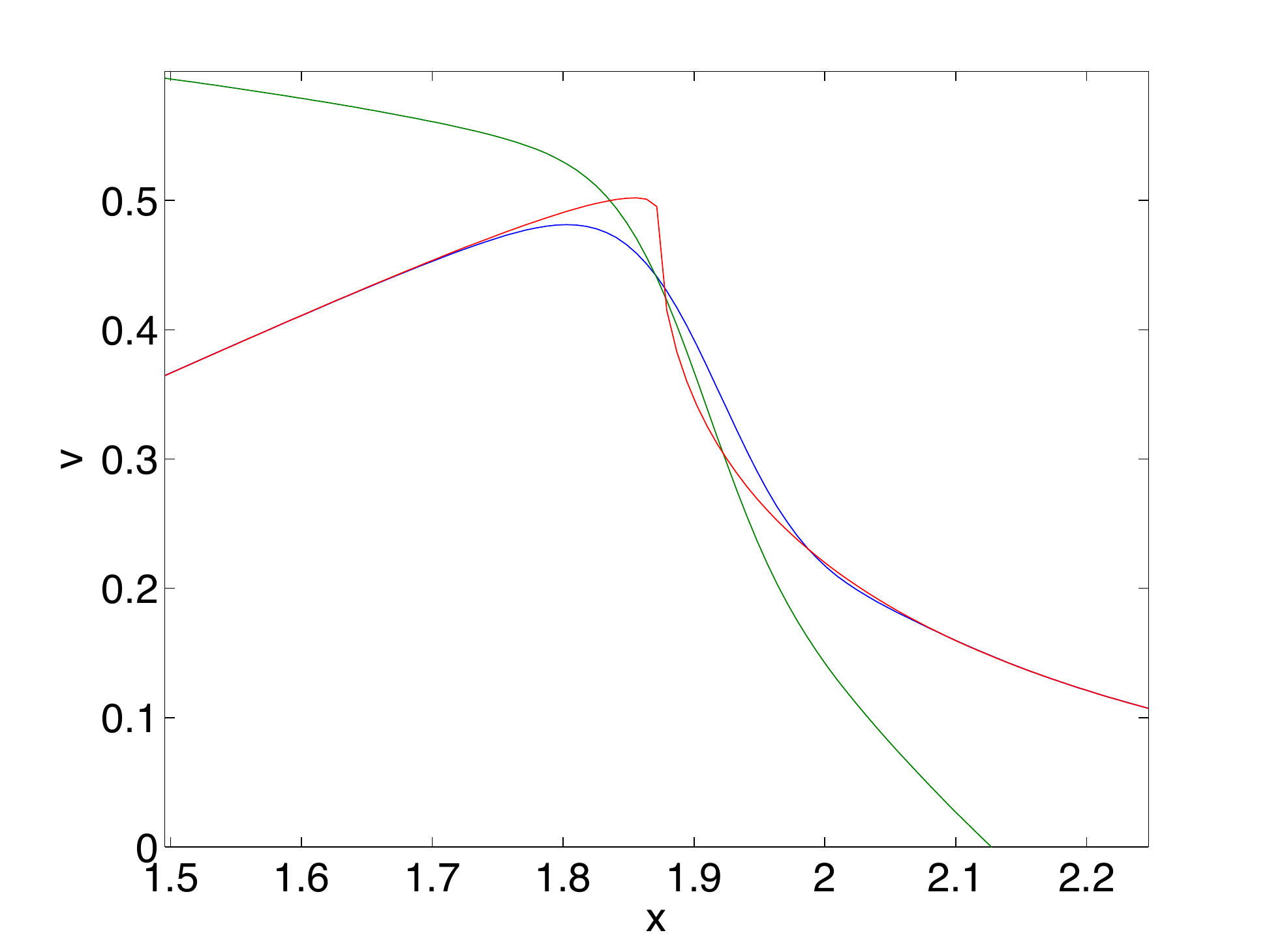}
}
 \caption{Solution to the focusing quintic NLS for the dark 
 initial data $\psi_{0}=\tanh^{2}x$ at the critical time in 
 blue, the corresponding semiclassical solution in red and the P$_{I}$ 
 asymptotics (\ref{F1}) in green; on the left the function $u$, on 
 the right the function $v$. For the figures in the upper row
 $\epsilon=0.1$, for the ones in the lower row $\epsilon=0.04$.}  
\label{nlsfquintdark1}
\end{figure}
For $\epsilon=0.1$ the solution to the focusing quintic NLS equation 
for the dark initial data as well as the semiclassical and the 
P$_{I}$ asymptotics (\ref{F1}) can be seen in Fig.~\ref{nlsfquintdark1}. 
As expected the P$_{I}$ asymptotics gives a much better description of the 
NLS solution close to the critical point of the semiclassical 
solution. 
The agreement gets better for smaller $\epsilon$. We can reach 
values as low as $\epsilon=0.04$, where the modulation instability 
leads to problems for smaller values of $\epsilon$ because of the 
asymptotically non-vanishing solution.  The case $\epsilon=0.04$ is, however, 
numerically accessible. As can be seen in the bottom figures of  Fig.~\ref{nlsfquintdark1}, the 
agreement is as expected.

\subsection{Blow-up}
For the cubic focusing NLS, solutions in the semiclassical limit for 
times $t\gg t_0$ develop a zone of rapid modulated oscillations as 
can be seen for instance in Fig.~\ref{nlsquintfblowup}. The central 
hump close to the critical time splits into several humps of 
smaller amplitude. For the quintic NLS on the other hand it is 
known, see e.g.~\cite{MM}, that initial data with negative energy 
have a blow-up in finite time. For the NLS with the semiclassical 
parameter $\epsilon$ we consider in this paper, this will be always 
the case for sufficiently small $\epsilon$. Thus the  solution of the 
quintic NLS looks for small $\epsilon$ very differently from the 
solution to the cubic NLS for the same initial data and the same 
value of $\epsilon$ as can be seen in Fig.~\ref{nlsquintfblowup}. The 
central hump develops in this case into a blow-up.
\begin{figure}[thb!]
  \includegraphics[width=0.5\textwidth]{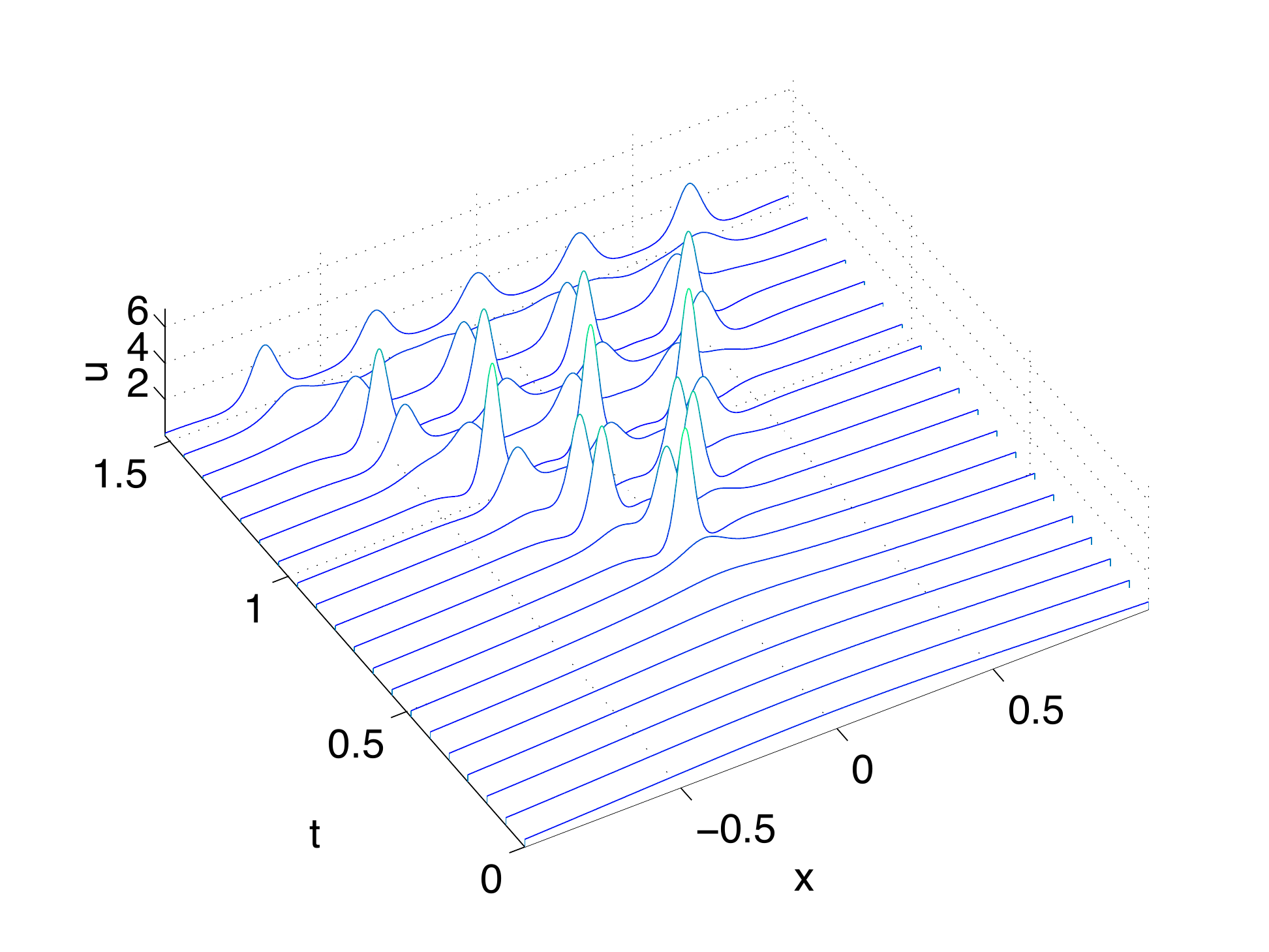}
  \includegraphics[width=0.5\textwidth]{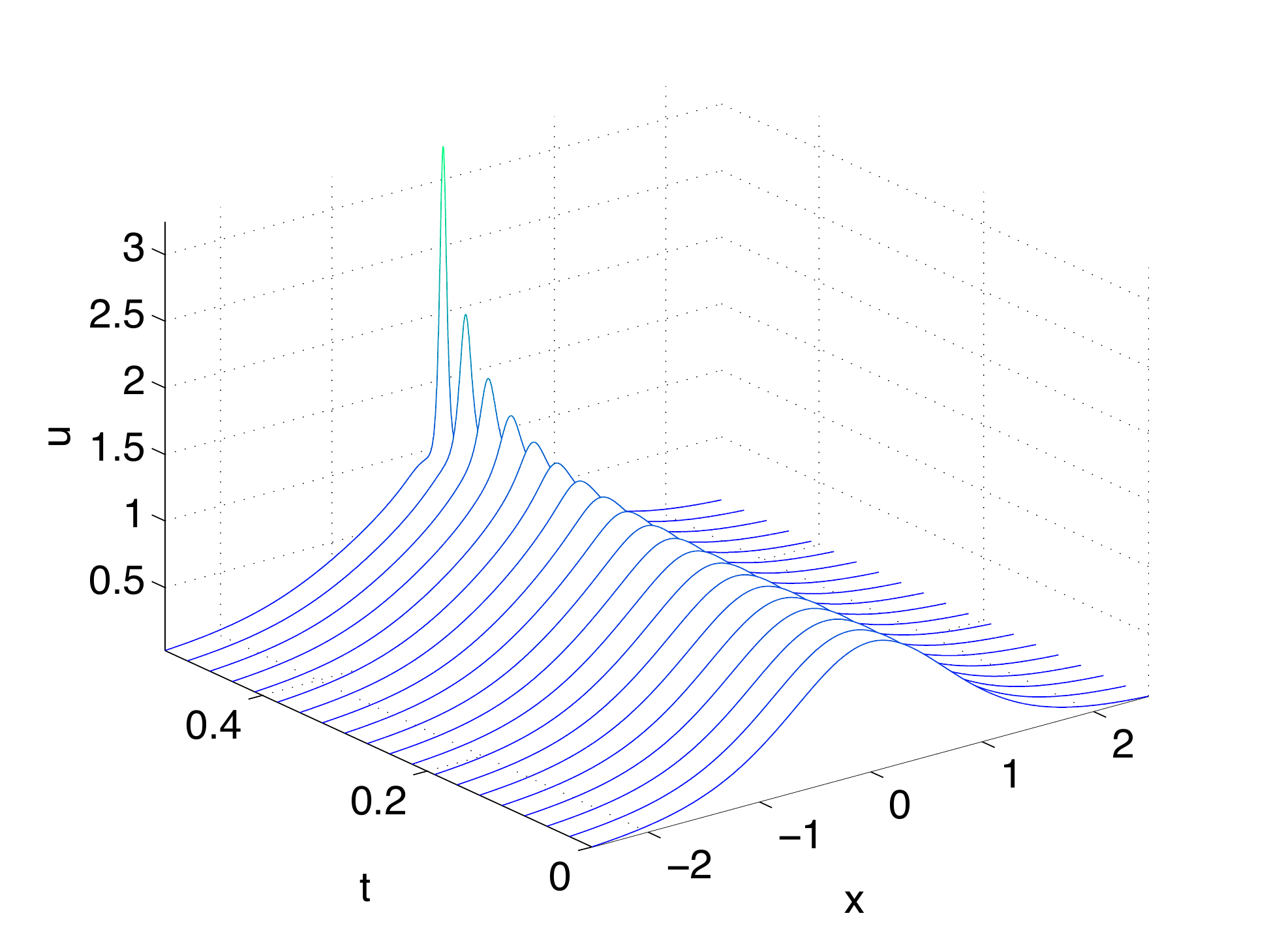}
 \caption{Solution to the focusing  NLS equation for the initial data 
 $\psi_{0}(x)=\mbox{sech }x$ and $\epsilon=0.1$; on the left the 
 solution for the cubic NLS, on the right the solution to the quintic 
 NLS.}
 \label{nlsquintfblowup}
\end{figure}

For obvious reasons it is impossible to treat a blow-up exactly numerically, 
but the numerical solution can get sufficiently close to this case. Driscoll's 
composite Runge-Kutta method produces an overflow 
error close to the $L_{\infty}$ blow-up encountered here because of 
the term $|\psi|^{4}\psi$. We stop the code when this happens and 
note the last time with finite value of $\psi$ as a lower bound 
$t_{B}$ for the blow-up time. The error in the determination of the 
blow-up time with this method is largest for larger $\epsilon$. 
Using linear regression we find for 
$\ln (t_{B}-t_0)=a\ln \epsilon +b$ for values of 
$\epsilon=0.01,0.02,\ldots,0.1$ the value $a=0.83$ close to $4/5$ with
standard deviation $\sigma_{a}=0.0439$, $b=-0.1267$  with standard deviation  $\sigma_b=0.0138$ correlation coefficient 
$r=0.999$, see Fig.~\ref{blowupreg}.
\begin{figure}[thb!]
\hskip 0.5cm
  \includegraphics[width=0.7\textwidth]{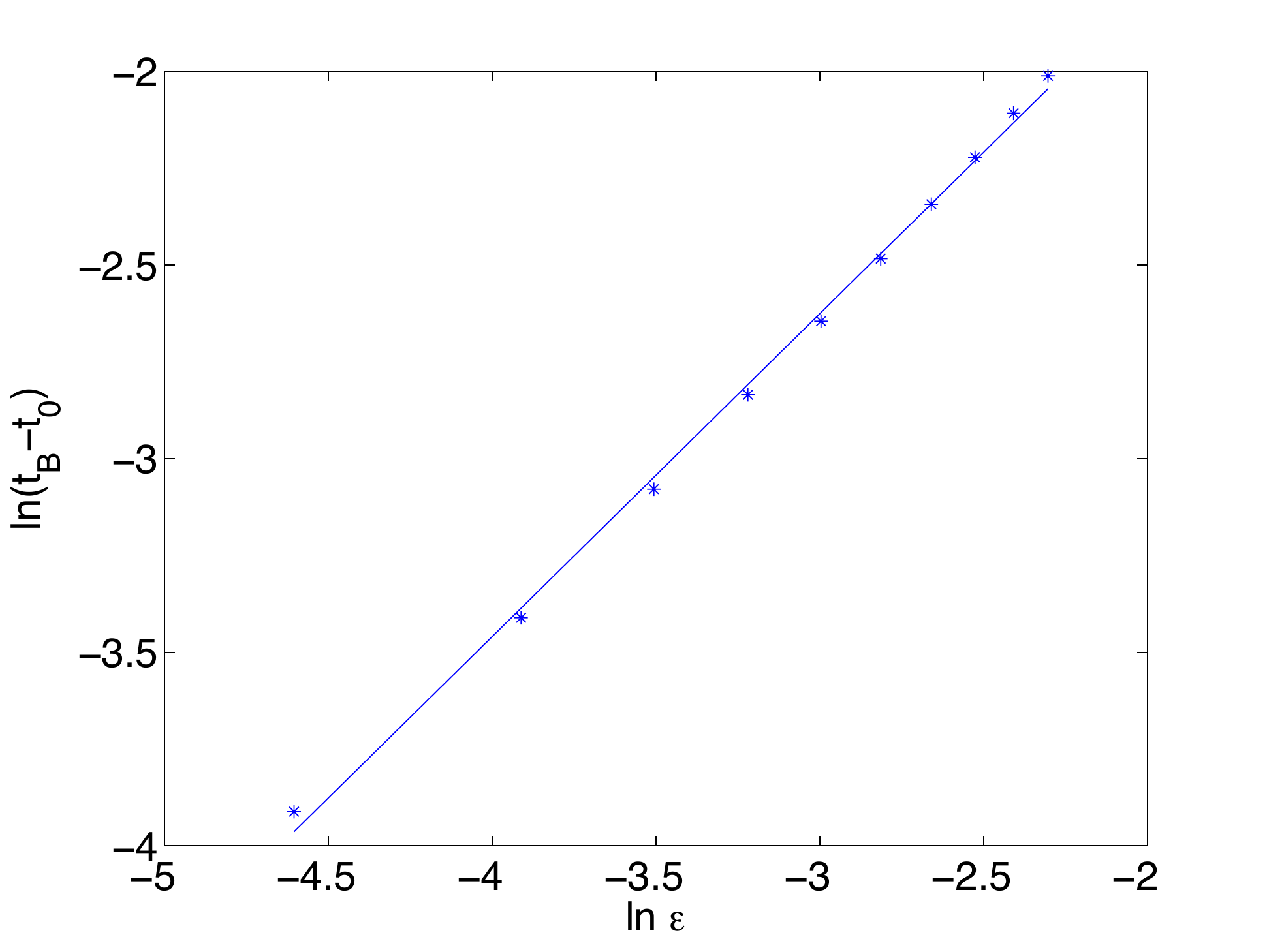}
 \caption{The blow-up time as a function of $\epsilon$ for quintic NLS with $\mbox{sech}x$ initial data.}
 \label{blowupreg}
\end{figure}

As expected from the P$_{I}$ solution (\ref{F1}), 
the time scales with $\epsilon^{4/5}$. Since we expect the error in 
the determination of the blow-up time to decrease with $\epsilon$, a 
slightly stronger decrease with $\epsilon$ of the time $t_{B}$ than predicted is no surprise.

It is an interesting question whether the blow-up time in the limit 
$\epsilon\to 0$ is related to 
the first pole of the tritronqu\'ee solution on the negative real 
axis. In \cite{jk} it was shown that the first pole  is located at 
\begin{equation}
\label{polePI}
\xi_{pole}=-2.3841687\ldots.
\end{equation}
Recalling formula (\ref{xi}) for the argument of the tritronqu\'ee solution in the approximation of the NLS solution  near the point of elliptic umbilic catastrophe
\[
\xi=-i\left(\dfrac{u_0}{V_0'}\left(3V_0'+u_0V''_0\right)^2\dfrac{re^{i\psi}}{3\epsilon^4}\right)^{\frac{1}{5}}(x-x_0-(v_0+i\sqrt{u_0V'_0})(t-t_0)).
\]
one can see that for quintic NLS and  $\mbox{sech}x$ initial data, 
 the point of elliptic umbilic catastrophe is at $x_0=0$, and for 
 symmetry reasons, the blow up  is  at $x_B=0$. 
Using the above formula,  with $V(u)=\frac{u^2}{2}$,   so that $V'_0=u_0$, $V_0''=1$ and 
\[
\;\;a_+=-\dfrac{i}{re^{i\psi}}=-i\dfrac{1}{4(u_0)^2}\dfrac{3u_0-2}{(u_0-1)^{\frac{3}{2}}}.
\] 
with $u_0\simeq=1.5858$ determined in (\ref{shock5nls}) for this specific example, the blowup time $t_B$ is then  
conjectured  to satisfy the equation
\[
\xi_{pole}\simeq-2.3841\simeq -2.0324\dfrac{t_B-t_0}{\epsilon^{\frac{4}{5}}}.
\]
which gives a value of $|b|=\ln(2.3841/2.0324)=0.1596$, in reasonable agreement 
with the numerically found value $|b|\sim 0.1267$.

\subsection{Focusing nonlocal NLS}
We will study the small dispersion limit of the 
nonlocal NLS (\ref{full_NNLS_complex}) close to the break-up of the 
corresponding semiclassical solutions. We will concentrate on values 
of $\eta$ such that $\eta\epsilon^{2}\ll1$ for all studied values of 
$\epsilon$. For both cases we will consider the initial data 
$\psi_{0}=\mbox{sech }x$. 
The effect of the nonlocality in (\ref{full_NNLS_complex}) is to reduce the 
focusing effect of the focusing NLS. This means the larger $\eta$, 
the smaller the value for the maximum at the critical time of the 
corresponding semiclassical solution, and the less pronounced the 
focusing of the maximum, i.e., smaller gradients in the solution. This effect can be clearly seen in 
Fig.~\ref{nlsfnonloc2eta}. 
\begin{figure}[htb!]
\includegraphics[width=0.6\textwidth]{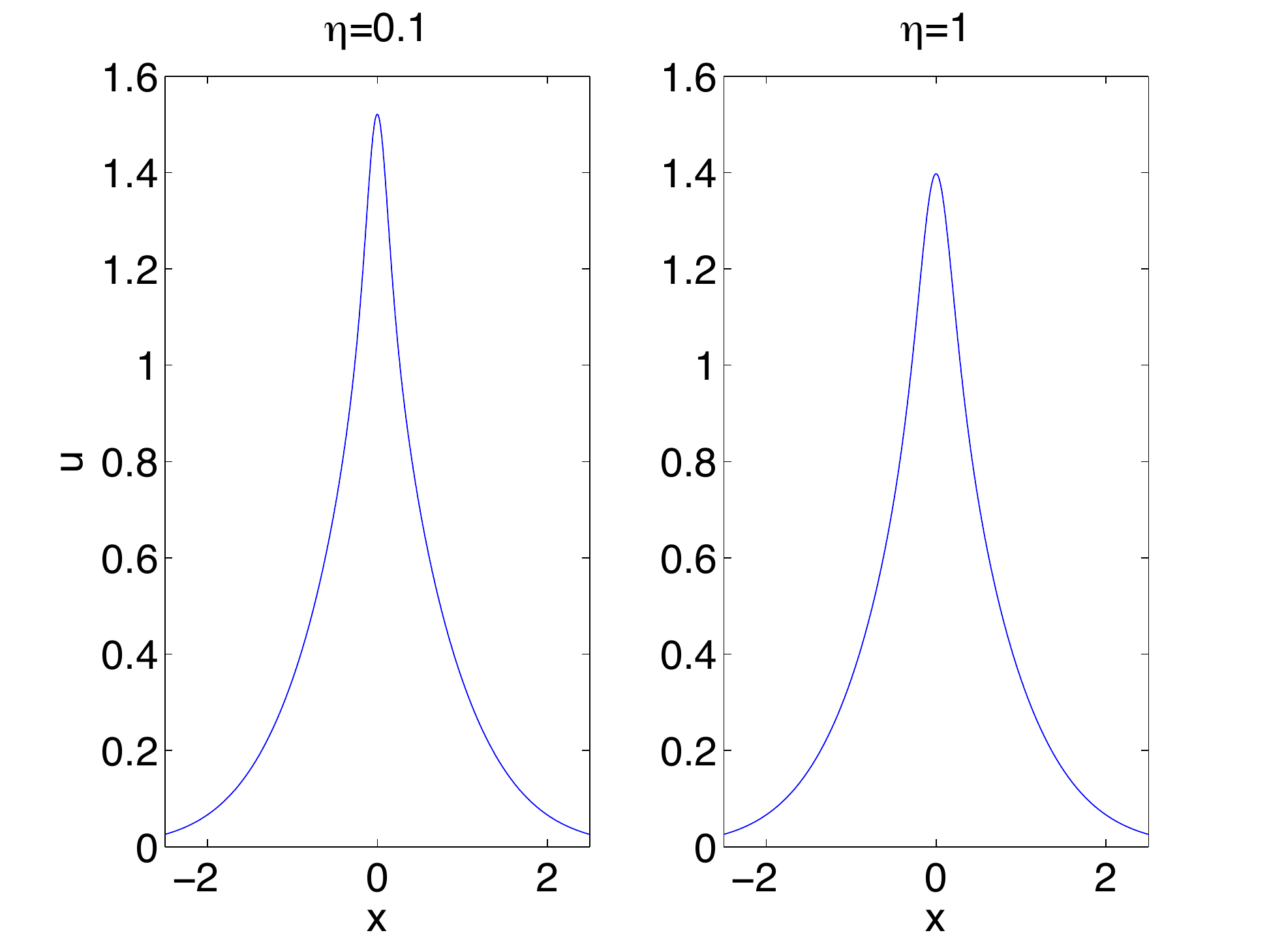}
 \caption{Solution to the focusing nonlocal NLS equation 
 (\ref{full_NNLS_complex}) for the initial data 
 $\psi_{0}(x)=\mbox{sech }x$ and $\epsilon=0.1$ at the time 
 $t_0=0.5$ for two values of $\eta$.}
   \label{nlsfnonloc2eta}
\end{figure}

For larger times the oscillations are suppressed with respect to the 
case $\eta=0$ as can be seen in Fig.~\ref{nlsnonlocf2tc} (compare 
with Fig.~\ref{nlsquintfblowup} on the left).
\begin{figure}[thb!]
  \includegraphics[width=0.5\textwidth]{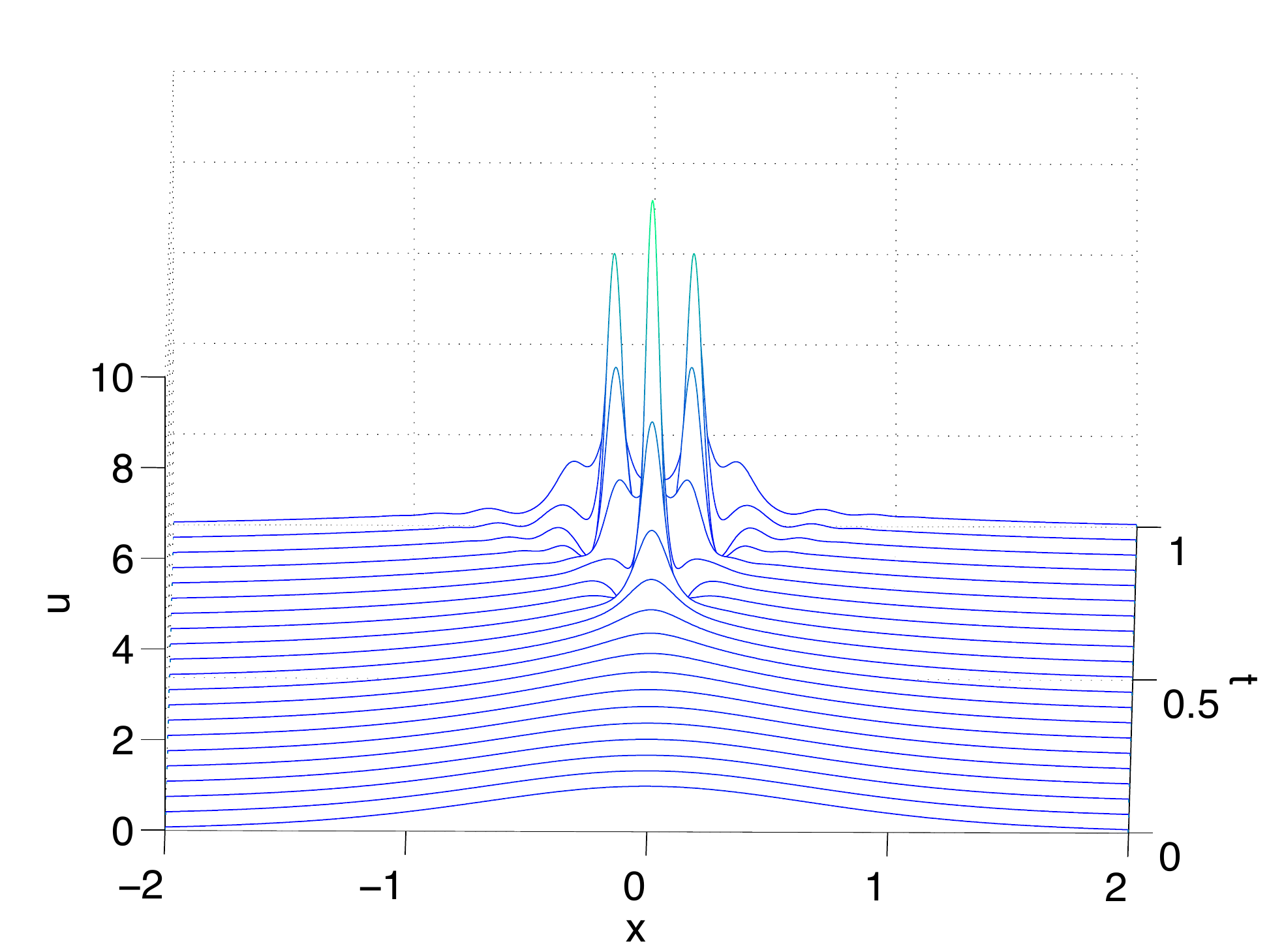}
  \includegraphics[width=0.5\textwidth]{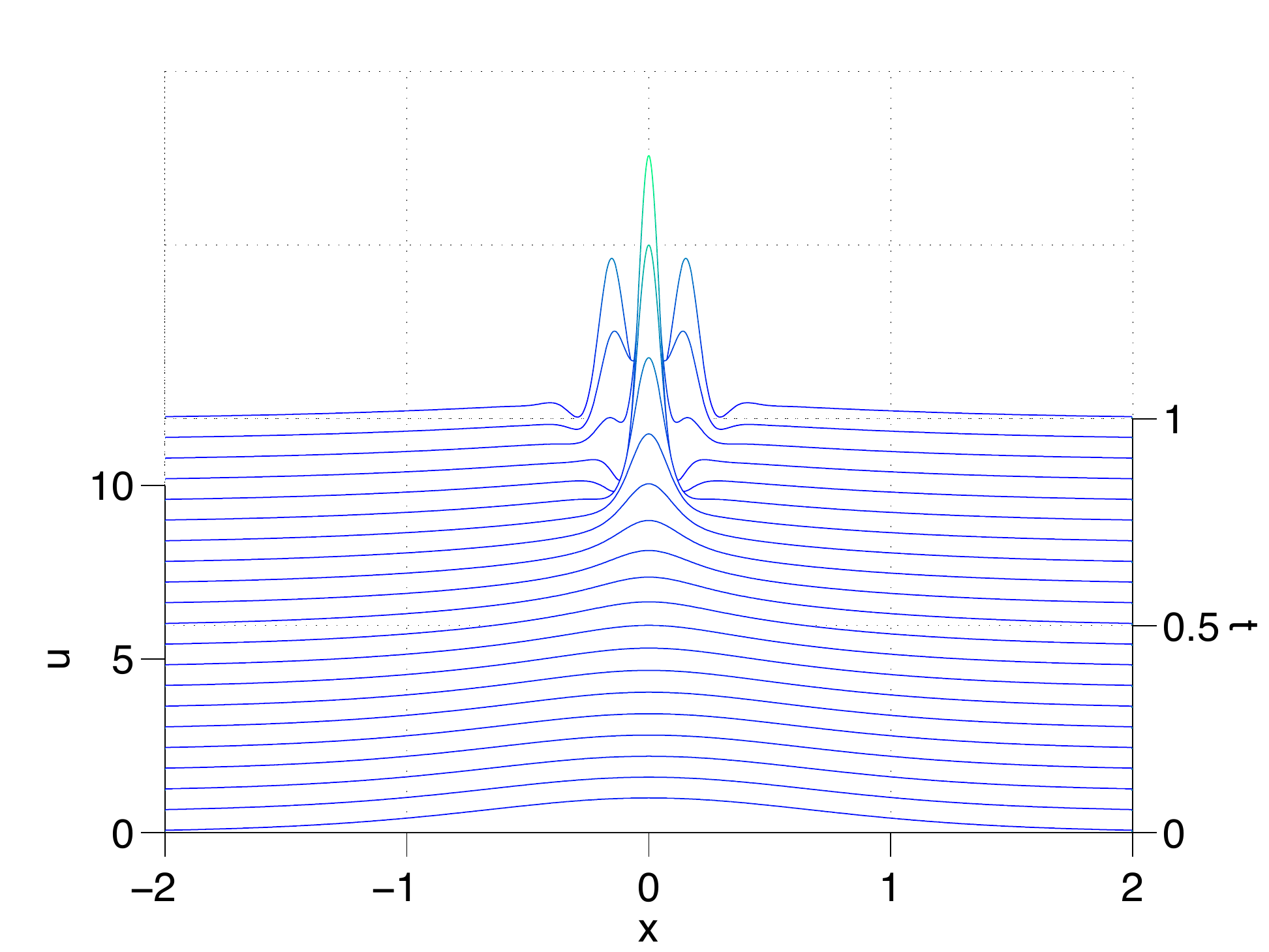}
 \caption{Solution to the focusing nonlocal NLS equation 
 (\ref{full_NNLS_complex}) for the initial data 
 $\psi_{0}(x)=\mbox{sech }x$ and $\epsilon=0.1$; for $\eta=0.1$ 
 on the left, for $\eta=1$ on the right. The critical time is
 $t_0=0.5$.}
 \label{nlsnonlocf2tc}
\end{figure}

At the critical time, the tritronqu\'ee solution to P$_{I}$ gives as 
expected a much better description of the nonlocal NLS solution than 
the semiclassical solution as can be seen for $\eta=0.1$ for $u$ in 
Fig.~\ref{nlsfnonloceta012eu}. The quality of the approximation 
increases visibly for smaller $\epsilon$. Note that the $x$-axes are
rescaled with a factor $\epsilon^{4/5}$.
\begin{figure}[htb!]
\includegraphics[width=0.6\textwidth]{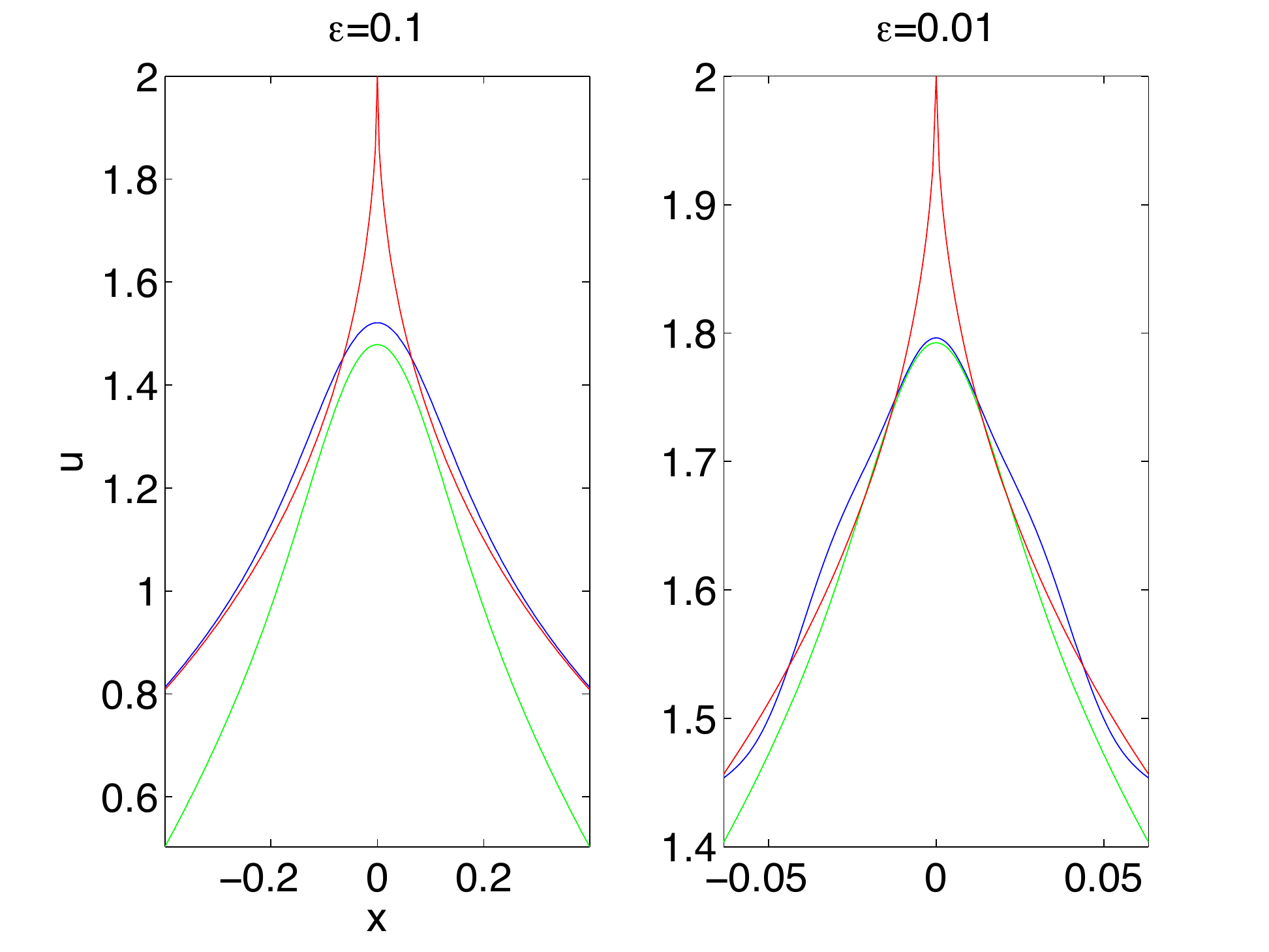}
 \caption{Solution $u$ to the focusing nonlocal NLS equation  (\ref{full_NNLS_complex}) for the initial data 
 $\psi_{0}(x)=\mbox{sech }x$ and $\eta=0.1$ at the time 
 $t_0=0.5$ for two values of $\epsilon$ in blue,  the corresponding semiclassical solution in 
 red and the P$_{I}$ solution (\ref{F1}) in green.}
 \label{nlsfnonloceta012eu}
\end{figure}

The corresponding plots for $v$ can be seen in 
Fig.~\ref{nlsfnonloceta012eu}. The same behavior as for $u$ is visible.
\begin{figure}[htb!]
\includegraphics[width=0.6\textwidth]{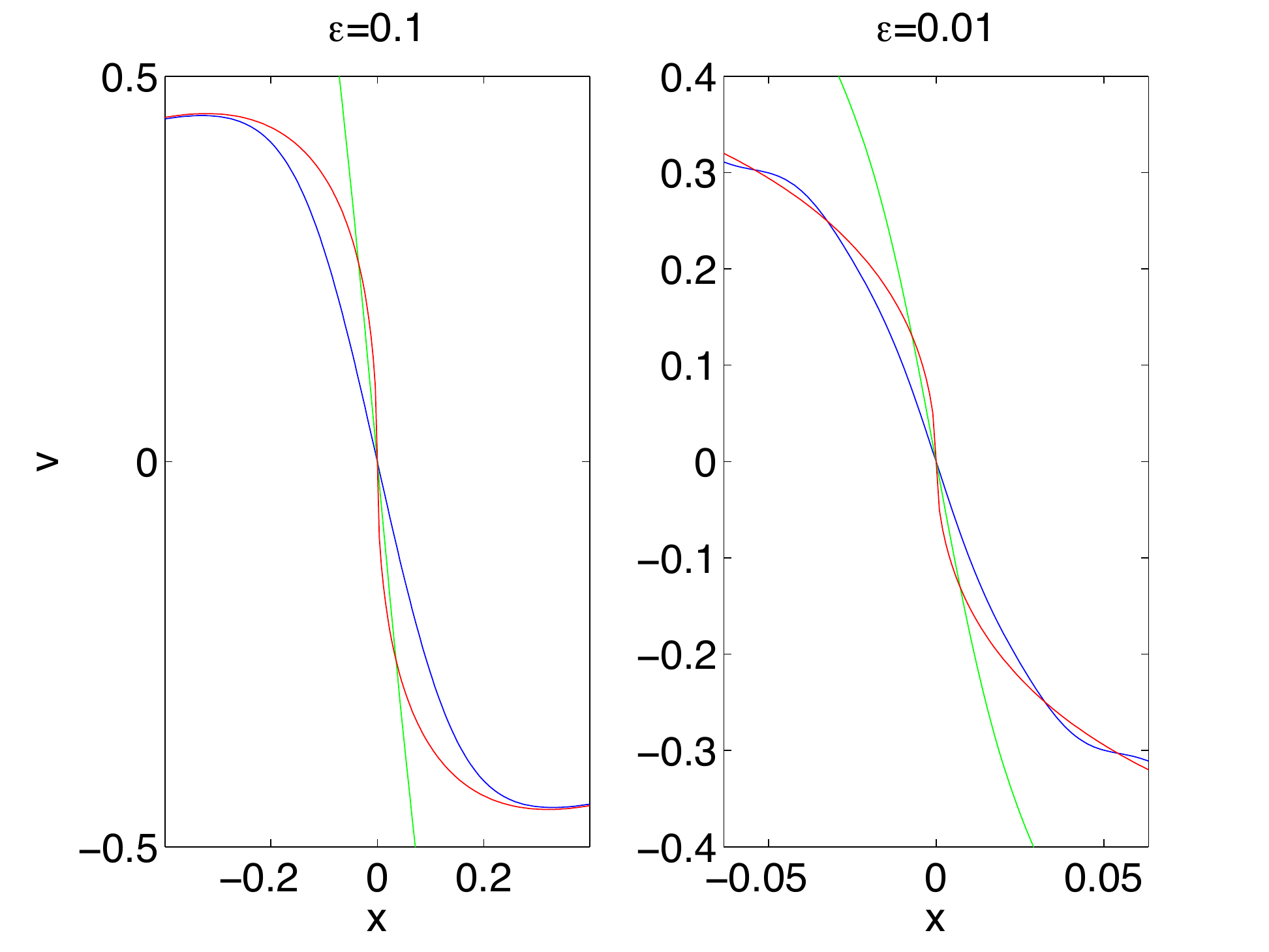}
 \caption{Solution $v$ to the focusing nonlocal NLS equation  (\ref{full_NNLS_complex}) for the initial data 
 $\psi_{0}(x)=\mbox{sech }x$ and $\eta=0.1$ at the time 
 $t_0=0.5$ for two values of $\epsilon$ in blue,  the corresponding semiclassical solution in 
 red and the P$_{I}$ (\ref{F1}) in green.}
 \label{nlsfnonloceta012ev}
\end{figure}

For larger values of $\eta$, the 
agreement is less good for both the semiclassical and the P$_{I}$ 
asymptotics. This is clear for the former since the semiclassical 
solution is independent of $\eta$, and since the focusing effect of 
the nonlocal
NLS is less pronounced for larger values of $\eta$. The P$_{I}$ 
asymptotics takes this into account, the value of its maximum is 
also reduced, but more so than for the nonlocal NLS which implies 
that the agreement between the two solutions is best for $\eta=0$, 
i.e., the cubic NLS. The approximation gets, however, better for 
smaller $\epsilon$ as can be seen for $\eta=1$ in 
Fig.~\ref{nlsfnonloceta12eu}. 
\begin{figure}[htb!]
\includegraphics[width=0.6\textwidth]{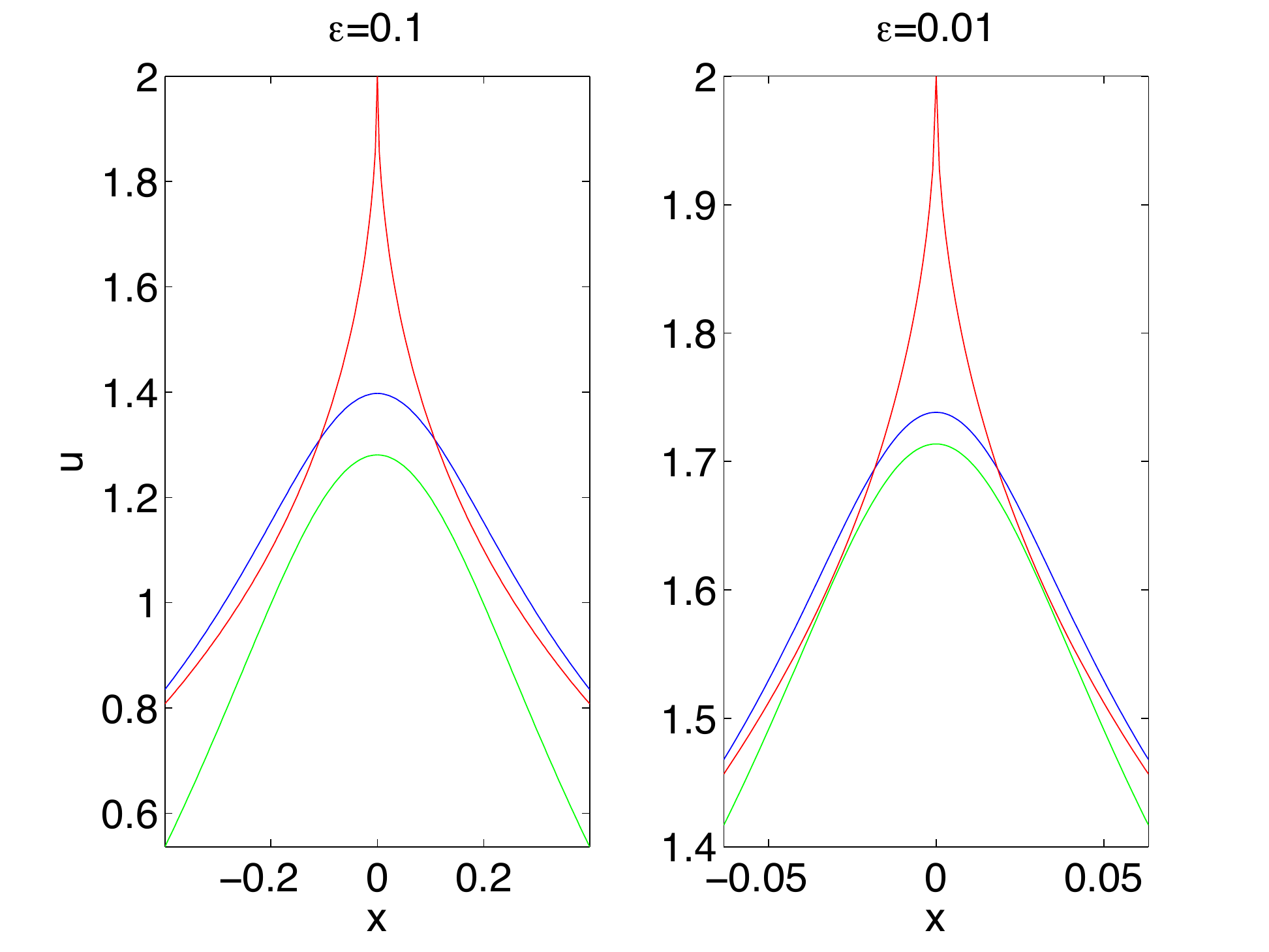}
 \caption{Solution $u$ to the focusing nonlocal NLS equation (\ref{full_NNLS_complex}) for the initial data 
 $\psi_{0}(x)=\mbox{sech }x$ and $\eta=1$ at the time 
 $t_0=0.5$ for two values of $\eta$ in blue,  the corresponding semiclassical solution in 
 red and the P$_{I}$ solution (\ref{F1}) in green.}
 \label{nlsfnonloceta12eu}
\end{figure}

The corresponding plots for $v$ can be seen in 
Fig.~\ref{nlsfnonloceta12ev}.
\begin{figure}[htb!]
\includegraphics[width=0.6\textwidth]{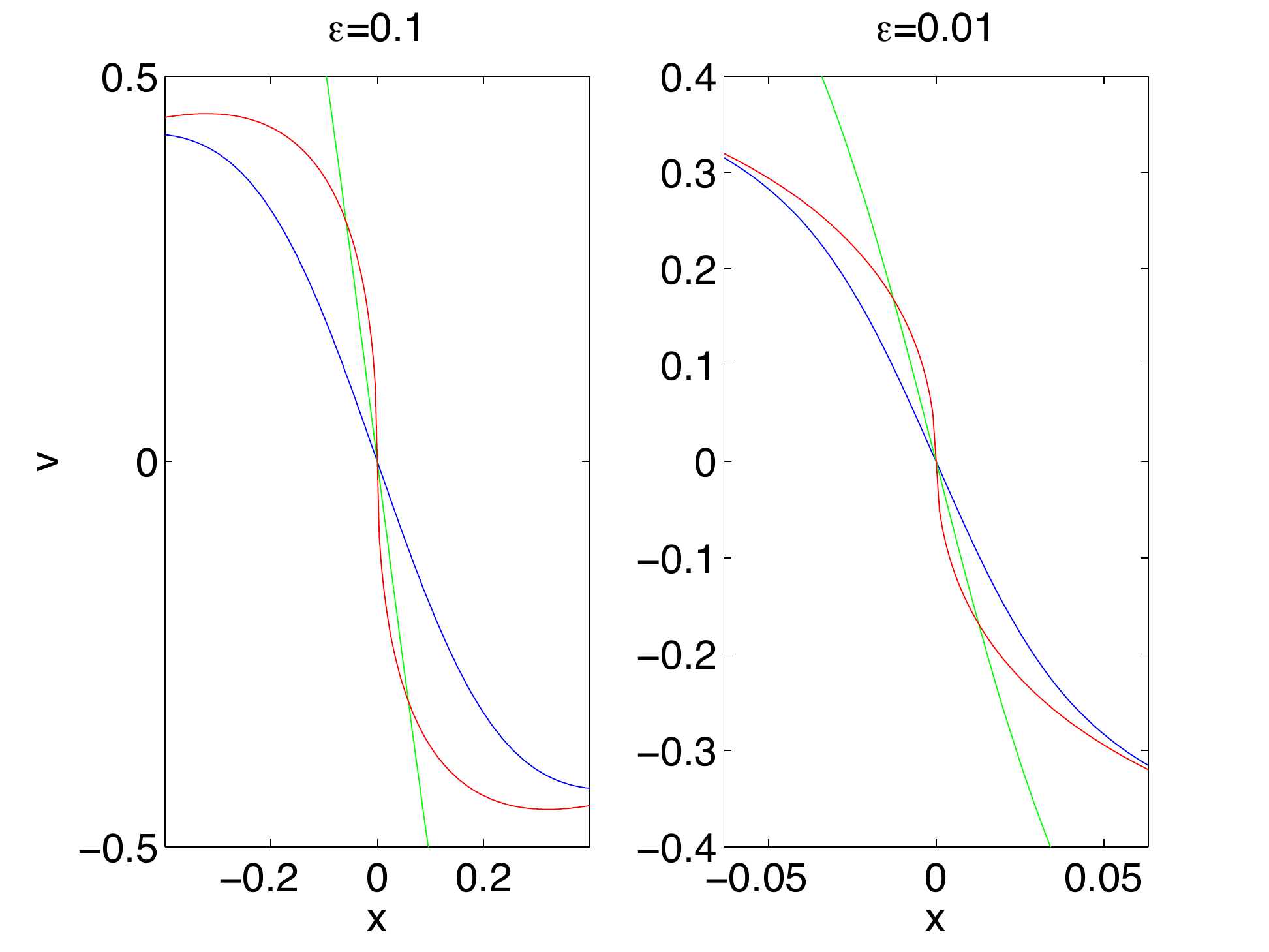}
 \caption{Solution $v$ to the focusing nonlocal NLS equation  (\ref{full_NNLS_complex}) for the initial data 
 $\psi_{0}(x)=\mbox{sech }x$ and $\eta=1$ at the time 
 $t_0=0.5$ for two values of $\epsilon$ in blue,  the corresponding semiclassical solution in 
 red and the P$_{I}$ solution (\ref{F1}) in green.}
 \label{nlsfnonloceta12ev}
\end{figure}

{\bf Acknowledgments}. The work of B.D. and T.G. was partially supported by the European Research Council Advanced Grant FroM-PDE,  by PRIN 2010-11 Grant ``Geometric and analytic theory of Hamiltonian systems in finite and infinite dimensions'' of Italian Ministry of Universities and Researches and by the  FP7 IRSES grant RIMMP  ``Random and Integrable Models in Mathematical Physics".
The work of B.D.  was also partially supported by the Russian Federation Government Grant No. 2010-220-01-077.

 {\large Boris Dubrovin}\\
 {\small SISSA, Via Bonomea 265,    34136 Trieste, Italy\\
  Steklov Math. Institute, Moscow, and N.N.Bogolyubov Laboratory of Geometric Methods in Mathematical Physics\\
Moscow State University, 119899 Moscow, Russia}\\
e-mail: dubrovin@sissa.it\\
\\
{\large Tamara Grava}\\
{\small SISSA, Via Bonomea 265, 34136 Trieste, Italy\\
and School  of Mathematics, University of Bristol, Bristol BS8 1TW, UK}\\
e-mail: grava@sissa.it\\
\\
{\large Christian Klein}\\
 {\small Institut de Math\'ematiques de Bourgogne,
		Universit\'e de Bourgogne,\\
		 9 avenue Alain Savary, 21078 Dijon, Cedex, France}\\
		e-mail: Christian.Klein@u-bourgogne.fr\\
		\\
		{\large Antonio Moro}\\
 {\small Department of Mathematics and Information Sciences,\\
  University of Northumbria at Newcastle upon Tyne, \\
 Pandon building, Camden street, NE2 1XE, Newcastle upon Tyne, UK }\\
e-mail: antonio.moro@northumbria.ac.uk
\end{document}